\tikzset{
  on each segment/.style={
    decorate,
    decoration={
      show path construction,
      moveto code={},
      lineto code={
        \path [#1]
        (\tikzinputsegmentfirst)--(\tikzinputsegmentlast);
      },
      curveto code={
        \path [#1] (\tikzinputsegmentfirst)
        .. controls
        (\tikzinputsegmentsupporta) and (\tikzinputsegmentsupportb)
        ..
        (\tikzinputsegmentlast);
      },
      closepath code={
        \path [#1]
        (\tikzinputsegmentfirst)--(\tikzinputsegmentlast);
      },
    },
  },
  mid arrow/.style={postaction={decorate,decoration={
        markings,
        mark=at position 0.75 with {\arrow[#1]{stealth}}
      }}},
}
\numberwithin{equation}{section}
\newcommand*\xbar[1]{%
  \hbox{%
    \vbox{%
      \hrule height 0.5pt 
      \kern0.3ex
      \hbox{%
        \kern-0.1em
        \ensuremath{#1}%
        \kern-0.0em
      }%
    }%
  }%
} 
\newcommand{\C}{\mathbb{C}}
\newcommand{\Z}{\mathbb{Z}}
\newcommand{\E}{\mathcal{E}}
\newcommand{\G}{\mathcal{G}}
\newcommand{\N}{\mathcal{N}}
\newcommand{\V}{\mathcal{V}}
\newcommand{\Gr}{\mathbf{G}}
\newcommand{\Zr}{\mathbf{Z}}
\newcommand{\bs}{\backslash}
\newcommand{\diff}{\text{d}}
\newcommand{\w}{\mathrm{w}}
\newcommand{\vsig}{\vec{\sigma}}
\newcommand{\vmu}{\vec{\mu}}
\newcommand{\vt}{\vec{\tau}}
\newcommand{\vtp}{\vec{\tau}^{\,\prime}}
\newcommand{\pf}{\mathrm{pf}}
\newcommand{\x}{\mathrm{x}}
\newcommand{\y}{\mathrm{y}}
\newcommand{\s}{\mathrm{s}}
\renewcommand{\i}{\text{i}}
\renewcommand{\l}{\ell}
\renewcommand{\P}{\mathbb{P}}
\renewcommand\Re{\operatorname{Re}}
\renewcommand\Im{\operatorname{Im}}
\renewcommand{\ge}{\geqslant}
\renewcommand{\le}{\leqslant}
\renewcommand{\L}{\text{L}}
\newcommand{\R}{\text{R}}
\newtheorem{thm}{Theorem}[section]
\newtheorem{prop}[thm]{Proposition}
\newtheorem{lm}[thm]{Lemma}
\newtheorem{defn}[thm]{Definition}
\newtheorem{cor}[thm]{Corollary}
\title{\bf Schramm's formula for multiple\\loop-erased random walks\vspace{-0.5cm}}
\author{\normalsize \textsc{Adrien Poncelet}\medskip\\
{\normalsize
\begin{minipage}{0.95\textwidth}
\begin{center}
\textit{Universit\'e catholique de Louvain\\Institut de recherche en math\'ematique et physique\\Chemin du Cyclotron 2, 1348 Louvain-la-Neuve, Belgium}\\
\medskip
\href{mailto:adrien.poncelet@uclouvain.be}{\normalsize\texttt{adrien.poncelet@uclouvain.be}}
\end{center}
\end{minipage}}
}
\date{}
\begin{document}
\maketitle

\begin{abstract}
We revisit the computation of the discrete version of Schramm's formula for the loop-erased random walk derived by Kenyon. The explicit formula in terms of the Green function relies on the use of a complex connection on a graph, for which a line bundle Laplacian is defined. We give explicit results in the scaling limit for the upper half-plane, the cylinder and the Möbius strip. Schramm's formula is then extended to multiple loop-erased random walks.

\medskip
\noindent Keywords: uniform spanning tree, loop-erased random walk, line bundle Laplacian, Schramm-Loewner evolution.
\end{abstract}


\section{Introduction}
\label{sec1}

Over the last few decades, significant interest has been shown in two-dimensional statistical lattice models at their critical point, both in the mathematics and physics communities. In the scaling limit, that is, in the limit where the mesh of the lattice tends to zero, a range of models have been shown to exhibit conformal invariance. In particular, there has been extensive research concerning the description of paths and interfaces that naturally arise between boundary points. For a single curve of that type, said to be chordal, rigorous results in the scaling limit have been obtained for a variety of models on simply connected domains: percolation \cite{Smi01,CN07}, the loop-erased random walk and the uniform spanning tree \cite{Sch00,LSW04}, the discrete Gaussian free field \cite{SS09,SS13} and the critical Ising and FK-Ising models \cite{CS12,CDCHKS14}. The limiting curves obtained in these models are particular cases of the so-called chordal \emph{Schramm-Loewner evolution with parameter $\kappa\ge 0$} ($\text{SLE}_{\kappa}$), introduced in \cite{Sch00}; the value of $\kappa$ depending on the lattice model. In particular, $\text{SLE}_2$ has been proved to be the scaling limit of the loop-erased random walk and the chemical path in the uniform spanning tree \cite{LSW04} (both models possessing the same probability distribution \cite{Pem91,Wil96}). The extension of $\text{SLE}_{\kappa}$ to multiply connected domains has been discussed for instance in \cite{BF04,Zha04,Law11}. Similarly, multiple interfaces and paths in lattice models are expected to be described by multiple SLE curves in interaction (for instance, conditioned not to intersect each other) \cite{BBK05,Dub06,Kyt06,KL07,Law09,Izy17,PW17b,Wu17}. An alternative description of the continuum limit of lattice models is given by \emph{conformal field theory} (CFT), in which observables are identified as correlation functions of local operators. The connection with the SLE picture for the description of boundary paths and interfaces has been given in \cite{BB02,FW03}.

A natural question regarding multiple curves between fixed boundary points consists in the computation of crossing or connection probabilities. The former have been discussed for percolation \cite{Car92} and the critical Ising model \cite{ASA02}, while the latter have been computed for the loop-erased random walk and the uniform spanning tree \cite{KW11a,KKP17}, the double dimer model \cite{KW11a} and the discrete Gaussian free field \cite{PW17b}.

For a single curve, one of the simplest results in the SLE framework is \emph{Schramm's formula}, which gives the probability that a point $z$ lies to the left (or right) of an $\text{SLE}_{\kappa}$ curve between fixed boundary points \cite{Sch01}. Its explicit form on the upper half-plane for a random curve $\gamma$ starting at 0 and growing to infinity is given by
\begin{equation}
\P[\text{$z$ lies to the left of $\gamma$}]=\frac{1}{2}-\frac{\Gamma(4/\kappa)}{\sqrt{\pi}\,\Gamma\left(\frac{8-\kappa}{2\kappa}\right)}\frac{x}{y}\,{}_2 F_1\left(\frac{1}{2},\frac{4}{\kappa};\frac{3}{2};-\frac{x^2}{y^2}\right),
\label{Schramm_SLE}
\end{equation}
where $z=x+\i y$ and ${}_2 F_1$ is the ordinary hypergeometric function. This formula has been extended in several directions, namely to curves in doubly connected domains for $\kappa=2$ \cite{Hag09a} and $\kappa=4$ \cite{HBB10}, and for the left-passage probability with respect to two marked points $z_1,z_2$ for $\kappa=8/3$ \cite{SC09,BV13}. The generalization to multiple curves passing left or right of one marked point has also been investigated in \cite{GC05}, in which the passage probabilities were explicitly computed for two curves only, for $\kappa=0,2,4,8/3,8$ (with more rigorous proofs given later in \cite{LV17}).

Most of the results regarding Schramm's formula, including the original paper \cite{Sch01}, have been directly obtained in the scaling limit of lattice models. A notable exception is \cite{IP12}, in which an explicit expression for the left-passage probability with respect to a specific type of marked points is given on the lattice, for the percolation model. In this article, we consider the spanning forest model and the loop-erased random walk on a planar graph. We give a discrete version of Schramm's formula for paths in spanning forests in terms of the Green function of the graph (this problem was briefly discussed in \cite{Ken11,KW15}). Due to the well-known correspondence with paths in spanning trees \cite{Pem91,Wil96}, the formula also holds for loop-erased random walks. We compute several explicit expressions in the scaling limit for various domains and boundary conditions, and compare some of our results to known $\text{SLE}_2$ calculations. We then establish a generalization of Schramm's formula for multiple curves between boundary points on a planar graph, for all possible connectivities.

The paper is organized as follows. In Section~\ref{sec2}, we recall graph-theoretical definitions and properties associated with spanning forests, as well as their generalization for graphs equipped with a connection, namely so-called cycle-rooted groves, introduced in \cite{KW15}. In Section~\ref{sec3}, we define a measure on paths in spanning forests and cycle-rooted groves. We give a combinatorial expression of Schramm's formula for paths between two boundary vertices with respect to a marked face. We list our explicit results in the scaling limit in Section~\ref{sec4}, in which we consider the upper half-plane, the cylinder and the Möbius strip, for wired or free boundary conditions. We generalize Schramm's formula for $n$ paths with respect to a single face in Section~\ref{sec5}. The case $n=2$ is exposed in details before the introduction of the general formalism for $n>2$. In Section~\ref{sec6}, we explicitly show that the measures on loop-erased random walks and random paths in spanning forests coincide. The interpretation of partition functions for paths as conformal correlators via the SLE/CFT correspondence is recalled in Section~\ref{sec7}, in which we check the consistency of some of our results on the upper half-plane. Finally, two appendices collect the definitions of Jacobi's elliptic and theta functions, together with some of their properties relevant to computations presented in Section~\ref{sec4}, and the proofs of the propositions contained in Section~\ref{sec5}.


\section{Spanning forests and cycle-rooted groves}
\label{sec2}

In this section, we first recall definitions and classical results of graph theory. We give an overview of some new techniques pertaining to graphs equipped with a connection, introduced in \cite{For93,Ken11,KW15}, which are relevant for the computations of later sections. For the most part, we follow the formalism of \cite{KW15}.


\subsection{The matrix-tree theorem}
\label{sec2.1}

Let $\G$ be an unoriented connected simple graph, with a set of vertices $\mathcal{V}$ and edges $\E$. Let us denote by $(u,v)$ and $\{u,v\}$ the oriented edge and the unoriented edge from $u$ to $v$, respectively (with $\{v,u\}=\{u,v\}$). Let $s$ be an additional vertex, called the \emph{root}, connected to a subset $\mathcal{D}\subset\V$ by a set of edges $\E_s$. We denote by $\G_s$ the extended graph with vertices $\mathcal{V}\cup\{s\}$ and edges $\E\cup\E_s$. A spanning tree on $\G_s$ is a connected acyclic subgraph that contains all the vertices of $\mathcal{V}\cup\{s\}$. In what follows, we shall make the distinction between a spanning tree and a \emph{rooted} spanning tree, in which all edges point toward a single vertex, usually taken to be the root $s$. We define a conductance function $c:\E\cup\E_s\to \mathbb{R}^*_+$, associating a weight $c_{u,v}=c_{v,u}$ with each unoriented edge $\{u,v\}\in\E$. We further choose a unit conductance for the edges $\{u,s\}$ in $\E_s$: $c_{u,s}=c_{s,u}=1$. The weight of a spanning tree $\mathcal{T}$ on $\G_s$ is given by
\begin{equation}
\w(\mathcal{T})=\prod_{\{u,v\}\in\mathcal{T}}c_{u,v}.
\label{ST_weight}
\end{equation}
The standard Laplacian of the graph $\G_s$ is denoted by $\Delta_{\G_s}$ and is defined for any $u,v\in\V\cup\{s\}$ by
\begin{equation}
\left(\Delta_{\G_s}\right)_{u,v}=\begin{cases}
\sum_{w:\{u,w\}\in\E\cup\E_s}c_{u,w}&\quad\text{if $v=u$,}\\
-c_{u,v}&\quad\text{if $\{u,v\}\in\E\cup\E_s$,}\\
0&\quad\text{otherwise.}
\end{cases}
\end{equation}
One sees that $\Delta_{\G_s}$ is singular since it has a nondegenerate\footnote{Recall that the algebraic multiplicity of the zero eigenvalue of the standard Laplacian is equal to the number of connected components of the graph.} zero eigenvalue, whose corresponding eigenspace is generated by $(1,1,\ldots,1)^{\text{t}}$ (see Section \ref{sec2.3} for further discussion). Therefore, it is customary to define $\Delta^{(s)}_{\G_s}$, the Laplacian of $\G_s$ with Dirichlet boundary conditions at $s$, as the submatrix of $\Delta_{\G_s}$ from which the row and column indexed by $s$ have been removed. In what follows, we shall drop the subscript $\G_s$ and the superscript $s$, and simply refer to this Dirichlet Laplacian as $\Delta$. Moreover, we shall call \emph{wired} the vertices of $\mathcal{D}$ (i.e. those that are connected to the root by an edge in $\E_s$), and \emph{free} or \emph{unwired} the vertices in $\V\bs\mathcal{D}$.

In the following sections, we shall consider graphs $\G=(\V,\E)$ embedded on surfaces bounded by Jordan curves (in fact, we shall mainly deal with planar graphs embedded on the plane). We call \emph{boundary vertices} the vertices in $\V$ belonging to these curves, and take $\mathcal{D}$ to be a subset thereof. A graph in which all boundary vertices belong to $\mathcal{D}$ (resp. $\V\bs\mathcal{D}$) will be said to have a \emph{wired boundary} (resp. \emph{free boundary}). As we shall see in Section~\ref{sec4} for several graphs, the discrete Green function with wired (resp. free) boundary conditions converges to the continuum Green function with Dirichlet (resp. Neumann) boundary conditions in the scaling limit. More generally, we shall consider graphs with uniform conditions on each boundary component separately, for which this convergence property of the Green function also holds.

A well-known result in graph theory, the \emph{matrix-tree theorem} (also known as Kirchhoff's theorem), states that the determinant of the Dirichlet Laplacian is equal to the weighted sum of spanning trees on $\G_s$.
\clearpage
\begin{thm}[\hspace{-0.025cm}\cite{Kir47}\hspace{-0.025cm}]
Let $\mathscr{T}$ denote the set of spanning trees on the graph $\G_s$, and let $\Delta$ be the Dirichlet Laplacian of $\G_s$. Then
\begin{equation}
\det\Delta=\sum_{\mathcal{T}\in\mathscr{T}}\w(\mathcal{T}),
\end{equation}
where the weight of a spanning tree is defined by Eq.~\eqref{ST_weight}.
\label{MTT}
\end{thm}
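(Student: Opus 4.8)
The plan is to prove the identity by induction on the number of edges of $\G_s$, using the deletion–contraction recursion that both sides satisfy. First I would dispose of the base case: if $\G_s$ has no edges beyond those forced, or more precisely if $\G_s$ has a single vertex besides $s$ connected to $s$ by one edge of unit conductance, then $\det\Delta = 1$ and there is exactly one spanning tree of weight $1$, so both sides agree. More generally, if $\G_s$ is disconnected then there are no spanning trees and one checks $\det\Delta=0$ as well.

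For the inductive step, I would fix an edge $e=\{u,v\}\in\E$ (an edge not incident to $s$; edges in $\E_s$ can be treated the same way with unit conductance) and split the sum $\sum_{\mathcal{T}}\w(\mathcal{T})$ according to whether $e\in\mathcal{T}$ or not. Spanning trees avoiding $e$ are exactly the spanning trees of $\G_s\bs e$ (edge deleted), contributing $\det\Delta_{\G_s\bs e}$ by the inductive hypothesis. Spanning trees containing $e$ are in weight-preserving bijection with spanning trees of $\G_s/e$ (edge contracted, identifying $u$ and $v$), after factoring out the conductance $c_{u,v}$ carried by $e$; these contribute $c_{u,v}\det\Delta_{\G_s/e}$. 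So the combinatorial side equals $\det\Delta_{\G_s\bs e} + c_{u,v}\det\Delta_{\G_s/e}$.

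The matching algebraic identity is the crux: I would show $\det\Delta = \det\Delta_{\G_s\bs e} + c_{u,v}\det\Delta_{\G_s/e}$ directly from multilinearity of the determinant. Writing $\Delta = \Delta^{(0)} + c_{u,v}M_e$, where $\Delta^{(0)}$ is the Dirichlet Laplacian of $\G_s\bs e$ and $M_e$ is the elementary Laplacian of the single edge $\{u,v\}$ (the matrix with $+1$ at positions $(u,u)$ and $(v,v)$ and $-1$ at $(u,v)$ and $(v,u)$), one expands the determinant in the two rows indexed by $u$ and $v$. Because $M_e$ has rank one on the $(u,v)$-subspace (its nonzero part is the rank-one matrix $(\mathbf e_u-\mathbf e_v)(\mathbf e_u-\mathbf e_v)^{\mathrm t}$), the multilinear expansion collapses: the terms picking the $c_{u,v}M_e$ contribution from both rows vanish (rank one), the term picking it from neither gives $\det\Delta^{(0)}$, and the two cross terms combine — after the row/column operation that adds row $v$ to row $u$ and column $v$ to column $u$ — into $c_{u,v}$ times the determinant of the Laplacian with $u,v$ merged, i.e. $c_{u,v}\det\Delta_{\G_s/e}$.

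The main obstacle is getting this last algebraic step clean: one must be careful that contraction of $e$ may create multiple edges (parallel edges between the merged vertex and a common neighbour), but this is harmless since conductances simply add and the Laplacian of the multigraph is well defined; and one must check the bookkeeping of which vertex indices survive so that "removing row/column $s$" commutes with deletion and contraction (it does, provided $e$ is not incident to $s$ — and if every remaining edge is incident to $s$ the graph is a star and the base case applies). Once the recursion is matched on both sides, the induction closes immediately. An alternative, which I would mention as a remark, is the direct Cauchy–Binet proof: factor $\Delta = N N^{\mathrm t}$ using the weighted incidence matrix $N$ of $\G_s$ (with the $s$-row removed), apply the Cauchy–Binet formula to expand $\det(NN^{\mathrm t})$ as a sum over $(|\V|)$-subsets of edges, and identify the nonzero maximal minors of $N$ (each equal to $\pm\prod c$ over the chosen edges) with spanning trees via the standard fact that a square submatrix of an incidence matrix is nonsingular iff the corresponding edge set is a spanning tree.
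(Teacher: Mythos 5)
The paper does not prove this statement: Theorem~\ref{MTT} is quoted as a classical result with a citation to Kirchhoff, and no argument is given, so there is nothing internal to compare your proof against. Your deletion--contraction induction is a correct and standard self-contained proof of the weighted matrix-tree theorem, and the two points that usually cause trouble are both handled: the vanishing of the ``both rows from $c_{u,v}M_e$'' term because the two rows of $(\mathbf{e}_u-\mathbf{e}_v)(\mathbf{e}_u-\mathbf{e}_v)^{\mathrm{t}}$ are proportional, and the identification of the surviving cross terms $c_{u,v}(C_{uu}-C_{uv}-C_{vu}+C_{vv})$ (a combination of cofactors of the deleted-edge Laplacian) with $c_{u,v}\det\Delta_{\G_s/e}$. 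The only spot where I would add a sentence is the case $e=\{u,s\}\in\E_s$: there the perturbation seen by the Dirichlet Laplacian is $\mathbf{e}_u\mathbf{e}_u^{\mathrm{t}}$ rather than a rank-one matrix supported on two rows (the $s$ row is already deleted), so only one ``cross term'' appears, namely the $(u,u)$ cofactor of $\Delta_{\G_s\bs e}$; this is precisely the Dirichlet Laplacian of $\G_s/e$ with $u$ absorbed into the root, so the recursion still closes, but it is not literally ``the same way'' as the bulk case. Your base cases (single pendant vertex, and $\det\Delta=0$ for a disconnected $\G_s$ via the singular block of a rootless component) are right, as is the Cauchy--Binet alternative you mention.
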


More generally, we shall consider spanning forests on $\G_s$ (that is, acyclic spanning subgraphs). The natural weight of a spanning forest is given by the product of the weights of its tree components. The generalization of Kirchhoff's theorem for spanning forests, called the \emph{all-minors matrix-tree theorem} is as follows.
\begin{thm}[\hspace{-0.025cm}\cite{Cha82}, Theorem 1\hspace{-0.025cm}]
In addition to $s$, let $R=\{r_1,\ldots,r_n\}$ and $S=\{s_1,\ldots,s_n\}$ be two disjoint subsets of $\V$, and $\sigma=r_1 s_1|\ldots|r_n s_n$ be a pairing of these vertices (the vertical bars separating the $n$ pairs). One denotes by $Z[\sigma]$ the weighted sum of spanning forests on $\G_s$ in which each of the $n{+}1$ trees contains either the root $s$ or a single pair $\{r_i,s_i\}$, $1\le i\le n$. Then the following sum rule holds:
\begin{equation}
\det\Delta\det G_{R}^{S}=\sum_{\rho\in\mathrm{S}_n}\epsilon(\rho)\,Z[r_1 s_{\rho(1)}|\ldots|r_n s_{\rho(n)}],
\end{equation}
where the sum is over all permutations of the symmetric group on $n$ objects and $\epsilon(\rho)$ is the signature of $\rho$. The notation $G_{R}^{S}$ refers to the restriction of the Green matrix $G=\Delta^{-1}$ to the rows (resp. columns) indexed by the vertices of $R$ (resp. $S$), where $\Delta$ is the Laplacian of $\G_s$ with Dirichlet boundary conditions at $s$.
\label{AMMT}
\end{thm}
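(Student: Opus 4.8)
The plan is to reduce the statement to Theorem~\ref{MTT} by a suitable graph modification, turning the left-hand side $\det\Delta\det G_R^S$ into a single determinant of a Laplacian on an augmented graph. First I would recall the cofactor/adjugate description: for the matrix $G=\Delta^{-1}$, the minor $\det G_R^S$ equals $\pm\det\Delta^{-1}\cdot\det\bigl(\widetilde\Delta\bigr)$ where $\widetilde\Delta$ is obtained from $\Delta$ by a complementary-minor manipulation. More concretely, the Jacobi identity (the complementary minor formula for the adjugate) gives $\det(\Delta)\cdot\det\bigl((\Delta^{-1})_R^S\bigr)=(-1)^{\epsilon}\det\bigl(\Delta[\widehat R;\widehat S]\bigr)$, where $\Delta[\widehat R;\widehat S]$ is the submatrix of $\Delta$ with the rows of $R$ and columns of $S$ deleted, and the sign is a function of the positions of $R$ and $S$. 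Thus $\det\Delta\det G_R^S$ is, up to sign, a minor of $\Delta$ in which $n$ rows and $n$ columns have been struck out. Since striking out the row and column of a vertex is exactly what imposing a Dirichlet condition there does, this minor is the determinant of the Dirichlet Laplacian of $\G_s$ with additional roots — but because the deleted rows are indexed by $R$ and the deleted columns by $S$, which differ, one does not get a symmetric Laplacian but rather a matrix recording the pairing.

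Next I would expand this minor by the (generalized) Cauchy–Binet / permutation expansion. Writing $\Delta=\Delta_{\G_s}^{(s)}$, the deletion of rows $R$ and columns $S$ from $\Delta$ and taking the determinant, when expanded combinatorially via the standard Laplacian-to-forest correspondence (the same bijective argument that proves Theorem~\ref{MTT}: each term in the determinant expansion corresponds to a functional digraph, and cancellation leaves only the acyclic ones), produces a sum over spanning forests in which the deleted row-indices $R$ and column-indices $S$ become the "roots'' of the non-$s$ trees. The mismatch between $R$ and $S$ is precisely what yields the sum over permutations $\rho\in\mathrm{S}_n$ with signs $\epsilon(\rho)$: a forest contributing to the expansion has each tree rooted at some $s_i$, and carries a marked vertex $r_{\rho(i)}$ in the same tree, which is the combinatorial content of $Z[r_1 s_{\rho(1)}|\cdots|r_n s_{\rho(n)}]$. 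The sign $\epsilon(\rho)$ emerges from reordering rows against columns in the minor; one must check that the position-dependent signs from the Jacobi identity on the left and from the row/column reindexing on the right cancel, leaving exactly $\epsilon(\rho)$. I would set this up carefully by fixing an ordering of $\V$ and tracking the sign of the transposition that aligns $R$ with $\rho(S)$.

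The main obstacle is the bookkeeping of signs and the precise bijective identification of determinant-expansion terms with weighted spanning forests carrying a pairing. The "deletion of nonmatching rows and columns'' breaks the manifest symmetry that makes the ordinary matrix-tree proof clean, so I would instead argue by: (i) proving the identity first for $n=1$, where it reads $\det\Delta\cdot G_{r_1,s_1}=\pm\det(\Delta$ with row $r_1$, column $s_1$ deleted$)$ and the latter determinant expands (via the all-paths/all-cycles cancellation) into $Z[r_1 s_1]=$ weighted sum of two-component forests, one tree containing $s$ and one containing both $r_1$ and $s_1$; (ii) bootstrapping to general $n$ by multilinearity of the determinant in the $R$-rows, or equivalently by iterating the Schur-complement/Jacobi step $n$ times, each step peeling off one pair and generating one factor in the permutation sum. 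An alternative, perhaps cleaner, route is a direct combinatorial proof: expand $\det\Delta\cdot\det G_R^S$ using $\det G_R^S=\sum_\rho\epsilon(\rho)\prod_i G_{r_i,s_{\rho(i)}}$ and the random-walk representation $G_{r,s}=\sum_{\text{paths }r\to s}(\text{weight})$, then interpret $\det\Delta$ as the spanning-tree generating function and absorb the paths into the tree to build the forest, with the loop-erasure/cancellation handling the over-counting; the sign $\epsilon(\rho)$ then tracks the topology of how the $n$ paths interlace. Either way, once the sign conventions are pinned down the combinatorics is the standard functional-digraph cancellation already used for Theorem~\ref{MTT}, so I expect the proof to be short modulo that sign analysis.
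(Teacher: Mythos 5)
The paper does not prove Theorem~\ref{AMMT}; it is quoted directly from \cite{Cha82}, so there is no in-house argument to compare yours against. Your plan is the standard route (and essentially Chaiken's): the Jacobi/complementary-minor identity converts $\det\Delta\det G_R^S$ into $(-1)^{\Sigma R+\Sigma S}$ times the minor of $\Delta$ with the rows of one set and the columns of the other struck out, and that signed minor is then expanded combinatorially into forests carrying a pairing of $R$ with $S$. Indeed, the paper performs exactly this first step itself in the proof of Proposition~\ref{grove_thm2}, where it invokes Cramer's formula for minors and then defers the combinatorial identification of the resulting signed determinant to \cite{KW15}. The one place where your sketch understates the work is the cancellation step: for a minor with mismatched row and column index sets, the involution must dispose not only of configurations containing cycles but also of forests in which some tree contains two vertices of $R$, two of $S$, or a node of $R\cup S$ together with the root, and the residual sign must be shown to reduce to $\epsilon(\rho)$; this is genuinely more delicate than the functional-digraph cancellation behind Theorem~\ref{MTT}, and it is precisely the content of \cite{Cha82}. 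Modulo that deferred bookkeeping, your outline is correct.
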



\subsection{Graphs with a connection}
\label{sec2.2}

A complex connection $\Phi$ on a graph $\G$ with edge set $\E$ is defined as a collection of nonzero complex numbers $\phi_{u,v}$ for each oriented edge $(u,v)\in\E$, called \emph{parallel transports}, that satisfy the property $\phi_{v,u}=\phi_{u,v}^{-1}$. The Laplacian with Dirichlet boundary conditions at $s$ associated with the graph $\G_s$ equipped with the connection $\Phi$, called the \emph{line bundle Laplacian} \cite{Ken11}, is denoted by $\mathbf{\Delta}=\mathbf{\Delta}^{(s)}_{\G_s,\Phi}$. It is defined as follows, for $u,v\in\V$:
\begin{equation}
\mathbf{\Delta}_{u,v}=\begin{cases}
\sum_{w:\{u,w\}\in\E\cup\E_s}c_{u,w}&\quad\text{if $v=u$,}\\
-c_{u,v}\,\phi_{u,v}^{-1}&\quad\text{if $\{u,v\}\in\E$,}\\
0&\quad\text{otherwise.}
\end{cases}
\label{LB_Delta}
\end{equation}
The particular choice $\Phi=\mathbf{1}$, that is, $\phi_{u,v}=1$ for all edges $(u,v)\in\E$, is called the \emph{trivial connection}. In that case, $\mathbf{\Delta}$ is simply the usual Dirichlet Laplacian $\Delta$. The analogue of the matrix-tree theorem for graphs with a connection counts combinatorial objects called \emph{oriented cycle-rooted spanning forests}. An OCRSF consists in the union of a single tree containing the root $s$ and \emph{cycle-rooted trees} (CRTs, also known as \emph{unicycles}), which are connected subgraphs containing a single cycle. The tree attached to the root $s$---which can possibly be degenerate or spanning---is naturally oriented toward $s$. In a cycle-rooted tree, all edges in branches point toward the unique cycle, which can be oriented in either direction (so that there is a unique outgoing arrow at each vertex of the CRT). The theorem is the following.
\begin{thm}[\hspace{-0.025cm}\cite{For93}, Theorem 1 and \cite{Ken11}, Theorem 1\hspace{-0.025cm}]
\begin{equation}
\det\mathbf{\Delta}=\sum_{\mathrm{OCRSFs}\,\mathcal{C}}\w(\mathcal{C}),\quad\text{with }\w(\mathcal{C})=\prod_{\{u,v\}\in\mathcal{C}}c_{u,v}\prod_{\text{cycles }\alpha\in\mathcal{C}}(1-\varpi_{\alpha}),
\end{equation}
where $\varpi_{\alpha}$ is the \emph{monodromy} of the cycle $\alpha=(v_0,v_1,\ldots,v_k,v_{k+1}{\equiv}v_0)$ ($k\ge 1$), defined as the product of all parallel transports along $\alpha$:
\begin{equation}
\varpi_{\alpha}=\prod_{i=0}^{k}\phi_{v_i,v_{i+1}}.
\end{equation}
\label{CRSF_thm}
\end{thm}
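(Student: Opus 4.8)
The plan is to generalize the standard deletion-contraction / combinatorial proof of the matrix-tree theorem to the line bundle Laplacian, using the Cauchy-Binet formula. First I would write $\mathbf{\Delta}$ as a weighted (twisted) incidence matrix product. Fix an arbitrary orientation of each unoriented edge of $\G_s$, and for $e=(u,v)$ set $\sqrt{c_e}$ times the twisted incidence coefficients: the ``outgoing'' endpoint $u$ contributes $\sqrt{c_e}$ and the ``incoming'' endpoint $v$ contributes $-\sqrt{c_e}\,\phi_{v,u}$ (with the convention $\phi_{u,s}=1$ for root edges). Call the resulting $|\E\cup\E_s|\times\V$ matrix $B$ (rows indexed by edges, columns by non-root vertices, the $s$-row-and-column deletion already built in), and let $\bar B$ be the matrix obtained from $B$ by replacing each entry by its effect on the opposite endpoint — i.e. $B$ and a companion $\tilde B$ chosen so that $\mathbf{\Delta}=\tilde B^{\mathrm t} B$. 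One checks directly from \eqref{LB_Delta} that the diagonal entry $\sum_w c_{u,w}$ and the off-diagonal $-c_{u,v}\phi_{u,v}^{-1}$ are reproduced; the asymmetry of $\mathbf{\Delta}$ is exactly absorbed by using two different incidence matrices $B\ne\tilde B$ that differ only by the parallel-transport phases on the incoming endpoints.

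Next I would apply the Cauchy-Binet formula to $\det\mathbf{\Delta}=\det(\tilde B^{\mathrm t}B)=\sum_{F}\det(\tilde B^{\mathrm t}_{\cdot,F})\det(B_{F,\cdot})$, where the sum runs over subsets $F$ of edges with $|F|=|\V|$ (the number of non-root vertices). The key combinatorial step is the classical one: for such an $F$, the minor $\det(B_{F,\cdot})$ vanishes unless $F$, viewed as a subgraph of $\G_s$, has exactly one edge ``pointing into'' each non-root vertex in a consistent way — that is, unless each connected component of $F$ (in $\G$) is either a tree attached to the root or a tree containing exactly one cycle. This is precisely the definition of an OCRSF. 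For a genuine tree component attached to $s$, the corresponding block of $B$ is triangularizable and its determinant is $\pm\prod\sqrt{c_e}$, the phases cancelling telescopically along the tree because $\phi_{v,u}\phi_{u,w}\cdots$ collapses on a contractible path; the same happens for the branches hanging off a cycle. For a unicyclic component, after pruning the branches one is left with the determinant of a cyclic (bordered) matrix whose value is $\prod\sqrt{c_e}\,(1-\varpi_\alpha)$ up to sign — the monodromy $\varpi_\alpha$ survives precisely because the cycle $\alpha$ is not contractible as a path, so the telescoping of the $\phi$'s does not close up but leaves the product of parallel transports around $\alpha$. Multiplying $\det(B_{F,\cdot})\det(\tilde B^{\mathrm t}_{\cdot,F})$ then squares the $\sqrt{c_e}$ into $c_e$, kills the residual ambiguous signs, and reproduces exactly $\w(\mathcal C)=\prod_{\{u,v\}\in\mathcal C}c_{u,v}\prod_{\alpha}(1-\varpi_\alpha)$.

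The main obstacle I anticipate is the careful sign bookkeeping in the two places where signs could go wrong: (i) the relative sign between the two incidence matrices $B$ and $\tilde B$ (one must verify that $\det(\tilde B^{\mathrm t}_{\cdot,F})\det(B_{F,\cdot})$ is manifestly a nonnegative-coefficient object times $\prod(1-\varpi_\alpha)$, with no stray $\pm$ depending on the chosen edge orientations), and (ii) the sign of the cyclic determinant, where one must confirm the combination is $(1-\varpi_\alpha)$ and not $(\varpi_\alpha-1)$ or $(1-\varpi_\alpha^{\pm1})$ — here the fact that a cycle contributes the \emph{same} factor whether traversed clockwise or counterclockwise (since the OCRSF sum counts each cyclic orientation once, and $\overline{\varpi_\alpha}$-type symmetry is built into $\phi_{v,u}=\phi_{u,v}^{-1}$) is what makes the bookkeeping consistent. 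Once these signs are pinned down, the identity follows; the remaining steps (Cauchy-Binet, pruning branches, evaluating triangular and cyclic determinants) are routine. I would also note that taking $\Phi=\mathbf 1$ recovers Theorem~\ref{MTT} as a special case, since then every monodromy is $1$ and only the acyclic terms ($Z$-type forests with all components attached to $s$, i.e. spanning trees) survive — a useful consistency check to include.
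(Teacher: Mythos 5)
The paper does not prove this theorem---it is quoted from \cite{For93} and \cite{Ken11}---so there is no internal argument to compare against; your Cauchy--Binet strategy is essentially the one underlying Kenyon's proof (writing $\mathbf{\Delta}=\tilde B^{\mathrm t}B$ for a pair of twisted incidence matrices, i.e. $\mathbf{\Delta}=d^*d$ for a twisted difference operator and its formal adjoint). The outline is sound: taking, for each edge $e$ with reference orientation $(u,v)$, the rows $B_{e,\cdot}=\sqrt{c_e}\,(\mathbf{1}_u-\phi_{u,v}^{-1}\mathbf{1}_v)$ and $\tilde B_{e,\cdot}=\sqrt{c_e}\,(\mathbf{1}_u-\phi_{u,v}\mathbf{1}_v)$ reproduces all entry types of \eqref{LB_Delta}, and the nonvanishing $|\V|\times|\V|$ minors are indexed exactly by edge sets $F$ whose components are root-trees or unicycles.

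Two points of bookkeeping must be tightened, one of which is a genuine gap. First, the phases do not ``cancel telescopically'' within a single tree minor: expanding $\det(B_{F,\cdot})$ along leaves leaves a residual product of $\phi_e^{-1}$ over the edges whose head is the eliminated vertex, and the cancellation occurs only against the matching residual product of $\phi_e$ in $\det(\tilde B^{\mathrm t}_{\cdot,F})$. Second, and more importantly, Cauchy--Binet sums over \emph{unoriented} edge subsets $F$, whereas the theorem sums over \emph{oriented} CRSFs, each cycle appearing once per orientation with weight $(1-\varpi_\alpha)$ for that orientation. For a unicyclic component the two cyclic minors evaluate (up to the cancelling phases and signs) to $1-\varpi_\alpha$ and $1-\varpi_\alpha^{-1}$ respectively, so their product is $(1-\varpi_\alpha)(1-\varpi_\alpha^{-1})=(1-\varpi_\alpha)+(1-\varpi_{\alpha^{-1}})$, which is the \emph{sum} of the weights of the two oriented versions of that component---not the weight $\w(\mathcal{C})$ of a single oriented $\mathcal{C}$, as you claim. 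This elementary identity (and its product over several unicyclic components, which expands into the sum over all $2^m$ orientation choices) is precisely what reconciles the unoriented Cauchy--Binet expansion with the oriented statement; your worry (ii) is resolved by it, and once it is made explicit the argument closes. For context, Forman's own proof in \cite{For93} takes a different route, expanding $\det\mathbf{\Delta}$ directly over permutations so that oriented cycles arise from permutation cycles and the factor $1-\varpi_\alpha$ from inclusion--exclusion against the diagonal terms; your route trades that combinatorics for the orientation-doubling identity above.
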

As the $\phi$'s are complex numbers, the starting point of a cycle is arbitrary. One also notes that cycles of length 2 have vanishing weight, due to the inverse property of parallel transports, i.e. $\phi_{v_0,v_1}=\phi_{v_1,v_0}^{-1}$.

More generally, we shall consider graphs with an even number of marked vertices, which are called \emph{nodes}. On such a graph $\G_s$ with nodes, we define an \emph{oriented cycle-rooted grove}\footnote{Our definition of a cycle-rooted grove differs slightly from that of \cite{KW15}, in which the root $s$ is considered as a node and a tree component contains one or more nodes. Moreover, we choose here to orient each component.} (OCRG) as a spanning subgraph of $\G_s$ whose components are either:
\begin{itemize}
\item a rooted tree containing $s$, in which all edges are oriented toward $s$;
\item a rooted tree containing exactly two nodes, in which all edges point toward one node, which serves as a secondary root in addition to $s$;
\item a cycle-rooted tree not containing any node nor $s$ (see Fig.~\ref{CRG_ex} for an example).
\end{itemize}
\vspace{-2mm}
We denote by $\Gamma_{\vsig}$ an OCRG with $2n$ nodes paired according to the oriented partition $\vsig={\textstyle{s_1\atop r_1}}|\cdots|{\textstyle{s_n\atop r_n}}$, that is, in which each rooted tree (except the one with $s$) contains a path from node $r_i$ to node $s_i$, for $1\le i\le n$. The weight of an OCRG $\Gamma_{\vsig}$ is given by the same formula as for an OCRSF (see Theorem~\ref{CRSF_thm}), with an extra factor accounting for (the inverse of) the product of parallel transports $\phi_{r_i\to s_i}$ along the unique path from node $r_i$ to node $s_i$ in each rooted tree $(1\le i\le n)$:
\begin{equation}
\w(\Gamma_{\vsig})=\prod_{\{u,v\}\in\Gamma_{\vsig}}c_{u,v}\prod_{\text{cycles }\alpha\in\Gamma_{\vsig}}(1-\varpi_{\alpha})\times\prod_{i=1}^{n}\phi_{r_i\to s_i}^{-1}.
\label{grove_weight}
\end{equation}
We use the notation $\Zr[\vsig]$ for the partition function for all OCRGs whose nodes are distributed in rooted trees according to the oriented pairing $\vsig$. They satisfy the following theorem, which generalizes Theorem~\ref{AMMT}, and in which $\Gr=\mathbf{\Delta}^{-1}$ is called the \emph{line bundle Green function}:
\begin{thm}[\hspace{-0.025cm}\cite{KW15}, Theorem 4.4\hspace{-0.025cm}]
\begin{equation}
\det\mathbf{\Delta}\det\Gr_R^S=\sum_{\rho\in\mathrm{S}_n}\epsilon(\rho)\,\Zr\big[{\textstyle{s_{\rho(1)}\atop r_1}}|\cdots|{\textstyle{s_{\rho(n)}\atop r_n}}\big].
\end{equation}
\label{grove_thm}
\end{thm}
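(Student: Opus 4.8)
The plan is to mimic the classical proof of the all-minors matrix-tree theorem (Theorem~\ref{AMMT}), adapting the combinatorial expansion of a determinant to the weighted count of OCRGs. First I would expand $\det\mathbf{\Delta}$ using Theorem~\ref{CRSF_thm}, which identifies it with the generating function of OCRSFs. The key algebraic input is Cramer's rule: writing $\Gr=\mathbf{\Delta}^{-1}$, each entry $\Gr^{s_j}_{r_i}$ is a cofactor of $\mathbf{\Delta}$ divided by $\det\mathbf{\Delta}$, so that $\det\mathbf{\Delta}\,\det\Gr_R^S$ equals, up to sign bookkeeping, a sum over permutations $\rho\in\mathrm{S}_n$ of products of cofactors, i.e.\ of determinants of $\mathbf{\Delta}$ with certain rows and columns deleted or altered. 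The combinatorial content of each such product is precisely a configuration of $n$ vertex-disjoint directed paths (one from $r_i$ to $s_{\rho(i)}$) together with a forest-with-cycles structure on the remaining vertices rooted at $s$ — that is, an OCRG of type $\vsig$ with the pairing ${s_{\rho(1)}\atop r_1}|\cdots|{s_{\rho(n)}\atop r_n}$.

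The heart of the argument is the bijective/weight-preserving identification. I would proceed as in Chaiken's proof: express $\Gr^{s_j}_{r_i}$ via the line bundle analogue of the "edges-to-paths" lemma, namely that $(\mathbf{\Delta}^{-1})^{v}_{u}$ counts weighted CRSFs in which $u$ and $v$ lie in the same tree component joined by a directed path, with the monodromy-deficit factors $(1-\varpi_\alpha)$ on cycles and an extra $\phi_{u\to v}^{-1}$ for the parallel transport along that path — this is exactly the weight~\eqref{grove_weight} specialized to one marked pair, and it follows from Theorem~\ref{CRSF_thm} applied to the graph with $u$ additionally wired. Assembling the $n$ factors and using the Cauchy–Binet-type expansion of $\det\Gr_R^S$, one obtains a sum over $\rho$ of products; the signature $\epsilon(\rho)$ that appears in the statement arises from reconciling the sign conventions in Cramer's rule with the reordering of columns. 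One must then check that when the $n$ paths and the cycle-rooted components are forced to be vertex-disjoint (which is automatic from the disjointness of tree components in a spanning subgraph), no over- or under-counting occurs — the inclusion–exclusion over non-disjoint path configurations cancels, exactly as in the trivial-connection case, because the extra terms are antisymmetric under swapping the endpoints of two intersecting paths.

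The main obstacle I anticipate is handling the monodromy factors $(1-\varpi_\alpha)$ consistently throughout the Cramer's-rule manipulation: unlike the trivial-connection case, the cofactors of $\mathbf{\Delta}$ are not simply nonnegative sums of tree weights but signed sums involving cycle weights, so one has to verify that the cancellation of non-disjoint path configurations still goes through after these extra factors are inserted, and that the parallel transports $\phi_{r_i\to s_i}^{-1}$ combine correctly when paths are rerouted. A clean way around this is to observe that Theorem~\ref{CRSF_thm} already packages all cycle contributions into $\det\mathbf{\Delta}$, so one can work relative to that measure: divide through by $\det\mathbf{\Delta}$, interpret $\Gr^{s_j}_{r_i}$ probabilistically as a ratio of partition functions, and reduce the identity to a purely forest-theoretic statement to which Chaiken's original argument applies verbatim, with $\phi$-weights carried along passively. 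I expect the bulk of the remaining work to be the careful verification of signs, which I would relegate to the appendix (as the paper indicates is done for the Section~\ref{sec5} propositions).
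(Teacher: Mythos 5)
The paper does not prove this statement: it is imported verbatim from \cite{KW15} (Theorem 4.4) and used as a black box, so there is no in-paper proof to compare against. Judged on its own merits, your overall strategy (reduce to a minor of $\mathbf{\Delta}$ via Cramer-type identities, then expand that minor combinatorially \`a la Forman/Chaiken) is indeed how this result is established, and it is also how the present paper handles the free-boundary analogue in Proposition~\ref{grove_thm2}.

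However, the central algebraic step as you describe it does not work. If you expand $\det\Gr_R^S$ by the Leibniz formula and replace each entry $\Gr_{r_i,s_{\rho(i)}}$ by its cofactor over $\det\mathbf{\Delta}$, then multiplying by a single factor of $\det\mathbf{\Delta}$ leaves
\begin{equation*}
\det\mathbf{\Delta}\det\Gr_R^S=\frac{1}{(\det\mathbf{\Delta})^{n-1}}\sum_{\rho}\epsilon(\rho)\prod_{i=1}^{n}\Zr\big[{\textstyle{s_{\rho(i)}\atop r_i}}\big],
\end{equation*}
which is a sum of \emph{products of single-path} partition functions divided by a spurious power of $\det\mathbf{\Delta}$ --- not the sum of \emph{joint} partition functions $\Zr\big[{\textstyle{s_{\rho(1)}\atop r_1}}|\cdots|{\textstyle{s_{\rho(n)}\atop r_n}}\big]$ claimed in the theorem. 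The Lindstr\"om--Gessel--Viennot tail-swapping you invoke acts on families of walks, not on these products of grove partition functions, so it cannot repair this discrepancy. The correct tool is the Jacobi identity for an $n\times n$ minor of the inverse matrix, $\det\mathbf{\Delta}\det\Gr_R^S=(-1)^{\Sigma R+\Sigma S}\det\mathbf{\Delta}_{\V\bs S}^{\V\bs R}$ (exactly the ``Cramer's formula for minors'' used in the proof of Proposition~\ref{grove_thm2}), which produces a \emph{single} signed determinant; one then expands that determinant directly into OCRGs, where the vertex-disjointness of the $n$ paths is automatic from the out-degree constraints rather than obtained by cancellation. Your remarks about tracking the monodromy factors $(1-\varpi_\alpha)$ and the parallel transports $\phi_{r_i\to s_i}^{-1}$ are pertinent, but they attach to this single-minor expansion, not to a product-of-cofactors decomposition.
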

\vspace{-0.75cm}
As a matter of notation, we shall write for simplicity $\Gr\textstyle{{s_1,\ldots,s_n}\atop{r_1,\ldots,r_n}}$ for $\Gr_R^S$ when listing the elements of $R,S$ explicitly. A particular case of interest we shall use in later sections consists in a connection depending on a single parameter $z\in\C^*$, namely a collection of parallel transports $\phi_{u,v}$ taking their values in $\{1,z,z^{-1}\}$ for any $(u,v)\in\E$. As we shall see below, Schramm's formula for paths in spanning forests can be expressed as ratios of partition functions for groves in the limit of a trivial connection, $\Phi\to\mathbf{1}$, that is, for $z\to 1$. The main tool for computing such quantities is Theorem \ref{grove_thm}, which requires knowledge of the line bundle Green function $\Gr(z)=\mathbf{\Delta}(z)^{-1}$. Let us recall the first two terms of its perturbative expansion around $z=1$, given in \cite{KW15}:
\begin{equation}
\Gr_{u,v}=G_{u,v}+(z-1)G'_{u,v}+\ldots,\quad\text{with }G'_{u,v}=\sum_{(k,\l):\,\phi_{k,\l}=z}c_{k,\l}\big(G_{u,\l}\,G_{k,v}-G_{u,k}\,G_{\l,v}\big),
\label{Gp_def}
\end{equation}
where $G=\Delta^{-1}$ is the usual Green function of $\G_s$ with Dirichlet boundary conditions at $s$, and $G'$ is called the \emph{derivative of the Green function}. Higher-order terms can be expressed in terms of the standard Green function $G$ as well, but will not be required for the computations presented in this paper.

\begin{figure}[h]
\centering
\begin{tikzpicture}[scale=0.60,font=\large,very thick]
\draw[ProcessBlue] (-0.5,-0.5) rectangle (10.5,7.5);
\draw[step=1cm,dotted,thin] (-0.5,-0.5) grid (10.5,7.5);
\node[ProcessBlue] at (10.75,3.5) {\large {s}};
\path[draw,postaction={on each segment={mid arrow}},ProcessBlue] (0,7)--(1,7)--(2,7)--(2,7.5);
\path[draw,postaction={on each segment={mid arrow}},ProcessBlue] (1,6)--(1,7);
\path[draw,postaction={on each segment={mid arrow}},ProcessBlue] (3,7)--(4,7)--(4,7.5);
\path[draw,postaction={on each segment={mid arrow}},ProcessBlue] (1,0)--(0,0)--(-0.5,0);
\path[draw,postaction={on each segment={mid arrow}},ProcessBlue] (0,1)--(0,0);
\path[draw,postaction={on each segment={mid arrow}},ProcessBlue] (5,1)--(5,0)--(6,0)--(6,-0.5);
\path[draw,postaction={on each segment={mid arrow}},ProcessBlue] (10,0)--(10.5,0);
\path[draw,postaction={on each segment={mid arrow}},ProcessBlue] (5,5)--(6,5)--(6,6)--(7,6)--(8,6)--(8,7)--(9,7)--(10,7)--(10,7.5);
\path[draw,postaction={on each segment={mid arrow}},ProcessBlue] (7,7)--(7,6);
\path[draw,postaction={on each segment={mid arrow}},ProcessBlue] (8,5)--(8,6);
\path[draw,postaction={on each segment={mid arrow}},ProcessBlue] (9,6)--(10,6)--(10.5,6);
\path[draw,postaction={on each segment={mid arrow}},line width=0.1cm,Emerald] (3,5)--(2,5)--(2,6)--(3,6)--(4,6)--(4,5)--(3,5);
\path[draw,postaction={on each segment={mid arrow}},Emerald] (0,6)--(0,5)--(1,5)--(2,5);
\path[draw,postaction={on each segment={mid arrow}},Emerald] (2,4)--(2,5);
\path[draw,postaction={on each segment={mid arrow}},Emerald] (0,2)--(0,3)--(1,3)--(1,4)--(0,4)--(0,5);
\path[draw,postaction={on each segment={mid arrow}},Emerald] (6,3)--(5,3)--(5,4)--(4,4)--(3,4)--(3,5);
\path[draw,postaction={on each segment={mid arrow}},Emerald] (6,7)--(5,7)--(5,6)--(4,6)--(4,5)--(3,5);
\path[draw,postaction={on each segment={mid arrow}},Emerald] (4,3)--(5,3);
\path[draw,postaction={on each segment={mid arrow}},BurntOrange] (1,1)--(1,2)--(2,2)--(2,3)--(3,3)--(3,2)--(4,2);
\path[draw,postaction={on each segment={mid arrow}},BurntOrange] (7,3)--(7,2)--(8,2)--(8,1)--(7,1)--(6,1)--(6,2)--(5,2)--(4,2)--(4,1)--(4,0)--(3,0)--(2,0);
\path[draw,postaction={on each segment={mid arrow}},BurntOrange] (2,1)--(2,0);
\path[draw,postaction={on each segment={mid arrow}},BurntOrange] (3,1)--(3,2);
\path[draw,postaction={on each segment={mid arrow}},BurntOrange] (7,0)--(7,1);
\path[draw,postaction={on each segment={mid arrow}},red] (8,0)--(9,0)--(9,1)--(9,2)--(10,2);
\path[draw,postaction={on each segment={mid arrow}},red] (10,1)--(10,2)--(10,3)--(9,3)--(8,3)--(8,4);
\path[draw,postaction={on each segment={mid arrow}},red] (10,4)--(10,5);
\path[draw,postaction={on each segment={mid arrow}},red] (6,4)--(7,4)--(8,4)--(9,4)--(9,5)--(10,5);
\path[draw,postaction={on each segment={mid arrow}},red] (7,5)--(7,4);
\filldraw[BurntOrange] (2,0) circle (0.15cm) node[above right] {$1$};
\filldraw[BurntOrange] (7,0) circle (0.15cm) node[above right] {$2$};
\filldraw[red] (9,0) circle (0.15cm) node[above right] {$3$};
\filldraw[red] (10,5) circle (0.15cm) node[above left] {$4$};
\end{tikzpicture}
\caption{Oriented cycle-rooted grove on a rectangular grid with wired boundary, where the root $s$ is drawn as a box surrounding the grid. The graph contains four nodes paired according to $\vsig={\textstyle{1\atop 2}}\big|{\textstyle{4\atop 3}}$. The grove has four connected components: three trees $\mathcal{T}_s$, $\mathcal{T}_{2\to 1}$, $\mathcal{T}_{3\to 4}$ and a cycle-rooted tree without nodes, whose cycle is highlighted with a heavy line. All the edges in $\mathcal{T}_{2\to 1}$ (resp. $\mathcal{T}_{3\to 4}$) are oriented toward node 1 (resp. node 4).}
\label{CRG_ex}
\end{figure}


\subsection{Graphs with free boundary conditions}
\label{sec2.3}

Up to this point, we have considered a connected graph $\G$ with a root $s$, for which the Dirichlet Laplacian $\Delta$ is invertible. For such a graph, a subset $\mathcal{D}$ of the vertices $\V$ are wired to $s$, while the vertices of $\V\bs\mathcal{D}$ are free. A particular case occurs when $\mathcal{D}=\varnothing$, namely, if all vertices are free (in the scaling limit, this corresponds to uniform Neumann boundary conditions, see Section~\ref{sec4}). As the root $s$ is isolated from all the vertices of $\G$, it may be eliminated altogether. Although this situation may seem the most natural from the graph-theoretical point of view, the singularity of the standard Laplacian $\Delta_0=\Delta_{\G}$ raises technical difficulties, as we shall see below (the subscript 0 serves as a reminder that $\Delta_0$ has a nondegenerate zero eigenvalue).

Let $\left\{|f_{\lambda}\rangle\right\}$ denote the collection of eigenvectors of $\Delta_0$, such that $\Delta_0|f_{\lambda}\rangle=\lambda|f_{\lambda}\rangle$ and $\langle f_\lambda|f_{\lambda}\rangle=1$. Let $V_0=\langle\,|f_0\rangle\,\rangle$ denote the one-dimensional eigenspace associated with the eigenvalue $\lambda_0=0$, and $\mathcal{P}_0=|f_0\rangle\langle f_0|$ the projector onto $V_0$. Explicitly, $(\mathcal{P}_0)_{u,v}=1/\N$ for any $u,v\in\V$, where $\N=|\V|$. We define the \emph{regularized Green function} of the graph $\G$ as
\begin{equation}
G=\sum_{\lambda\neq 0}\frac{1}{\lambda}|f_{\lambda}\rangle\langle f_{\lambda}|,
\label{Gf_reg}
\end{equation}
which satisfies the relations $\Delta_0 G=\mathbb{I}-\mathcal{P}_0=G\Delta_0$. Using Eq.~\eqref{Gf_reg} and the eigendecomposition of $\Delta_0$, one readily finds that the following relations hold for any nonzero parameter $t$,
\begin{equation}
\det(\Delta_0+t\mathcal{P}_0)=t\det\hspace{0mm}'\Delta_0,\quad(\Delta_0+t\mathcal{P}_0)^{-1}=G+t^{-1}\mathcal{P}_0,
\end{equation}
where $\det'\Delta_0$ is the pseudodeterminant of $\Delta_0$ (i.e. the product of its nonzero eigenvalues).

We now show how to adapt Theorems \ref{AMMT} and \ref{grove_thm} to graphs with free boundary conditions, in terms of the regularized Green function $G$. Note that for a generic connection $\Phi$, the line bundle Laplacian $\mathbf{\Delta}_0$ is nonsingular, so its inverse $\Gr_0$ is well defined. It is therefore possible to write a version of Theorem~\ref{grove_thm} in terms of these two matrices. However, it may often be difficult to control the singular behavior of $\Gr_0$ in the limit $\Phi\to\mathbf{1}$. In such a case, we can consider a small perturbation of the line bundle Laplacian, $\mathbf{\Delta}_0+t\mathcal{P}_0$, and denote by $\Gr_t$ its inverse. The analogue of Theorems \ref{AMMT} and \ref{grove_thm} is given by
\begin{prop}
Let $R=\{r_1,\ldots,r_n\}$, $S=\{s_1,\ldots,s_n\}$ be disjoint subsets of vertices of $\G$. We denote by $\left(\Gr_t\right)_R^S$ the submatrix of $\Gr_t$ whose rows and columns are indexed by $R$ and $S$, respectively. Then
\begin{equation}
\lim_{t\to 0}\det(\mathbf{\Delta}_0+t\mathcal{P}_0)\det\left(\Gr_t\right)_R^S=\sum_{\rho\in\mathrm{S}_n}\epsilon(\rho)\,\Zr\big[\textstyle{s_{\rho(1)}\atop r_1}|\cdots|\textstyle{s_{\rho(n)}\atop r_n}\big].
\label{grove_eq_free}
\end{equation}
In the limit $\Phi\to\mathbf{1}$, the preceding equation reads
\begin{equation}
\frac{1}{\N}\det\hspace{0mm}'\Delta_0\times\det\begin{pmatrix}1 & G_{r_1,s_2}{-}G_{r_1,s_1} & \cdots & G_{r_1,s_n}{-}G_{r_1,s_1}\\ \vdots & \vdots & & \vdots\\ 1 & G_{r_n,s_2}{-}G_{r_n,s_1} & \cdots & G_{r_n,s_n}{-}G_{r_n,s_1}\end{pmatrix}=\sum_{\rho\in\mathrm{S}_n}\epsilon(\rho)\,Z[r_1 s_{\rho(1)}|\ldots|r_n s_{\rho(n)}],
\end{equation}
where $\det\hspace{0mm}'\Delta_0/\N$ is equal to the weighted sum of spanning trees on $\G$ by Kirchhoff's theorem.
\label{grove_thm2}
\end{prop}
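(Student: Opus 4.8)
\emph{Sketch of proof.} The plan is to deduce the statement from the rooted version already at hand, Theorem~\ref{grove_thm}, by adjoining an auxiliary root and then letting the extra edges degenerate.

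First I would fix $\varepsilon>0$ and form the graph $\G_s^{(\varepsilon)}$ obtained from $\G$ by adjoining a root $s$ joined to \emph{every} vertex of $\G$ by an edge of conductance $\varepsilon$ with trivial parallel transport (so that $\mathcal D=\V$). Its line bundle Laplacian with Dirichlet condition at $s$ is precisely $\mathbf{\Delta}_0+\varepsilon\,\mathbb I$: each diagonal entry gains $\varepsilon$ from the new edge to $s$, while the off-diagonal entries among the vertices of $\G$ are unchanged. Applying Theorem~\ref{grove_thm} to $\G_s^{(\varepsilon)}$ gives, for every $\varepsilon>0$,
\begin{equation*}
\det\big(\mathbf{\Delta}_0+\varepsilon\,\mathbb I\big)\det\big((\mathbf{\Delta}_0+\varepsilon\,\mathbb I)^{-1}\big)_R^S=\sum_{\rho\in\mathrm S_n}\epsilon(\rho)\,\Zr^{(\varepsilon)}\big[{\textstyle{s_{\rho(1)}\atop r_1}}|\cdots|{\textstyle{s_{\rho(n)}\atop r_n}}\big],
\end{equation*}
where $\Zr^{(\varepsilon)}$ is the OCRG partition function of $\G_s^{(\varepsilon)}$. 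I would then let $\varepsilon\to0$. On the right, an OCRG of $\G_s^{(\varepsilon)}$ uses a new edge only inside its $s$-rooted component, and at least one of them unless that component is the isolated vertex $\{s\}$; hence $\Zr^{(\varepsilon)}[\vsig]$ is a polynomial in $\varepsilon$ whose value at $\varepsilon=0$ is the OCRG partition function $\Zr[\vsig]$ of $\G$ with free boundary (the $s$-component trivial). For generic $\Phi$ the matrix $\mathbf{\Delta}_0$ is invertible, so both sides are continuous at $\varepsilon=0$, which yields $\det\mathbf{\Delta}_0\det(\Gr_0)_R^S=\sum_\rho\epsilon(\rho)\Zr[\ldots]$; since $\Gr_t\to\Gr_0$ as $t\to0$ (continuity of inversion at the invertible matrix $\mathbf{\Delta}_0$), this is the first displayed identity of the proposition for generic $\Phi$.

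For the trivial connection I would rerun the same computation at $\Phi=\mathbf 1$, where $\mathbf{\Delta}_0=\Delta_0$ is singular but everything stays controlled by the regularized Green function. From the eigendecomposition of $\Delta_0$ one has $\det(\Delta_0+\varepsilon\,\mathbb I)=\varepsilon\det{}'\Delta_0+O(\varepsilon^2)$ and $(\Delta_0+\varepsilon\,\mathbb I)^{-1}=\varepsilon^{-1}\mathcal P_0+G+O(\varepsilon)$; since $(\mathcal P_0)_R^S=\N^{-1}\mathbf 1\mathbf 1^{\text{t}}$ has rank one, the matrix-determinant lemma gives $\det\big((\Delta_0+\varepsilon\,\mathbb I)^{-1}\big)_R^S=(\varepsilon\N)^{-1}\,\mathbf 1^{\text{t}}\mathrm{adj}(G_R^S)\,\mathbf 1+O(1)$, so the left side of the identity above at $\Phi=\mathbf 1$ tends, as $\varepsilon\to0$, to $\N^{-1}\det{}'\Delta_0\cdot\mathbf 1^{\text{t}}\mathrm{adj}(G_R^S)\,\mathbf 1$. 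On the right, at $\Phi=\mathbf 1$ every cycle carries a weight factor $1-\varpi_\alpha=0$, so only cycle-free groves survive and $\Zr^{(\varepsilon)}[\vsig]\to Z[r_1 s_1|\cdots|r_n s_n]$, whence the right side becomes $\sum_\rho\epsilon(\rho)Z[r_1 s_{\rho(1)}|\cdots|r_n s_{\rho(n)}]$. It remains to recognize $\mathbf 1^{\text{t}}\mathrm{adj}(G_R^S)\,\mathbf 1$, the sum of all cofactors of $G_R^S=(G_{r_i,s_j})_{i,j}$, as the determinant written in the statement — obtained from $G_R^S$ by replacing its first column by $\mathbf 1$ and subtracting the old first column from each of the others, a purely linear-algebraic identity checked by multilinearity — and to use that $\det{}'\Delta_0/\N$ is the weighted number of spanning trees of $\G$ by Kirchhoff's theorem (Theorem~\ref{MTT} for the pseudodeterminant).

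The one genuinely delicate point is the passage $\Phi\to\mathbf 1$: because $\det\mathbf{\Delta}_0\to0$ while $\det(\Gr_0)_R^S\to\infty$, one cannot simply set $\Phi=\mathbf 1$ in the first identity, which is exactly why it is phrased with the regularization $\mathbf{\Delta}_0+t\mathcal P_0$. Running the auxiliary-root argument directly at $\Phi=\mathbf 1$, as above, circumvents this $0\cdot\infty$ indeterminacy; one could alternatively note that for fixed $t\ne0$ the function $\Phi\mapsto\det(\mathbf{\Delta}_0+t\mathcal P_0)\det(\Gr_t)_R^S$ is analytic near $\mathbf 1$ — since $\det(\Delta_0+t\mathcal P_0)=t\det{}'\Delta_0\ne0$ keeps the perturbed Laplacian invertible — as is the grove partition function, so the identity extends to $\mathbf 1$ and one then takes $t\to0$ using $\Gr_t|_{\Phi=\mathbf 1}=G+t^{-1}\mathcal P_0$ together with $\det(\Delta_0+t\mathcal P_0)=t\det{}'\Delta_0$. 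Everything else is routine bookkeeping of combinatorial weights and standard determinant manipulations.
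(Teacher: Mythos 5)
Your argument is correct, but it takes a genuinely different route from the paper's. The paper's proof is essentially two lines: Jacobi's identity for complementary minors rewrites $\det(\mathbf{\Delta}_0+t\mathcal{P}_0)\det(\Gr_t)_R^S$ as the signed minor $(-1)^{\Sigma R+\Sigma S}\det(\mathbf{\Delta}_0+t\mathcal{P}_0)_{\V\bs S}^{\V\bs R}$, which is manifestly continuous at $t=0$, and its value at $t=0$ is identified with $\sum_{\rho}\epsilon(\rho)\,\Zr[\cdots]$ by the all-minors form of the grove theorem (Theorem 4.4 of \cite{KW15}); the $\Phi\to\mathbf{1}$ specialization is then read off from $(G+t^{-1}\mathcal{P}_0)_R^S$ by exactly the column manipulation you describe (subtracting the first column from the others to confine the $t^{-1}$ to a single column). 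You instead deduce the free-boundary identity from the rooted Theorem~\ref{grove_thm} by adjoining a uniform root of conductance $\varepsilon$ and letting $\varepsilon\to 0^+$, noting that any grove with a nontrivial $s$-component carries at least one factor of $\varepsilon$. Both routes work. The paper's buys uniformity: the Jacobi identity holds for every $\Phi$ and every $t\neq 0$ simultaneously, so there is no need to treat generic $\Phi$ and $\Phi=\mathbf{1}$ separately and no exchange of limits to justify. Yours buys self-containedness: it uses only Theorem~\ref{grove_thm} for an invertible rooted Laplacian, rather than the all-minors statement for a complementary minor of the possibly singular $\mathbf{\Delta}_0$. Two points worth tightening: (i) for a non-unitary connection $\mathbf{\Delta}_0+\varepsilon\,\mathbb{I}$ need not be invertible for \emph{every} $\varepsilon>0$, but since both sides of your identity are rational in $\varepsilon$ it suffices that it holds for all but finitely many values; (ii) your alternative closing remark (extend the identity to $\Phi=\mathbf{1}$ by analyticity at fixed $t$, then let $t\to 0$) is the one step that would need real care, because the identity you want to extend is itself a $t\to 0$ limit --- but your main route, running the $\varepsilon$-root computation directly at $\Phi=\mathbf{1}$, sidesteps this, and your identification of $\mathbf{1}^{\text{t}}\mathrm{adj}(G_R^S)\,\mathbf{1}$ with the stated determinant is the same linear algebra as the paper's column trick.
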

\vspace{-0.75cm}
\begin{proof}
Using Cramer's formula for minors, we find that
\begin{equation}
\lim_{t\to 0}\det(\mathbf{\Delta}_0+t\mathcal{P}_0)\det\left(\Gr_t\right)_R^S=\lim_{t\to 0}(-1)^{\Sigma R+\Sigma S}\det(\mathbf{\Delta}_0+t\mathcal{P}_0)_{\V\bs S}^{\V\bs R}=(-1)^{\Sigma R+\Sigma S}\det(\mathbf{\Delta}_0)_{\V\bs S}^{\V\bs R},
\end{equation}
where $\Sigma R$ (resp. $\Sigma S$) denotes the sum of the indices of the columns of $\mathbf{\Delta}_0+t\mathcal{P}_0$ indexed by elements of $R$ (resp. $S$). The latter (signed) determinant corresponds precisely to the right-hand side of Eq.~\eqref{grove_eq_free} (see Theorem 4.4 in \cite{KW15}). In the limit $\Phi\to\mathbf{1}$, the left-hand side of Eq.~\eqref{grove_eq_free} yields
\begin{equation*}
\lim_{t\to 0}\det\hspace{0mm}'\Delta_0\times t\det(G+t^{-1}\mathcal{P}_0)_R^S\,,
\end{equation*}
which can be evaluated by subtracting the first column of $(G+t^{-1}\mathcal{P}_0)_R^S$ from its other columns to isolate the $t^{-1}$ dependence in the first column only (recall that $(\mathcal{P}_0)_{u,v}=1/\N$ for any $u,v\in\V$).
\end{proof}

For concrete applications in the next two sections, we shall use Proposition~\ref{grove_thm2} for a connection $\Phi$ with parallel transports $1$, $z$ or $z^{-1}$. In particular, we shall need to compute the first-order expansion of the line bundle Green function $\Gr_t$ around $z=1$, which reads
\begin{equation}
\begin{split}
\left(\Gr_t\right)_{u,v}&=\left(G_{u,v}+t^{-1}\N^{-1}\right)+(z-1)\sum_{(k,\l):\,\phi_{k,\l}=z}c_{k,\l}\big(G_{u,\l}\,G_{k,v}-G_{u,k}\,G_{\l,v}\big)\\
&\quad+(z-1)t^{-1}\N^{-1}\sum_{(k,\l):\,\phi_{k,\l}=z}c_{k,\l}\big(G_{u,\l}+G_{k,v}-G_{u,k}-G_{\l,v}\big)+\ldots
\end{split}
\end{equation}
We write the first-order derivative of $\Gr_t$ as follows: $\partial_z\Gr_t|_{z=1}=G'+t^{-1}\N^{-1}\widetilde{G}'$, with
\begin{equation}
\begin{split}
G'_{u,v}&=\sum_{(k,\l):\,\phi_{k,\l}=z}c_{k,\l}\big(G_{u,\l}\,G_{k,v}-G_{u,k}\,G_{\l,v}\big),\\
\widetilde{G}'_{u,v}&=\sum_{(k,\l):\,\phi_{k,\l}=z}c_{k,\l}\big(G_{u,\l}+G_{k,v}-G_{u,k}-G_{\l,v}\big).
\label{Gp_def2}
\end{split}
\end{equation}


\section{Paths in spanning forests and Schramm's formula}
\label{sec3}

In this section, we first define a measure on chemical paths in spanning forests between two fixed vertices $u_1,u_2$ of an unoriented connected graph, and then extend it to paths in oriented cycle-rooted groves. For $u_1,u_2$ located on the boundary of the graph, we study the following question about a random simple path from $u_1$ to $u_2$ with respect to the measure on forests: What is the probability that such a path leaves a marked face $f$ to its left, as illustrated in Fig.~\ref{Prb}? In what follows, we give an explicit combinatorial expression for this probability, called Schramm's formula, by taking the limit $\Phi\to\mathbf{1}$ of the corresponding probability in the grove model, in a nontrivial way. Our formula depends only on the standard Green function, which is well known for regular graphs such as those we shall consider in Section \ref{sec4}.

\begin{figure}[h]
\centering
\begin{tikzpicture}[scale=1,font=\small]

\begin{scope}
\draw[step=0.5cm,densely dotted] (-5,-3) grid (2,0);
\draw[densely dotted,fill=Aquamarine!50!white] (-0.5,-2) rectangle (0,-1.5);
\node[blue] at (-0.25,-1.75) {\Large $f$};
\draw[ultra thick,red] (-3,-3)--(-3,-2.5)--(-2.5,-2.5)--(-2.5,-1.5)--(-1,-1.5)--(-1,-2)--(-1.5,-2)--(-1.5,-2.5)--(-1,-2.5)--(-1,-3)--(0,-3);
\draw[-latex,ultra thick,red] (-2.5,-2)--(-2.5,-1.75);
\draw[ultra thick,blue] (-3,-3)--(-3.5,-3)--(-3.5,-1.5)--(-4,-1.5)--(-4,-1)--(-2,-1)--(-2,-0.5)--(-0.5,-0.5)--(-0.5,-1)--(0.5,-1)--(0.5,-1.5)--(1,-1.5)--(1,-2)--(0.5,-2)--(0.5,-2.5)--(0,-2.5)--(0,-3);
\draw[-latex,ultra thick,blue] (-2.5,-1)--(-2.125,-1);
\filldraw (-3,-3) circle (0.1cm);
\node at (-3,-3.375) {$u_1$};
\filldraw (0,-3) circle (0.1cm);
\node at (0,-3.375) {$u_2$};
\node[red] at (-2.25,-2.25) {\large L};
\node[blue] at (1.25,-1.75) {\large R};
\end{scope}

\begin{scope}[scale=0.8,shift={(8,-2)}]
\filldraw[Aquamarine!50!white] (0,0) circle (1cm);
\foreach \r in {1,1.5,...,2.5}{\foreach \t in {0,30,...,330}{
\draw[help lines,densely dotted,thick] (0,0) circle (\r);
\filldraw (\t:\r) circle (0.025cm);
\draw[help lines,dotted] (\t:1)--(\t:2.5);}}
\draw[very thick,blue] (210:2.5)to[out=120,in=270](180:2.5)to(180:1.5)to[out=90,in=240](150:1.5)to(150:2)to[out=60,in=180](90:2)to(90:1.5)to[out=0,in=150](60:1.5)to(60:1)to[out=330,in=90](0:1)to(0:2)to[out=270,in=60](330:2)to(330:2.5);
\draw[very thick,red] (210:2.5)to(210:1)to[out=300,in=150](240:1)to(240:2)to[out=330,in=180](270:2)to(270:2.5)to[out=0,in=210](300:2.5)to[out=30,in=240](330:2.5);
\draw[line width=0.06cm,-latex,blue] (180:2)--(180:1.75);
\draw[line width=0.06cm,-latex,red] (240:1.5)--(240:1.75);
\filldraw (210:2.5) circle (0.075cm) node[below left] {$u_1$};
\filldraw (330:2.5) circle (0.075cm) node[below right] {$u_2$};
\node[blue] at (135:2.9) {CW};
\node[red] at (270:2.9) {CCW};
\end{scope}

\end{tikzpicture}
\caption{On the left are two simple paths on a rectangular grid between two boundary vertices, from $u_1$ to $u_2$; the red (resp. blue) path leaving the marked face $f$ to its left (resp. right). On the right is a counterclockwise (resp. clockwise) path on an annulus, which can be viewed as leaving the central circular face to its left (resp. right).}
\label{Prb}
\end{figure}
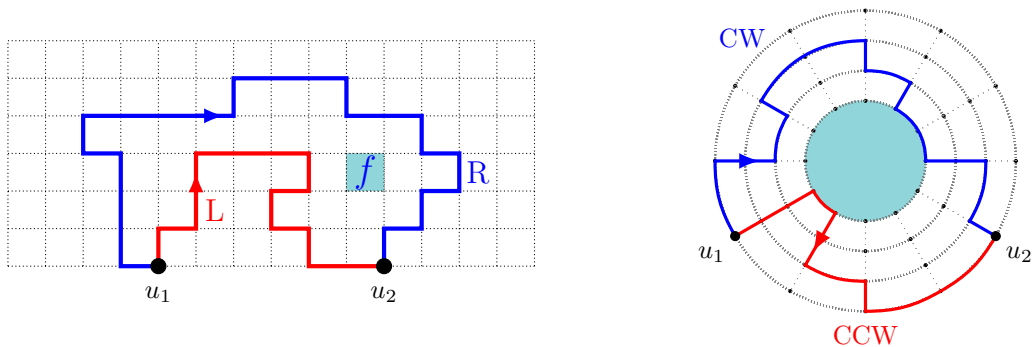


\subsection{Measures on paths in forests and groves}
\label{sec3.1}

Let $\G_s=\G\cup\{s\}$ be an unoriented connected graph with a root $s$ and let $u_1,u_2\neq s$ be two of its vertices. We consider the set of all simple unoriented paths on $\G$ between $u_1$ and $u_2$ (a path is simple if it has no self-intersection). We define the weight of such a path $\gamma:u_1\leftrightarrow u_2$ as the weighted sum of two-component spanning forests on $\G_s$ in which the entire path $\gamma$ belongs to one tree, and the vertex $s$ to the other:
\begin{equation}
\w_{\text{SF}}(\gamma)=\sum_{\text{2SFs $\mathcal{F}\supset\gamma$}}\w(\mathcal{F})=C(\gamma)\det\Delta^{(\gamma)},
\label{SF_weight}
\end{equation}
where $C(\gamma)$ is the product of all the conductances along $\gamma$. Here $\Delta^{(\gamma)}$ is the submatrix of the Dirichlet Laplacian $\Delta=\Delta_{\G_s}^{(s)}$, in which the rows and columns indexed by the vertices of $\gamma$ have been removed. The second equality in \eqref{SF_weight} stems from the fact that 2-component spanning forests on $\G_s$ that contain a fixed path $\gamma:u_1\leftrightarrow u_2$ of length $n$ are in one-to-one correspondence with $(n{+}2)$-component spanning forests on $\G_{s,\gamma}$, in which all the vertices of $\gamma$ are chosen as roots alongside $s$ (see Fig.~\ref{2SF_ex}). The partition function for all paths between $u_1$ and $u_2$ in two-component spanning forests is given by $Z[12]\equiv Z[u_1 u_2]$ as defined in Section~\ref{sec2.1}, and reads by virtue of Theorem~\ref{AMMT}
\begin{equation}
Z[12]=\sum_{\gamma:u_1\leftrightarrow u_2}\w_{\text{SF}}(\gamma)=G_{1,2}\det\Delta,
\label{Z12}
\end{equation}
where $G_{1,2}\equiv G_{u_1,u_2}=(\Delta^{-1})_{u_1,u_2}$. The probability distribution on simple paths in two-component spanning forests is therefore given by
\begin{equation}
\P_{\text{SF}}(\gamma:u_1\leftrightarrow u_2)=\frac{\w_{\text{SF}}(\gamma)}{\sum_{\gamma:u_1\leftrightarrow u_2}\w_{\text{SF}}(\gamma)}=\frac{C(\gamma)\det\Delta^{(\gamma)}}{G_{1,2}\det\Delta}.
\label{P_SF}
\end{equation}
In what follows, we shall consider \emph{oriented} simple paths on $\G$ \emph{from} $u_1$ \emph{to} $u_2$. It is therefore natural to define a probability measure on such paths as follows:
\begin{equation}
\P_{\text{SF}}(\gamma:u_1\to u_2)=\P_{\text{SF}}(\gamma^*:u_1\leftrightarrow u_2),
\end{equation}
where $\gamma$ is obtained from $\gamma^*$ by orienting all of its edges toward $u_2$. Note that this implies that an oriented path $\gamma$ and the reverse path $\gamma^{-1}$ have the same probability.

\begin{figure}[h]
\centering
\begin{tikzpicture}[scale=0.5,font=\large]

\begin{scope}[very thick,blue!45!white]
\draw (-0.5,-0.5) rectangle (10.5,7.5);
\draw[step=1cm,densely dotted,line width=0.7pt] (-0.5,-0.5) grid (10.5,7.5);
\node[blue] at (10.8,3.5) {\large {s}};
\draw (2,7.5)--(2,7)--(0,7);
\draw (1,7)--(1,6);
\draw (4,7.5)--(4,7)--(3,7)--(3,6);
\draw (-0.5,0)--(1,0);
\draw (0,0)--(0,1);
\draw (10.5,0)--(10,0);
\draw (10,7.5)--(10,7)--(8,7)--(8,6)--(6,6)--(6,5)--(5,5);
\draw (10.5,6)--(9,6);
\draw (8,5)--(8,6);
\draw (7,7)--(7,6);
\draw (2,5)--(2,6)--(4,6);
\draw (2,4)--(2,5)--(0,5)--(0,6);
\draw (0,5)--(0,4)--(1,4)--(1,3)--(0,3)--(0,2);
\draw (3,5)--(3,4)--(5,4)--(5,3)--(6,3);
\draw (3,5)--(4,5)--(4,6)--(5,6)--(5,7)--(6,7);
\draw (4,3)--(5,3);
\draw[ForestGreen] (6,0)--(5,0)--(5,1)--(4,1);
\draw[ForestGreen] (1,1)--(1,2)--(2,2)--(2,3)--(3,3)--(3,2)--(4,2);
\draw[ForestGreen] (2,1)--(2,0);
\draw[ForestGreen] (3,1)--(3,2);
\draw[ForestGreen] (7,0)--(7,1);
\draw[ForestGreen] (8,0)--(9,0)--(9,1)--(9,2)--(10,2);
\draw[ForestGreen] (10,1)--(10,2)--(10,3)--(9,3)--(8,3)--(8,4);
\draw[ForestGreen] (10,4)--(10,5);
\draw[ForestGreen] (7,5)--(7,4);
\draw[ForestGreen] (6,4)--(7,4);
\draw[very thick,dashed,ForestGreen] (2,0)--(3,0)--(4,0)--(4,1)--(4,2)--(5,2)--(6,2)--(6,1)--(7,1)--(8,1)--(8,2)--(7,2)--(7,3)--(7,4)--(8,4)--(9,4)--(9,5)--(10,5);
\filldraw[ForestGreen] (2,0) circle (0.15cm) node[above left] {$u_1$};
\filldraw[ForestGreen] (10,5) circle (0.15cm) node[above left] {$u_2$};
\end{scope}

\begin{scope}[xshift=15cm,very thick,blue!45!white]
\draw (-0.5,-0.5) rectangle (10.5,7.5);
\draw[step=1cm,densely dotted,line width=0.7pt] (-0.5,-0.5) grid (10.5,7.5);
\node[blue] at (10.8,3.5) {\large {s}};
\draw (2,7.5)--(2,7)--(0,7);
\draw (1,7)--(1,6);
\draw (4,7.5)--(4,7)--(3,7)--(3,6);
\draw (-0.5,0)--(1,0);
\draw (0,0)--(0,1);
\draw (10.5,0)--(10,0);
\draw (10,7.5)--(10,7)--(8,7)--(8,6)--(6,6)--(6,5)--(5,5);
\draw (10.5,6)--(9,6);
\draw (8,5)--(8,6);
\draw (7,7)--(7,6);
\draw (2,5)--(2,6)--(4,6);
\draw (2,4)--(2,5)--(0,5)--(0,6);
\draw (0,5)--(0,4)--(1,4)--(1,3)--(0,3)--(0,2);
\draw (3,5)--(3,4)--(5,4)--(5,3)--(6,3);
\draw (3,5)--(4,5)--(4,6)--(5,6)--(5,7)--(6,7);
\draw (4,3)--(5,3);
\draw[ForestGreen] (6,0)--(5,0)--(5,1)--(4,1);
\draw[ForestGreen] (1,1)--(1,2)--(2,2)--(2,3)--(3,3)--(3,2)--(4,2);
\draw[ForestGreen] (2,1)--(2,0);
\draw[ForestGreen] (3,1)--(3,2);
\draw[ForestGreen] (7,0)--(7,1);
\draw[ForestGreen] (8,0)--(9,0)--(9,1)--(9,2)--(10,2);
\draw[ForestGreen] (10,1)--(10,2)--(10,3)--(9,3)--(8,3)--(8,4);
\draw[ForestGreen] (10,4)--(10,5);
\draw[ForestGreen] (7,5)--(7,4);
\draw[ForestGreen] (6,4)--(7,4);
\foreach \pos in {(3,0),(4,0),(4,1),(4,2),(5,2),(6,2),(6,1),(7,1),(8,1),(8,2),(7,2),(7,3),(7,4),(8,4),(9,4),(9,5)}{\filldraw[ForestGreen] \pos circle (0.125cm) ;}
\filldraw[ForestGreen] (2,0) circle (0.15cm) node[above left] {$u_1$};
\filldraw[ForestGreen] (10,5) circle (0.15cm) node[above left] {$u_2$};
\end{scope}

\end{tikzpicture}
\caption{On the left is a two-component spanning forest on a wired grid. Two selected vertices $u_1,u_2$ belong to a tree distinct from the one that includes the root $s$ (represented by the box surrounding the grid). The dashed line highlights the path $\gamma$ between $u_1$ and $u_2$, which contains $n{+}1$ vertices. On the right is the associated $(n{+}2)$-component spanning forest in which each vertex of $\gamma$ is the root of a (possibly degenerate) tree, in addition to $s$.}
\label{2SF_ex}
\end{figure}
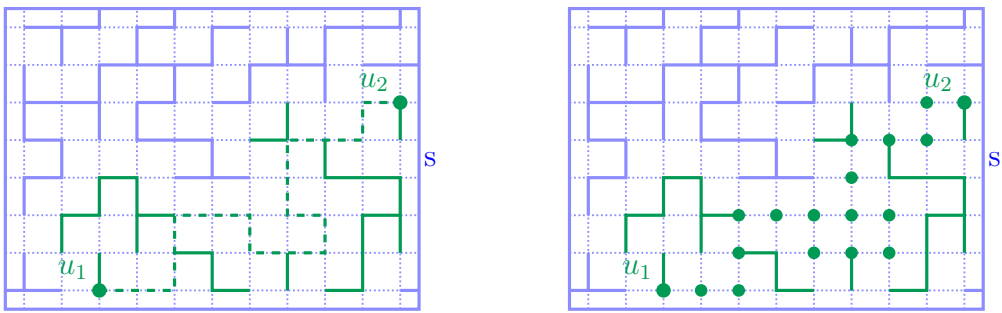

Let us now turn to paths in oriented cycle-rooted groves (OCRGs), for which we define the weight of the oriented simple path $\gamma:u_1\to u_2$ as
\begin{equation}
\w_{\text{CRG}}(\gamma)=\sum_{\text{OCRGs }\Gamma_{\vsig}\supset\gamma}\w(\Gamma_{\vsig})=C(\gamma)\phi(\gamma)^{-1}\det\mathbf{\Delta}^{(\gamma)},
\label{CRG_weight}
\end{equation}
where $\vsig={\textstyle{2\atop 1}}$, $\w(\Gamma_{\vsig})$ is given by Eq.~\eqref{grove_weight} and $\phi(\gamma)=\phi_{u_1\to u_2}$ is the product of all parallel transports along $\gamma$. Explicitly, if $\gamma=(v_i)_{0\le i\le n}$ with $v_0\equiv u_1$ and $v_n\equiv u_2$,
\begin{equation}
\phi(\gamma)=\phi_{u_1\to u_2}=\prod_{i=0}^{n-1}\phi_{v_i,v_{i+1}}.
\end{equation}
Here $\det\mathbf{\Delta}^{(\gamma)}$ is the weighted sum of OCRGs consisting in $n{+}2$ rooted trees, each including a unique vertex of $\gamma$ or $s$, and cycle-rooted trees not intersecting $\gamma\cup\{s\}$. Using Theorem \ref{grove_thm}, we find that the partition function for all paths from $u_1$ to $u_2$ is $\Zr[{\textstyle{2\atop 1}}]=\Gr_{1,2}\det\mathbf{\Delta}$, where $\Gr_{1,2}\equiv\Gr_{u_1,u_2}=(\mathbf{\Delta}^{-1})_{u_1,u_2}$. The normalized distribution yields
\begin{equation}
\P_{\text{CRG}}(\gamma:u_1\to u_2)=\frac{\w_{\text{CRG}}(\gamma)}{\sum_{\gamma:u_1\to u_2}\w_{\text{CRG}}(\gamma)}=\frac{C(\gamma)\phi(\gamma)^{-1}\det\mathbf{\Delta}^{(\gamma)}}{\Gr_{1,2}\det\mathbf{\Delta}}.
\label{P_CRG}
\end{equation}
It should be noted that $\P_{\text{CRG}}$ is \emph{not} a probability distribution for a generic connection, as it can take complex or negative values. Furthermore, $\P_{\text{CRG}}(\gamma)\neq\P_{\text{CRG}}(\gamma^{-1})$ since $\phi(\gamma)\neq\phi(\gamma^{-1})$ and $\Gr_{1,2}\neq\Gr_{2,1}$ in general. It is straightforward to establish the relation between Eqs. \eqref{P_SF} and \eqref{P_CRG} by taking the limit of a trivial connection, i.e. $\Phi\to\mathbf{1}$:
\begin{equation}
\P_{\text{SF}}(\gamma:u_1\to u_2)=\lim_{\Phi\to\mathbf{1}}\P_{\text{CRG}}(\gamma:u_1\to u_2).
\label{lim_P_CRG}
\end{equation}

Similar formulas hold for a graph $\G$ with free boundary conditions, as discussed in Section~\ref{sec2.3}. In that case, the Dirichlet Laplacians $\Delta$ and $\mathbf{\Delta}$ are replaced with the Laplacians $\Delta_0$ and $\mathbf{\Delta}_0$, and the partition functions are given by
\begin{equation}
\Zr[{\textstyle{2\atop 1}}]=\lim_{t\to 0}(\Gr_t)_{1,2}\det(\mathbf{\Delta_0}+t\mathcal{P}_0),\quad Z[12]=\frac{1}{\N}\det\hspace{0mm}'\Delta_0.
\end{equation}


\subsection{Schramm's formula}
\label{sec3.2}

Let us now address the first objective of this paper, namely writing a discrete version of Schramm's formula for paths in spanning forests on a graph with respect to a marked face $f$, in terms of the standard Green function of the graph. Such a formula was first given in \cite{Ken11,KW15} for graphs with free boundary conditions. We extend it here for graphs connected to a root, using the same technique as in \cite{Ken11,KW15}. We shall follow similar steps when discussing multiple paths later on in Section~\ref{sec5}.

In what follows, we consider a graph $\G=(\V,\E)$ embedded on a surface $\Sigma$ such that the edges in $\E$ do not intersect each other (except possibly at the vertices in $\V$). We connect a subset $\mathcal{D}$ of its boundary vertices to a root $s$ with a set of edges $\E_s$ to form the extended graph $\G_s$. Note that we do not require that $s$ or the edges in $\E_s$ belong to $\Sigma$. We select two boundary vertices (possibly in $\mathcal{D}$) as the nodes $u_1,u_2$ and a face $f$ of the graph\footnote{The reason we consider paths between boundary vertices is twofold. First, this situation is the discrete analogue of \emph{chordal} SLEs, for which explicit results for Schramm's formula have been computed. To our knowledge, the equivalent formula for radial or whole-plane SLEs is not currently known. Second, there are (infinitely) many inequivalent classes of paths between two bulk vertices with respect to a face, instead of just two when $u_1,u_2$ are on the boundary. Computing their respective probabilities using a connection as presented below would require knowledge of the full line bundle Green function, as opposed to the perturbative expansion used in most of this work.}. If $\Sigma$ is orientable, the discrete analogue of Schramm's formula \eqref{Schramm_SLE} for paths in spanning forests reads
\begin{equation}
\P_{\L}(u_1,u_2;f)=\sum_{\gamma:u_1\to u_2}\chi_{\L}(\gamma;f)\,\P_{\text{SF}}(\gamma:u_1\to u_2),
\end{equation}
where $\chi_{\L}(\gamma;f)=1$ (resp. 0) if $\gamma$ leaves $f$ to its left (resp. right), and $\P_{\text{SF}}(\gamma:u_1\to u_2)$ is given by Eq.~\eqref{P_SF}. Alternatively, if we denote by $Z_{\L}[12]$ (resp. $Z_{\R}[12]$) the weighted sum of two-component spanning forests on $\G_s$ in which the path from $u_1$ to $u_2$ leaves $f$ to its left (resp. right), the preceding equation may be rewritten as
\begin{equation}
\P_{\L}(u_1,u_2;f)=\frac{Z_{\L}[12]}{Z[12]}.
\label{Schramm_form1}
\end{equation}
While the denominator is already known \eqref{Z12}, a subtler approach is required to extract $Z_{\L}[12]$ from $Z[12]=Z_{\L}[12]+Z_{\R}[12]$ \cite{Ken11}. To do so, let us equip $\G_s$ with a connection that is trivial everywhere except on a collection of edges $\{k,\l\}$ crossed by a \emph{zipper}, that is, a path from $f$ to the outer face of $\G$ on the dual graph. We impose for convenience that the zipper intersects the clockwise boundary path from $u_1$ to $u_2$ (see Fig.~\ref{zip_ex}). We put a constant parallel transport $\phi_{k,\l}=z\in\C^*$ on the oriented edges $(k,\l)$---and $\phi_{\l,k}=z^{-1}$ in the opposite direction---in such a way that a counterclockwise cycle circling $f$ has monodromy $z$.

This specific choice of a connection allows one to distinguish between paths from $u_1$ to $u_2$ that leave $f$ to their left and to their right. Indeed, the product of parallel transports from $u_1$ to $u_2$ appearing in Eq.~\eqref{CRG_weight} is $\phi_{1\to 2}=1$ for the former, while $\phi_{1\to 2}=z^{-1}$ for the latter. We can therefore decompose the full partition function for groves with a path from $u_1$ to $u_2$ as follows:
\begin{equation}
\Zr[{\textstyle{2\atop 1}}]=\Zr_{\L}[{\textstyle{2\atop 1}}]+\Zr_{\R}[{\textstyle{2\atop 1}}],
\label{Zr_decomp}
\end{equation}
where $\phi_{1\to 2}$ is constant over the paths of each class, left and right.


\begin{figure}[h]
\centering
\begin{tikzpicture}[scale=0.7]
\draw[gray,densely dotted,line width=0.7pt] (-0.5,-0.5) grid (10.5,5.5);
\draw (-0.5,-0.5) rectangle (10.5,5.5);
\node at (10.75,2.5) {$s$};
\draw[gray,densely dotted,fill=Aquamarine!50!white] (6,3) rectangle (7,4);
\node[Aquamarine] at (6.5,2.625) {\large $f$};
\draw[line width=1pt,red] (2,0)--(3,0)--(3,1)--(2,1)--(2,2)--(3,2)--(3,3)--(4,3)--(4,2)--(6,2)--(6,1)--(5,1)--(5,0)--(7,0)--(7,1)--(8,1)--(8,0);
\draw[line width=0.06cm,-latex,blue] (0.5,3)--(0.75,3);
\node[red] at (4.5,1.5) {$\L$};
\draw[line width=0.06cm,-latex,red] (4,2.5)--(4,2.25);
\draw[line width=1pt,blue] (2,0)--(1,0)--(1,1)--(0,1)--(0,3)--(1,3)--(1,4)--(2,4)--(2,5)--(4,5)--(4,4)--(5,4)--(5,3)--(6,3)--(6,4)--(6,5)--(9,5)--(9,2)--(10,2)--(10,0)--(8,0);
\node[blue] at (1.5,3.5) {$\R$};
\filldraw (2,0) circle (0.15cm);
\filldraw (8,0) circle (0.15cm);
\node at (2,-1) {\large $u_1$};
\node at (8,-1) {\large $u_2$};
\filldraw[Aquamarine] (6.5,3.5) circle (0.125cm);
\draw[thick,Aquamarine] (6.5,3.5)--(10.5,3.5);
\foreach \x in {7,...,10}{\draw[very thick,dashed,-latex,Aquamarine] (\x,3)--(\x,4) node[above left] {$z$};}
\end{tikzpicture}
\caption{Wired rectangular grid with two marked boundary vertices $u_1,u_2$ and a marked face $f$. The solid line from $f$ to the outer face represents the zipper. The edges crossed by the zipper possess a parallel transport $z$ from the bottom up (and $z^{-1}$ from the top down). The path labeled $\L$ (resp. $\R$) from $u_1$ to $u_2$ leaves $f$ to its left (resp. right).} 
\label{zip_ex}
\end{figure}
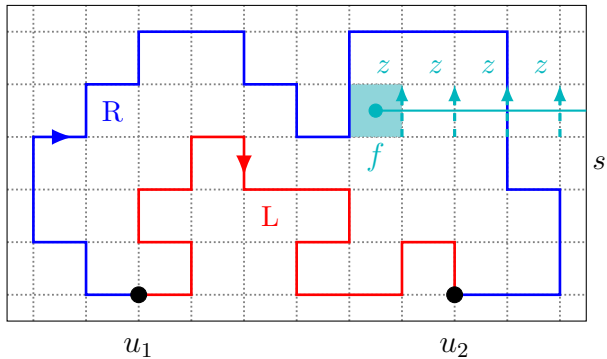

If we consider instead paths in the opposite direction, we have $\phi_{2\to 1}=1$ (resp. $\phi_{2\to 1}=z$) for paths from $u_2$ to $u_1$ that leave the face $f$ to their right (resp. left). Similarly to Eq.~\eqref{Zr_decomp}, we write
\begin{equation}
\Zr[{\textstyle{1\atop 2}}]=\Zr_{\L}[{\textstyle{1\atop 2}}]+\Zr_{\R}[{\textstyle{1\atop 2}}]
\end{equation}
for paths from $u_2$ to $u_1$, where the indices $\L,\R$ refer to the left- or right-passage with respect to $f$ in the original direction, i.e. \emph{from $u_1$ to $u_2$}. Looking at Eqs. \eqref{grove_weight} and \eqref{CRG_weight}, we see that the weight of a path $\gamma$ in OCRGs is related to that of its reverse by
\begin{equation}
\w_{\text{CRG}}(\gamma)\,\phi(\gamma)=\w_{\text{CRG}}(\gamma^{-1})\,\phi(\gamma^{-1}).
\end{equation}
As the product of parallel transports along the path from $u_1$ to $u_2$ is constant over the left and right classes separately, we find the relations
\begin{equation}
\Zr_{\L}[{\textstyle{1\atop 2}}]=\Zr_{\L}[{\textstyle{2\atop 1}}],\quad\Zr_{\R}[{\textstyle{1\atop 2}}]=z^{-2}\Zr_{\R}[{\textstyle{2\atop 1}}],
\end{equation}
which allow us to write and solve the following system for $\Zr_{\L,\R}[{\textstyle{2\atop 1}}]$:
\begin{equation}
\begin{split}
&\Gr_{1,2}\det\mathbf{\Delta}=\Zr[{\textstyle{2\atop 1}}]=\Zr_{\L}[{\textstyle{2\atop 1}}]+\Zr_{\R}[{\textstyle{2\atop 1}}],\\
&\Gr_{2,1}\det\mathbf{\Delta}=\Zr[{\textstyle{1\atop 2}}]=\Zr_{\L}[{\textstyle{2\atop 1}}]+z^{-2}\Zr_{\R}[{\textstyle{2\atop 1}}].
\end{split}
\end{equation}
The solution reads
\begin{equation}
\Zr_{\L}[{\textstyle{2\atop 1}}]=\frac{\left(\Gr_{2,1}-z^{-2}\Gr_{1,2}\right)\det\mathbf{\Delta}}{1-z^{-2}},\quad\Zr_{\R}[{\textstyle{2\atop 1}}]=\frac{\left(\Gr_{1,2}-\Gr_{2,1}\right)\det\mathbf{\Delta}}{1-z^{-2}}.
\end{equation}
Since the normalized measure $\P_{\text{CRG}}$ converges to $\P_{\text{SF}}$ in the limit of a trivial connection \eqref{lim_P_CRG}, that is, when $z\to 1$, one obtains the following combinatorial result for Schramm's formula \eqref{Schramm_form1}:
\begin{equation}
\P_{\L}(u_1,u_2;f)=\lim_{z\to 1}\frac{\left(\Gr_{2,1}-z^{-2}\Gr_{1,2}\right)\det\mathbf{\Delta}}{1-z^{-2}}\times\frac{1}{\Gr_{1,2}\det\mathbf{\Delta}}=1-\frac{G'_{1,2}}{G_{1,2}},
\label{Schramm_form2}
\end{equation}
where we have used the antisymmetry of the derivative of the Green function $G'$ defined in \eqref{Gp_def}.

The corresponding formula for graphs with free boundary conditions was given in \cite{Ken11}, where Cramer's rule was used to write
\begin{equation}
(\Gr_0)_{1,2}=\frac{\kappa}{\det\mathbf{\Delta}_0}\left\{1+(z-1)\,\widetilde{G}'_{1,2}+\ldots\right\},
\end{equation}
where $\widetilde{G}'$ is defined in \eqref{Gp_def2} and $\kappa=Z[12]$ denotes the weighted sum of spanning trees on $\G$. We present here an alternative derivation using the modified line bundle Green function $\Gr_t=(\mathbf{\Delta}_0+t\mathcal{P}_0)^{-1}$ instead of $\Gr_0=(\mathbf{\Delta}_0)^{-1}$:  
\begin{equation}
\begin{split}
\P_{\L}(u_1,u_2;f)&=\lim_{t\to 0}\lim_{z\to 1}\frac{\left((\Gr_t)_{2,1}-z^{-2}(\Gr_t)_{1,2}\right)\det(\mathbf{\Delta}_0+t\mathcal{P}_0)}{1-z^{-2}}\times\frac{1}{(\Gr_t)_{1,2}\det(\mathbf{\Delta}_0+t\mathcal{P}_0)}\\
&=1-\lim_{t\to 0}\frac{G'_{1,2}+t^{-1}\N^{-1}\widetilde{G}'_{1,2}}{G_{1,2}+t^{-1}\N^{-1}}=1-\widetilde{G}'_{1,2},
\end{split}
\label{Schramm_form3}
\end{equation}
where the properties $G'_{v,u}=-G'_{u,v}$ and $\widetilde{G}'_{v,u}=-\widetilde{G}'_{u,v}$ follow from Eq.~\eqref{Gp_def2}. A third way of obtaining Eq.~\eqref{Schramm_form3} consists in connecting $u_2$ to the root $s$, that is, defining a graph $\xbar{\G}$ such that $\xbar{\mathbf{\Delta}}_{u,v}=(\mathbf{\Delta}_0)_{u,v}+\delta_{u,2}\delta_{2,v}$. The OCRGs with a path from $u_1$ to $u_2$ on $\G$ are in one-to-one correspondence with those on $\xbar{\G}$ (in which the tree rooted at $s$ is degenerate, since it cannot contain $u_2$). Moreover, the inverse $\xbar{G}=\xbar{\Delta}^{-1}$ is well defined (due to the connection between $u_2$ and the root $s$), so Schramm's formula is given by Eq.~\eqref{Schramm_form2}:
\begin{equation}
\P_{\L}(u_1,u_2;f)=1-\frac{\xbar{G}'_{1,2}}{\xbar{G}_{1,2}},
\end{equation}
with $\xbar{G}'$ defined in \eqref{Gp_def} (in terms of $\xbar{G}$ instead of $G$). Writing the Green function $\xbar{G}$ in terms of the regularized Green function $G$ \eqref{Gf_reg} on $\G$ yields
\begin{equation}
\xbar{G}_{u,v}=1+G_{u,v}+G_{2,2}-G_{u,2}-G_{2,v}.
\end{equation}
One recovers Eq.~\eqref{Schramm_form3} by plugging this relation into the preceding equation.


\section{Explicit results for passage probabilities}
\label{sec4}

In this section, we consider rectangular grids embedded on surfaces, with unit conductances: $c_{u,v}=1$ for any $\{u,v\}\in\E$, for various combinations of wired and free boundary conditions. We compute the discrete Green function $G$ of the graph, and show that it converges in the scaling limit to the continuum Green function (up to a constant normalization factor), which we use to give an explicit expression for (the scaling limit of) $\P_{\L}(u_1,u_2;f)$. For the upper half-plane and the cylinder, we compare our results with known $\text{SLE}_2$ probabilities.


\subsection{The upper half-plane}
\label{sec4.1}

Let us first compute Schramm's formula on the discrete upper half-plane (UHP) $\G=\Z\times\mathbb{N}^*$, whose boundary consists in vertices of the form $(\x,\y{=}1)$, $\x\in\Z$. We shall consider uniform boundary conditions on such a graph, either wired or free. In the former case, all boundary vertices $(\x,1)$ are connected by an edge to a root $s$, so their degree in $\G_s$ is 4 like the bulk vertices; while in the latter case, the boundary vertices have only degree 3.

Anticipating a bit, we denote by $G^{\text{D}}_{u,v}$ and $G^{\text{N}}_{u,v}$ the discrete Green functions of the UHP with wired and free boundary conditions, respectively. Both can be expressed in terms of the Green function $G$ of the full plane $\Z^2$, via the method of images:
\begin{align}
G^{\text{D}}_{(\x_1,\y_1),(\x_2,\y_2)}&=G_{(\x_1,\y_1),(\x_2,\y_2)}-G_{(\x_1,\y_1),(\x_2,-\y_2)},\\
G^{\text{N}}_{(\x_1,\y_1),(\x_2,\y_2)}&=G_{(\x_1,\y_1),(\x_2,\y_2)}+G_{(\x_1,\y_1),(\x_2,1-\y_2)}.
\end{align}
As $\Z^2$ is invariant under translations, the function $G_{u,v}$ only depends on the difference $v-u=(\x,\y)$, and possesses the following integral representation:
\begin{equation}
G_{u,v}=G(v-u)=\int_{\pi}^{\pi}\frac{\diff\theta_1}{2\pi}\int_{\pi}^{\pi}\frac{\diff\theta_2}{2\pi}\frac{e^{\i\x\theta_1+\i\y\theta_2}}{4-2\cos\theta_1-2\cos\theta_2}.
\label{Gf_Z2_int}
\end{equation}
Due to the singularity of the integrand at the origin, the function $G(\x,\y)$ diverges. However, the difference $G(\x,\y)-G(0,0)$ is finite, and can be computed asymptotically for large values of $\x^2+\y^2$ \cite{GPP09}:
\begin{equation}
G(\x,\y)-G(0,0)=-\frac{1}{4\pi}\left\{\log(\x^2+\y^2)+2\gamma+3\log 2\right\}+\frac{\x^4-6\x^2\y^2+\y^4}{24\pi(\x^2+\y^2)^3}+\ldots,
\label{Gf_Z2}
\end{equation}
up to fourth-order terms in $\x,\y$, with $\gamma=0.577216...$ being the Euler constant.

In order to compute Schramm's formula \eqref{Schramm_form2}, we select two boundary vertices $u_i=(\x_1,1)$, $i=1,2$, with $\x_1<\x_2$, and the face $f$ whose lower left corner is $(\x_*,\y_*)$. As we are interested in taking the scaling limit of this formula, we assume that $u_1,u_2,f$ are separated from each other by large distances. Furthermore, we suppose for simplicity that $\x_1<\x_2<\x_*$, and equip $\G$ with the vertical zipper depicted in the left panel of Fig.~\ref{UHP_setup}, namely, we put a nontrivial parallel transport $z\in\C^*$ on the oriented edges $(k,k{+}(1,0))$ for $k=(\x_*,\s)$, $1\le\s\le\y_*$.

Note that imposing that $\x_*>\x_2$ and taking a vertical zipper is simply a matter of convenience, to make the computation of $G'$ as easy as possible. If rather $\x_1<\x_*<\x_2$ or $\x_*<\x_1<\x_2$, one may use a more complicated zipper that touches the boundary to the right of $u_2$, or equivalently, work with the vertical zipper and adapt the combinatorics involved in establishing Eq.~\eqref{Schramm_form2} (see the right panel of Fig.~\ref{UHP_setup}). Both choices are equally valid, and yield the same explicit form for the left-passage probability, which holds for any value of $\x_*$.

\begin{figure}[h]
\centering
\begin{tikzpicture}[scale=0.6]
\tikzstyle arrowstyle=[scale=0.9]
\tikzstyle directed=[postaction={decorate,decoration={markings,mark=at position 0.8 with {\arrow[arrowstyle]{stealth}}}}]

\begin{scope}
\draw[gray,densely dotted,line width=0.7pt] (-0.5,0) grid (10.5,7.5);
\fill[pattern=north east lines, pattern color=blue] (-0.6,-0.25) rectangle (10.6,-1);
\draw[gray,densely dotted,fill=blue!10!white] (7,4) rectangle (8,5);
\node[blue] at (7.5,5.375) {\large $f$};
\filldraw (7,4) circle (0.1cm);
\node at (6,3.75) {\small $(\x_*{,}\y_*)$};
\filldraw[red] (7.5,4.5) circle (0.15cm);
\draw[very thick,red] (7.5,4.5)--(7.5,-0.25);
\foreach \y in {0,...,4}{\draw[thick,dashed,-latex,red] (7,\y)--(8,\y) node[right] {$z$};}
\filldraw (2,0) circle (0.1cm) node[above] {$u_1$};
\filldraw (5,0) circle (0.1cm) node[above] {$u_2$};
\end{scope}

\begin{scope}[xshift=13cm]
\draw[gray,densely dotted,line width=0.7pt] (-0.5,0) grid (10.5,7.5);
\fill[pattern=north east lines, pattern color=blue] (-0.6,-0.25) rectangle (10.6,-1);
\draw[gray,densely dotted,fill=blue!10!white] (3,4) rectangle (4,5);
\node[blue] at (3.5,5.375) {\large $f$};
\filldraw[red] (3.5,4.5) circle (0.15cm);
\draw[very thick,red] (3.5,4.5)--(3.5,-0.25);
\draw[very thick,-latex,red] (3.125,2)--(4,2);
\draw[very thick,densely dashed,red] (3.5,4.5)--(8.5,4.5)--(8.5,-0.25);
\draw[very thick,densely dashed,-latex,red] (8.125,2)--(9,2);
\filldraw (2,0) circle (0.1cm) node[above] {$u_1$};
\filldraw (7,0) circle (0.1cm) node[above] {$u_2$};
\end{scope}

\end{tikzpicture}
\caption{On the left are the relative positions on the upper half-plane of the boundary vertices $u_1,u_2$ and of the face $f$ whose lower left corner is the vertex $(\x_*,\y_*)$. The zipper edges are equipped with a parallel transport $z$ in the direction of their arrow (and $z^{-1}$ in the opposite direction). On the right are two possible choices of zippers for the case $\x_1<\x_*<\x_2$ (the arrow indicates the direction of the zipper edges with parallel transport $z$). Both yield different combinatorial expressions for Schramm's formula in terms of $G,G'$, but lead to the same explicit result.}
\label{UHP_setup}
\end{figure}
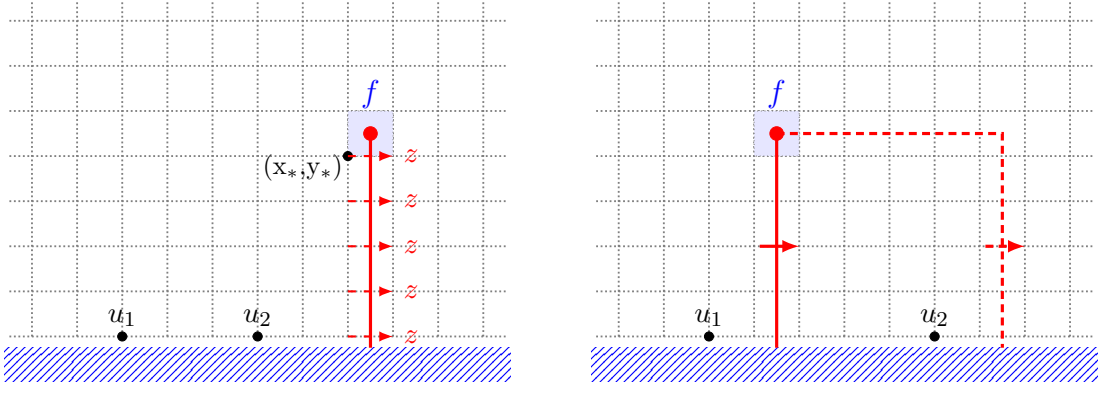

To obtain a continuum version of Schramm's formula, let us introduce new variables $x,y\in\varepsilon\Z^2$ ($\varepsilon>0$), defined by $x=\varepsilon\x$, $y=\varepsilon\y$. We shall take the scaling limit by letting $\varepsilon\to 0^+$ and $\x,\y\to\infty$ such that $x,y$ stay finite. From Laplace's equation
\begin{equation}
\begin{split}
\delta_{(\x_1,\y_1),(\x_2,\y_2)}=(\Delta G)_{(\x_1,\y_1),(\x_2,\y_2)}&=4\,G_{(\x_1,\y_1),(\x_2,\y_2)}-G_{(\x_1+1,\y_1),(\x_2,\y_2)}-G_{(\x_1-1,\y_1),(\x_2,\y_2)}\\
&\quad-G_{(\x_1,\y_1+1),(\x_2,\y_2)}-G_{(\x_1,\y_1-1),(\x_2,\y_2)},
\end{split}
\end{equation}
it follows readily that
\begin{equation}
\begin{split}
\mathfrak{G}(x_1,y_1;x_2,y_2)&=\lim_{\varepsilon\to 0^+}\left(G_{(x_1/\varepsilon,y_1/\varepsilon),(x_2/\varepsilon,y_2/\varepsilon)}-G(0,0)-\frac{1}{2\pi}\log\varepsilon+\frac{2\gamma+3\log 2}{4\pi}\right)\\
&=-\frac{1}{4\pi}\log\left[(x_1-x_2)^2+(y_1-y_2)^2\right].
\end{split}
\end{equation}
Similarly, the Green functions $G^{\text{D}}$ and $G^{\text{N}}$ converge to the usual (continuum) Green functions $\mathfrak{G}^{\text{D}},\mathfrak{G}^{\text{N}}$ with Dirichlet and Neumann boundary conditions, respectively:
\begin{equation}
\mathfrak{G}^{\text{D},\text{N}}(x_1,y_1;x_2,y_2)=-\frac{1}{4\pi}\log\left[(x_1-x_2)^2+(y_1-y_2)^2\right]\pm\frac{1}{4\pi}\log\left[(x_1-x_2)^2+(y_1+y_2)^2\right].
\label{Gf_UHP_cont}
\end{equation}

The same formulas holds if one or both the arguments of the discrete Green function $G^{\text{N}}$ lie on the boundary of the UHP, i.e. the Green function converges to $\mathfrak{G}^{\text{N}}(x_1,y_1;x_2,0)$ or $\mathfrak{G}^{\text{N}}(x_1,0;x_2,0)$, respectively. A bit more care is required for $G^{\text{D}}$, as the continuum Green function $\mathfrak{G}^{\text{D}}$ vanishes on the boundary. One finds at the lowest order in $\varepsilon$ that
\begin{equation}
\begin{split}
G^{\text{D}}_{(x_1/\varepsilon,y_1/\varepsilon),(x_2/\varepsilon,1)}&=\text{P}_1(x_1,y_1;x_2)\,\varepsilon+\mathcal{O}(\varepsilon^2),\\
G^{\text{D}}_{(x_1/\varepsilon,1),(x_2/\varepsilon,1)}&=\text{P}(x_1,x_2)\,\varepsilon^2+\mathcal{O}(\varepsilon^3),
\end{split}
\label{Pk_UHP}
\end{equation}
where $\text{P}_1$ and $\text{P}$ are respectively the \emph{Poisson kernel} and the \emph{excursion Poisson kernel}, defined in terms of the continuum Green function by
\begin{equation}
\begin{split}
\text{P}_1(x_1,y_1;x_2)&=\partial_{y_2}\mathfrak{G}^{\text{D}}(x_1,y_1;x_2,y_2)\big|_{y_2=0},\\
\text{P}(x_1,x_2)&=\partial_{y_1}\partial_{y_2}\mathfrak{G}^{\text{D}}(x_1,y_1;x_2,y_2)\big|_{y_1=y_2=0}.
\end{split}
\end{equation}

Using Eqs. \eqref{Gf_UHP_cont} and \eqref{Pk_UHP}, one may write the Green function and its derivative \eqref{Gp_def} for $\varepsilon\sim 0^+$ as follows:
\begin{align}
G^{\text{D}}_{(x_1/\varepsilon,1),(x_2/\varepsilon,1)}&=\varepsilon^2\,\text{P}(x_1,x_2)+\ldots,\\
G^{\prime\,\text{D}}_{(x_1/\varepsilon,1),(x_2/\varepsilon,1)}&=\varepsilon^2\int_{0}^{y_*}\diff s\,\big(\partial_{x_*}\text{P}_1(x_*,s;x_1)\text{P}_1(x_*,s;x_2)-\partial_{x_*}\text{P}_1(x_*,s;x_2)\text{P}_1(x_*,s;x_1)\big)+\ldots,\\
G^{\text{N}}_{(x_1/\varepsilon,1),(x_2/\varepsilon,1)}&=\left(2\,G(0,0)+\frac{1}{\pi}\log\varepsilon\right)+\mathfrak{G}^{\text{N}}(x_1,0;x_2,0)+\ldots,\\
G^{\prime\,\text{N}}_{(x_1/\varepsilon,1),(x_2/\varepsilon,1)}&=\left(2\,G(0,0)+\frac{1}{\pi}\log\varepsilon\right)\int_{0}^{y_*}\diff s\,\big(\partial_{x_*}\mathfrak{G}^{\text{N}}(x_*,s;x_1,0)-\partial_{x_*}\mathfrak{G}^{\text{N}}(x_*,s;x_2,0)\big)+\ldots
\end{align}
The integrals for both derivatives of the Green function can be carried out explicitly, and yield the following left-passage probabilities in the scaling limit:
\begin{align}
\P^{\text{D}}_{\L}(x_1,x_2;z_*)&=1+\frac{1}{\pi}\big(\arg(z_*{-}x_1)-\arg(z_*{-}x_2)\big)-\frac{1}{\pi}\frac{\Re[(z_*{-}x_1)(\bar{z}_*{-}x_2)]\Im[(z_*{-}x_1)(\bar{z}_*{-}x_2)]}{|z_*{-}x_1|^2\,|z_*{-}x_2|^2},\label{PL_UHP_D}\\
\P^{\text{N}}_{\L}(x_1,x_2;z_*)&=1+\frac{1}{\pi}\big(\arg(z_*{-}x_1)-\arg(z_*{-}x_2)\big),\label{PL_UHP_N}
\end{align}
which are illustrated in Fig.~\ref{PL_UHP_plots}. In particular, if we take $x_1=0$ and $x_2\to\infty$, Eqs.~\eqref{PL_UHP_D} and \eqref{PL_UHP_N} simplify to
\begin{equation}
\P^{\text{D}}_{\L}(z_*)=\frac{\arg z_*}{\pi}-\frac{\Re[z_*]\Im[z_*]}{\pi|z_*|^2},\quad\P^{\text{N}}_{\L}(z_*)=\frac{\arg z_*}{\pi}.
\end{equation}
The former is in agreement with the original $\text{SLE}_{2}$ result of Schramm \cite{Sch01}. For the latter, the right-passage probability $\P_{\R}^{\text{N}}=1-\P_{\L}^{\text{N}}$ is simply the harmonic measure of the segment $[x_1,x_2]$ as seen from $z_*$, as observed in \cite{Ken11}.

\begin{figure}[h]
\centering
\begin{tikzpicture}

\begin{scope}
\node at (0,0) {\includegraphics[scale=0.6]{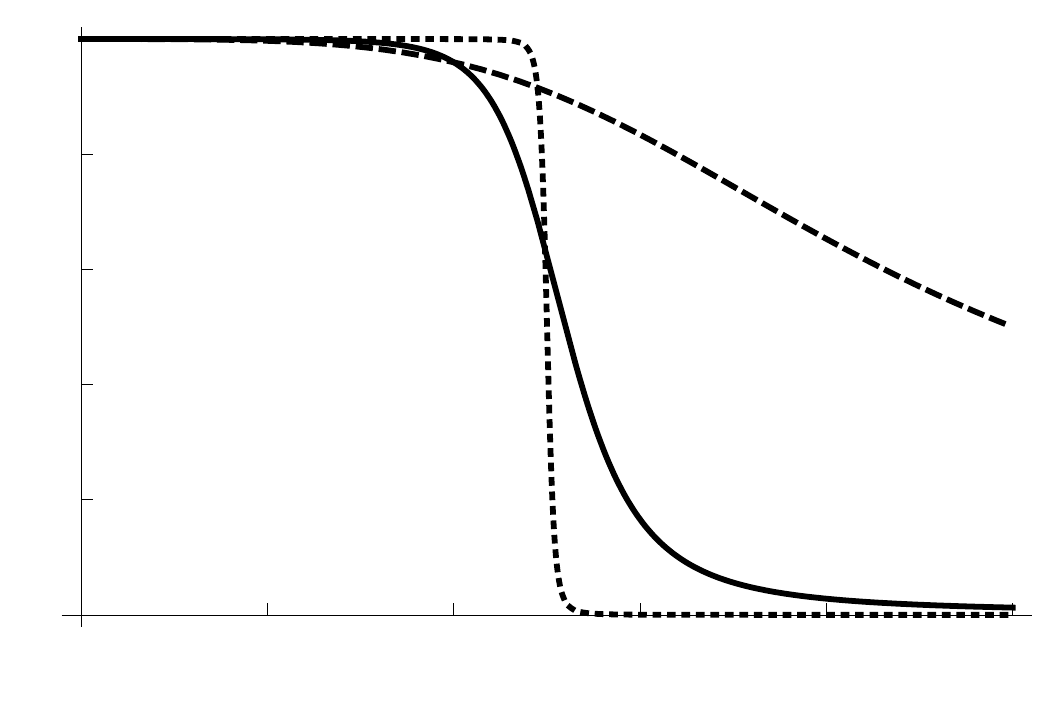}};
\node at (-2.5,2.5) {$\P^{\text{D}}_{\L}(x_1,x_2;z_*)$};
\node at (3.5,-1.6) {$x_2$};
\node at (0,-2.5) {(a)};
\node at (-1.495,-1.9) {2};
\node at (-0.382,-1.9) {4};
\node at (0.731,-1.9) {6};
\node at (1.844,-1.9) {8};
\node at (2.957,-1.9) {10};
\node at (-3,-0.872) {0.2};
\node at (-3,-0.186) {0.4};
\node at (-3,0.5) {0.6};
\node at (-3,1.186) {0.8};
\node at (-3,1.872) {1};
\end{scope}

\begin{scope}[xshift=8cm]
\node at (0,0) {\includegraphics[scale=0.6]{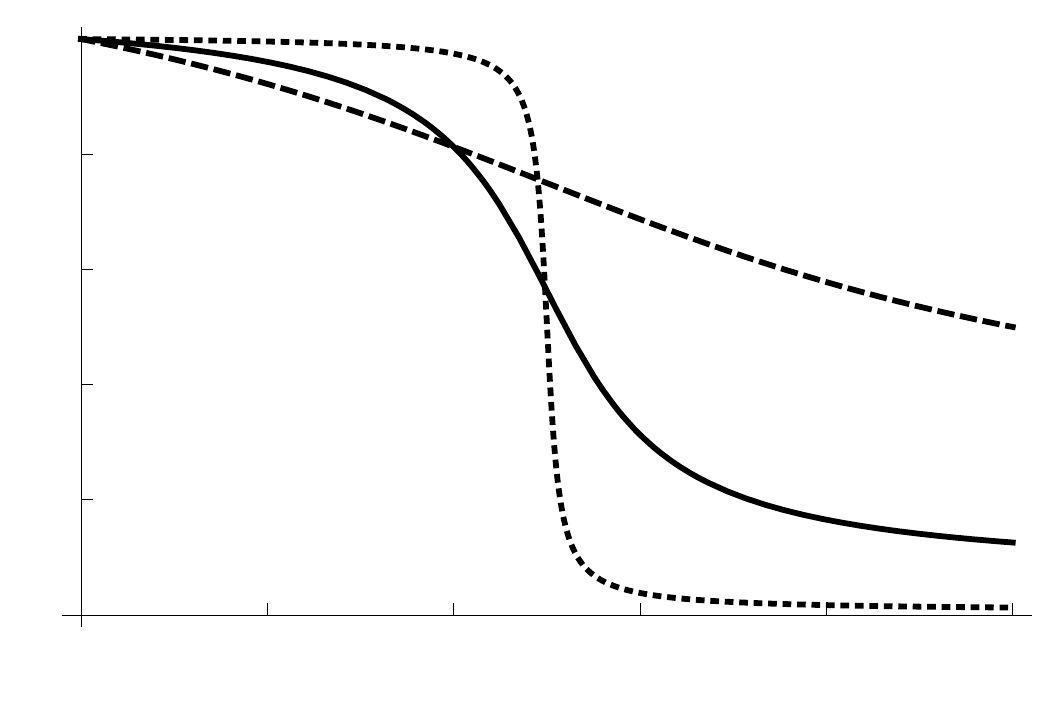}};
\node at (-2.5,2.5) {$\P^{\text{N}}_{\L}(x_1,x_2;z_*)$};
\node at (3.5,-1.6) {$x_2$};
\node at (0,-2.5) {(b)};
\node at (-1.495,-1.9) {2};
\node at (-0.382,-1.9) {4};
\node at (0.731,-1.9) {6};
\node at (1.844,-1.9) {8};
\node at (2.957,-1.9) {10};
\node at (-3,-0.872) {0.2};
\node at (-3,-0.186) {0.4};
\node at (-3,0.5) {0.6};
\node at (-3,1.186) {0.8};
\node at (-3,1.872) {1};
\end{scope}

\end{tikzpicture}
\caption{Illustration of the left-passage probability on the upper half-plane with (a) Dirichlet or (b) Neumann boundary conditions, for $x_1=0$, $x_*=5$, and $y_*=1/10$ (dotted), $y_*=1$ (solid) or $y_*=5$ (dashed).}
\label{PL_UHP_plots}
\end{figure}


\subsection{The cylinder}
\label{sec4.2}

The second application consists in a cylinder graph, namely, a rectangular grid of length $N$ and height $M$ with periodic conditions in the horizontal direction. If we denote by $(\x,\y)$ the coordinates of the vertices of the grid, $1\le\x\le N$, $1\le\y\le M$, then periodicity in the $\x$ direction implies that $(N,\y)$ and $(1,\y)$ are connected by an edge for $1\le\y\le M$.

As in Section~\ref{sec4.1}, we select two vertices $u_i=(\x_i,1)$, $i=1,2$, with $\x_1<\x_2$, and a marked face $f$, whose lower left corner is $(\x_*,\y_*)$, such that $\x_*>\x_2$. In what follows, we shall consider either wired or free conditions on each boundary component, yielding four cases in total. We shall distinguish between them by using the superscripts NN,DD,DN,ND, where the first (resp. second) letter refers to the type of conditions on the bottom (resp. top) boundary, with N standing for Neumann (free) and D for Dirichlet (wired). For each case, one could compute the discrete Green function in terms of the eigenfunctions $|f_{m,n}\rangle$ and eigenvalues $\lambda_{m,n}$ of the Laplacian (which are well known), and then use Eqs.~\eqref{Gp_def} and \eqref{Schramm_form2} to obtain Schramm's formula on the cylinder. The calculation of $G'$ through Eq.~\eqref{Gp_def} is however tedious, and yields long and cumbersome (explicit) expressions, which we have not been able to simplify when at least one boundary is wired (i.e. DD,DN,ND). We shall therefore omit the results and concentrate on the cylinder with free conditions on both boundaries (NN), for which we have obtained a simple formula.

Next we shall discuss a particular case of the left-passage probability, namely, when the marked face $f$ is the one at the top of the cylinder (or at the center of an annulus, as depicted in the right panel of Fig.~\ref{Prb}). Schramm's formula \eqref{Schramm_form2} therefore gives the probability that a random path from $u_1$ to $u_2$ winds counterclockwise around the cylinder; which we shall refer to as the \emph{positive winding probability}.

To compute it, we put a parallel transport $z=e^{\i\theta}$, $\theta\in\mathbb{R}$, on the oriented edges $((N,\y),(1,\y))$ for $1\le\y\le M$. For such a zipper, the eigenfunctions $|\mathbf{f}_{m,n}\rangle=|\mathbf{f}_{m,n}(\theta)\rangle$ and eigenvalues $\boldsymbol{\lambda}_{m,n}=\boldsymbol{\lambda}_{m,n}(\theta)$ of the line bundle Laplacian $\mathbf{\Delta}$ can be evaluated exactly, \emph{for any value of} $\theta$. Since $\mathbf{\Delta}^{\dagger}=\mathbf{\Delta}$, the line bundle Green function is given by the familiar formula
\begin{equation}
\Gr=\Gr(e^{\i\theta})=\sum_{m,n}\frac{|\mathbf{f}_{m,n}\rangle\langle\mathbf{f}_{m,n}|}{\boldsymbol{\lambda}_{m,n}}.
\label{LBL_Gf}
\end{equation}
It is then easier to compute the derivative of the Green function through
\begin{equation}
G'=\partial_z\Gr(z)\big|_{z=1}=-\i\,\partial_{\theta}\mathbf{G}(e^{\i\theta})\big|_{\theta=0}
\end{equation}
than via Eq.~\eqref{Gp_def}. We shall therefore favor the first approach to compute winding probabilities in what follows.


\subsubsection{Free boundaries}
\label{sec4.2.1}
We first consider the cylinder graph with free boundary conditions, on which we compute the left-passage probability for a random simple path from $u_1$ to $u_2$ with respect to a face $f$, using Eqs.~\eqref{Gp_def2} and \eqref{Schramm_form3}. We take here a vertical zipper going from $f$ to the bottom boundary, with parallel transport $z\in\C^*$. An illustration of the two classes of paths is provided in Fig.~\ref{Cyl_graph}.

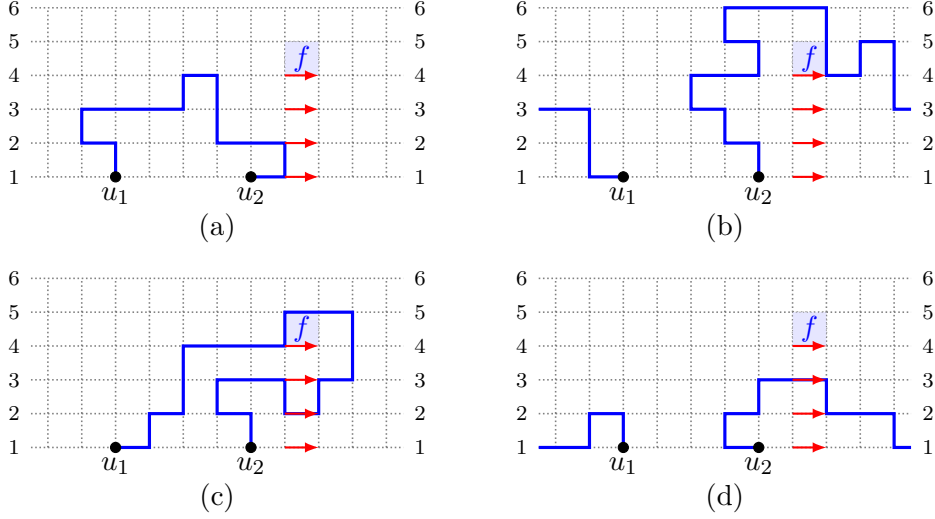
\begin{figure}[h]
\centering
\begin{tikzpicture}[scale=0.45]

\begin{scope}
\draw[gray,densely dotted,line width=0.6pt] (-0.5,0) grid (10.5,5);
\foreach \y in {1,...,6}{\node at (-1,\y-1) {\scriptsize \y};
\node at (11,\y-1) {\scriptsize \y};}
\draw[gray,densely dotted,fill=blue!10!white] (7,3) rectangle node {\textcolor{blue}{$f$}} (8,4);
\draw[very thick,blue] (2,0)--(2,1)--(1,1)--(1,2)--(3,2)--(4,2)--(4,3)--(5,3)--(5,1)--(7,1)--(7,0)--(6,0);
\foreach \y in {0,...,3}{\draw[thick,-latex,red] (7,\y)--(8,\y);}
\filldraw (2,0) circle (0.15cm) node[below] {$u_1$};
\filldraw (6,0) circle (0.15cm) node[below] {$u_2$};
\node at (5,-1.5) {(a)};
\end{scope}

\begin{scope}[shift={(15,0)}]
\draw[gray,densely dotted,line width=0.6pt] (-0.5,0) grid (10.5,5);
\foreach \y in {1,...,6}{\node at (-1,\y-1) {\scriptsize \y};
\node at (11,\y-1) {\scriptsize \y};}
\draw[gray,densely dotted,fill=blue!10!white] (7,3) rectangle node {\textcolor{blue}{$f$}} (8,4);
\draw[very thick,blue] (2,0)--(1,0)--(1,2)--(-0.5,2);
\draw[very thick,blue] (10.5,2)--(10,2)--(10,4)--(9,4)--(9,3)--(8,3)--(8,5)--(5,5)--(5,4)--(6,4)--(6,3)--(4,3)--(4,2)--(5,2)--(5,1)--(6,1)--(6,0);
\foreach \y in {0,...,3}{\draw[thick,-latex,red] (7,\y)--(8,\y);}
\filldraw (2,0) circle (0.15cm) node[below] {$u_1$};
\filldraw (6,0) circle (0.15cm) node[below] {$u_2$};
\node at (5,-1.5) {(b)};
\end{scope}

\begin{scope}[shift={(0,-8)}]
\draw[gray,densely dotted,line width=0.6pt] (-0.5,0) grid (10.5,5);
\foreach \y in {1,...,6}{\node at (-1,\y-1) {\scriptsize \y};
\node at (11,\y-1) {\scriptsize \y};}
\draw[gray,densely dotted,fill=blue!10!white] (7,3) rectangle node {\textcolor{blue}{$f$}} (8,4);
\draw[very thick,blue] (2,0)--(3,0)--(3,1)--(4,1)--(4,3)--(7,3)--(7,4)--(9,4)--(9,2)--(8,2)--(8,1)--(7,1)--(7,2)--(5,2)--(5,1)--(6,1)--(6,0);
\foreach \y in {0,...,3}{\draw[thick,-latex,red] (7,\y)--(8,\y);}
\filldraw (2,0) circle (0.15cm) node[below] {$u_1$};
\filldraw (6,0) circle (0.15cm) node[below] {$u_2$};
\node at (5,-1.5) {(c)};
\end{scope}

\begin{scope}[shift={(15,-8)}]
\draw[gray,densely dotted,line width=0.6pt] (-0.5,0) grid (10.5,5);
\foreach \y in {1,...,6}{\node at (-1,\y-1) {\scriptsize \y};
\node at (11,\y-1) {\scriptsize \y};}
\draw[gray,densely dotted,fill=blue!10!white] (7,3) rectangle node {\textcolor{blue}{$f$}} (8,4);
\draw[very thick,blue] (2,0)--(2,1)--(1,1)--(1,0)--(-0.5,0);
\draw[very thick,blue] (10.5,0)--(10,0)--(10,1)--(8,1)--(8,2)--(6,2)--(6,1)--(5,1)--(5,0)--(6,0);
\foreach \y in {0,...,3}{\draw[thick,-latex,red] (7,\y)--(8,\y);}
\filldraw (2,0) circle (0.15cm) node[below] {$u_1$};
\filldraw (6,0) circle (0.15cm) node[below] {$u_2$};
\node at (5,-1.5) {(d)};
\end{scope}

\end{tikzpicture}
\caption{Simple paths from $u_1$ to $u_2$ on a cylinder graph $\G$ realized as a rectangular grid with its left and right sides connected (the numbers on the half-edges indicate how they should be glued together to obtain $\G$). The edges with red arrows between the marked face $f$ and the lower boundary are equipped with a parallel transport $z$ (resp. $z^{-1}$) in the direction (resp. opposite direction) of the arrows. The paths are divided into two classes, according to whether they leave $f$ to their left as in panels (a) and (b), or to their right as in panels (c) and (d).}
\label{Cyl_graph}
\end{figure}

The eigenvalues and eigenfunctions of the standard Laplacian $\Delta_0$ read
\begin{align}
\lambda_{m,n}&=4-2\cos\left(\frac{\pi m}{M+1}\right)-2\cos\left(\frac{2\pi n}{N}\right),\\
f_{m,n}(\x,\y)&=\left(\frac{2-\delta_{m,0}}{MN}\right)^{1/2}e^{\i(2\pi n)\x/N}\cos\left[\frac{\pi m}{M}\left(\y-\frac{1}{2}\right)\right],
\label{Cyl_eig_NN}
\end{align}
for $0\le m\le M{-}1$ and $0\le n\le N{-}1$. In particular, $\lambda_{0,0}=0$, meaning $\Delta_0$ is singular. The regularized Green function, introduced in Section~\ref{sec2.3}, takes the form of \eqref{Gf_reg}.

As on the upper half-plane, we define new coordinates $x=\varepsilon\x$ and $y=\varepsilon\y$ living on the lattice $\varepsilon\Z^2$. We take the scaling limit $\varepsilon\to 0^+$, $M,N\to\infty$ such that $M\varepsilon\to p$ and $N\varepsilon\to 2\pi$, and compute the Green function $\mathfrak{G}$ and its derivative $\widetilde{\mathfrak{G}}'$ directly in the continuum. They read respectively
\begin{align}
\mathfrak{G}(x_1,y_1;x_2,y_2)&=\frac{1}{2\pi p}\sum_{\substack{m,n\in\Z\\(m,n)\neq(0,0)}}\frac{e^{\i n(x_1-x_2)}\cos\left(\frac{\pi m y_1}{p}\right)\cos\left(\frac{\pi m y_2}{p}\right)}{n^2+\left(\frac{\pi m}{p}\right)^2},\\
\widetilde{\mathfrak{G}}'(x_1,y_1;x_2,y_2)&=\int_{0}^{y_*}\diff s\big(\partial_{x_*}\mathfrak{G}(x_*,s;x_1,y_1)-\partial_{x_*}\mathfrak{G}(x_*,s;x_2,y_2)\big).
\end{align}
For boundary points $u_i=(x_i,0)$, we use the Poisson summation formula to write the derivative with respect  to $x_*$ as
\begin{equation}
\partial_{x_*}\mathfrak{G}(x_*,s;x_i,0)=-\frac{1}{\pi}\sum_{k\in\Z}\sum_{n=1}^{\infty}\sin[n(x_*-x_i)]\,e^{-n|2kp-s|}.
\end{equation}
The integration over $s$ is straightforward, and yields
\begin{equation}
\int_{0}^{y_*}\diff s\,\partial_{x_*}\mathfrak{G}(x_*,s;x_i,0)=-\frac{1}{2}+\frac{x_*-x_i}{2\pi}+\frac{1}{\pi}\sum_{n=1}^{\infty}\frac{\sin[n(x_*-x_i)]}{n}\frac{e^{n(p-y_*)}-e^{-n(p-y_*)}}{e^{np}-e^{-np}}.
\end{equation}
We are now in position to write explicitly Schramm's formula \eqref{Schramm_form3} for a path on the cylinder with Neumann boundary conditions, in the scaling limit:
\begin{equation}
\P^{\text{NN}}_{\L}(x_1,x_2,p;z_*)=1-\frac{x_2{-}x_1}{2\pi}-\frac{1}{\pi}\sum_{n=1}^{\infty}\frac{\sin[n(x_*{-}x_1)]-\sin[n(x_*{-}x_2)]}{n}\frac{e^{n(p-y_*)}-e^{-n(p-y_*)}}{e^{np}-e^{-np}},
\label{PL_cyl_NN}
\end{equation}
where $z_*=x_*+\i y_*$. We illustrate this probability in Fig.~\ref{P+_NN_cyl_plots}. Taking the limit $y_*\to p$ yields the positive winding probability, simply given by
\begin{equation}
\lim_{y_*\to p}\P^{\text{NN}}_{\L}(x_1,x_2,p;z_*)=\P^{\text{NN}}_{+}(x_2-x_1)=1-\frac{x_2-x_1}{2\pi}.
\label{PNN}
\end{equation}
Surprisingly, this winding probability does \emph{not} depend on the height $p$ of the cylinder. The corresponding probabilities for other combinations of boundary conditions, on the other hand, \emph{do} depend on $p$, as we shall see in Sections~\ref{sec4.2.2} and \ref{sec4.2.3}. We have not found an explanation for this property so far.

\begin{figure}[h]
\centering
\begin{tikzpicture}

\begin{scope}
\node at (0,0) {\includegraphics[scale=0.7]{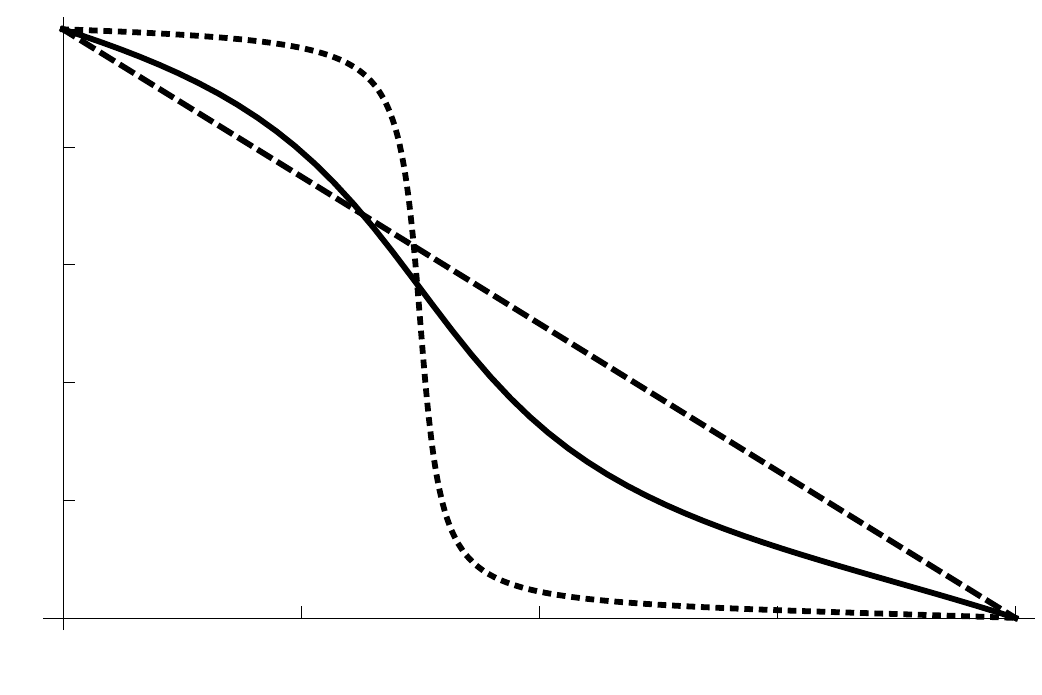}};
\node at (-2.7,3) {$\P^{\text{NN}}_{\L}(x_1,x_2,p;z_*)$};
\node at (3.9,-1.8) {$x_2$};
\node at (-1.4,-2.2) {$\pi/2$};
\node at (0.2,-2.2) {$\pi$};
\node at (1.8,-2.2) {$3\pi/2$};
\node at (3.4,-2.2) {$2\pi$};
\node at (-3.6,-1) {$0.2$};
\node at (-3.6,-0.2) {$0.4$};
\node at (-3.6,0.6) {$0.6$};
\node at (-3.6,1.4) {$0.8$};
\node at (-3.6,2.2) {$1$};
\end{scope}

\begin{scope}[xshift=8cm]
\node at (0,0) {\includegraphics[scale=0.7]{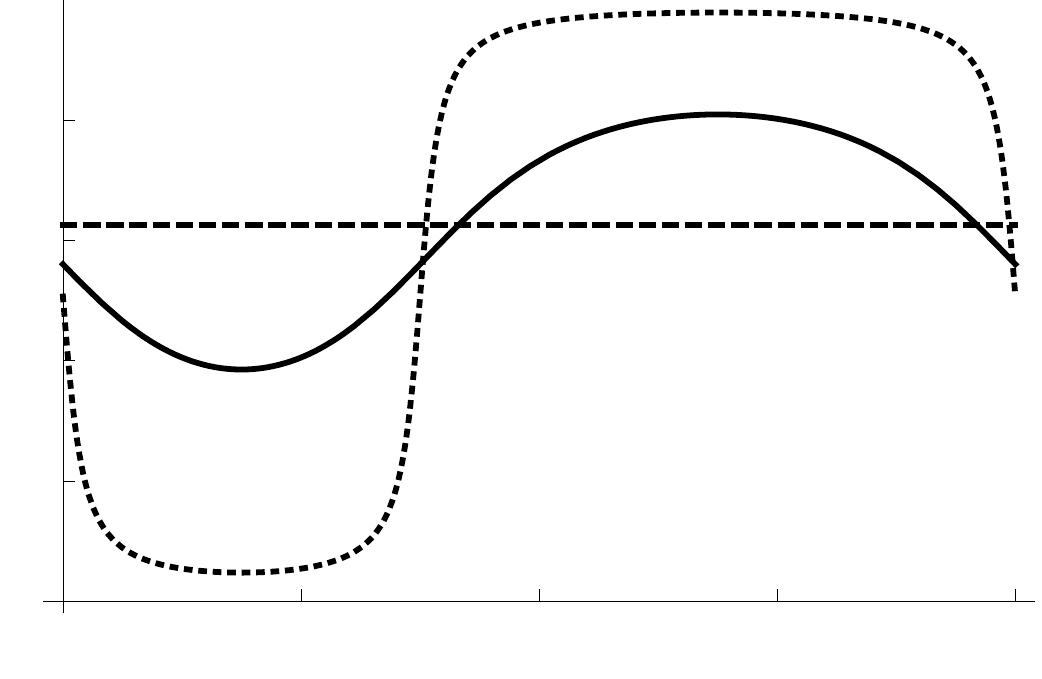}};
\node at (-2.7,3) {$\P^{\text{NN}}_{\L}(x_1,x_2,p;z_*)$};
\node at (4,-1.8) {$x_*$};
\node at (-1.4,-2.2) {$\pi/2$};
\node at (0.2,-2.2) {$\pi$};
\node at (1.8,-2.2) {$3\pi/2$};
\node at (3.4,-2.2) {$2\pi$};
\node at (-3.6,-1) {$0.2$};
\node at (-3.6,-0.2) {$0.4$};
\node at (-3.6,0.6) {$0.6$};
\node at (-3.6,1.4) {$0.8$};
\node at (-3.6,2.2) {$1$};
\end{scope}

\end{tikzpicture}
\caption{Illustration of the left-passage probability on a cylinder of perimeter $2\pi$ and height $p=10$ with Neumann conditions on both boundaries. We assume the random paths to start at $x_1=0$, and consider three distinct values for the vertical coordinate of the marked point $z_*$: $y_*=1/10$ (dotted), $y_*=1$ (solid) and $y_*=10$ (dashed). On the left is the left-passage probability as a function of the position of the endpoint $x_2$ of the paths, for $x_*=3\pi/4$. On the right is the same probability as a function of the horizontal coordinate $x_*$ of $z_*$, for $x_2=3\pi/4$.}
\label{P+_NN_cyl_plots}
\end{figure}


\subsubsection{Wired boundaries}
\label{sec4.2.2}
As a second calculation on the cylinder, let us now determine the positive winding probability (i.e. the left-passage probability with $\y_*=p$) when both the top and bottom boundaries of the cylinder are wired to the root $s$. We proceed here differently from in Section~\ref{sec4.2.1}, in the sense that we compute the line bundle Green function $\Gr=\Gr(e^{\i\theta})$ directly, for any value of $\theta$ (a feat we have not been able to reproduce for a zipper connected to a generic face $f$ of the graph). We choose here a zipper with parameter $z=e^{\i\theta}$ crossing the edges of the form $((N,\y),(1,\y))$ for $1\le\y\le M$.

The eigenvalue equation for the line bundle Laplacian $\mathbf{\Delta}$ reads
\begin{equation}
\mathbf{\Delta}\mathbf{f}(\x,\y)=4\mathbf{f}(\x,\y)-\mathbf{f}(\x{-}1,\y)-\mathbf{f}(\x{+}1,\y)-\mathbf{f}(\x,\y{-}1)-\mathbf{f}(\x,\y{+}1)=\boldsymbol{\lambda}\,\mathbf{f}(\x,\y)
\end{equation}
for $1\le\x\le N$, $1\le\y\le M$. The eigenfunctions must further satisfy the following boundary conditions:
\begin{equation}
\mathbf{f}(\x,0)=0=\mathbf{f}(\x,M+1),\quad\mathbf{f}(0,\y)=e^{\i\theta}\mathbf{f}(N,\y),\quad\mathbf{f}(N+1,\y)=e^{-\i\theta}\mathbf{f}(1,\y).
\end{equation}
One readily finds that the eigenvalues and normalized eigenfunctions are given by
\begin{align}
\boldsymbol{\lambda}_{m,n}&=4-2\cos\left(\frac{\pi m}{M+1}\right)-2\cos\left(\frac{2\pi n-\theta}{N}\right),\\
\mathbf{f}_{m,n}(\x,\y)&=\left(\frac{2}{(M+1)N}\right)^{1/2}e^{\i(2\pi n-\theta)\x/N}\sin\left(\frac{\pi m\y}{M+1}\right),
\end{align}
for $1\le m\le M$ and $0\le n\le N{-}1$. In the same scaling limit as in Section~\ref{sec4.2.1}, the line bundle Green function $\Gr$ converges to the continuum Green function $\mathfrak{G}$ of a cylinder of perimeter $2\pi$ and height $p$ with Dirichlet conditions on both boundaries,
\begin{equation}
\mathfrak{G}(x_1,y_1;x_2,y_2;\theta)=\frac{1}{2\pi p}\sum_{m\in\Z^*}\sum_{n\in\Z}\frac{e^{\i\left(n-\frac{\theta}{2\pi}\right)(x_1-x_2)}\sin\left(\frac{\pi m y_1}{p}\right)\sin\left(\frac{\pi m y_2}{p}\right)}{\left(n-\frac{\theta}{2\pi}\right)^2+\left(\frac{\pi m}{p}\right)^2}.
\end{equation}
For boundary points $u_i=(x_i,0)$, $i=1,2$, the scaling limit of Schramm's formula may be written in terms of the excursion Poisson kernel $\text{P}(x,p;\theta)$ (see Section~\ref{sec4.1} on the UHP), as 
\begin{equation}
\P^{\text{DD}}_{+}(x,p)=1+\i\frac{1}{\text{P}(x,p;\theta)}\frac{\diff\text{P}(x,p;\theta)}{\diff\theta}\bigg|_{\theta=0},
\end{equation}
where $x=x_2-x_1$, the subscript $+$ refers to the positive (counterclockwise) winding, and the superscripts stand for the type of boundary conditions on the bottom and top of the cylinder, respectively.

Recall that the excursion Poisson kernel is given by the normal derivative of the Green function $\mathfrak{G}$ at the boundary points $(x_1,0)$ and $(x_2,0)$. It reads here
\begin{equation}
\text{P}(x,p;\theta)\equiv\lim_{\delta_1,\delta_2\to 0^+}\frac{1}{\delta_1\delta_2}\mathfrak{G}(x_1,\delta_1;x_2,\delta_2;\theta)=\frac{\pi}{4p^2}\sum_{k\in\Z}\frac{e^{\i k\theta}}{\sinh^2\left[(2\pi k-x)\frac{\pi}{2p}\right]},
\end{equation}
where we used the Poisson summation formula to write the last equality. Its zeroth and first orders in $\theta$ can be written in terms of Jacobi's theta functions (which we recall in Appendix~\ref{a1}) and yield the following winding probability:
\begin{equation}
\P^{\text{DD}}_{+}(x,p)=1-\frac{1}{2\pi}\frac{1}{\partial_x F_1(x,p)}\big(x\,\partial_x+p\,\partial_p\big)F_1(x,p),
\label{P+_cyl_DD}
\end{equation}
in agreement with the $\text{SLE}_2$ computation of \cite{Hag09a,Hag09b}. Here $F_1(x,p)$ is given by
\begin{equation}
F_1(x,p)=x+p\frac{\vartheta'_1(x/2,e^{-p})}{\vartheta_1(x/2,e^{-p})}=x+p\cot\left(\frac{x}{2}\right)+4p\sum_{n=1}^{\infty}\frac{\sin(nx)}{e^{2np}-1}.
\end{equation}
We refer to Fig.~\ref{P+_cyl} for an illustration of this winding probability for different moduli $p$.

In the limit of long cylinders $p\to\infty$, one recovers the equivalent of Schramm's formula for $\kappa=2$ \cite{Sch01} in our geometry,
\begin{equation}
\P^{\text{DD}}_{+}(x,p)\simeq 1-\frac{x-\sin x}{2\pi}+\frac{1}{\pi p}\sin x\sin^2(x/2).
\label{PDD_large_p}
\end{equation}
For very thin cylinders, i.e. for $p\sim 0^+$, one obtains, using the modular properties of theta functions (recalled in Appendix~\ref{a1}),
\begin{equation}
\P^{\text{DD}}_{+}(x,p)\simeq\frac{e^{-\pi(2\pi-x)/p}}{e^{-\pi x/p}+e^{-\pi(2\pi-x)/p}}\simeq\begin{cases}
1-e^{-2\pi(\pi-x)/p}\quad&\text{if $0<x<\pi$,}\\
\frac{1}{2}\quad&\text{if $x=\pi$,}\\
e^{-2\pi(x-\pi)/p}\quad&\text{if $\pi<x<2\pi$.}
\end{cases}
\label{PDD_small_p}
\end{equation}
which converges to the Heaviside step function. An intuitive explanation for this fact was given in \cite{Hag09a}, using the correspondence between paths in spanning forests and loop-erased random walks \cite{Pem91}: for very thin cylinders, the loop erasure of walks may be neglected at leading order, and the Heaviside function can be obtained by a simple Brownian motion calculation in the scaling limit.


\subsubsection{Mixed boundaries}
\label{sec4.2.3}
The calculation of the positive winding probability on a cylinder with wired and free conditions on its bottom and top boundaries, respectively, is very similar to that of $\P^{\text{DD}}_{+}$. Therefore, we merely state the result in the scaling limit \cite{Hag09b}:
\begin{equation}
\P^{\text{DN}}_{+}(x,p)=1-\frac{1}{2\pi}\frac{1}{\partial_x F_2(x,p)}\big(x\,\partial_x+p\,\partial_p\big)F_2(x,p),
\label{P+_cyl_DN}
\end{equation}
with the function $F_2(x,p)$ defined in terms of Jacobi's theta and elliptic functions by
\begin{equation}
F_2(x,p)=p\,\vartheta_3^2\,\text{cs}\big(\vartheta_3^2\,x/2,\vartheta_2^2/\vartheta_3^2\big)=p\cot\left(\frac{x}{2}\right)-4p\sum_{n=1}^{\infty}\frac{\sin(nx)}{e^{2np}+1},
\end{equation}
where $\vartheta_a\equiv\vartheta_a(0,e^{-p})$ for $a=2,3$. For large cylinders, we find
\begin{equation}
\P^{\text{DN}}_{+}(x,p)\simeq 1-\frac{x-\sin x}{2\pi}+\frac{8}{\pi}\sin x\sin^2(x/2)\,p\,e^{-2p},
\end{equation}
which yields the same result as Eq.~\eqref{PDD_large_p} in the limit $p\to\infty$. Interestingly, the first correction is exponential here, as opposed to the $1/p$ correction computed above for the pure Dirichlet case. When $p\to 0^+$, the winding probability converges to the Heaviside function as follows:
\begin{equation}
\P^{\text{DN}}_{+}(x,p)\simeq 1-\left(1+e^{\pi(\pi-x)/p}\right)^{-1}\simeq\begin{cases}
1-e^{-\pi(\pi-x)/p}\quad&\text{if $0<x<\pi$,}\\
\frac{1}{2}\quad&\text{if $x=\pi$,}\\
e^{-\pi(x-\pi)/p}\quad&\text{if $\pi<x<2\pi$.}
\end{cases}
\end{equation}
Both asymptotic limits of $\P^{\text{DN}}_{+}$ are represented in Fig.~\ref{P+_cyl}.

For the opposite choice of boundary conditions, namely free and wired on the bottom and top boundaries, respectively, we obtain the following Green function in the scaling limit,
\begin{equation}
\mathfrak{G}(x_1,0;x_2,0;\theta)=\frac{1}{2\pi}\sum_{n\in\Z}\frac{e^{\i\left(n-\frac{\theta}{2\pi}\right)(x_1-x_2)}}{n-\frac{\theta}{2\pi}}\tanh\left[\left(n-\frac{\theta}{2\pi}\right)p\right].
\end{equation}
Its leading order for $\theta\sim 0^+$ yields, with $x=x_2-x_1$,
\begin{equation}
\begin{split}
\mathfrak{G}(x,p)&\equiv\mathfrak{G}(x_1,0;x_2,0;\theta)\big|_{\theta=0}=\frac{p}{2\pi}+\frac{1}{\pi}\sum_{n=1}^{\infty}\frac{\cos(nx)\tanh(np)}{n}\\
&=\frac{1}{\pi}\log\left(\frac{2\,\vartheta_4^2(x/2,e^{-2p})}{\vartheta_1(x/2,e^{-p})\vartheta_2(0,e^{-p})}\right).
\end{split}
\end{equation}
The derivative of the Green function can, upon visual inspection, be rewritten as
\begin{equation}
\mathfrak{G}'(x,p)\equiv-\i\,\partial_{\theta}\mathfrak{G}(x_1,0;x_2;0;\theta)\big|_{\theta=0}=\frac{x}{2\pi}\mathfrak{G}(x,p)+\frac{1}{2\pi}(p\,\partial_p-1)\int_{0}^{x}\diff t\,\mathfrak{G}(t,p).
\end{equation}
It follows that Schramm's formula for the Neumann-Dirichlet case reads, in the scaling limit,
\begin{equation}
\P^{\text{ND}}_{+}(x,p)=1-\frac{1}{2\pi}\frac{1}{\partial_x F_3(x,p)}\big(x\,\partial_x+p\,\partial_p-1\big)F_3(x,p),
\label{P+_cyl_ND}
\end{equation}
where $F_3(x,p)$ is defined by
\begin{equation}
F_3(x,p)=\int_{0}^{x}\diff t\,\mathfrak{G}(t,p).
\end{equation}
We may compute the large- and small-$p$ expansions as above, using the approximations
\begin{align}
\mathfrak{G}(x,p)&\simeq\frac{p}{2\pi}-\frac{1}{\pi}\log[2\sin(x/2)]-\frac{2}{\pi}\cos x\,e^{-2p}&\quad\text{for $p\gg 1$,}\\
\mathfrak{G}(x,p)&\simeq\frac{2}{\pi}e^{-\pi x/(2p)}+\frac{2}{\pi}e^{-\pi(2\pi-x)/(2p)}&\quad\text{for $p\ll 1$.}
\end{align}
In the limits $p\to\infty$ and $p\to 0^+$, the winding probability converges to $1-x/(2\pi)$ and to the Heaviside function $1-\Theta(x-\pi)$, respectively (see Fig.~\ref{P+_cyl}).

\begin{figure}[h]
\centering
\begin{tikzpicture}

\begin{scope}
\node at (0,0) {\includegraphics[scale=0.6]{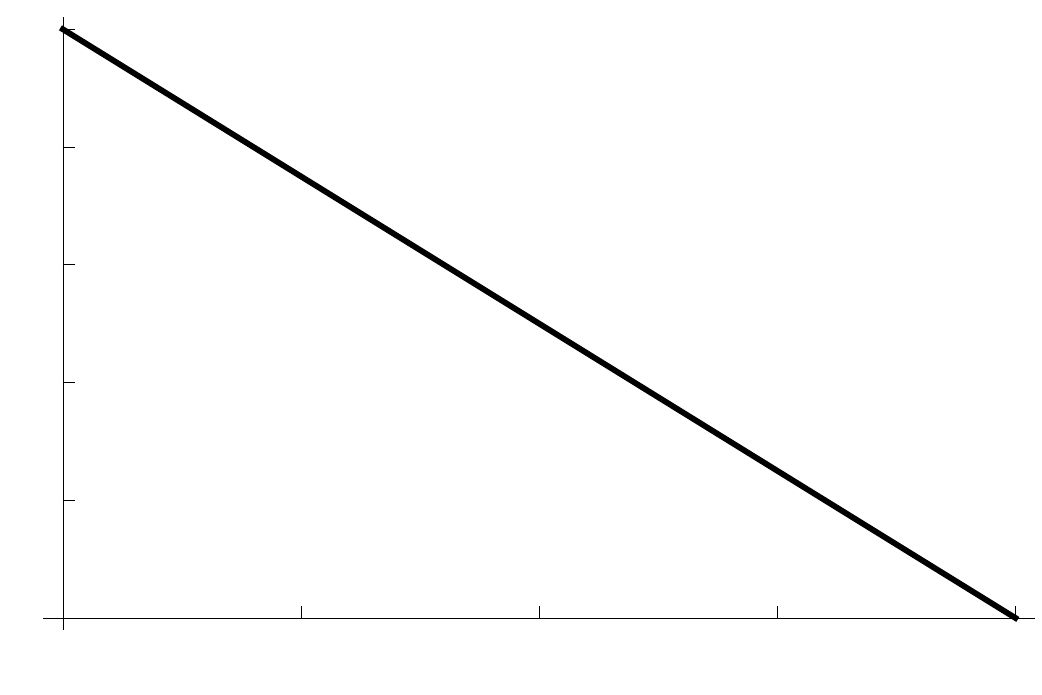}};
\node at (-2.6,2.6) {$\P^{\text{DD}}_{+}(x,p)$};
\node at (3.4,-1.6) {$x$};
\node at (-1.3,-2) {$\pi/2$};
\node at (0.2,-2) {$\pi$};
\node at (1.6,-2) {$3\pi/2$};
\node at (3,-2) {$2\pi$};
\node at (-3.2,-0.85) {$0.2$};
\node at (-3.2,-0.15) {$0.4$};
\node at (-3.2,0.55) {$0.6$};
\node at (-3.2,1.25) {$0.8$};
\node at (-3.2,1.95) {$1$};
\end{scope}

\begin{scope}[xshift=8cm]
\node at (0,0) {\includegraphics[scale=0.6]{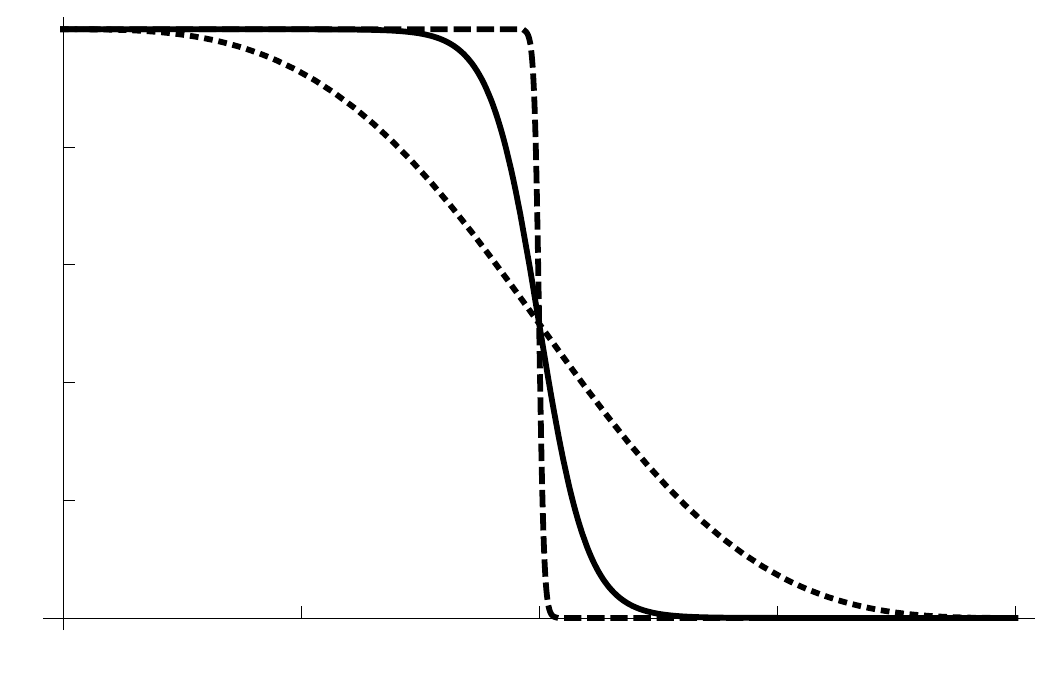}};
\node at (-2.6,2.6) {$\P^{\text{DD}}_{+}(x,p)$};
\node at (3.4,-1.6) {$x$};
\node at (-1.3,-2) {$\pi/2$};
\node at (0.2,-2) {$\pi$};
\node at (1.6,-2) {$3\pi/2$};
\node at (3,-2) {$2\pi$};
\node at (-3.2,-0.85) {$0.2$};
\node at (-3.2,-0.15) {$0.4$};
\node at (-3.2,0.55) {$0.6$};
\node at (-3.2,1.25) {$0.8$};
\node at (-3.2,1.95) {$1$};
\end{scope}

\begin{scope}[yshift=-6cm]
\node at (0,0) {\includegraphics[scale=0.6]{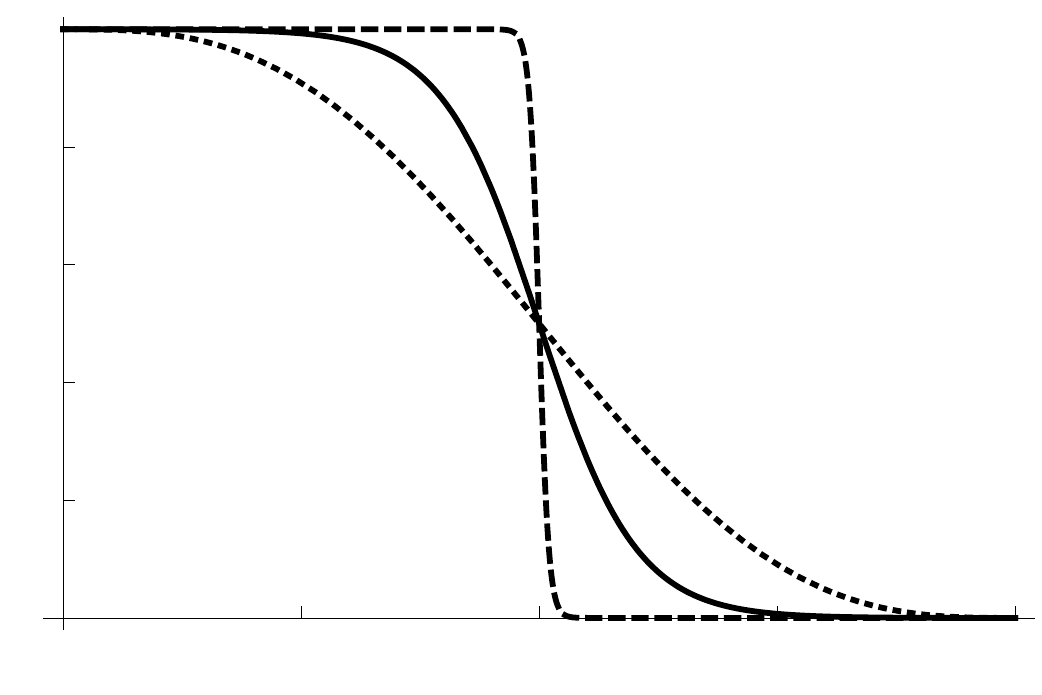}};
\node at (-2.6,2.6) {$\P^{\text{DN}}_{+}(x,p)$};
\node at (3.4,-1.6) {$x$};
\node at (-1.3,-2) {$\pi/2$};
\node at (0.2,-2) {$\pi$};
\node at (1.6,-2) {$3\pi/2$};
\node at (3,-2) {$2\pi$};
\node at (-3.2,-0.85) {$0.2$};
\node at (-3.2,-0.15) {$0.4$};
\node at (-3.2,0.55) {$0.6$};
\node at (-3.2,1.25) {$0.8$};
\node at (-3.2,1.95) {$1$};
\end{scope}

\begin{scope}[xshift=8cm,yshift=-6cm]
\node at (0,0) {\includegraphics[scale=0.6]{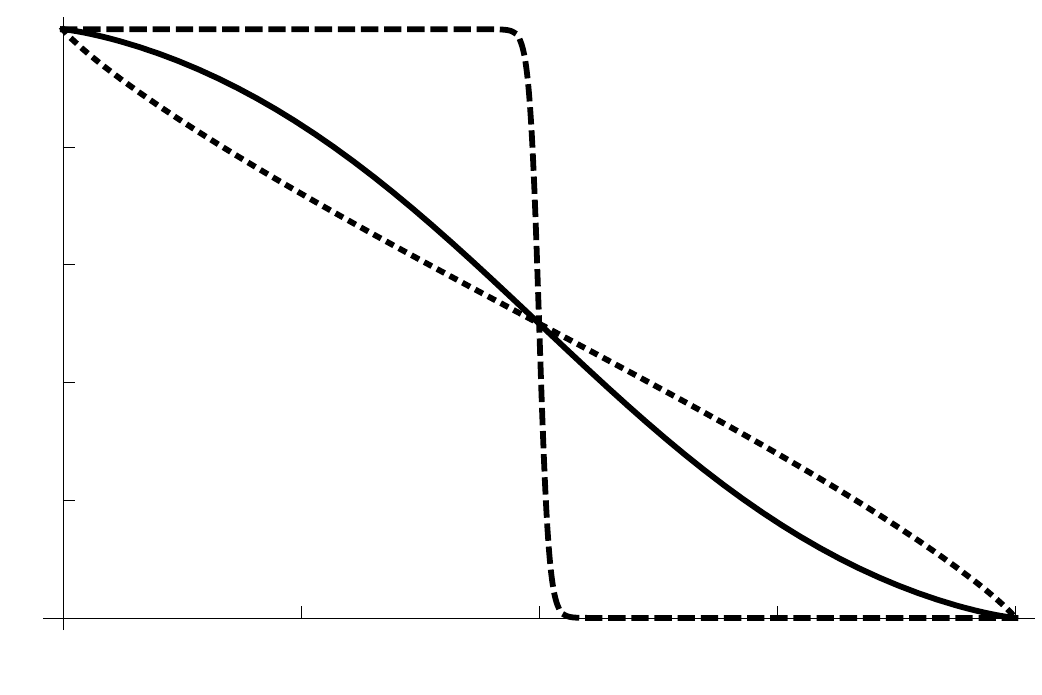}};
\node at (-2.6,2.6) {$\P^{\text{ND}}_{+}(x,p)$};
\node at (3.4,-1.6) {$x$};
\node at (-1.3,-2) {$\pi/2$};
\node at (0.2,-2) {$\pi$};
\node at (1.6,-2) {$3\pi/2$};
\node at (3,-2) {$2\pi$};
\node at (-3.2,-0.85) {$0.2$};
\node at (-3.2,-0.15) {$0.4$};
\node at (-3.2,0.55) {$0.6$};
\node at (-3.2,1.25) {$0.8$};
\node at (-3.2,1.95) {$1$};
\end{scope}

\end{tikzpicture}
\caption{Illustration of the four positive winding probability on the cylinder with Dirichlet and/or Neumann conditions on its lower and upper boundaries, for $p=1/10$ (dashed), $p=1$ (solid) and $p=10$ (dotted).}
\label{P+_cyl}
\end{figure}


\subsection{The Möbius strip}
\label{sec4.3}

Our next example for Schramm's formula consists in a grid embedded on a Möbius strip. The nonorientability of the surface brings out two complications. First, the concepts of ``left-passage'' and ``counterclockwise'' are ill defined on such a surface. Second, simple paths between two boundary vertices on a Möbius strip are distributed into not \emph{two} but \emph{three} distinct topological classes, as illustrated in Fig.~\ref{Moebius_paths}. An adaptation of Eq.~\eqref{Schramm_form2} is therefore needed to compute their respective winding probabilities.

The graph $\G$ we consider here is an $N\times M$ rectangle, with vertex coordinates $(\x,\y)$ for $1\le\x\le N$ and $1\le\y\le M$. We choose wired or free conditions on both the top ($\y=M$) and bottom ($\y=1$) sides of the rectangle, and twisted periodic conditions in the horizontal direction, namely, $(\x{+}N,\y)\sim(\x,M{+}1{-}\y)$. The latter are implemented by connecting each vertex $(N,\y)$ of the rectangle to $(1,M{+}1{-}\y)$ by an edge, for $1\le\y\le M$. Furthermore, we equip these edges with a parallel transport $\phi_{(N+1,\y),(1,M+1-\y)}=e^{\i\theta}$, with $\theta\in\mathbb{R}$.

With respect to the zipper, there are four possible ways to select two boundary vertices $u_i=(\x_i,\y_i)$, with $1\le\x_i\le N$ and $i=1,2$, as each $\y_i$ may take the value $1$ or $M$. It easy to see that choosing $\y_1=\y_2=1$ is equivalent to $\y_1=\y_2=M$, and that $\y_1=1,\y_2=M$ is equivalent to $\y_1=M,\y_2=1$. Let us therefore pick $u_1=(\x_1,1)$, and discuss the two cases $u_2=(\x_2,1)$ (with $\x_1<\x_2$) and $u_2=(\x_2,M)\simeq(\x_2+N,1)$ with $1\le\x_2\le N$. We start with the former, and consider oriented cycle-rooted groves on the graph that contain a path from $u_1$ to $u_2$. In such groves, the path from $u_1$ to $u_2$ winds around the strip zero, one or two times, yielding a product of parallel transports $\phi_{1\to 2}=1,e^{-\i\theta},e^{-2\i\theta}$, respectively (see Fig.~\ref{Moebius_paths}). The noncontractible cycles $\alpha$ appearing in such groves are of two types, winding once or twice around the strip, and therefore pick up a monodromy factor $\varpi_{\alpha}=e^{\i\theta},e^{2\i\theta}$ or their inverses (depending on the orientation of the cycles). The contractible cycles, on the other hand, have a trivial monodromy, and so do not appear in OCRGs counted by the partition function $\Zr[{\textstyle{2\atop 1}}]$.

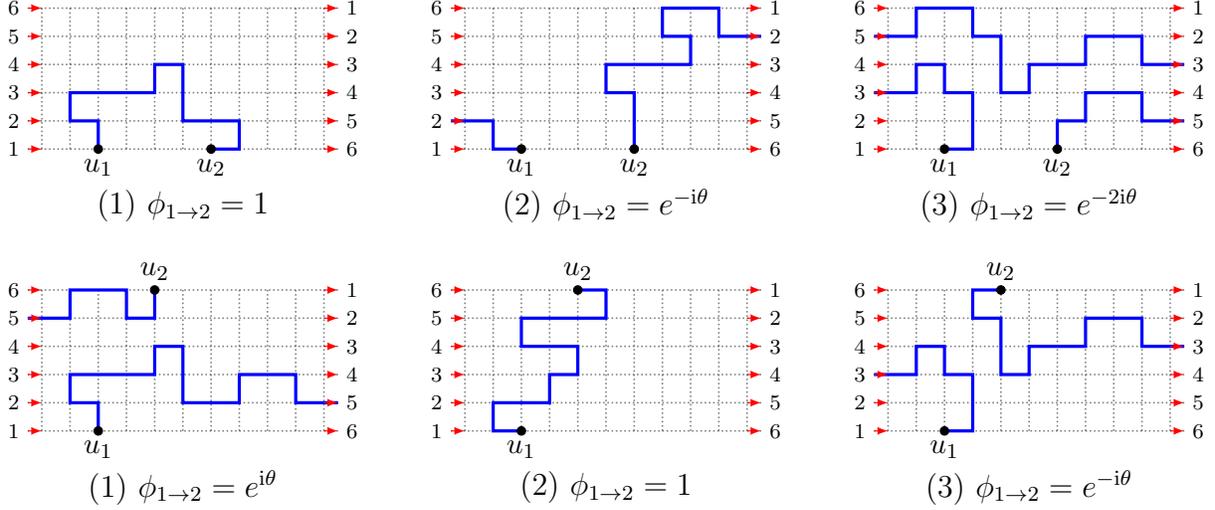
\begin{figure}[h]
\centering
\begin{tikzpicture}[scale=0.375]

\begin{scope}
\draw[gray,densely dotted,line width=0.6pt] (-0.5,0) grid (10.5,5);
\foreach \y in {1,...,6}{\node at (-1,\y-1) {\scriptsize \y};
\node at (11,6-\y) {\scriptsize \y};}
\draw[very thick,blue] (2,0)--(2,1)--(1,1)--(1,2)--(3,2)--(4,2)--(4,3)--(5,3)--(5,1)--(7,1)--(7,0)--(6,0);
\filldraw (2,0) circle (0.15cm) node[below] {$u_1$};
\filldraw (6,0) circle (0.15cm) node[below] {$u_2$};
\foreach \y in {0,...,5}{\draw[-latex,red] (10,\y)--(10.5,\y);
\draw[-latex,red] (-0.5,\y)--(0,\y);}
\node at (5,-2) {\large $(1)$ $\phi_{1\to 2}=1$};
\end{scope}

\begin{scope}[xshift=15cm]
\draw[gray,densely dotted,line width=0.6pt] (-0.5,0) grid (10.5,5);
\foreach \y in {1,...,6}{\node at (-1,\y-1) {\scriptsize \y};
\node at (11,6-\y) {\scriptsize \y};}
\draw[very thick,blue] (2,0)--(1,0)--(1,1)--(-0.5,1);
\draw[very thick,blue] (10.5,4)--(9,4)--(9,5)--(7,5)--(7,4)--(8,4)--(8,3)--(5,3)--(5,2)--(6,2)--(6,0);
\filldraw (2,0) circle (0.15cm) node[below] {$u_1$};
\filldraw (6,0) circle (0.15cm) node[below] {$u_2$};
\foreach \y in {0,...,5}{\draw[-latex,red] (10,\y)--(10.5,\y);
\draw[-latex,red] (-0.5,\y)--(0,\y);}
\node at (5,-2) {\large $(2)$ $\phi_{1\to 2}=e^{-\i\theta}$};
\end{scope}

\begin{scope}[xshift=30cm]
\draw[gray,densely dotted,line width=0.6pt] (-0.5,0) grid (10.5,5);
\foreach \y in {1,...,6}{\node at (-1,\y-1) {\scriptsize \y};
\node at (11,6-\y) {\scriptsize \y};}
\draw[very thick,blue] (2,0)--(3,0)--(3,2)--(2,2)--(2,3)--(1,3)--(1,2)--(-0.5,2);
\draw[very thick,blue] (10.5,3)--(9,3)--(9,4)--(7,4)--(7,3)--(5,3)--(5,2)--(4,2)--(4,4)--(3,4)--(3,5)--(1,5)--(1,4)--(-0.5,4);
\draw[very thick,blue] (10.5,1)--(9,1)--(9,2)--(7,2)--(7,1)--(6,1)--(6,0);
\filldraw (2,0) circle (0.15cm) node[below] {$u_1$};
\filldraw (6,0) circle (0.15cm) node[below] {$u_2$};
\foreach \y in {0,...,5}{\draw[-latex,red] (10,\y)--(10.5,\y);
\draw[-latex,red] (-0.5,\y)--(0,\y);}
\node at (5,-2) {\large $(3)$ $\phi_{1\to 2}=e^{-2\i\theta}$};
\end{scope}

\begin{scope}[yshift=-10cm]
\draw[gray,densely dotted,line width=0.6pt] (-0.5,0) grid (10.5,5);
\foreach \y in {1,...,6}{\node at (-1,\y-1) {\scriptsize \y};
\node at (11,6-\y) {\scriptsize \y};}
\draw[very thick,blue] (2,0)--(2,1)--(1,1)--(1,2)--(3,2)--(4,2)--(4,3)--(5,3)--(5,1)--(7,1)--(7,2)--(9,2)--(9,1)--(10.5,1);
\draw[very thick,blue] (-0.5,4)--(1,4)--(1,5)--(3,5)--(3,4)--(4,4)--(4,5);
\filldraw (2,0) circle (0.15cm) node[below] {$u_1$};
\filldraw (4,5) circle (0.15cm) node[above] {$u_2$};
\foreach \y in {0,...,5}{\draw[-latex,red] (10,\y)--(10.5,\y);
\draw[-latex,red] (-0.5,\y)--(0,\y);}
\node at (5,-2) {\large $(1)$ $\phi_{1\to 2}=e^{\i\theta}$};
\end{scope}

\begin{scope}[xshift=15cm,yshift=-10cm]
\draw[gray,densely dotted,line width=0.6pt] (-0.5,0) grid (10.5,5);
\foreach \y in {1,...,6}{\node at (-1,\y-1) {\scriptsize \y};
\node at (11,6-\y) {\scriptsize \y};}
\draw[very thick,blue] (2,0)--(1,0)--(1,1)--(3,1)--(3,2)--(4,2)--(4,3)--(2,3)--(2,4)--(5,4)--(5,5)--(4,5);
\filldraw (2,0) circle (0.15cm) node[below] {$u_1$};
\filldraw (4,5) circle (0.15cm) node[above] {$u_2$};
\foreach \y in {0,...,5}{\draw[-latex,red] (10,\y)--(10.5,\y);
\draw[-latex,red] (-0.5,\y)--(0,\y);}
\node at (5,-2) {\large $(2)$ $\phi_{1\to 2}=1$};
\end{scope}

\begin{scope}[xshift=30cm,yshift=-10cm]
\draw[gray,densely dotted,line width=0.6pt] (-0.5,0) grid (10.5,5);
\foreach \y in {1,...,6}{\node at (-1,\y-1) {\scriptsize \y};
\node at (11,6-\y) {\scriptsize \y};}
\draw[very thick,blue] (2,0)--(3,0)--(3,2)--(2,2)--(2,3)--(1,3)--(1,2)--(-0.5,2);
\draw[very thick,blue] (10.5,3)--(9,3)--(9,4)--(7,4)--(7,3)--(5,3)--(5,2)--(4,2)--(4,4)--(3,4)--(3,5)--(4,5);
\filldraw (2,0) circle (0.15cm) node[below] {$u_1$};
\filldraw (4,5) circle (0.15cm) node[above] {$u_2$};
\foreach \y in {0,...,5}{\draw[-latex,red] (10,\y)--(10.5,\y);
\draw[-latex,red] (-0.5,\y)--(0,\y);}
\node at (5,-2) {\large $(3)$ $\phi_{1\to 2}=e^{-\i\theta}$};
\end{scope}

\end{tikzpicture}
\caption{Möbius graph $\G$ drawn as a rectangular grid with twisted periodic boundary conditions in the horizontal direction (the numbers on the half-edges indicate how they should be glued together to obtain the graph $\G$). The oriented edges equipped with a nontrivial parallel transport $e^{\i\theta}$ are drawn with an arrow. Each panel depicts a representative of each class of simple paths between two boundary vertices $u_1,u_2$. There are three distinct classes of paths, whether $u_2$ is located on the bottom boundary of the rectangle (in the top row) or on the top boundary (in the bottom row).}
\label{Moebius_paths}
\end{figure}

More precisely, let us observe that an OCRG in the first or third class (i.e. with $\phi_{1\to 2}=1,e^{-2\i\theta}$) contains at most one cycle winding once around the strip, and any number of cycles winding twice. On the other hand, the OCRGs in the second class (with $\phi_{1\to 2}=e^{-\i\theta}$) are quite peculiar, as the existence of the path from $u_1$ to $u_2$ forbids any noncontractible cycle. This crucial observation allows one to write the partition function explicitly for all paths in groves from $u_1$ to $u_2$ as
\begin{equation}
\begin{split}
\Zr[{\textstyle{2\atop 1}}]&=\Gr_{1,2}\det\mathbf{\Delta}=\sum_{\text{OCRGs }\Gamma_{\vsig}}\prod_{\text{cycles }\alpha\in\Gamma_{\vsig}}(1-\varpi_{\alpha})\times\phi_{1\to 2}^{-1}\\
&=\sum_{k=0}^{\infty}(2{-}2\cos 2\theta)^k\left\{N^{(1)}_k+N^{(3)}_k e^{2\i\theta}+\left(\widetilde{N}^{(1)}_k+\widetilde{N}^{(3)}_k e^{2\i\theta}\right)(2{-}2\cos\theta)\right\}+N^{(2)}_0 e^{\i\theta},
\label{Moebius_Z12}
\end{split}
\end{equation}
where $\vsig={\textstyle{2\atop 1}}$, and $N^{(j)}_k$ (resp. $\widetilde{N}^{(j)}_k$) denotes the number of unoriented CRGs\footnote{The weight of an unoriented cycle $\alpha$ being defined as the sum of the weights in both directions, i.e. $(1-\varpi_{\alpha})+(1-\varpi_{\alpha^{-1}})=2-\varpi_{\alpha}-\varpi_{\alpha}^{-1}$.} in class $j$ with $k$ cycles winding twice around the strip and no cycle (resp. one cycle) winding once. In the limit $\theta\to 0$, Eq.~\eqref{Moebius_Z12} yields
\begin{equation}
Z[12]=\lim_{\theta\to 0}\Zr[{\textstyle{2\atop 1}}]=N^{(1)}_0+N^{(2)}_0+N^{(3)}_0,
\end{equation}
where $N^{(j)}_0$ is the number of spanning forests on $\G$ in class $j$, whose knowledge is required to compute winding probabilities on the Möbius strip:
\begin{equation}
\P_{\text{SF}}(\text{a random simple path from $u_1$ to $u_2$ is in class $j$})\equiv\P_{j}(u_1,u_2)=\frac{N^{(j)}_0}{N^{(1)}_0+N^{(2)}_0+N^{(3)}_0}.
\end{equation}

In order to extract the values of $N^{(j)}_0$ from Eq.~\eqref{Moebius_Z12}, we first use the zeroth- and first-order terms in $\theta$ (as on the cylinder), which yield respectively
\begin{equation}
\Zr[{\textstyle{2\atop 1}}]\big|_{\theta=0}=N^{(1)}_0+N^{(2)}_0+N^{(3)}_0,\quad\partial_{\theta}\Zr[{\textstyle{2\atop 1}}]\big|_{\theta=0}=\i\,N^{(2)}_0+2\i\,N^{(3)}_0.
\label{Moebius_eq1}
\end{equation}
To write a third independent equation, we evaluate Eq.~\eqref{Moebius_Z12} at $\theta=\pi/2$:
\begin{equation}
\Im\Zr[{\textstyle{2\atop 1}}]\big|_{\theta=\pi/2}=N^{(2)}_0.
\label{Moebius_eq2}
\end{equation}
Using Eqs. \eqref{Moebius_eq1} and \eqref{Moebius_eq2}, we find the following combinatorial expressions for winding probabilities on the Möbius strip:
\begin{equation}
\begin{split}
\P_1(u_1,u_2)&=1+\frac{\i}{2}\partial_{\theta}\log\Zr[{\textstyle{2\atop 1}}]\big|_{\theta=0}-\frac{1}{2}\frac{\Im\Zr[{\textstyle{2\atop 1}}]\big|_{\theta=\pi/2}}{\Zr[{\textstyle{2\atop 1}}]\big|_{\theta=0}},\\
\P_2(u_1,u_2)&=\frac{\Im\Zr[{\textstyle{2\atop 1}}]\big|_{\theta=\pi/2}}{\Zr[{\textstyle{2\atop 1}}]\big|_{\theta=0}},\\
\P_3(u_1,u_2)&=-\frac{\i}{2}\partial_{\theta}\log\Zr[{\textstyle{2\atop 1}}]\big|_{\theta=0}-\frac{1}{2}\frac{\Im\Zr[{\textstyle{2\atop 1}}]\big|_{\theta=\pi/2}}{\Zr[{\textstyle{2\atop 1}}]\big|_{\theta=0}},
\label{Moebius_prob}
\end{split}
\end{equation}
where $\Zr[{\textstyle{2\atop 1}}]=\Gr_{1,2}\det\mathbf{\Delta}$ can be computed in terms of the eigenvalues and eigenfunctions of $\mathbf{\Delta}=\mathbf{\Delta}(\theta)$, which we give below for the Möbius graph with either wired or free boundary conditions.

If instead we take the endpoint of the path on the top boundary of the rectangle, $u_2=(\x_2,M)$, the product of parallel transports along the path from $u_1$ to $u_2$ (with respect to the same zipper) reads $\phi_{1\to 2}=e^{\i\theta},1,e^{-\i\theta}$ for the first, second and third classes of paths, respectively (see the bottom row of Fig.~\ref{Moebius_paths}). Consequently, the winding probabilities are given by
\begin{equation}
\begin{split}
\widehat{\P}_1(u_1,u_2)&=\frac{1}{2}+\frac{\i}{2}\partial_{\theta}\log\Zr[{\textstyle{2\atop 1}}]\big|_{\theta=0}-\frac{1}{2}\frac{\Re\Zr[{\textstyle{2\atop 1}}]\big|_{\theta=\pi/2}}{\Zr[{\textstyle{2\atop 1}}]\big|_{\theta=0}},\\
\widehat{\P}_2(u_1,u_2)&=\frac{\Re\Zr[{\textstyle{2\atop 1}}]\big|_{\theta=\pi/2}}{\Zr[{\textstyle{2\atop 1}}]\big|_{\theta=0}},\\
\widehat{\P}_3(u_1,u_2)&=\frac{1}{2}-\frac{\i}{2}\partial_{\theta}\log\Zr[{\textstyle{2\atop 1}}]\big|_{\theta=0}-\frac{1}{2}\frac{\Re\Zr[{\textstyle{2\atop 1}}]\big|_{\theta=\pi/2}}{\Zr[{\textstyle{2\atop 1}}]\big|_{\theta=0}},
\label{Moebius_prob2}
\end{split}
\end{equation}
where the hat serves as a reminder that the endpoint $u_2$ is located on the top boundary of the rectangle. Although the combinatorial forms \eqref{Moebius_prob} and \eqref{Moebius_prob2} of the $\P_{j}$'s and the $\widehat{\P}_j$'s differ depending on the position of $u_2$, we shall show explicitly that they are related to one another by $\x_2\to\x_2{+}N$ for both choices of boundary conditions (wired or free), reflecting the periodicity of the Möbius strip.


\subsubsection{Wired boundary}
\label{sec4.3.1}
Let us start with a Möbius graph in which all boundary vertices are wired to the root $s$. Its Dirichlet line bundle Laplacian $\mathbf{\Delta}$ has the following eigenvalues and eigenfunctions:
\begin{align}
\boldsymbol{\lambda}_{m,n}&=4-2\cos\left(\frac{\pi m}{M+1}\right)-2\cos\left(\frac{2\pi n+\pi(m+1)-\theta}{N}\right),\\
\mathbf{f}_{m,n}(\x,\y)&=\left(\frac{2}{MN}\right)^{1/2}e^{\i(2\pi n+\pi(m+1)-\theta)\x/N}\sin\left(\frac{\pi m\y}{M+1}\right),
\end{align}
for $1\le m\le M$ and $0\le n\le N{-}1$. As in Section~\ref{sec4.2} (for the cylinder), we introduce the variables $x=\varepsilon\x$ and $y=\varepsilon\y$ on the lattice $\varepsilon\Z^2$. We take the scaling limit $\varepsilon\to 0^+$ and $M,N\to\infty$ such that $M\varepsilon\to p$ and $N\varepsilon\to 2\pi$. In that limit, the discrete Green function converges to the continuum Green function of a Möbius strip of width $p$ and perimeter $4\pi$, given by
\begin{equation}
\mathfrak{G}(x_1,y_1;x_2,y_2;\theta)=\frac{1}{2\pi p}\sum_{m\in\Z^*}\sum_{n\in\Z}\frac{e^{\i\left(n+\frac{m+1}{2}-\frac{\theta}{2\pi}\right)(x_1-x_2)}\sin\left(\frac{\pi m y_1}{p}\right)\sin\left(\frac{\pi m y_2}{p}\right)}{\left(n+\frac{m+1}{2}-\frac{\theta}{2\pi}\right)^2+\left(\frac{\pi m}{p}\right)^2}.
\end{equation}
It is useful to separate the series over $m$ into two parts, according to the parity of $m$, to compute the associated excursion Poisson kernel, which reads at $(x_1,0),(x_2,0)$:
\begin{equation}
\text{P}(x,p;\theta)=\lim_{\delta,\varepsilon\to 0^+}\frac{1}{\delta\varepsilon}\mathfrak{G}(x_1,\delta;x_2,\varepsilon;\theta)=\frac{\pi}{2p^2}\sum_{k\in\Z}\frac{e^{\i k\theta}\left\{(-1)^k+\cosh\left[(2\pi k-x)\pi/p\right]\right\}}{\sinh^2\left[(2\pi k-x)\pi/p\right]},
\label{Moebius_Pkern}
\end{equation}
where $x=x_2-x_1$. If on the other hand $u_2=(x_2,p)$, we find the excursion Poisson kernel
\begin{equation}
\widehat{\text{P}}(x,p;\theta)=\lim_{\delta,\varepsilon\to 0^+}\frac{1}{\delta\varepsilon}\mathfrak{G}(x_1,\delta;x_2,p-\varepsilon;\theta)=\frac{\pi}{2p^2}\sum_{k\in\Z}\frac{e^{\i k\theta}\left\{(-1)^k-\cosh\left[(2\pi k-x)\pi/p\right]\right\}}{\sinh^2\left[(2\pi k-x)\pi/p\right]}.
\label{Moebius_Pkern2}
\end{equation}

In the continuum, one finds the following expressions for the winding probabilities on the strip:
\begin{equation}
\begin{split}
\P^{\text{D}}_1(x,p)&=1+\frac{\i}{2}\frac{\partial_{\theta}\text{P}(x,p;0)}{\text{P}(x,p;0)}-\frac{1}{2}\frac{\det\mathbf{\Delta}(\pi/2)}{\det\mathbf{\Delta}(0)}\frac{\Im\text{P}(x,p;\pi/2)}{\text{P}(x,p;0)},\\
\P^{\text{D}}_2(x,p)&=\frac{\det\mathbf{\Delta}(\pi/2)}{\det\mathbf{\Delta}(0)}\frac{\Im\text{P}(x,p;\pi/2)}{\text{P}(x,p;0)},\\
\P^{\text{D}}_3(x,p)&=-\frac{\i}{2}\frac{\partial_{\theta}\text{P}(x,p;0)}{\text{P}(x,p;0)}-\frac{1}{2}\frac{\det\mathbf{\Delta}(\pi/2)}{\det\mathbf{\Delta}(0)}\frac{\Im\text{P}(x,p;\pi/2)}{\text{P}(x,p;0)},
\label{Moebius_prob_cont}
\end{split}
\end{equation}
when $u_2=(x_2,0)$ is on the bottom boundary (similar formulas hold when $u_2=(x_2,p)$). Comparing Eqs.~\eqref{Moebius_Pkern} and \eqref{Moebius_Pkern2}, we find that $\text{P}(x{+}2\pi,p;\theta)=-e^{\i\theta}\widehat{\text{P}}(x,p;\theta)$, from which the relations
\begin{equation}
\widehat{\P}^{\text{D}}_j(x,p)=\P^{\text{D}}_j(x+2\pi,p)
\end{equation}
follow. It is worth noting that the excursion Poisson kernel also satisfies the identity $\text{P}(4\pi{-}x,p;\theta)=e^{2\i\theta}\text{P}(x,p;-\theta)$, implying that
\begin{equation}
\P^{\text{D}}_j(4\pi-x,p)=\P^{\text{D}}_{4-j}(x,p)
\label{Moebius_sym}
\end{equation}
for $1\le j\le 3$. It suffices therefore to compute the winding probabilities for $0\le x\le 2\pi$, which we assume is the case for the rest of this section.

Let us now come back the excursion Poisson kernel \eqref{Moebius_Pkern}. Its zeroth and first orders in $\theta$ can be written in terms of Jacobi's theta and elliptic functions (whose definitions are recalled in the Appendix~\ref{a1}) as follows, provided $0\le x\le 2\pi$:
\begin{equation}
\text{P}(x,p;0)=-\frac{1}{2\pi p}\partial_x F(x,p),\quad\partial_{\theta}\text{P}(x,p;0)=-\frac{\i}{4\pi^2 p}\left(x\,\partial_x+p\,\partial_p\right)F(x,p),
\end{equation}
with the function $F(x,p)$ given by
\begin{equation}
\begin{split}
F(x,p)&=\frac{p}{2}\vartheta_3^2\left(\text{ns}(\vartheta_3^2 x/2,\vartheta_2^2/\vartheta_3^2)+\text{cs}(\vartheta_3^2 x/2,\vartheta_2^2/\vartheta_3^2)\right)=\frac{p}{2}\cot(x/4)-2p\sum_{n=1}^{\infty}\frac{\sin(nx/2)}{(-1)^n e^{np/2}+1}\\
&=\pi\coth(\pi x/(2p))+4\pi\sum_{n=1}^{\infty}\frac{\sinh(n\pi x/p)}{(-1)^n e^{2\pi^2 n/p}+1},
\label{Moebius_F}
\end{split}
\end{equation}
with $\vartheta_a\equiv\vartheta_a(0,e^{-p/2})$ for $a=2,3$, and where we used the modular properties of Jacobi's theta and elliptic functions to write the last equality. The two series representations of $F(x,p)$ will be used below to compute the asymptotics of $\P^{\text{D}}_j(x,p)$ for $p\to\infty$ and $p\to 0^+$.

Similarly, the imaginary part of the excursion Poisson kernel at $\theta=\pi/2$ can be recast into the form
\begin{equation}
\Im\text{P}(x,p;\pi/2)=-\frac{1}{2\pi p}\partial_x\widetilde{F}(x,p),
\end{equation}
with the auxiliary function $\widetilde{F}(x,p)$ defined as
\begin{equation}
\begin{split}
\widetilde{F}(x,p)&=\frac{p}{2}\vartheta_2^2\,\text{cd}(\vartheta_3^2 x/4,\vartheta_2^2/\vartheta_3^2)=2p\sum_{n=0}^{\infty}\frac{(-1)^n e^{(2n+1)p/4}\cos[(2n+1)x/4]}{e^{(2n+1)p/2}-1}\\
&=\pi+4\pi\sum_{n=1}^{\infty}\frac{(-1)^n e^{2\pi^2 n/p}\cosh(n\pi x/p)}{e^{4\pi^2 n/p}+1}.
\label{Moebius_Ft}
\end{split}
\end{equation}

In addition to the excursion Poisson kernel, knowledge of the determinant of the Laplacian is also required to compute winding probabilities on the Möbius strip using \eqref{Moebius_prob_cont}. For the discrete graph $\G$, it reads
\begin{equation}
\det\mathbf{\Delta}(\theta)=\prod_{m=1}^{M}\prod_{n=0}^{N-1}\left[4-2\cos\left(\frac{\pi m}{M+1}\right)-2\cos\left(\frac{2\pi n+\pi(m+1)-\theta}{N}\right)\right].
\end{equation}
A similar determinant was computed in \cite{Hag09a} for $\theta=0$; we follow essentially the same procedure here. We start by introducing the variable $t_m$ defined by the relation $\cosh t_m=2-\cos(\pi m/(M{+}1))$. For $m\ll M$, $t_m=\pi m/(M{+}1)+\ldots$ at leading order. This new variable enables us to compute the product over $n$:
\begin{equation}
\begin{split}
&\prod_{n=0}^{N-1}\left(2\cosh t_m-2\cos\left(\frac{2\pi n+\pi(m+1)-\theta}{N}\right)\right)\\
&=\prod_{n=0}^{N-1}e^{t_m}\left(1-e^{-t_m+\i(2\pi n+\pi(m+1)-\theta)/N}\right)\left(1-e^{-t_m-\i(2\pi n+\pi(m+1)-\theta)/N}\right)\\
&=e^{N t_m}\left(1-e^{-N t_m+\i\pi(m+1)+\i\theta}\right)\left(1-e^{-N t_m+\i\pi(m+1)-\i\theta}\right),
\end{split}
\end{equation}
where the second equality comes from the factorization $q^N-1=\prod_{j=0}^{N-1}(q-e^{\i 2\pi j/N})$. The logarithm of the determinant therefore reads
\begin{equation}
\begin{split}
\log\det\mathbf{\Delta}(\theta)&=\log\left(4\sin^2(\theta/2)\right)+\sum_{m=1}^{M-1}N\,t_m\\
&\hspace{1cm}+\sum_{m=1}^{M-1}\log\left[\left(1-e^{-Nt_m+\i\pi(m+1)+\i\theta}\right)\left(1-e^{-Nt_m+\i\pi(m+1)-\i\theta}\right)\right],
\label{det_asy}
\end{split}
\end{equation}
where the first term on the right-hand side comes from the contribution $m=0$ in $\det\mathbf{\Delta}(\theta)$. The second term may be evaluated perturbatively by applying the Euler-Maclaurin formula: 
\begin{equation}
\sum_{m=1}^{M}t_m=\frac{4G}{\pi}(M+1)-\frac{1}{2}\log(3+2\sqrt{2})-\frac{\pi}{12M}+\ldots
\end{equation}
up to corrections of order $1/M^2$. Here $G=0.915966...$ is Catalan's constant. In the last term of Eq.~\eqref{det_asy}, the main contribution to the sum comes from the values of $m\ll M$, so we can replace $t_m$ with $\pi m/M$. We may therefore express the determinant of the Laplacian as follows in the scaling limit:
\begin{equation}
\begin{split}
\det\mathbf{\Delta}(\theta)&\simeq\exp\left(\frac{4G}{\pi}(M+1)N-\frac{1}{2}\log(3+2\sqrt{2})N\right)e^{-\pi N/(12M)}\times 4\sin^2(\theta/2)\\
&\hspace{1cm}\times\prod_{m=1}^{\infty}\left(1-e^{-N\pi m/M+\i\pi(m+1)+\i\theta}\right)\left(1-e^{-N\pi m/M+\i\pi(m+1)-\i\theta}\right).
\end{split}
\end{equation}
It should be noted that the first exponential factor diverges as $M,N\to\infty$. However, since it is a mere constant and Eq.~\eqref{Moebius_prob_cont} only involves ratios of this determinant, we may discard it to define a \emph{regularized} determinant, which we can write in terms of Dedekind's eta function $\eta(q)$ and Jacobi's theta function $\vartheta_2(z,q)$ as
\begin{equation}
\left(\det\mathbf{\Delta}(\theta)\right)_{\text{reg}}=\frac{\vartheta_2(\theta/2,\i\,e^{-\pi^2/p})}{2\cos(\theta/2)\,\eta(-\i\,e^{-\pi^2/p})}.
\end{equation}
To calculate explicitly the winding probabilities \eqref{Moebius_prob}, we need to compute both its logarithmic derivative at $\theta=0$ and the ratio of determinants evaluated at $\pi/2$ and $0$. It is straightforward to see that the former vanishes, while the latter is equal to
\begin{equation}
\begin{split}
\frac{\left(\det\mathbf{\Delta}(\pi/2)\right)_{\text{reg}}}{\left(\det\mathbf{\Delta}(0)\right)_{\text{reg}}}&=\sqrt{2}\,\frac{\vartheta_2(\pi/4,e^{-2\pi^2/p})\vartheta_4(\pi/4,e^{-2\pi^2/p})}{\vartheta_2(0,e^{-2\pi^2/p})\vartheta_4(0,e^{-2\pi^2/p})}\\
&=\sqrt{2}\,e^{-p/16}\frac{\vartheta_2(\i p/8,e^{-p/2})\vartheta_4(\i p/8,e^{-p/2})}{\vartheta_2(0,e^{-p/2})\vartheta_4(0,e^{-p/2})}\equiv\sqrt{2}\,\Theta_{2,4}(p).
\label{Moebius_Th}
\end{split}
\end{equation}
Putting all the pieces together yields the following formulas for winding probabilities on the Möbius strip with Dirichlet boundary conditions:
\begin{equation}
\begin{split}
\P^{\text{D}}_1(x,p)&=1-\frac{1}{4\pi}\frac{(x\,\partial_x+p\,\partial_p)F(x,p)}{\partial_x F(x,p)}-\frac{1}{\sqrt{2}}\,\Theta_{2,4}(p)\frac{\partial_x\widetilde{F}(x,p)}{\partial_x F(x,p)},\\
\P^{\text{D}}_2(x,p)&=\sqrt{2}\,\Theta_{2,4}(p)\frac{\partial_x\widetilde{F}(x,p)}{\partial_x F(x,p)},\quad\P^{\text{D}}_3(x,p)=1-\P^{\text{D}}_1(x,p)-\P^{\text{D}}_2(x,p),
\end{split}
\end{equation}
with $F,\widetilde{F},\Theta_{2,4}$ defined by Eqs. \eqref{Moebius_F}, \eqref{Moebius_Ft} and \eqref{Moebius_Th}, respectively. These probabilities are illustrated in Fig.~\ref{Moebius_D_plot}, in the asymptotic cases $p\to\infty$ and $p\to 0^+$. For the former limit, we find for $p\gg 1$ that
\begin{equation}
\begin{split}
\frac{(x\,\partial_x+p\,\partial_p)F(x,p)}{\partial_x F(x,p)}&\simeq x-2\sin(x/2)+8\sin(x/2)\sin^2(x/4)\,p\,e^{-p/2},\\
\Theta_{2,4}(p)&\simeq\frac{1}{2}e^{p/16}+\frac{1}{2}e^{-7p/16},\\
\frac{\partial_x\widetilde{F}(x,p)}{\partial_x F(x,p)}&\simeq 4\sin^3(x/4)\,e^{-p/4}-8\left(2+\cos(x/2)+\cos x\right)\sin^3(x/4)\,e^{-3p/4}.
\end{split}
\end{equation}
For large strips, the winding probabilities therefore read
\begin{equation}
\begin{split}
\P^{\text{D}}_1(x,p)&\simeq 1-\frac{x-2\sin(x/2)}{4\pi}-\sqrt{2}\sin^3(x/4)\,e^{-3p/16},\quad\P^{\text{D}}_2(x,p)\simeq 2\sqrt{2}\sin^3(x/4)\,e^{-3p/16},\\
\P^{\text{D}}_3(x,p)&=1-\P^{\text{D}}_1(x,p)-\P^{\text{D}}_2(x,p).
\end{split}
\end{equation}
In contrast, we also consider the case $p\ll 1$ of very thin strips, for which we use the modular properties of Jacobi's theta and elliptic functions to obtain the expansions
\begin{equation}
\begin{split}
\frac{(x\,\partial_x+p\,\partial_p)F(x,p)}{\partial_x F(x,p)}&\simeq 2\pi\frac{\partial_x\widetilde{F}(x,p)}{\partial_x F(x,p)}\simeq\frac{2\pi\,e^{-\pi(2\pi-x)/p}}{e^{-\pi x/p}+e^{-\pi(2\pi-x)/p}},\\
\Theta_{2,4}(p)&\simeq\frac{1}{\sqrt{2}}+\sqrt{2}\,e^{-2\pi^2/p}.
\end{split}
\end{equation}
We discuss these asymptotic results for $p\to\infty$ and $p\to 0^+$ below.

\begin{figure}[h]
\centering
\begin{tikzpicture}

\begin{scope}
\node at (0,0) {\includegraphics[scale=0.65]{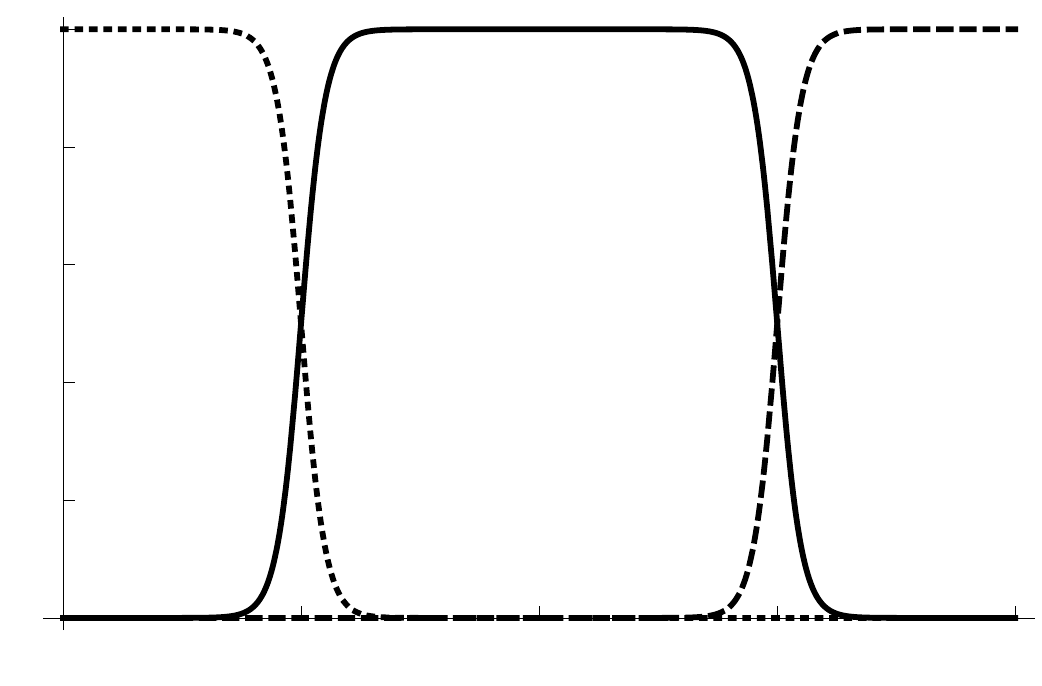}};
\node at (3.7,-1.7) {$x$};
\node at (-2.5,2.5) {$\P^{\text{D}}_{j}(x,p{=}1)$};
\node at (-1.4,-2.1) {$\pi$};
\node at (0.15,-2.1) {$2\pi$};
\node at (1.7,-2.1) {$3\pi$};
\node at (3.25,-2.1) {$4\pi$};
\node at (-3.3,-1) {$0.2$};
\node at (-3.3,-0.25) {$0.4$};
\node at (-3.3,0.5) {$0.6$};
\node at (-3.3,1.25) {$0.8$};
\node at (-3.3,2) {$1$};
\end{scope}

\begin{scope}[xshift=8cm]
\node at (0,0) {\includegraphics[scale=0.65]{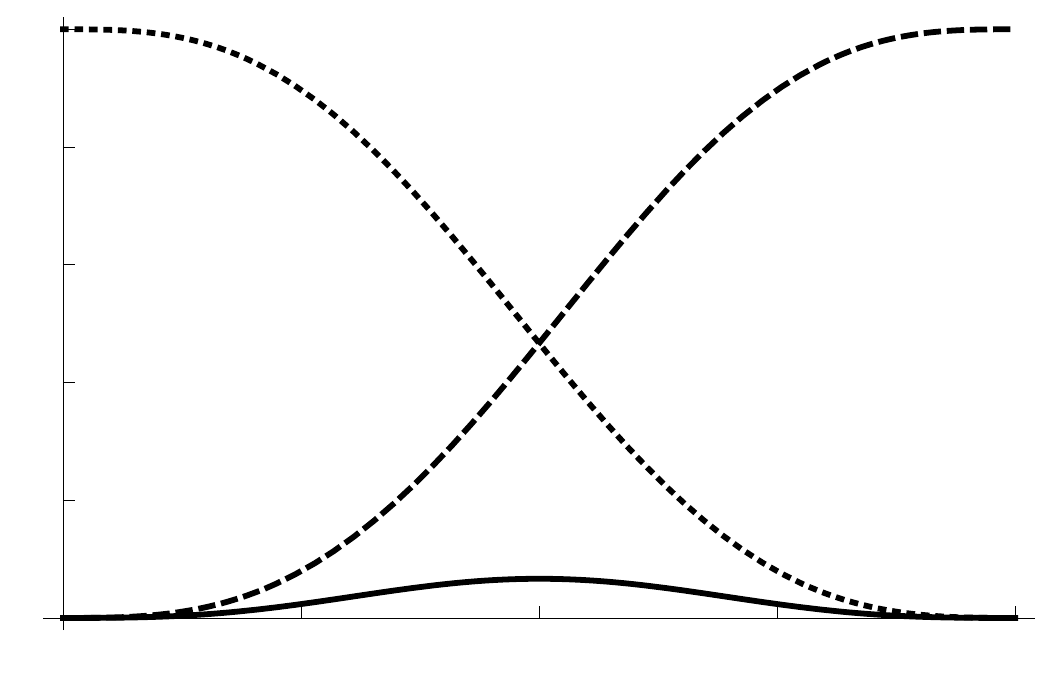}};
\node at (3.7,-1.7) {$x$};
\node at (-2.5,2.5) {$\P^{\text{D}}_{j}(x,p{=}20)$};
\node at (-1.4,-2.1) {$\pi$};
\node at (0.15,-2.1) {$2\pi$};
\node at (1.7,-2.1) {$3\pi$};
\node at (3.25,-2.1) {$4\pi$};
\node at (-3.3,-1) {$0.2$};
\node at (-3.3,-0.25) {$0.4$};
\node at (-3.3,0.5) {$0.6$};
\node at (-3.3,1.25) {$0.8$};
\node at (-3.3,2) {$1$};
\end{scope}

\end{tikzpicture}
\caption{Winding probabilities of a random simple path from $u_1=(x_1,0)$ to $u_2=(x_2,0)$ on a Möbius strip with Dirichlet boundary conditions, as functions of the separation $x=x_2-x_1$. The probabilities for the three winding classes ($1\le j\le 3$) are drawn with a dotted, solid, and dashed line, respectively.}
\label{Moebius_D_plot}
\end{figure}


\subsubsection{Free boundary}
\label{sec4.3.2}
We proceed similarly for the Möbius strip with free boundary conditions. Using the same zipper as for the wired strip, we find the eigenvalues and eigenfunctions of $\mathbf{\Delta}=\mathbf{\Delta}(\theta)$ to be given by
\begin{align}
\boldsymbol{\lambda}_{m,n}&=4-2\cos\left(\frac{\pi m}{M}\right)-2\cos\left(\frac{2\pi n+\pi m-\theta}{N}\right),\\
\mathbf{f}_{m,n}(\x,\y)&=\left(\frac{2-\delta_{m,0}}{MN}\right)^{1/2}e^{\i(2\pi n+\pi m-\theta)\x/N}\cos\left[\frac{\pi m}{M}\left(\y-\frac{1}{2}\right)\right],
\end{align}
for $0\le m\le M{-}1$ and $0\le n\le N{-}1$. The corresponding Green function in the scaling limit satisfies Neumann boundary conditions at $y=0$ and $y=p$, and reads
\begin{equation}
\mathfrak{G}(x_1,y_1;x_2,y_2;\theta)=\frac{1}{2\pi p}\sum_{m,n\in\Z}\frac{e^{\i\left(n+\frac{m}{2}-\frac{\theta}{2\pi}\right)(x_1-x_2)}\cos\left(\frac{\pi m y_1}{p}\right)\cos\left(\frac{\pi m y_2}{p}\right)}{\left(n+\frac{m}{2}-\frac{\theta}{2\pi}\right)^2+\left(\frac{\pi m}{p}\right)^2}.
\end{equation}
For boundary points, i.e. $y_1=y_2=0$, the Green function simplifies to
\begin{equation}
\mathfrak{G}(x,p;\theta)=\frac{1}{4\pi}\sum_{n\in\Z}\frac{e^{-\i\left(n-\frac{\theta}{2\pi}\right)x}\coth\left[\left(n-\frac{\theta}{2\pi}\right)\frac{p}{2}\right]}{n-\frac{\theta}{2\pi}}+\frac{1}{4\pi}\sum_{n\in\Z}\frac{e^{-\i\left(n+\frac{1}{2}-\frac{\theta}{2\pi}\right)x}\tanh\left[\left(n+\frac{1}{2}-\frac{\theta}{2\pi}\right)\frac{p}{2}\right]}{n+\frac{1}{2}-\frac{\theta}{2\pi}},
\label{Moebius_Gf_N}
\end{equation}
where $x=x_2-x_1$. For $\theta\to 0^+$, the Green function is singular because $\det\mathbf{\Delta}(\theta)\to 0$. More precisely, its Laurent series reads, up to regular terms,
\begin{equation}
\mathfrak{G}(x,p;\theta)=\frac{2\pi}{p\theta^2}+\frac{\i x}{p\theta}+\ldots
\end{equation}
As for the strip with Dirichlet boundary conditions, it suffices to study the winding probabilities for $0\le x\le 2\pi$. Indeed, the Green function satisfies the relations
\begin{align}
&\Im\mathfrak{G}(4\pi-x,p;\pi/2)=\Im\mathfrak{G}(x,p;\pi/2),\\
&\mathfrak{G}(4\pi-x,p;\theta)=\mathfrak{G}(x,p;\theta)+\frac{2\i(2\pi-x)}{p\theta}+\ldots,
\end{align}
implying that $\P^{\text{N}}_{j}(x)=\P^{\text{N}}_{4-j}(4\pi-x)$ for any $x$ between $2\pi$ and $4\pi$, for $1\le j\le 3$.

The regularized determinant is obtained by taking the same steps as in Section~\ref{sec4.2.3} and yields
\begin{equation}
\left(\det\mathbf{\Delta}(\theta)\right)_{\text{reg}}=\frac{2\sin(\theta/2)\,\vartheta_1(\theta/2,\i\,e^{-\pi^2/p})}{\eta(\i\,e^{-\pi^2/p})}.
\label{Moebius_det_N}
\end{equation}
Using Eqs. \eqref{Moebius_Gf_N} and \eqref{Moebius_det_N}, we find after some algebra the following winding probabilities on the Möbius strip with Neumann boundary conditions, in the scaling limit:
\begin{equation}
\P^{\text{N}}_1(x,p)=1-\frac{x}{4\pi}-H(x,p),\quad\P^{\text{N}}_2(x,p)=2H(x,p),\quad\P^{\text{N}}_3(x,p)=\frac{x}{4\pi}-H(x,p),
\end{equation}
where the auxiliary function $H(x,p)$ is defined by
\begin{equation}
H(x,p)=-\frac{\i}{\sqrt{2}\pi}e^{-p/16}\frac{\vartheta_1(\i p/8,e^{-p/2})\vartheta_3(\i p/8,e^{-p/2})}{\vartheta'_1(0,e^{-p/2})\vartheta_3(0,e^{-p/2})}\sum_{n\in\Z}\frac{\sin[(n+\tfrac{1}{4})x]}{(n+\tfrac{1}{4})\sinh[(n+\tfrac{1}{4})p]}.
\end{equation}
In the limit of large and thin strips, we find respectively for $0\le x\le 2\pi$:
\begin{align}
H(x,p)&\simeq\frac{2\sqrt{2}}{\pi}\sin(x/4)\,e^{-3p/16},\\
H(x,p)&\simeq\frac{x}{4\pi}-\frac{p}{2\pi^2}e^{-\pi(2\pi-x)/p}.
\end{align}
An illustration of winding probabilities on the Möbius strip with Neumann boundary conditions is provided in Fig.~\ref{Moebius_N_plot}.

\begin{figure}[h]
\centering
\begin{tikzpicture}

\begin{scope}
\node at (0,0) {\includegraphics[scale=0.65]{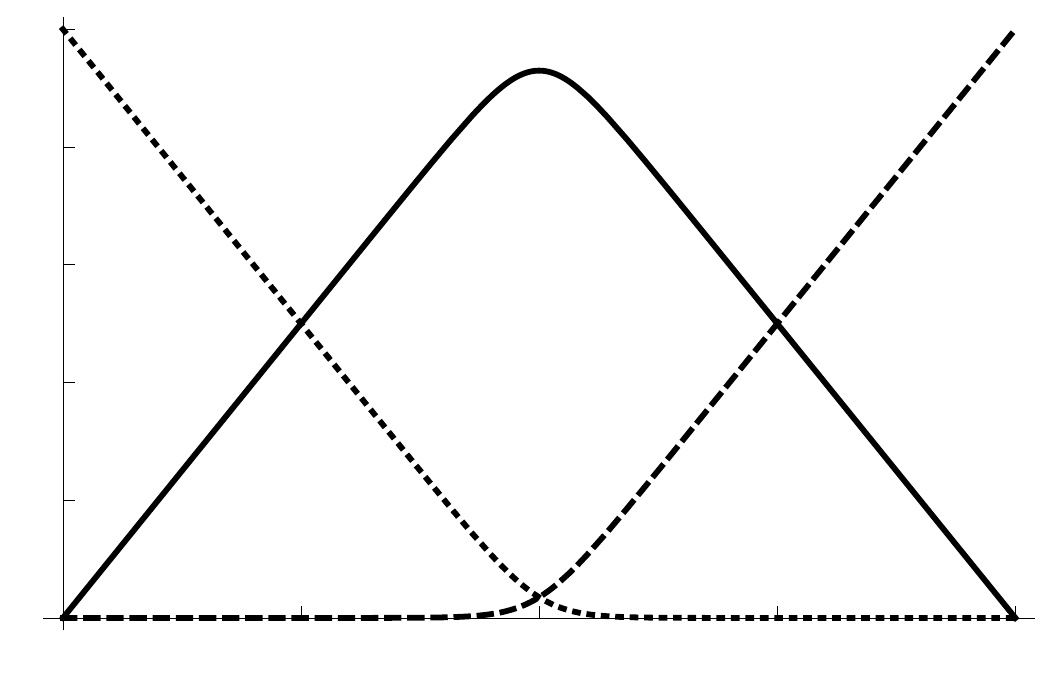}};
\node at (3.7,-1.7) {$x$};
\node at (-2.5,2.5) {$\P^{\text{N}}_{j}(x,p{=}1)$};
\node at (-1.4,-2.1) {$\pi$};
\node at (0.15,-2.1) {$2\pi$};
\node at (1.7,-2.1) {$3\pi$};
\node at (3.25,-2.1) {$4\pi$};
\node at (-3.3,-1) {$0.2$};
\node at (-3.3,-0.25) {$0.4$};
\node at (-3.3,0.5) {$0.6$};
\node at (-3.3,1.25) {$0.8$};
\node at (-3.3,2) {$1$};
\end{scope}

\begin{scope}[xshift=8cm]
\node at (0,0) {\includegraphics[scale=0.65]{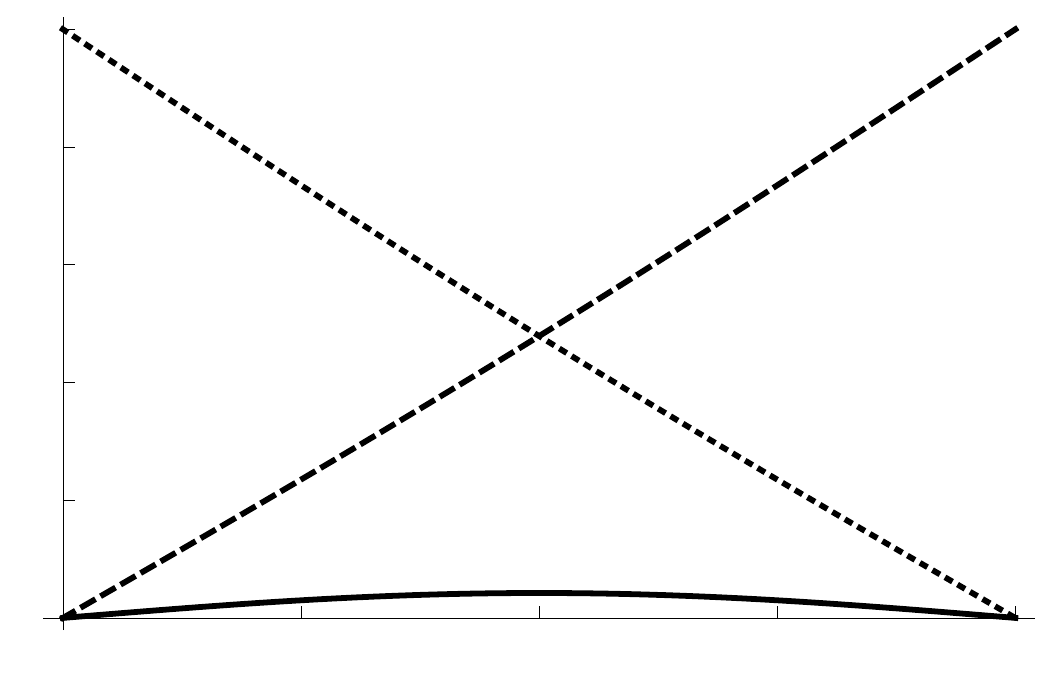}};
\node at (3.7,-1.7) {$x$};
\node at (-2.5,2.5) {$\P^{\text{N}}_{j}(x,p{=}20)$};
\node at (-1.4,-2.1) {$\pi$};
\node at (0.15,-2.1) {$2\pi$};
\node at (1.7,-2.1) {$3\pi$};
\node at (3.25,-2.1) {$4\pi$};
\node at (-3.3,-1) {$0.2$};
\node at (-3.3,-0.25) {$0.4$};
\node at (-3.3,0.5) {$0.6$};
\node at (-3.3,1.25) {$0.8$};
\node at (-3.3,2) {$1$};
\end{scope}

\end{tikzpicture}
\caption{Winding probabilities of a random simple path from $u_1=(x_1,0)$ to $u_2=(x_2,0)$ on a Möbius strip with Neumann boundary conditions, as functions of the separation $x=x_2-x_1$. The probabilities for the three winding classes ($1\le j\le 3$) are drawn with a dotted, solid, and dashed line, respectively.}
\label{Moebius_N_plot}
\end{figure}

\subsubsection*{Asymptotics}
\vspace{-0.3cm}
For both types of boundary conditions, Dirichlet or Neumann, we have computed the winding probabilities in the limit of very large ($p\to\infty$) or very thin ($p\to 0^+$) strips. In the former case, we find that $\P_2(x,p)\to 0$ as $p\to\infty$, for $0<x<4\pi$. The intuitive interpretation is the following: all paths in the second class must cross the strip along its width $p$, whereas the paths in the first and third classes do not. As there are many more (typically shorter) paths in the first and third classes than in the second one, the winding probability $\P_2$ is suppressed in the limit $p\to\infty$. The two remaining probabilities can then be thought of as counterclockwise (class 1) and clockwise (class 3) probabilities. Indeed, we find precisely the same formulas for the Möbius strip and for the cylinder (see Eqs.~\eqref{PNN} and \eqref{PDD_large_p}), up to the rescaling $x\to 2x$ (for both boundary conditions):
\begin{equation}
\begin{split}
\P^{\text{D}}_1(x,\infty)&=1-\frac{x/2-\sin(x/2)}{2\pi},\quad\P^{\text{D}}_3(x,\infty)=1-\P^{\text{D}}_1(x,\infty),\\
\P^{\text{N}}_1(x,\infty)&=1-\frac{x/2}{2\pi},\quad\P^{\text{N}}_3(x,\infty)=1-\P^{\text{N}}_1(x,\infty).
\end{split}
\end{equation}
In the limit of thin strips $p\to 0^+$, we argue once again that shorter paths are favored over longer ones, which implies in particular that $\P_1(x,0)=0$ for $0<x<2\pi$ and $\P_3(x,0)=0$ for $2\pi<x<4\pi$. All winding probabilities converge to the same functions as on the cylinder, namely Heaviside functions (resp. piecewise linear functions) for Dirichlet (resp. Neumann) boundary conditions.


\section{Passage probabilities for multiple paths}
\label{sec5}

In Section~\ref{sec3}, we have defined a measure on simple paths in spanning forests, extended it to oriented cycle-rooted groves, and computed the marginal probability associated with the left- or right-passage of a random path with respect to a marked face of a graph $\G$ embedded on an orientable surface, i.e. Schramm's formula. The resulting probability \eqref{Schramm_form2} is expressed in terms of the standard Green function of the graph, whose explicit expression is well known for regular graphs such as the upper half-plane or the cylinder, considered in Section~\ref{sec4}.

Let us now generalize this formula to multiple nonintersecting simple paths between $2n$ nodes located on the boundary of the outer face of a planar graph $\G$. We denote the collection of these special vertices by $\mathcal{U}=\{u_1,\ldots,u_{2n}\}$ or simply $\{1,\ldots,2n\}$, and label them in counterclockwise order. Let $\sigma=r_1 s_1|\ldots|r_n s_n$ be a fixed (unoriented) pairing of the nodes, with $R=\{r_1,\ldots,r_n\}$ and $S=\{s_1,\ldots,s_n\}$ partitioning $\mathcal{U}$. We consider the set of all spanning forests on $\G_s=\G\cup\{s\}$ consisting in $n{+}1$ trees, each of which containing a single pair of nodes $\{r_i,s_i\}$ or the root $s$ (note that some pairings cannot be realized due to the planarity of $\G$). The measure \eqref{SF_weight} on single paths in two-component spanning forests generalizes naturally to multiple unoriented paths $\gamma_i:r_i\leftrightarrow s_i$, $1\le i\le n$, in $(n{+}1)$-component spanning forests as follows:
\begin{equation}
\w_{\text{SF}}(\gamma_1,\ldots,\gamma_n)=\sum_{\text{(n{+}1)SFs }\mathcal{F}\supset\,\bigcup_i\gamma_i}\w(\mathcal{F})=\prod_{i=1}^{n}C(\gamma_i)\times\det\Delta^{(\gamma_1,\ldots,\gamma_n)},
\label{nSF_weight}
\end{equation}
where $\Delta^{(\gamma_1,\ldots,\gamma_n)}$ is the restriction of the standard Laplacian to rows and columns indexed by vertices not in $\bigcup_i\gamma_i\cup\{s\}$. Similarly, the weight on multiple oriented paths $\gamma_i:r_i\to s_i$ in oriented cycle-rooted groves, defined in \eqref{CRG_weight} for a single path, is given by
\begin{equation}
\w_{\text{CRG}}(\gamma_1,\ldots,\gamma_n)=\sum_{\text{OCRGs }\Gamma_{\vsig}\supset\,\bigcup_i\gamma_i}\w(\Gamma_{\vsig})=\prod_{i=1}^{n}C(\gamma_i)\phi(\gamma_i)^{-1}\times\det\mathbf{\Delta}^{(\gamma_1,\ldots,\gamma_n)},
\label{nCRG_weight}
\end{equation}
where $\vsig={\textstyle{s_1\atop r_1}}|\cdots|{\textstyle{s_n\atop r_n}}$ and $\mathbf{\Delta}$ is the line bundle Laplacian of $\G_s$ with Dirichlet boundary conditions at $s$. Here the sum is over oriented cycle-rooted groves consisting in $n{+}1$ trees and any number of cycle-rooted trees (the latter do not contain any nodes).

Let $f$ be a face of the graph $\G$ and $\vsig$ be an oriented pairing of the $2n$ nodes. The OCRGs $\Gamma_{\vsig}$ can be sorted into $n{+}1$ \emph{winding classes}, according to the way the $n$ paths wind around the face $f$ (i.e. if they leave $f$ to their right or to their left). As a matter of convention, we shall assume all paths to be oriented toward the node with the highest label in each pair when we refer to the oriented pairing $\vsig$, and call this the canonical orientation. To distinguish between the different winding classes, we introduce a zipper as in Section~\ref{sec3.2}, going from $f$ to the outer face of $\G$ and intercepting the boundary path between nodes $u_{2n}$ and $u_1$. As can be seen for the case $n=3$ illustrated in Fig.~\ref{3Wind_cl}, two complications arise when considering multiple paths (as opposed to single paths):
\vspace{-0.3cm}
\begin{enumerate}[(i)]
\item $n$-tuples of paths belonging to distinct winding classes may have the same product of parallel transports $\prod_{i=1}^{n}\phi(\gamma_i)$ (which is constant over each class, as for the case $n=1$). For instance, triplets of paths in the second and third classes of Fig.~\ref{3Wind_cl} are weighted by a global factor $\phi_{1\to 4}\phi_{2\to 3}\phi_{5\to 6}=z^{-1}$ in both cases.
\item For a generic oriented pairing $\vsig$, Theorem \ref{grove_thm} does not yield the partition function $\Zr[\vsig]$ directly. Instead, it only allows one to write a system of linear equations relating partition functions for distinct pairings to minors of the line bundle Green function, which is computable for most regular graphs (perturbatively around $\Phi=\mathbf{1}$, at least).
\end{enumerate}
\vspace{-0.3cm}
Due to these obstacles, writing an explicit form for Schramm's formulas for multiple paths in terms of the Green function proves to be substantially more complicated than for a single path. In what follows, we first discuss the easiest case, namely pairs of paths, using the same formalism as in Section~\ref{sec3.2}, and give explicit formulas for the three distinct winding probabilities in terms of the Green function $G$ and its derivative $G'$, defined in \eqref{Gp_def}. As we shall see, obtaining such formulas a la Schramm is already cumbersome for only two paths, so a more systematic, combinatorial approach is required for $n>2$ paths. Such a technique will be provided in Section~\ref{sec5.2}, and relies on so-called \emph{cyclic Dyck paths} and \emph{cover-inclusive Dyck tilings}, introduced respectively in \cite{KW15} and \cite{KW11b,SZJ12}.

\begin{figure}[h]
\centering
\begin{tikzpicture}[scale=0.7]

\begin{scope}
\draw[line width=0.75pt] (0,0) circle (2cm);
\draw[line width=1.75pt,blue] (0:2)to[out=140,in=40](180:2);
\draw[line width=1.75pt,blue] (60:2)to[out=220,in=320](120:2);
\draw[line width=1.75pt,blue] (240:2)to[out=30,in=150](300:2);
\draw[fill=white] (0:2) circle (0.3cm) node[blue] {$1$};
\draw[fill=white] (60:2) circle (0.3cm) node[blue] {$2$};
\draw[fill=white] (120:2) circle (0.3cm) node[blue] {$3$};
\draw[fill=white] (180:2) circle (0.3cm) node[blue] {$4$};
\draw[fill=white] (240:2) circle (0.3cm) node[blue] {$5$};
\draw[fill=white] (300:2) circle (0.3cm) node[blue] {$6$};
\draw[very thick,red] (0,0)--(330:2);
\draw[very thick,red,-latex] (320:1)--(350:1);
\filldraw[blue!10!white] (-0.25,-0.25) rectangle node[blue] {$f$} (0.25,0.25);
\end{scope}

\begin{scope}[xshift=6cm]
\draw[line width=0.75pt] (0,0) circle (2cm);
\draw[line width=1.75pt,blue] (0:2)to[out=110,in=70](180:2);
\draw[line width=1.75pt,blue] (60:2)to[out=220,in=320](120:2);
\draw[line width=1.75pt,blue] (240:2)to[out=120,in=180](90:0.75)to[out=0,in=60](300:2);
\draw[fill=white] (0:2) circle (0.3cm) node[blue] {$1$};
\draw[fill=white] (60:2) circle (0.3cm) node[blue] {$2$};
\draw[fill=white] (120:2) circle (0.3cm) node[blue] {$3$};
\draw[fill=white] (180:2) circle (0.3cm) node[blue] {$4$};
\draw[fill=white] (240:2) circle (0.3cm) node[blue] {$5$};
\draw[fill=white] (300:2) circle (0.3cm) node[blue] {$6$};
\draw[very thick,red] (0,0)--(330:2);
\draw[very thick,red,-latex] (320:1)--(350:1);
\filldraw[blue!10!white] (-0.25,-0.25) rectangle node[blue] {$f$} (0.25,0.25);
\end{scope}

\begin{scope}[xshift=12cm]
\draw[line width=0.75pt] (0,0) circle (2cm);
\draw[line width=1.75pt,blue] (0:2)to[out=220,in=320](180:2);
\draw[line width=1.75pt,blue] (60:2)to[out=220,in=320](120:2);
\draw[line width=1.75pt,blue] (240:2)to[out=30,in=150](300:2);
\draw[fill=white] (0:2) circle (0.3cm) node[blue] {$1$};
\draw[fill=white] (60:2) circle (0.3cm) node[blue] {$2$};
\draw[fill=white] (120:2) circle (0.3cm) node[blue] {$3$};
\draw[fill=white] (180:2) circle (0.3cm) node[blue] {$4$};
\draw[fill=white] (240:2) circle (0.3cm) node[blue] {$5$};
\draw[fill=white] (300:2) circle (0.3cm) node[blue] {$6$};
\draw[very thick,red] (0,0)--(330:2);
\draw[very thick,red,-latex] (320:1)--(350:1);
\filldraw[blue!10!white] (-0.25,-0.25) rectangle node[blue] {$f$} (0.25,0.25);
\end{scope}

\begin{scope}[xshift=18cm]
\draw[line width=0.75pt] (0,0) circle (2cm);
\draw[line width=1.75pt,blue] (0:2)to[out=240,in=0](0,-1.25)to[out=180,in=300](180:2);
\draw[line width=1.75pt,blue] (60:2)to[out=300,in=0](0,-0.5)to[in=240,out=180](120:2);
\draw[line width=1.75pt,blue] (240:2)to[out=30,in=150](300:2);
\draw[fill=white] (0:2) circle (0.3cm) node[blue] {$1$};
\draw[fill=white] (60:2) circle (0.3cm) node[blue] {$2$};
\draw[fill=white] (120:2) circle (0.3cm) node[blue] {$3$};
\draw[fill=white] (180:2) circle (0.3cm) node[blue] {$4$};
\draw[fill=white] (240:2) circle (0.3cm) node[blue] {$5$};
\draw[fill=white] (300:2) circle (0.3cm) node[blue] {$6$};
\draw[very thick,red] (0,0)--(330:2);
\draw[very thick,red,-latex] (320:1)--(350:1);
\filldraw[blue!10!white] (-0.25,-0.25) rectangle node[blue] {$f$} (0.25,0.25);
\end{scope}

\end{tikzpicture}
\caption{Schematic representation of the four distinct winding classes for triplets of paths corresponding to the pairing $\vsig={\textstyle{{4\atop 1}|{3\atop 2}|{6\atop 5}}}$. The zipper is drawn as the red line from the marked face $f$ to the outer boundary. The oriented edges crossed by the zipper have a parallel transport $z$ in the direction of the arrow, and $z^{-1}$ in the opposite direction.}
\label{3Wind_cl}
\end{figure}


\subsection{Two paths}
\label{sec5.1}

Let us first compute Schramm's formulas for two paths between four boundary nodes, denoted by $\mathcal{U}=\{u_1,u_2,u_3,u_4\}\equiv\{1,2,3,4\}$, with respect to a given face $f$. There are two ways to connect these nodes in pairs by nonintersecting paths in OCRGs, corresponding to the partitions $12|34$ and $14|23$. The paths are further divided into three winding classes according to whether they leave $f$ to their left or to their right (recall that the paths are oriented toward the node with the higher index in each pair). We denote the partition functions associated with each of the six subclasses as follows:
\begin{equation*}
\Zr_{\L\L}[{\textstyle{2\atop 1}}|{\textstyle{4\atop 3}}],\,\,\Zr_{\L\R}[{\textstyle{2\atop 1}}|{\textstyle{4\atop 3}}],\,\,\Zr_{\R\L}[{\textstyle{2\atop 1}}|{\textstyle{4\atop 3}}],\,\,\Zr_{\L\L}[{\textstyle{4\atop 1}}|{\textstyle{3\atop 2}}],\,\,\Zr_{\R\L}[{\textstyle{4\atop 1}}|{\textstyle{3\atop 2}}],\,\,\Zr_{\R\R}[{\textstyle{4\atop 1}}|{\textstyle{3\atop 2}}],
\end{equation*}
where the indices refer to the left- or right-passage of the first and second paths with respect to $f$, respectively (some pairs of indices are not realizable on a planar graph, e.g. paths in the pairing $12|34$ cannot both leave $f$ to their right). We give an illustration of these six classes in Fig.~\ref{2Wind_cl}.

The number of winding classes for pairs of paths corresponds exactly to the number of partitions of $\mathcal{U}=\{u_1,u_2,u_3,u_4\}$ into two disjoint subsets of order two, $R$ and $S=\mathcal{U}\bs R$, that we can select in Theorem \ref{grove_thm}. For instance, if $R=\{2,4\}$ and $S=\{1,3\}$, the theorem reads
\begin{equation}
\begin{split}
\det\mathbf{\Delta}\det\Gr\textstyle{{1,3}\atop{2,4}}&=\Zr[{\textstyle{1\atop 2}}|{\textstyle{3\atop 4}}]-\Zr[{\textstyle{3\atop 2}}|{\textstyle{1\atop 4}}]\\
&=\left(\Zr_{\L\L}[{\textstyle{1\atop 2}}|{\textstyle{3\atop 4}}]+\Zr_{\L\R}[{\textstyle{1\atop 2}}|{\textstyle{3\atop 4}}]+\Zr_{\R\L}[{\textstyle{1\atop 2}}|{\textstyle{3\atop 4}}]\right)-\left(\Zr_{\L\L}[{\textstyle{3\atop 2}}|{\textstyle{1\atop 4}}]+\Zr_{\R\L}[{\textstyle{3\atop 2}}|{\textstyle{1\atop 4}}]+\Zr_{\R\R}[{\textstyle{3\atop 2}}|{\textstyle{1\atop 4}}]\right),
\end{split}
\end{equation}
where the indices refer to the left- or right-passage with respect to $f$ for paths with the canonical orientation (from the lower to the higher node index). As the product of parallel transports along the paths between the nodes is constant over each class separately, we may rewrite the preceding equation in terms of partition functions for paths with the canonical orientation:
\begin{equation}
\begin{split}
\det\mathbf{\Delta}\det\Gr\textstyle{{1,3}\atop{2,4}}&=\Zr_{\L\L}[{\textstyle{2\atop 1}}|{\textstyle{4\atop 3}}]+z^{-2}\Zr_{\L\R}[{\textstyle{2\atop 1}}|{\textstyle{4\atop 3}}]+z^{-2}\Zr_{\R\L}[{\textstyle{2\atop 1}}|{\textstyle{4\atop 3}}]\\
&\quad-\Zr_{\L\L}[{\textstyle{4\atop 1}}|{\textstyle{3\atop 2}}]-z^{-2}\Zr_{\R\L}[{\textstyle{4\atop 1}}|{\textstyle{3\atop 2}}]-z^{-2}\Zr_{\R\R}[{\textstyle{4\atop 1}}|{\textstyle{3\atop 2}}].
\end{split}
\end{equation}

Anticipating the general discussion for $n>2$ paths, we shall rather use another choice of orientation, which consists in taking the canonical orientation on paths that do not cross the zipper ($\L$ paths), and the reverse on paths that do ($\R$ paths). We shall explain the motivation behind this convention in Section~\ref{sec5.2}. With respect to this new orientation, one finds the equation
\begin{equation}
\begin{split}
\det\mathbf{\Delta}\det\Gr\textstyle{{1,3}\atop{2,4}}&=\Zr_{\L\L}[{\textstyle{2\atop 1}}|{\textstyle{4\atop 3}}]+\Zr_{\L\R}[{\textstyle{2\atop 1}}|{\textstyle{3\atop 4}}]+\Zr_{\R\L}[{\textstyle{1\atop 2}}|{\textstyle{4\atop 3}}]\\
&\quad-\Zr_{\L\L}[{\textstyle{4\atop 1}}|{\textstyle{3\atop 2}}]-\Zr_{\R\L}[{\textstyle{1\atop 4}}|{\textstyle{3\atop 2}}]-z^2\Zr_{\R\R}[{\textstyle{1\atop 4}}|{\textstyle{2\atop 3}}]
\end{split}
\end{equation}
for $R=\{2,4\}$ and $S=\{1,3\}$. Doing so for the five other subsets $R\subset\mathcal{U}$ of order two yields a system of equations represented by a matrix $\mathbf{A}_2$ given by
\begin{equation*}
\bordermatrix{& \Zr_{\L\L}[{\textstyle{2\atop 1}}|{\textstyle{4\atop 3}}] & \Zr_{\L\R}[{\textstyle{2\atop 1}}|{\textstyle{3\atop 4}}] & \Zr_{\R\L}[{\textstyle{1\atop 2}}|{\textstyle{4\atop 3}}] & \Zr_{\L\L}[{\textstyle{4\atop 1}}|{\textstyle{3\atop 2}}] & \Zr_{\R\L}[{\textstyle{1\atop 4}}|{\textstyle{3\atop 2}}] & \Zr_{\R\R}[{\textstyle{1\atop 4}}|{\textstyle{2\atop 3}}]\cr
\det\mathbf{\Delta}\det\Gr\textstyle{{3,4}\atop{1,2}} & 0 & 0 & 0 & -1 & -w & -w^2\cr
\det\mathbf{\Delta}\det\Gr\textstyle{{2,4}\atop{1,3}} & 1 & w & w & -1 & -w & -w\cr
\det\mathbf{\Delta}\det\Gr\textstyle{{2,3}\atop{1,4}} & 1 & 1 & w & 0 & 0 & 0\cr
\det\mathbf{\Delta}\det\Gr\textstyle{{1,4}\atop{2,3}} & 1 & w & 1 & 0 & 0 & 0\cr
\det\mathbf{\Delta}\det\Gr\textstyle{{1,3}\atop{2,4}} & 1 & 1 & 1 & -1 & -1 & -w\cr
\det\mathbf{\Delta}\det\Gr\textstyle{{1,2}\atop{3,4}} & 0 & 0 & 0 & -1 & -1 & -1\cr
},
\end{equation*}
where $w\equiv z^2$. $\mathbf{A}_2$ is invertible for $z\neq\pm 1$, as its determinant is equal to $(1-w)^5$. Explicitly, one obtains the inverse
\begin{equation}
\mathbf{A}_2^{-1}=\frac{1}{(1-w)^2}\begin{pmatrix}
-(1+w) & 1+w & -2w & -2w & w+w^2 & -(w+w^2)\\
1 & -1 & 1 & w & -w & w\\
1 & -1 & w & 1 & -w & w\\
-1 & w & -w & -w & w & -w^2\\
2 & -(1+w) & 1+w & 1+w & -(1+w) & 2w\\
-1 & 1 & -1 & -1 & 1 & -1
\end{pmatrix},
\end{equation}
from which the partition functions of interest can be extracted. For example, we have 
\begin{equation}
\begin{split}
&Z_{\L\R}[12|34]=\lim_{z\to 1}\Zr_{\L\R}[{\textstyle{2\atop 1}}|{\textstyle{3\atop 4}}]\\
&=\lim_{z\to 1}\frac{\det\mathbf{\Delta}}{(1-z^2)^2}\Big\{\det\Gr\textstyle{{3,4}\atop{1,2}}-\det\Gr\textstyle{{2,4}\atop{1,3}}+\det\Gr\textstyle{{2,3}\atop{1,4}}+z^2\det\Gr\textstyle{{1,4}\atop{2,3}}-z^2\det\Gr\textstyle{{1,3}\atop{2,4}}+z^2\det\Gr\textstyle{{1,2}\atop{3,4}}\Big\}\\
&=\det\Delta\big\{G_{1,2}\,G'_{3,4}-G_{1,3}\,G'_{2,4}+G_{1,4}\,G'_{2,3}-G'_{1,2}\,G'_{3,4}+G'_{1,3}\,G'_{2,4}-G'_{1,4}\,G'_{2,3}\big\},
\end{split}
\end{equation}
where $G'$ is the derivative of the Green function defined in \eqref{Gp_def}. The total partition function $Z[12|34]$ can be obtained by summing the partition functions for all winding classes $\L\L$, $\L\R$, $\R\L$ of the pairing $12|34$, or computed directly from Theorem~\ref{AMMT}, which yields
\begin{equation}
Z[12|34]=\det\Delta\det G_{1,4}^{2,3}=\det\Delta\{G_{1,2}\,G_{3,4}-G_{1,3}\,G_{2,4}\},
\end{equation}
as the pairing $13|24$ cannot be realized on a planar graph, implying $Z[13|24]=0$. After some algebra, one finds that the probability that a pair of paths in $12|34$ belongs in each of three winding classes reads
\begin{equation}
\begin{split}
\P_{\L\L}(12|34)&=1-\P_{\L\R}(12|34)-\P_{\R\L}(12|34),\\
\P_{\L\R}(12|34)&=\frac{G_{1,2}\,G'_{3,4}-G_{1,3}\,G'_{2,4}+G_{1,4}\,G'_{2,3}-G'_{1,2}\,G'_{3,4}+G'_{1,3}\,G'_{2,4}-G'_{1,4}\,G'_{2,3}}{G_{1,2}\,G_{3,4}-G_{1,3}\,G_{2,4}},\\
\P_{\R\L}(12|34)&=\frac{G_{1,4}\,G'_{2,3}-G_{2,4}\,G'_{1,3}+G_{3,4}\,G'_{1,2}-G'_{1,2}\,G'_{3,4}+G'_{1,3}G'_{2,4}-G'_{1,4}\,G'_{2,3}}{G_{1,2}\,G_{3,4}-G_{1,3}\,G_{2,4}}.
\end{split}
\label{2Schramm_1}
\end{equation}
The probabilities for the classes of paths in $14|23$, on the other hand, are given by
\begin{equation}
\begin{split}
\P_{\L\L}(14|23)&=1-\P_{\R\L}(14|23)-\P_{\R\R}(14|23),\\
\P_{\R\L}(14|23)&=\frac{G_{1,3}\,G'_{2,4}+G'_{1,3}\,G_{2,4}-G_{1,4}\,G'_{2,3}-G'_{1,4}\,G_{2,3}}{G_{1,3}\,G_{2,4}-G_{1,4}\,G_{2,3}}-2\,\P_{\R\R}(14|23),\\
\P_{\R\R}(14|23)&=\frac{-G'_{1,2}\,G'_{3,4}+G'_{1,3}\,G'_{2,4}-G'_{1,4}\,G'_{2,3}}{G_{1,3}\,G_{2,4}-G_{1,4}\,G_{2,3}}.
\label{2Schramm_2}
\end{split}
\end{equation}

\begin{figure}[h]
\centering
\begin{tikzpicture}[scale=0.6]

\begin{scope}
\draw[line width=0.75pt] (0,0) circle (2cm);
\draw[line width=1.75pt,blue](0:2)to[out=160,in=290](90:2);
\draw[line width=1.75pt,blue] (180:2)to[out=340,in=110](270:2);
\draw[fill=white] (0:2) circle (0.3cm) node[blue] {$1$};
\draw[fill=white] (90:2) circle (0.3cm) node[blue] {$2$};
\draw[fill=white] (180:2) circle (0.3cm) node[blue] {$3$};
\draw[fill=white] (270:2) circle (0.3cm) node[blue] {$4$};
\draw[very thick,red] (0,0)--(330:2);
\draw[very thick,red,-latex] (325:1.75)--(345:1.75);
\filldraw[blue!10!white] (-0.25,-0.25) rectangle node[blue] {$f$} (0.25,0.25);
\node at (0,-2.75) {LL};
\end{scope}

\begin{scope}[shift={(8cm,0)}]
\draw[line width=0.75pt] (0,0) circle (2cm);
\draw[line width=1.75pt,blue] (0:2)to[out=150,in=300](90:2);
\draw[line width=1.75pt,blue] (180:2)to[out=30,in=135](45:1)to[out=315,in=60](270:2);
\draw[fill=white] (0:2) circle (0.3cm) node[blue] {$1$};
\draw[fill=white] (90:2) circle (0.3cm) node[blue] {$2$};
\draw[fill=white] (180:2) circle (0.3cm) node[blue] {$3$};
\draw[fill=white] (270:2) circle (0.3cm) node[blue] {$4$};
\draw[very thick,red] (0,0)--(330:2);
\draw[very thick,red,-latex] (325:1.75)--(345:1.75);
\filldraw[blue!10!white] (-0.25,-0.25) rectangle node[blue] {$f$} (0.25,0.25);
\node at (0,-2.75) {LR};
\end{scope}

\begin{scope}[shift={(16cm,0)}]
\draw[line width=0.75pt] (0,0) circle (2cm);
\draw[line width=1.75pt,blue] (0:2)to[out=200,in=315](225:1)to[out=135,in=250](90:2);
\draw[line width=1.75pt,blue] (180:2)to[out=300,in=150](270:2);
\draw[fill=white] (0:2) circle (0.3cm) node[blue] {$1$};
\draw[fill=white] (90:2) circle (0.3cm) node[blue] {$2$};
\draw[fill=white] (180:2) circle (0.3cm) node[blue] {$3$};
\draw[fill=white] (270:2) circle (0.3cm) node[blue] {$4$};
\draw[very thick,red] (0,0)--(330:2);
\draw[very thick,red,-latex] (325:1.75)--(345:1.75);
\filldraw[blue!10!white] (-0.25,-0.25) rectangle node[blue] {$f$} (0.25,0.25);
\node at (0,-2.75) {RL};
\end{scope}

\begin{scope}[shift={(0,-6cm)}]
\draw[line width=0.75pt] (0,0) circle (2cm);
\draw[line width=1.75pt,blue] (0:2)to[out=100,in=45](135:1)to[out=225,in=170](270:2);
\draw[line width=1.75pt,blue] (90:2)to[out=210,in=60](180:2);
\draw[fill=white] (0:2) circle (0.3cm) node[blue] {$1$};
\draw[fill=white] (90:2) circle (0.3cm) node[blue] {$2$};
\draw[fill=white] (180:2) circle (0.3cm) node[blue] {$3$};
\draw[fill=white] (270:2) circle (0.3cm) node[blue] {$4$};
\draw[very thick,red] (0,0)--(330:2);
\draw[very thick,red,-latex] (325:1.75)--(345:1.75);
\filldraw[blue!10!white] (-0.25,-0.25) rectangle node[blue] {$f$} (0.25,0.25);
\node at (0,-2.75) {LL};
\end{scope}

\begin{scope}[shift={(8cm,-6cm)}]
\draw[line width=0.75pt] (0,0) circle (2cm);
\draw[line width=1.75pt,blue] (0:2)to[out=200,in=70](270:2);
\draw[line width=1.75pt,blue] (90:2)to[out=250,in=20](180:2);
\draw[fill=white] (0:2) circle (0.3cm) node[blue] {$1$};
\draw[fill=white] (90:2) circle (0.3cm) node[blue] {$2$};
\draw[fill=white] (180:2) circle (0.3cm) node[blue] {$3$};
\draw[fill=white] (270:2) circle (0.3cm) node[blue] {$4$};
\draw[very thick,red] (0,0)--(330:2);
\draw[very thick,red,-latex] (325:1.75)--(345:1.75);
\filldraw[blue!10!white] (-0.25,-0.25) rectangle node[blue] {$f$} (0.25,0.25);
\node at (0,-2.75) {RL};
\end{scope}

\begin{scope}[shift={(16cm,-6cm)}]
\draw[line width=0.75pt] (0,0) circle (2cm);
\draw[line width=1.75pt,blue] (0:2)to[out=210,in=60](270:2);
\draw[line width=1.75pt,blue] (90:2)to[out=340,in=45](315:1)to[out=225,in=290](180:2);
\draw[fill=white] (0:2) circle (0.3cm) node[blue] {$1$};
\draw[fill=white] (90:2) circle (0.3cm) node[blue] {$2$};
\draw[fill=white] (180:2) circle (0.3cm) node[blue] {$3$};
\draw[fill=white] (270:2) circle (0.3cm) node[blue] {$4$};
\draw[very thick,red] (0,0)--(330:2);
\draw[very thick,red,-latex] (325:1.75)--(345:1.75);
\filldraw[blue!10!white] (-0.25,-0.25) rectangle node[blue] {$f$} (0.25,0.25);
\node at (0,-2.75) {RR};
\end{scope}

\end{tikzpicture}
\caption{Illustration of the six different ways two paths between four nodes can wind around a marked face $f$. In the first (resp. second) row, the nodes are paired according to $12|34$ (resp. $14|23$). The indices $\L,\R$ indicate whether the first and second paths leave $f$ to their left or right (for the canonical orientation). The red line represents the zipper, whose arrow indicates the orientation of the edges with parallel transport $z$.}
\label{2Wind_cl}
\end{figure}
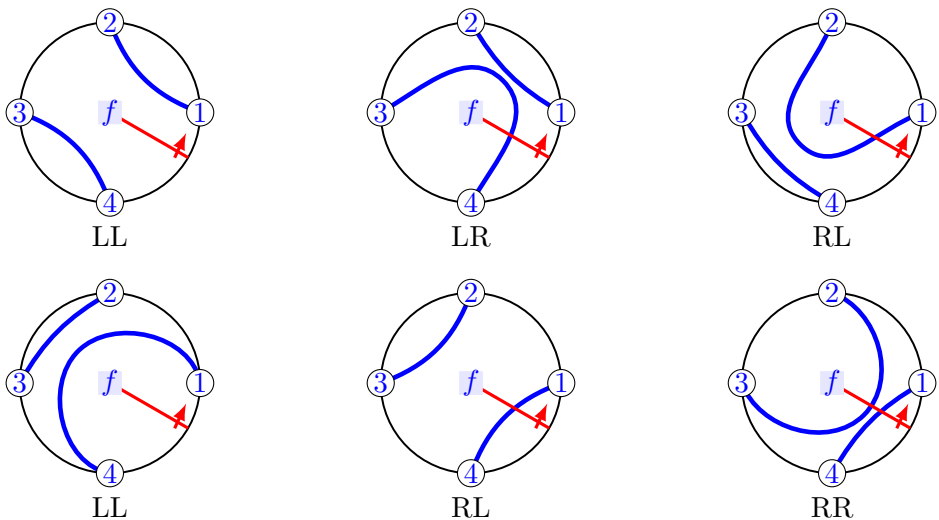


\subsection{More than two paths}
\label{sec5.2}

Let us now compute the winding probabilities for $n>2$ paths between $2n$ fixed boundary vertices in the same manner as for two paths using Theorem \ref{grove_thm}. To do so, we define a correspondence between winding classes and combinatorial objects called \emph{cyclic Dyck paths} \cite{KW15}, and give an explicit expression of the matrix $\mathbf{A}_n$ provided by Theorem \ref{grove_thm} as well as its inverse $\mathbf{A}_n^{-1}$. This will in turn allow us to write a combinatorial expression for winding partition functions of spanning forests, from which Schramm's formulas follow.

Let us first recall the definition of a standard\footnote{We use the adjective ``standard'' for such Dyck paths to distinguish them from cyclic Dyck paths, which will be introduced below.} Dyck path of length $2n$. It is a lattice path going from $(0,0)$ to $(2n,0)$ consisting in \emph{up} steps $(1,1)$ and \emph{down} steps $(1,-1)$ that never passes below the $x$ axis (such a path must therefore have an equal number of up and down steps). Equivalently, a Dyck path can be seen as a collection of heights $h_i\in\mathbb{N}$, $0\le i\le 2n$, with $|h_i{-}h_{i-1}|=1$ and $h_0=h_{2n}=0$; an up (resp. down) step at position $i$ corresponding to a pair of adjacent heights $(h_{i-1},h_i)$ such that $h_i{-}h_{i-1}=+1$ (resp. $-1$). In such a path, drawn as a mountain range as in Fig.~\ref{Dyck_ex}, each up step is paired with the closest down step to its right located at the same height; the pair being called a \emph{chord}. A third characterization of a Dyck path is given by the oriented pairing of the elements in $\mathcal{U}=\{1,2,\ldots,2n\}$ according to the chords below the path (from the up to the down step). The bijection between standard Dyck paths of length $2n$ and planar pairings of $2n$ nodes is well known (see e.g. \cite{Sta99}): each chord of a standard Dyck path pairs two labeled steps, thus defining a natural planar pairing of the nodes. This one-to-one correspondence has been used in particular to give a formula for all partition functions $Z[\sigma]$ of $(n{+}1)$-component spanning forests in terms of the Green function of the graph, where $\sigma$ is a planar pairing of boundary vertices \cite{KW11b} (see also Corollary~\ref{cor5.1} below).


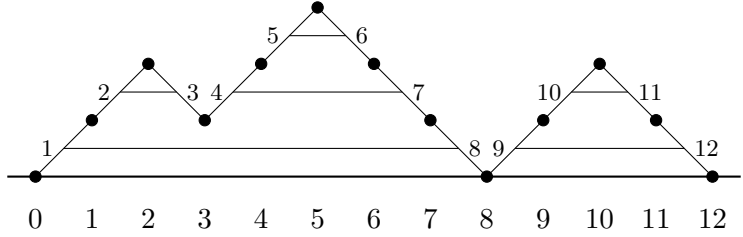
\begin{figure}[h]
\centering
\begin{tikzpicture}[scale=0.75]
\draw[thick] (-0.5,0)--(12.5,0);
\draw (0,0)--++(1,1)--++(1,1)--++(1,-1)--++(1,1)--++(1,1)--++(1,-1)--++(1,-1)--++(1,-1)--++(1,1)--++(1,1)--++(1,-1)--++(1,-1);
\foreach \p in {(0,0),(1,1),(2,2),(3,1),(4,2),(5,3),(6,2),(7,1),(8,0),(9,1),(10,2),(11,1),(12,0)}{\filldraw \p circle (0.1cm);}
\foreach \x in {0,...,12}{\node at (\x,-0.75) {$\x$};}
\draw (0.5,0.5) node[left,font=\footnotesize] {1}--(7.5,0.5) node[right,font=\footnotesize] {8};
\draw (1.5,1.5) node[left,font=\footnotesize] {2}--(2.5,1.5) node[right,font=\footnotesize] {3};
\draw (3.5,1.5) node[left,font=\footnotesize] {4}--(6.5,1.5) node[right,font=\footnotesize] {7};
\draw (4.5,2.5) node[left,font=\footnotesize] {5}--(5.5,2.5) node[right,font=\footnotesize] {6};
\draw (8.5,0.5) node[left,font=\footnotesize] {9}--(11.5,0.5) node[right,font=\footnotesize] {12};
\draw (9.5,1.5) node[left,font=\footnotesize] {10}--(10.5,1.5) node[right,font=\footnotesize] {11};
\end{tikzpicture}
\caption{Standard Dyck path $\sigma$ of length 12, which can be defined as a sequence of 6 up and 6 down steps, as a collection of 13 heights, or as a set of 6 chords: $\sigma=\text{UUDUUDDDUUDD}=\{0,1,2,1,2,3,2,1,0,1,2,1,0\}=\{(1,8),(2,3),(4,7),(5,6),(9,12),(10,11)\}$.}
\label{Dyck_ex}
\end{figure}

Since there are $(n{+}1)$ times more winding classes than standard Dyck paths, it is clear that the latter do not suffice to take into account the position of paths with respect to the face $f$ in each winding class. Following \cite{KW15}, we extend the definition of a Dyck path by allowing cyclic permutations of the labels associated with the steps, as well as of the labels of the heights (we choose as a convention the index 0 for the starting point of the step with the label 1). Such paths with numbered steps in cyclic order are called \emph{cyclic Dyck paths}\footnote{This definition differs slightly from the one of \cite{KW15}, in that we do not consider paths with a single ``flat'' step $(1,0)$ here.} (there are therefore $2n$ cyclic Dyck paths corresponding to a given mountain range). As a matter of notation, we shall denote such paths as vectors (e.g. $\vsig$), as opposed to standard Dyck paths, written as scalars (e.g. $\sigma$).

Consider now a particular winding class of a pairing of $\mathcal{U}$, with $k$ paths crossing the zipper. We choose the canonical orientation (from lower to higher node index) for the $n{-}k$ paths that do not intersect the zipper, and the opposite for the $k$ paths that do. We call this particular way of orienting paths a \emph{cyclic orientation}. The resulting oriented pairing $\vsig$ corresponds to the chords of a cyclic Dyck path, whose first step is labeled by the lowest index among the starting points of these $k$ paths. We provide an illustration of this correspondence in Fig.~\ref{cyc_Dyck_ex}. It should be noted that it is not surjective, as certain cyclic Dyck paths cannot be obtained in this manner. Namely, those that include a down step touching the $x$ axis to the left of the step with the label 1. Such paths are however equivalent to obtainable ones up to cyclic permutations of mountaintops together with their step and height labels (an example is given in the bottom panel of Fig.~\ref{cyc_Dyck_ex}).

The main benefit of choosing the cyclic orientation over the canonical one is notational. Indeed, the way the paths of a winding class wind around the face $f$ is directly encoded in a cyclic Dyck path $\vsig$ (or equivalently, in a cyclically oriented pairing $\vsig$). The corresponding winding partition functions in spanning forests and OCRGs will be denoted from now on by $Z[\vsig]$ and $\Zr[\vsig]$, respectively (dropping the cumbersome subscripts $\L,\R$ used in Sections \ref{sec3.2} and \ref{sec5.1}). In doing so, however, one finds a conflict of notation for standard Dyck paths $\sigma$, which correspond to winding classes in which no path crosses the zipper. Indeed, the notation $Z[\sigma]$ has until now been reserved for the total partition function of spanning forests in which the nodes are paired according to the unoriented pairing $\sigma$. To avoid any confusion in what follows, we shall denote from now on the winding partition function and the total partition by $Z[\sigma]$ and $Z_{\mathrm{t}}[\sigma]$, respectively.

Before we use this correspondence between winding classes and cyclic Dyck paths, let us introduce several definitions and notations associated with cyclic Dyck paths, some of which were given in \cite{KW15}:
\begin{defn}
Let $\vsig,\vt$ be two cyclic Dyck paths of length $2n$, and $R\subset\mathcal{U}=\{1,2,\ldots,2n\}$ a subset of indices of order $n$. Then
\begin{itemize}
\item[$\bullet$] $\mathcal{A}(\vsig)=\sum_{i=0}^{2n}h_i(\vsig)$ denotes the area between $\vsig$ and the $x$ axis, where the $h_i$'s are the heights of $\vsig$.
\item[$\bullet$] $W(\vsig)=h_0(\vsig)$ is the number of chords of $\vsig$ for which the index of the up step is higher than the one of the down step (in which case the chord is said to be wrapped).
\item[$\bullet$] $R\cap\vsig$ means that $R$ intersects each chord of $\vsig$ exactly once, that is, $R$ contains the index of one of the two steps belonging to each chord.
\item[$\bullet$] $R\cdot\vsig$ stands for the number of up steps of $\vsig$ whose indices lie in $R$.
\item[$\bullet$] $R:\vsig$ is the number of up steps of $\vsig$ with indices in $R$ that belong to wrapped chords.
\item[$\bullet$] $R|\vsig_{\l(1)}$ counts the steps of $\vsig$ that appear to the left of the step 1 (excluded) and whose indices appear in $R$.
\item[$\bullet$] $U^{\vsig}$ (resp. $D^{\vsig}$) denotes the set of up (resp. down) steps of $\vsig$.
\item[$\bullet$] $U^{\vsig}_{r(1)}$ is the set of up steps of $\vsig$ that appear to the right of the step 1 (included).
\item[$\bullet$] $D^{\vsig}_{\l(1)}$ denotes the set of down steps of $\vsig$ that appear to the left of the step 1 (excluded).
\item[$\bullet$] $D^{\vt}_{\l(1,\vsig)}$ is the set of down steps of $\vt$ whose indices appear to the left of the step 1 in $\vsig$ (excluded).
\end{itemize}
\vspace{-0.4cm}
\label{def5.1}
\end{defn}
As an example, let us consider the cyclic Dyck path illustrated in the top right panel of Fig.~\ref{cyc_Dyck_ex}, given by $\vsig=\{(5,2),(6,7),(8,1),(3,4)\}=\{2,1,0,1,0,0,1,2,1\}$, which yields $W(\vsig)=2$ (due to the chords $(5,2)$ and $(8,1)$). If $R=\{1,4,5,6\}$, then $R\cap\vsig$, $R\cdot\vsig=|\{5,6\}|=2$, $R:\vsig=|\{5\}|=1$ and $R|\vsig_{\l(1)}=|\{5,6\}|=2$.

\begin{figure}[h]
\centering
\begin{tikzpicture}[scale=0.6]
\tikzset{posar/.style={decoration={markings,mark=at position 0.6 with {\arrow{#1}},},postaction={decorate}}}

\begin{scope}
\draw[line width=0.75pt] (0,0) circle (3cm);
\draw[line width=1.75pt,posar={latex},blue] (180:3)..controls(270:2)and(315:2)..(45:3);
\draw[line width=1.75pt,posar={latex},blue] (225:3)to[out=45,in=90](270:3);
\draw[line width=1.75pt,posar={latex},blue] (315:3)to[out=135,in=180](0:3);
\draw[line width=1.75pt,posar={latex},blue] (90:3)to[out=270,in=315](135:3);
\foreach \n in {1,...,8}{\draw[fill=white] ({(\n-1)*360/8}:3) circle (0.3cm) node[blue] {$\n$};}
\draw[very thick,red] (0,0)--(330:3);
\draw[very thick,red,-latex] (325:2.75)--(340:2.75);
\filldraw[blue!10!white] (-0.25,-0.25) rectangle node[blue] {$f$} (0.25,0.25);
\end{scope}

\begin{scope}[shift={(7cm,-1cm)},scale=1.25]
\draw[thick] (-0.5,0)--(8.5,0);
\draw (0,0)--++(1,1)--++(1,1)--++(1,-1)--++(1,1)--++(1,-1)--++(1,-1)--++(1,1)--++(1,-1);
\foreach \p in {(0,0),(1,1),(2,2),(3,1),(4,2),(5,1),(6,0),(7,1),(8,0)}{\filldraw \p circle (0.1cm);}
\foreach \x in {0,...,8}{\node at ({Mod(-5+\x,9)},-0.5) {$\x$};}
\draw (0.5,0.5) node[left,font=\footnotesize] {5}--(5.5,0.5) node[right,font=\footnotesize] {2};
\draw (1.5,1.5) node[left,font=\footnotesize] {6}--(2.5,1.5) node[right,font=\footnotesize] {7};
\draw (3.5,1.5) node[left,font=\footnotesize] {8}--(4.5,1.5) node[right,font=\footnotesize] {1};
\draw (6.5,0.5) node[left,font=\footnotesize] {3}--(7.5,0.5) node[right,font=\footnotesize] {4};
\end{scope}

\begin{scope}[shift={(0,-6cm)},scale=1.25]
\draw[thick] (-0.5,0)--(4.5,0);
\draw (0,0)--++(1,1)--++(1,-1)--++(1,1)--++(1,-1);
\foreach \p in {(0,0),(1,1),(2,0),(3,1),(4,0)}{\filldraw \p circle (0.1cm);}
\draw (0.5,0.5) node[left,font=\footnotesize] {2}--(1.5,0.5) node[right,font=\footnotesize] {3};
\draw (2.5,0.5) node[left,font=\footnotesize] {4}--(3.5,0.5) node[right,font=\footnotesize] {1};
\foreach \x in {0,...,4}{\node at ({Mod(-2+\x,5)},-0.5) {$\x$};}
\node at (5.5,0.25) {\huge $\sim$};
\draw[thick] (6.5,0)--(11.5,0);
\draw (7,0)--++(1,1)--++(1,-1)--++(1,1)--++(1,-1);
\foreach \p in {(7,0),(8,1),(9,0),(10,1),(11,0)}{\filldraw \p circle (0.1cm);}
\draw (7.5,0.5) node[left,font=\footnotesize] {4}--(8.5,0.5) node[right,font=\footnotesize] {1};
\draw (9.5,0.5) node[left,font=\footnotesize] {2}--(10.5,0.5) node[right,font=\footnotesize] {3};
\foreach \x in {0,...,4}{\node at ({7+Mod(-4+\x,5)},-0.5) {$\x$};}
\end{scope}

\end{tikzpicture}
\caption{On the top left is a schematic representation of a winding class of quadruples of paths in spanning forests. The paths $5\to 2$ and $8\to 1$ cross the zipper, and are therefore oriented from the higher to the lower index, as opposed to the paths $3\to 4$ and $6\to 7$. On the top right is the corresponding cyclic Dyck path, whose first step label is $\min\{5,8\}=5$. At the bottom lies a cyclic Dyck path that cannot be realized as a winding class of paths. The path obtained by cyclically rotating its two mountaintops, however, corresponds to the winding class $\R\L$ of the pairing $14|23$, pictured in Fig.~\ref{2Wind_cl}.}
\label{cyc_Dyck_ex}
\end{figure}

Let us now show how to write linear relations between winding partition functions using Theorem \ref{grove_thm}. There are $C_n=\frac{1}{n+1}\left({2n\atop n}\right)$ planar pairings of $2n$ nodes on the boundary of a planar graph, each of which containing $n{+}1$ winding classes of paths with respect to the marked face $f$; thus yielding a total of $\left({2n\atop n}\right)$ winding classes, and as many inequivalent cyclic Dyck paths. As there are as many distinct subsets $R\subset\mathcal{U}$ of order $n$, the linear equations provided by Theorem~\ref{grove_thm} relating winding partition functions to minors of the line bundle Green function $\Gr$ can be encoded in the square matrix $\mathbf{A}_n$ defined by
\begin{equation}
\det\mathbf{\Delta}\det\Gr_R^S=\sum_{\vsig}\left(\mathbf{A}_n\right)_{R,\vsig}\Zr[\vsig],
\label{grove_eq}
\end{equation}
where the sum is over all cyclic Dyck paths of length $2n$. The entries of this matrix are given by
\begin{prop}
Let $\vsig$ be a pairing of $2n$ nodes with the cyclic orientation along each path, seen as a cyclic Dyck path of length $2n$. Let $R\subset\mathcal{U}=\{1,2,\ldots,2n\}$ with $|R|=n$. If we define the new variable $w \equiv z^2$, then
\begin{equation}
\big(\mathbf{A}_n\big)_{R,\vsig}=\begin{cases}
(-1)^{\frac{1}{2}(\mathcal{A}(\vsig)-n)+(n+1)W(\vsig)}\,w^{W(\vsig)-R:\vsig}&\quad\text{if $R\cap\vsig$},\\
0&\quad\text{if $R\,\,\slashed{\cap}\,\,\vsig$}.
\end{cases}
\end{equation}
\label{prop5.1}
\end{prop}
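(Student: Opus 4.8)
The plan is to compute $\det\mathbf{\Delta}\det\Gr_R^S$ directly via Theorem~\ref{grove_thm}, identify each $\Zr[\sigma']$ appearing on the right-hand side with a winding partition function $\Zr[\vsig]$ for a \emph{cyclically oriented} pairing, and read off the coefficient. First I would apply Theorem~\ref{grove_thm} with $S=\mathcal{U}\bs R$: this gives $\det\mathbf{\Delta}\det\Gr_R^S=\sum_{\rho\in\mathrm{S}_n}\epsilon(\rho)\,\Zr\big[{\textstyle{s_{\rho(1)}\atop r_1}}|\cdots|{\textstyle{s_{\rho(n)}\atop r_n}}\big]$. Most of the $n!$ oriented pairings on the right are non-planar and hence give $\Zr=0$ (no OCRG realizes them on a planar graph); the surviving ones are exactly the planar pairings $\vsig$ such that $R$ picks one endpoint from each chord of $\vsig$, i.e. $R\cap\vsig$ in the notation of Definition~\ref{def5.1}. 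This immediately explains the dichotomy in the statement: the entry vanishes when $R\,\slashed{\cap}\,\vsig$. The bulk of the argument is then bookkeeping three things for each surviving $\vsig$: (a) the sign $\epsilon(\rho)$ of the permutation that realizes the pairing, (b) the extra sign incurred by converting the orientation of each path in $\vsig$ to the canonical (low-to-high) orientation versus the cyclic orientation that defines $\Zr[\vsig]$, and (c) the power of $z$ (equivalently $w=z^2$) produced by this reorientation, since reversing a path $\gamma$ multiplies its weight by $\phi(\gamma)^2$ and a path crossing the zipper carries $\phi=z^{\pm1}$.

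For (c): in the cyclic orientation, a chord is reversed relative to canonical exactly when it is \emph{wrapped} (up-step index higher than down-step index), and each such reversal among the chords whose up-step index lies in $R$ toggles a factor of $w$. Comparing with the weight convention $\w_{\text{CRG}}(\gamma)\phi(\gamma)=\w_{\text{CRG}}(\gamma^{-1})\phi(\gamma^{-1})$ and the fact that $\phi_{1\to2}$ is constant over a winding class, one gets a net factor $w^{W(\vsig)-R:\vsig}$ (the wrapped chords contribute $w^{W(\vsig)}$ from the zipper windings of the cyclically oriented representatives, corrected by $w^{-R:\vsig}$ for those already aligned by the choice of $R$). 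I would verify this carefully against the $n=2$ table for $\mathbf{A}_2$ displayed in Section~\ref{sec5.1}, which fixes all sign and exponent conventions unambiguously — e.g. the entry $-w^2$ in the $\Gr\textstyle{{3,4}\atop{1,2}}$ row, $\Zr_{\R\R}$ column corresponds to $\vsig$ with two wrapped chords and $R:\vsig=0$, $W(\vsig)=2$, with overall sign $(-1)^{\frac12(\mathcal{A}-n)+(n+1)W}$.

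For (a)+(b): the sign $\epsilon(\rho)$ together with the reorientation signs must be repackaged as the single exponent $\tfrac12(\mathcal{A}(\vsig)-n)+(n+1)W(\vsig)$. The term $\tfrac12(\mathcal{A}(\vsig)-n)$ is the standard ``number of chords nested + crossings'' statistic that controls the sign of a planar pairing inside a determinant expansion — one shows by induction on $n$ (peeling off an innermost chord, which contributes a transposition, and tracking how $\mathcal{A}$ changes by twice the number of chords nested under it plus one) that $\epsilon(\rho)=(-1)^{\frac12(\mathcal{A}(\vsig)-n)}$ in the all-canonical case $W(\vsig)=0$. The $(n+1)W(\vsig)$ correction then accounts for the sign picked up when wrapped chords are present: cyclically rotating the Dyck path to expose a wrapped chord as an ordinary one permutes the $2n$ labels cyclically, and a full cyclic shift of $2n$ symbols is a product of $2n-1$ transpositions — but here the relevant shift acts on $n+1$ ``blocks'' in a way that contributes $(n+1)$ per unit of $W$ modulo $2$; this is precisely the place where I expect the combinatorics to be most delicate. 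I would therefore structure the proof as: (1) reduce via Theorem~\ref{grove_thm} and planarity to the $R\cap\vsig$ case; (2) handle $W(\vsig)=0$ by the nesting-induction, establishing both the sign and the $w^{-R:\vsig}=w^0$ normalization; (3) reduce general $W(\vsig)$ to the $W=0$ case by cyclic rotation of mountaintops, carefully tracking the induced sign and the $w$-power. Step (3) — getting the $(n+1)W(\vsig)$ exponent and the $w^{W(\vsig)-R:\vsig}$ factor to come out with the right parities simultaneously — is the main obstacle, and the $n=2$ matrix plus the small example $\vsig=\{(5,2),(6,7),(8,1),(3,4)\}$ worked out after Definition~\ref{def5.1} are the natural consistency checks to pin down every convention.
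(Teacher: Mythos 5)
Your overall strategy coincides with the paper's: expand $\det\mathbf{\Delta}\det\Gr_R^S$ via Theorem~\ref{grove_thm}, observe that only pairings with $R\cap\vsig$ survive (giving the vanishing case), extract the power of $w$ from the mismatch between the orientation $r_i\to s_{\rho(i)}$ and the cyclic (up-to-down) orientation, and determine the sign by first treating standard Dyck paths and then rotating labels. The planarity reduction and the $w$-exponent are essentially right (the paper's bookkeeping is that a nontrivial factor arises precisely for the wrapped chords whose \emph{down} step lies in $R$, of which there are $W(\vsig)-R:\vsig$; your phrasing in terms of up steps in $R$ is inverted but lands on the same count). However, two steps that you leave as ``to be verified'' are genuine gaps, and they are exactly where the work lies.

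First, the sign must be shown to be \emph{independent of the transversal} $R$: for fixed $\vsig$, the permutation $\rho$ with $r_i\mapsto s_{\rho(i)}$ changes as $R$ ranges over the $2^n$ subsets meeting each chord once, and a priori so does $\epsilon(\rho)$. Your sketch computes $\epsilon(\rho)$ only for the canonical choice $R=U^{\vsig}$ and never addresses the other transversals. The paper handles this by showing that replacing an up step $u_\l\in R$ by its partner down step $d_\l$ changes the inversion numbers $I(R)$ and $I(S)$ by the same amount (only chords nested above $(u_\l,d_\l)$ contribute, and they contribute equally to both), so $I(S)-I(R)=I(D)-I(U)=\tfrac{1}{2}(\mathcal{A}(\vsig)-n)$ for every admissible $R$; the preliminary identity $I(D)=\tfrac{1}{2}(\mathcal{A}-n)$ itself comes from a bijection between inversions and the squares tiling the region under the path. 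Second, the $(n+1)W(\vsig)$ correction: your ``acts on $n+1$ blocks'' heuristic is not an argument, and you flag it yourself as the main obstacle. The paper's resolution is elementary but specific: under a unit cyclic shift of all $2n$ step labels, exactly one of $I(U)$, $I(D)$ changes, and it changes by $n-1\bmod 2$ (according to whether the label $2n$ sits on an up or a down step), so each shift multiplies the sign by $(-1)^{n+1}$; since $W(\vsig)=h_0(\vsig)$ changes parity with each shift, $k(n+1)\equiv W(\vsig)(n+1)\pmod 2$. Until you supply arguments of this kind for both points, the proposal is an outline rather than a proof.
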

\vspace{-0.75cm}
As an example, consider for instance the cyclic Dyck path depicted in the top right panel of Fig.~\ref{cyc_Dyck_ex}, $\vsig=\{(5,2),(6,7),(8,1),(3,4)\}$, and the subset of indices $R=\{1,4,5,6\}$. As explained above, $R$ contains an index of each chord of $\vsig$ (i.e. $R\cap\vsig$), and $W(\vsig)=2$, $R:\vsig=1$. Moreover, the area between $\vsig$ and the horizontal axis is $\mathcal{A}(\vsig)=8$, so
\begin{equation}
(\mathbf{A}_4)_{R,\vsig}=(-1)^{\frac{1}{2}(8-4)+(4+1)2}\,w^{2-1}=w.
\end{equation}

The proof of Proposition~\ref{prop5.1} is a bit technical, and is left to Appendix~\ref{a2}. A similar matrix was considered in \cite{KW15}, in which all but one node are located around a single marked face, the remaining one lying on the boundary of the outer face. The authors gave a formula for the inverse in terms of combinatorial objects called \emph{cover-inclusive Dyck tilings}, introduced in \cite{KW11b,SZJ12}. We recall their definition here, as they shall be needed to write the inverse of $\mathbf{A}_n$.

A \emph{Dyck tile} is obtained from a Dyck path by replacing each vertex of the path with a $\sqrt{2}{\times}\sqrt{2}$ square rotated by $45^\circ$, and by gluing all the squares together to form a single ribbon (see for instance the first two panels at the top of Fig.~\ref{Dyck_tiling_ex}). In particular, the simplest Dyck tile is a single square, associated with the degenerate Dyck path of length zero.

Let $\vmu,\vsig$ be two cyclic Dyck paths of length $2n$ with the same positions of indices, and let $h_i(\vmu),h_i(\vsig)$ be the heights of their vertices. The cyclic Dyck path $\vmu$ is said to dominate $\vsig$ if $h_i(\vmu)\ge h_i(\vsig)$ for $0\le i\le 2n$; which is denoted by $\vmu\ge\vsig$. A \emph{Dyck tiling} of the shape $\vsig/\vmu$ between two paths $\vmu,\vsig$ such that $\vmu\ge\vsig$ is then obtained by filling the surface between $\vmu$ and $\vsig$ with Dyck tiles. Furthermore, a Dyck tiling is said to be \emph{cover-inclusive} if, for any two tiles right above one another, the top one does not stick out horizontally with respect to the bottom one. We illustrate all cover-inclusive Dyck tilings of a given shape in the bottom panel of Fig.~\ref{Dyck_tiling_ex}.

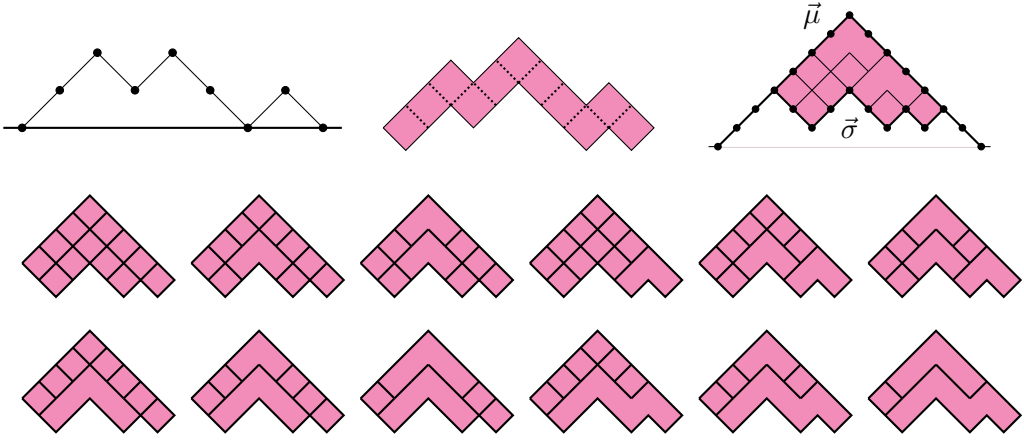
\begin{figure}[h]
\centering
\begin{tikzpicture}[scale=0.5]

\begin{scope}
\draw[thick] (-0.5,0)--(8.5,0);
\draw (0,0)--++(1,1)--++(1,1)--++(1,-1)--++(1,1)--++(1,-1)--++(1,-1)--++(1,1)--++(1,-1);
\foreach \p in {(0,0),(1,1),(2,2),(3,1),(4,2),(5,1),(6,0),(7,1),(8,0)}{\filldraw \p circle (0.1cm);}
\end{scope}

\begin{scope}[scale=0.6,shift={(16cm,0)}]
\draw[fill=Rhodamine!70!white] (0,0)--(3,3)--(4,2)--(6,4)--(9,1)--(10,2)--(12,0)--(11,-1)--(10,0)--(9,-1)--(6,2)--(4,0)--(3,1)--(1,-1)--(0,0);
\draw[densely dotted,thick] (2,0)--(1,1);
\draw[densely dotted,thick] (3,1)--(2,2);
\draw[densely dotted,thick] (3,1)--(4,2);
\draw[densely dotted,thick] (5,1)--(4,2);
\draw[densely dotted,thick] (6,2)--(5,3);
\draw[densely dotted,thick] (6,2)--(7,3);
\draw[densely dotted,thick] (7,1)--(8,2);
\draw[densely dotted,thick] (8,0)--(9,1);
\draw[densely dotted,thick] (10,0)--(9,1);
\draw[densely dotted,thick] (10,0)--(11,1);
\end{scope}

\begin{scope}[scale=0.5,shift={(40cm,2cm)}]
\draw (-3.5,-3)--(11.5,-3);
\draw (0,0)--(4,4)--(9,-1)--(8,-2)--(7,-1)--(6,-2)--(4,0)--(2,-2)--(0,0);
\filldraw[Rhodamine!70!white] (-3,-3)--(4,4)--(11,-3)--(-3,-3);
\draw (1,-1)--(2,0);
\draw (2,0)--(3,1);
\draw (3,1)--(4,2);
\draw (4,0)--(5,1);
\draw (5,1)--(6,2);
\draw (5,-1)--(6,0);
\draw (7,-1)--(8,0);
\draw (3,-1)--(2,0);
\draw (2,0)--(1,1);
\draw (4,0)--(3,1);
\draw (3,1)--(2,2);
\draw (7,-1)--(6,0);
\draw (5,1)--(4,2);
\filldraw[white] (-3,-3)--(0,0)--(2,-2)--(4,0)--(6,-2)--(7,-1)--(8,-2)--(9,-1)--(11,-3)--(-3,-3);
\draw[thick] (-3,-3)--(4,4)--(11,-3);
\draw[thick] (0,0)--(2,-2)--(4,0)--(6,-2)--(7,-1)--(8,-2)--(9,-1);
\foreach \p in {(-3,-3),(-2,-2),(-1,-1),(0,0),(1,1),(2,2),(3,3),(4,4),(5,3),(6,2),(7,1),(8,0),(9,-1),(10,-2),(11,-3),(1,-1),(2,-2),(3,-1),(4,0),(5,-1),(6,-2),(7,-1),(8,-2)}{\filldraw \p circle (0.175cm);}
\node at (2,4) {$\vmu$};
\node at (4,-2) {$\vsig$};
\end{scope}

\begin{scope}[scale=0.45,shift={(0,-8cm)},thick]
\draw[fill=Rhodamine!70!white] (0,0)--(4,4)--(9,-1)--(8,-2)--(7,-1)--(6,-2)--(4,0)--(2,-2)--(0,0);
\draw (1,-1)--(2,0);
\draw (2,0)--(3,1);
\draw (3,1)--(4,2);
\draw (4,2)--(5,3);
\draw (4,0)--(5,1);
\draw (5,1)--(6,2);
\draw (5,-1)--(6,0);
\draw (6,0)--(7,1);
\draw (7,-1)--(8,0);
\draw (3,-1)--(2,0);
\draw (2,0)--(1,1);
\draw (4,0)--(3,1);
\draw (3,1)--(2,2);
\draw (7,-1)--(6,0);
\draw (6,0)--(5,1);
\draw (5,1)--(4,2);
\draw (4,2)--(3,3);
\end{scope}

\begin{scope}[scale=0.45,shift={(10cm,-8cm)},thick]
\draw[fill=Rhodamine!70!white] (0,0)--(4,4)--(9,-1)--(8,-2)--(7,-1)--(6,-2)--(4,0)--(2,-2)--(0,0);
\draw (1,-1)--(2,0);
\draw (2,0)--(3,1);
\draw (3,1)--(4,2);
\draw (4,2)--(5,3);
\draw (5,1)--(6,2);
\draw (5,-1)--(6,0);
\draw (6,0)--(7,1);
\draw (7,-1)--(8,0);
\draw (3,-1)--(2,0);
\draw (2,0)--(1,1);
\draw (3,1)--(2,2);
\draw (7,-1)--(6,0);
\draw (6,0)--(5,1);
\draw (5,1)--(4,2);
\draw (4,2)--(3,3);
\end{scope}

\begin{scope}[scale=0.45,shift={(20cm,-8cm)},thick]
\draw[fill=Rhodamine!70!white] (0,0)--(4,4)--(9,-1)--(8,-2)--(7,-1)--(6,-2)--(4,0)--(2,-2)--(0,0);
\draw (1,-1)--(2,0);
\draw (2,0)--(3,1);
\draw (3,1)--(4,2);
\draw (5,1)--(6,2);
\draw (5,-1)--(6,0);
\draw (6,0)--(7,1);
\draw (7,-1)--(8,0);
\draw (3,-1)--(2,0);
\draw (2,0)--(1,1);
\draw (3,1)--(2,2);
\draw (7,-1)--(6,0);
\draw (6,0)--(5,1);
\draw (5,1)--(4,2);
\end{scope}

\begin{scope}[scale=0.45,shift={(30cm,-8cm)},thick]
\draw[fill=Rhodamine!70!white] (0,0)--(4,4)--(9,-1)--(8,-2)--(7,-1)--(6,-2)--(4,0)--(2,-2)--(0,0);
\draw (1,-1)--(2,0);
\draw (2,0)--(3,1);
\draw (3,1)--(4,2);
\draw (4,2)--(5,3);
\draw (4,0)--(5,1);
\draw (5,1)--(6,2);
\draw (5,-1)--(6,0);
\draw (6,0)--(7,1);
\draw (3,-1)--(2,0);
\draw (2,0)--(1,1);
\draw (4,0)--(3,1);
\draw (3,1)--(2,2);
\draw (6,0)--(5,1);
\draw (5,1)--(4,2);
\draw (4,2)--(3,3);
\end{scope}

\begin{scope}[scale=0.45,shift={(40cm,-8cm)},thick]
\draw[fill=Rhodamine!70!white] (0,0)--(4,4)--(9,-1)--(8,-2)--(7,-1)--(6,-2)--(4,0)--(2,-2)--(0,0);
\draw (1,-1)--(2,0);
\draw (2,0)--(3,1);
\draw (3,1)--(4,2);
\draw (4,2)--(5,3);
\draw (5,1)--(6,2);
\draw (5,-1)--(6,0);
\draw (6,0)--(7,1);
\draw (3,-1)--(2,0);
\draw (2,0)--(1,1);
\draw (3,1)--(2,2);
\draw (6,0)--(5,1);
\draw (5,1)--(4,2);
\draw (4,2)--(3,3);
\end{scope}

\begin{scope}[scale=0.45,shift={(50cm,-8cm)},thick]
\draw[fill=Rhodamine!70!white] (0,0)--(4,4)--(9,-1)--(8,-2)--(7,-1)--(6,-2)--(4,0)--(2,-2)--(0,0);
\draw (1,-1)--(2,0);
\draw (2,0)--(3,1);
\draw (3,1)--(4,2);
\draw (5,1)--(6,2);
\draw (5,-1)--(6,0);
\draw (6,0)--(7,1);
\draw (3,-1)--(2,0);
\draw (2,0)--(1,1);
\draw (3,1)--(2,2);
\draw (6,0)--(5,1);
\draw (5,1)--(4,2);
\end{scope}

\begin{scope}[scale=0.45,shift={(0,-16cm)},thick]
\draw[fill=Rhodamine!70!white] (0,0)--(4,4)--(9,-1)--(8,-2)--(7,-1)--(6,-2)--(4,0)--(2,-2)--(0,0);
\draw (1,-1)--(2,0);
\draw (2,0)--(3,1);
\draw (3,1)--(4,2);
\draw (4,2)--(5,3);
\draw (5,1)--(6,2);
\draw (6,0)--(7,1);
\draw (7,-1)--(8,0);
\draw (2,0)--(1,1);
\draw (3,1)--(2,2);
\draw (7,-1)--(6,0);
\draw (6,0)--(5,1);
\draw (5,1)--(4,2);
\draw (4,2)--(3,3);
\end{scope}

\begin{scope}[scale=0.45,shift={(10cm,-16cm)},thick]
\draw[fill=Rhodamine!70!white] (0,0)--(4,4)--(9,-1)--(8,-2)--(7,-1)--(6,-2)--(4,0)--(2,-2)--(0,0);
\draw (1,-1)--(2,0);
\draw (2,0)--(3,1);
\draw (3,1)--(4,2);
\draw (5,1)--(6,2);
\draw (6,0)--(7,1);
\draw (7,-1)--(8,0);
\draw (2,0)--(1,1);
\draw (3,1)--(2,2);
\draw (7,-1)--(6,0);
\draw (6,0)--(5,1);
\draw (5,1)--(4,2);
\end{scope}

\begin{scope}[scale=0.45,shift={(20cm,-16cm)},thick]
\draw[fill=Rhodamine!70!white] (0,0)--(4,4)--(9,-1)--(8,-2)--(7,-1)--(6,-2)--(4,0)--(2,-2)--(0,0);
\draw (1,-1)--(2,0);
\draw (2,0)--(3,1);
\draw (3,1)--(4,2);
\draw (6,0)--(7,1);
\draw (7,-1)--(8,0);
\draw (2,0)--(1,1);
\draw (7,-1)--(6,0);
\draw (6,0)--(5,1);
\draw (5,1)--(4,2);
\end{scope}

\begin{scope}[scale=0.45,shift={(30cm,-16cm)},thick]
\draw[fill=Rhodamine!70!white] (0,0)--(4,4)--(9,-1)--(8,-2)--(7,-1)--(6,-2)--(4,0)--(2,-2)--(0,0);
\draw (1,-1)--(2,0);
\draw (2,0)--(3,1);
\draw (3,1)--(4,2);
\draw (4,2)--(5,3);
\draw (5,1)--(6,2);
\draw (6,0)--(7,1);
\draw (2,0)--(1,1);
\draw (3,1)--(2,2);
\draw (6,0)--(5,1);
\draw (5,1)--(4,2);
\draw (4,2)--(3,3);
\end{scope}

\begin{scope}[scale=0.45,shift={(40cm,-16cm)},thick]
\draw[fill=Rhodamine!70!white] (0,0)--(4,4)--(9,-1)--(8,-2)--(7,-1)--(6,-2)--(4,0)--(2,-2)--(0,0);
\draw (1,-1)--(2,0);
\draw (2,0)--(3,1);
\draw (3,1)--(4,2);
\draw (5,1)--(6,2);
\draw (6,0)--(7,1);
\draw (2,0)--(1,1);
\draw (3,1)--(2,2);
\draw (6,0)--(5,1);
\draw (5,1)--(4,2);
\end{scope}

\begin{scope}[scale=0.45,shift={(50cm,-16cm)},thick]
\draw[fill=Rhodamine!70!white] (0,0)--(4,4)--(9,-1)--(8,-2)--(7,-1)--(6,-2)--(4,0)--(2,-2)--(0,0);
\draw (1,-1)--(2,0);
\draw (2,0)--(3,1);
\draw (3,1)--(4,2);
\draw (6,0)--(7,1);
\draw (2,0)--(1,1);
\draw (6,0)--(5,1);
\draw (5,1)--(4,2);
\end{scope}

\end{tikzpicture}
\caption{On the left is a Dyck path, together with the associated Dyck tile (in the middle). On the right is a Dyck tiling of the shape $\vsig/\vmu$ between the paths $\vmu$ and $\vsig$, which is \emph{not} cover-inclusive (the two L shapes both cover a single square). At the bottom lie all cover-inclusive Dyck tilings of $\vsig/\vmu$.}
\label{Dyck_tiling_ex}
\end{figure}

Let us now give a formula for the inverse of the matrix $\mathbf{A}_n$ defined in Proposition~\ref{prop5.1}. In terms of the quantities introduced in Definition~\ref{def5.1}, we have
\begin{prop} Let $\mathbf{B}_n$ be the square matrix defined by
\begin{equation}
\left(\mathbf{B}_n\right)_{\vsig,R}=(-1)^{\Sigma R+W(\vsig)+n}\sum_{\vmu\ge\vsig}\mathrm{ci}(\vsig/\vmu)\,w^{n-W(\vsig)-R\cdot\vmu+R|\vsig_{\l(1)}-\big|D^{\vsig}_{\l(1)}\big|},
\end{equation}
for all cyclic Dyck paths $\vsig$ of length $2n$ and for all subsets $R\subset\mathcal{U}$ of order $n$, where $\mathrm{ci}(\vsig/\vmu)$ denotes the number of cover-inclusive Dyck tilings of the shape between $\vmu$ and $\vsig$. Then for any two cyclic Dyck paths $\vsig,\vt$ of length $2n$
\begin{equation}
\sum_{R\subset\mathcal{U}:\,|R|=n}\left(\mathbf{B}_n\right)_{\vsig,R}\left(\mathbf{A}_n\right)_{R, \vt}=(1-w)^n\delta_{\vsig,\vt}.
\label{prop5.2eq1}
\end{equation}
\label{prop5.2}
\end{prop}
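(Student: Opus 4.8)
The plan is to prove \eqref{prop5.2eq1} by expanding the matrix product
\begin{equation*}
\left(\mathbf{B}_n\mathbf{A}_n\right)_{\vsig,\vt}=\sum_{R\subset\mathcal{U}:\,|R|=n}\left(\mathbf{B}_n\right)_{\vsig,R}\left(\mathbf{A}_n\right)_{R,\vt}
\end{equation*}
and reducing it, after carrying out the sum over $R$, to the inversion identity satisfied by cover-inclusive Dyck tilings. First I would cut down the summation range: by Proposition~\ref{prop5.1}, $(\mathbf{A}_n)_{R,\vt}$ vanishes unless $R\cap\vt$, so only the subsets $R$ containing exactly one endpoint of each of the $n$ chords of $\vt$ contribute, and such $R$ are parametrized by a choice, for every chord of $\vt$, of its up-step or its down-step. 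The structural point that makes the computation feasible is that every $R$-dependent quantity entering the two matrix entries — the sign $(-1)^{\Sigma R}$ and the $w$-exponents $R:\vt$, $R\cdot\vmu$ (for each $\vmu\ge\vsig$ occurring in $(\mathbf{B}_n)_{\vsig,R}$) and $R|\vsig_{\l(1)}$ — is additive over the individual steps of $R$; hence, after interchanging the two finite sums, the inner sum over $R$ factorizes as a product over the chords of $\vt$.

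Next I would evaluate this chord-wise product. For a fixed chord $c$ of $\vt$ and a fixed $\vmu\ge\vsig$, the two terms of the local factor correspond to inserting into $R$ the up-step or the down-step of $c$; since $\vt$ is a Dyck path the two steps of $c$ lie at positions of opposite parity, so the two terms carry opposite signs and the local factor has the shape $\pm w^{a_c}\bigl(1-w^{b_c}\bigr)$. A careful analysis of $a_c$ and $b_c$ in terms of the positions of the two steps of $c$ relative to the step labeled $1$ and relative to the up/down partition of $\vmu$ should show that $b_c=1$ precisely when exactly one of the two steps of $c$ is an up-step of $\vmu$, and that the local factor \emph{vanishes} otherwise. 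Only those $\vmu\ge\vsig$ compatible with $\vt$ in this chord-by-chord sense therefore survive, and for each of them the $n$ chords together produce $(1-w)^n$ times an explicit monomial in $w$. Combining this with the $\vsig$- and $\vt$-dependent prefactors of $\mathbf{A}_n$ and $\mathbf{B}_n$ — the powers $w^{W(\vt)-R:\vt}$ and $w^{\,n-W(\vsig)-R\cdot\vmu+R|\vsig_{\l(1)}-|D^{\vsig}_{\l(1)}|}$, the area-parity sign $(-1)^{\frac{1}{2}(\mathcal{A}(\vt)-n)}$, and the signs $(-1)^{(n+1)W(\vt)}$ and $(-1)^{\Sigma R+W(\vsig)+n}$ — I would check that all the $w$-monomials cancel (the net exponent collapses to $0$) and that the leftover sign is exactly $(-1)^{\mathrm{area}(\vsig/\vmu)}$, where $\mathrm{area}(\vsig/\vmu)$ is the number of boxes of the skew shape between $\vmu$ and $\vsig$. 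At this stage $\left(\mathbf{B}_n\mathbf{A}_n\right)_{\vsig,\vt}$ has been reduced to $(1-w)^n$ times $\sum_{\vmu\ge\vsig}(-1)^{\mathrm{area}(\vsig/\vmu)}\,\mathrm{ci}(\vsig/\vmu)\,[\,\vmu\text{ compatible with }\vt\,]$.

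The last step is to invoke the inversion identity for cover-inclusive Dyck tilings from \cite{KW11b,SZJ12} (the same one used for the single-face matrix in \cite{KW15}): the signed tiling matrix $\bigl((-1)^{\mathrm{area}(\vsig/\vmu)}\,\mathrm{ci}(\vsig/\vmu)\bigr)_{\vsig,\vmu}$ and the compatibility incidence matrix $\bigl([\,\vmu\text{ compatible with }\vt\,]\bigr)_{\vmu,\vt}$ are triangular with unit diagonal for a common order on cyclic Dyck paths and are mutually inverse; the displayed sum therefore equals $\delta_{\vsig,\vt}$, which gives \eqref{prop5.2eq1}.

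I expect the main obstacle to be the second step: reconciling the \emph{cyclic} combinatorics built into $\mathbf{A}_n$ and $\mathbf{B}_n$ — the choice of the step labeled $1$, the wrapped chords recorded by $W$, and the ``left of step $1$'' counts $R|\vsig_{\l(1)}$ and $|D^{\vsig}_{\l(1)}|$ — with the non-cyclic Dyck tiling identity of the final step, and in particular verifying that every chord contributes a clean factor $(1-w)$ (rather than $1-w^{b_c}$ with $b_c>1$) and that all residual signs assemble into $(-1)^{\mathrm{area}(\vsig/\vmu)}$. A natural consistency check along the way is to confirm, for $n=2$, that the formula for $\mathbf{B}_2$ reproduces $(1-w)^2\,\mathbf{A}_2^{-1}$ with $\mathbf{A}_2^{-1}$ as displayed above, and, for $n=1$, that one recovers the derivation of Schramm's formula in Section~\ref{sec3.2}.
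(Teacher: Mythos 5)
Your overall strategy---restrict to $R\cap\vt$, factorize the sum over $R$ chord by chord, identify the surviving $\vmu$ as those related to $\vt$ by pushing down chords, and finish with the cover-inclusive Dyck tiling inversion identity of \cite{KW11b}---is the same skeleton as the paper's proof (Lemmas~\ref{lmB1}--\ref{lmB4} in Appendix~\ref{a2}). However, the step you single out as "the main obstacle" is not a technicality to be checked at the end: it is where the proof actually lives, and your proposed chord-local dichotomy is false as stated. For a \emph{wrapped} chord $c=(i,j)$ of a cyclic $\vt$ (up step $i$ with $i>j$), the $R$-dependent exponent $-R\cdot\vmu+R|\vsig_{\l(1)}-R:\vt$ changes by an amount $b_c$ that can equal $0,\pm1,\pm2$ or $\pm3$ depending on whether $i,j$ are up steps of $\vmu$ and on their positions relative to the step labeled $1$ of $\vsig$. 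When $|b_c|\ge 2$ the local factor is $\pm w^{a_c}(1-w^{b_c})$, which neither vanishes nor reduces to a clean $(1-w)$ (e.g.\ $1-w^{2}=(1-w)(1+w)$ would leave a spurious $(1+w)$ in the product). Moreover, the required vanishing for $\vt\neq\vsig$ is \emph{not} chord-local: in the paper's Lemma~\ref{lmB3}, the subcases $(iv)$ and $(v)$ must chain through several chords and use global height information ($h_{i-1}(\lambda)$ versus $h_{2n}(\lambda)$) to locate a chord whose local factor is zero. A purely chord-by-chord analysis cannot produce this.

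The paper circumvents the difficulty by a two-stage reduction that you would need to replicate in some form. First it proves the identity when the row index is a \emph{standard} Dyck path $\lambda$, where $W(\lambda)=R|\lambda_{\l(1)}=|D^{\lambda}_{\l(1)}|=0$, so that Lemma~\ref{lmB1} converts $(-1)^{\Sigma R}$ into $(-1)^{R\cdot\vt}$ and Lemma~\ref{lmB3} evaluates the $R$-sum, in particular showing it vanishes for \emph{every} nonstandard cyclic $\vt$; Lemma~\ref{lmB4} then gives $(1-w)^n\delta_{\lambda,\vt}$. Second, for general cyclic $\vsig,\vt$ it rotates all labels by the offset $k$ of the step $1$ of $\vsig$ and shows that the ratio of the rotated to the unrotated summand is an $R$-\emph{independent} monomial, via the identity $R|\vsig_{\l(1)}-R:\vt+R':\vtp=\big|D^{\vt}_{\l(1,\vsig)}\big|$ together with a matching cancellation of signs ($(-1)^{W(\vt)-W(\vtp)}=(-1)^{k}=(-1)^{W(\vsig)}$). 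This $R$-independence is exactly what lets the cyclic corrections be pulled out of the sum and absorbed, and it is the content missing from your plan. Your consistency checks for $n=1,2$ are sensible but will not detect the problem, since wrapped chords with $|b_c|\ge2$ only contribute to entries that must ultimately vanish.
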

\vspace{-0.75cm}
Again we refer to Appendix~\ref{a2} for the proof. This proposition allows one to write winding partition functions for paths in OCRGs as
\begin{equation}
\Zr[\vsig]=\frac{\det\mathbf{\Delta}}{(1-w)^n}\sum_{R\subset\mathcal{U}:\,|R|=n}\left(\mathbf{B}_n\right)_{\vsig,R}\det\Gr_R^S,
\label{Wind_inv}
\end{equation}
where $S=\mathcal{U}\bs R$ and $\Gr=\mathbf{\Delta}^{-1}$ is the line bundle Green function of the graph. Corresponding partition functions $Z[\vsig]$ for paths in spanning forests can be obtained by taking the limit $z\to 1$ (or equivalently $w\to 1$) of $\Zr[\vsig]$. The result is given by
\begin{thm} Let $\vsig$ be a cyclic Dyck path of length $2n$. Then
\begin{equation}
Z[\vsig]=(-1)^{W(\vsig)}\det\Delta\sum_{\vmu\ge\vsig}\mathrm{ci}(\vsig/\vmu)\,\pf\mathcal{M}_{\vmu},
\end{equation}
where the matrix $\mathcal{M}_{\vmu}$ is defined as follows for $1\le i,j\le 2n$:
\begin{equation}
\left(\mathcal{M}_{\vmu}\right)_{i,j}=-G'_{i,j}+G_{i,j}\Big(\mathbf{1}_{i\in U^{\vmu}_{r(1)}}-\mathbf{1}_{i\in D^{\vmu}_{\l(1)}}-\mathbf{1}_{j\in U^{\vmu}_{r(1)}}+\mathbf{1}_{j\in D^{\vmu}_{\l(1)}}\Big).
\end{equation}
Here $\mathbf{1}$ is the indicator function and $U^{\vsig}_{\l,r(1)},D^{\vsig}_{\l,r(1)}$ are given in Definition~\ref{def5.1}.
\label{thm5.1}
\end{thm}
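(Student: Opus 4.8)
The starting point is Equation~\eqref{Wind_inv} of Proposition~\ref{prop5.2}, which has already carried out the inversion $\mathbf{B}_n=\mathbf{A}_n^{-1}$ and thus built in the cover-inclusive Dyck tilings; what remains is to pass to the limit $z\to 1$ (equivalently $w\to 1$). Since $\Zr[\vsig]$ is a sum of finitely many OCRG weights, each a Laurent polynomial in $z$, the limit exists and the prefactor $(1-w)^{-n}$ in \eqref{Wind_inv} is a removable singularity. I would substitute the explicit entries of $\mathbf{B}_n$ from Proposition~\ref{prop5.2} into \eqref{Wind_inv}, move the sum over dominating paths $\vmu\ge\vsig$ (with weights $\mathrm{ci}(\vsig/\vmu)$) and the $R$-independent power $w^{\,n-W(\vsig)-|D^{\vsig}_{\l(1)}|}$ to the outside, so that for each $\vmu$ the task reduces to evaluating $\sum_{R\subset\mathcal{U},\,|R|=n}(-1)^{\Sigma R}\,w^{R|\vsig_{\l(1)}-R\cdot\vmu}\det\Gr_R^S$ with $S=\mathcal{U}\bs R$, and then taking $z\to 1$.

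The key manipulation is that this $R$-weight factorizes over the elements of $R$. Writing $(-1)^{\Sigma R}=(-1)^{n(n+1)/2}\,\mathrm{sgn}(R,S)$, where $\mathrm{sgn}(R,S)$ is the sign of the permutation that lists $R$ then $S$ in increasing order (a routine check gives $\mathrm{sgn}(R,S)=(-1)^{\Sigma R+n(n+1)/2}$), and $w^{R|\vsig_{\l(1)}-R\cdot\vmu}=\prod_{i\in R}d_i$ with $d_i=w^{\mathbf{1}_{i\in L_\vsig}-\mathbf{1}_{i\in U^{\vmu}}}$ (here $L_\vsig$ is the set of step-indices lying to the left of step $1$ in $\vsig$, and $U^{\vmu}$ the set of up-step indices of $\vmu$), the sum becomes a row-scaled sum of complementary minors of $\Gr|_{\mathcal{U}}$. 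I would then invoke the elementary identity
\[
\sum_{R\subset\mathcal{U},\,|R|=n}\mathrm{sgn}(R,\mathcal{U}\bs R)\Big(\prod_{i\in R}d_i\Big)\det M_R^{\mathcal{U}\bs R}=(-1)^{n(n-1)/2}\,\pf(N),\qquad N_{i,j}:=d_i M_{i,j}-d_j M_{j,i},
\]
valid for any $2n\times 2n$ matrix $M$ and scalars $d_i$ (it is the multilinear, row-scaled form of the expansion of a Pfaffian over perfect matchings), applied with $M=\Gr|_{\mathcal{U}}$. This turns the entire inner sum, for each $\vmu$, into a single Pfaffian $\pf(N_{\vmu})$ of the antisymmetric matrix $N_{\vmu,\,i,j}=d_i\Gr_{i,j}-d_j\Gr_{j,i}$.

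Since $d_i(1)=1$ and $G=\Delta^{-1}$ is symmetric, $N_{\vmu}$ vanishes at $z=1$. Using $\Gr_{u,v}=G_{u,v}+(z-1)G'_{u,v}+O((z-1)^2)$ from \eqref{Gp_def} together with $d_i'(1)=2\big(\mathbf{1}_{i\in L_\vsig}-\mathbf{1}_{i\in U^{\vmu}}\big)$, one gets $N_{\vmu}=(z-1)A_{\vmu}+O((z-1)^2)$ with $A_{\vmu,\,i,j}=2G'_{i,j}+2\big(\mathbf{1}_{i\in L_\vsig}-\mathbf{1}_{i\in U^{\vmu}}-\mathbf{1}_{j\in L_\vsig}+\mathbf{1}_{j\in U^{\vmu}}\big)G_{i,j}$. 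Splitting $U^{\vmu}$ into its up steps to the left and to the right of step $1$, i.e. $U^{\vmu}=\big(U^{\vmu}\bs U^{\vmu}_{r(1)}\big)\sqcup U^{\vmu}_{r(1)}$, and using that $\vsig$ and $\vmu$ have the same positions of indices (so the steps left of step $1$ form the common set $L_\vsig=(U^{\vmu}\bs U^{\vmu}_{r(1)})\sqcup D^{\vmu}_{\l(1)}$), the bracket collapses and one finds $A_{\vmu}=-2\,\mathcal{M}_{\vmu}$ exactly. Hence $\pf(N_{\vmu})=(-2)^n(z-1)^n\pf\mathcal{M}_{\vmu}+O((z-1)^{n+1})$, which precisely cancels $(1-w)^{-n}=(-2(z-1))^{-n}(1+O(z-1))$; letting $z\to 1$, the surviving powers of $w$ become $1$, $\det\mathbf{\Delta}\to\det\Delta$, and the accumulated signs $(-1)^{W(\vsig)+n}$ (from $\mathbf{B}_n$), $(-1)^{n(n+1)/2}$ and $(-1)^{n(n-1)/2}$ combine (using $n(n+1)/2+n(n-1)/2=n^2\equiv n\pmod 2$) into the single factor $(-1)^{W(\vsig)}$, giving $Z[\vsig]=(-1)^{W(\vsig)}\det\Delta\sum_{\vmu\ge\vsig}\mathrm{ci}(\vsig/\vmu)\pf\mathcal{M}_{\vmu}$. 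As consistency checks, for $n=2$ this must reproduce the explicit partition functions of Section~\ref{sec5.1}, and summing over the $n+1$ winding classes of a fixed planar pairing must recover the $G'$-free total partition function of Corollary~\ref{cor5.1}.

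The Pfaffian expansion and the shuffle-sign identity above are elementary, and the absence of $G''$ and higher derivatives is automatic: $N_{\vmu}$ is of order $z-1$ while $\pf$ is homogeneous of degree $n$, so only the linear coefficient $A_{\vmu}$ can contribute in the limit. The genuine obstacle is the combinatorial bookkeeping in the third step: one must verify, with the labelling conventions for cyclic Dyck paths fixed once and for all (steps versus heights, the precise meaning of ``to the left of step $1$'', and the interplay of $U^{\vmu}$, $U^{\vmu}_{r(1)}$, $D^{\vmu}_{\l(1)}$ with $L_\vsig$), that $A_{\vmu}$ equals $-2\mathcal{M}_{\vmu}$ on the nose rather than up to a $\vmu$- or index-dependent correction, and that the several sources of signs truly telescope. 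This is where the definitions of $\mathbf{A}_n$ and $\mathbf{B}_n$ from Propositions~\ref{prop5.1}--\ref{prop5.2} and the notations of Definition~\ref{def5.1} must be handled with full care; it constitutes the bulk of the detailed argument given in Appendix~\ref{a2}.
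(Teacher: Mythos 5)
Your proposal is correct and follows essentially the same route as the paper's proof: starting from Eq.~\eqref{Wind_inv}, converting the signed, $w$-weighted sum of complementary minors of $\Gr$ into a single Pfaffian for each $\vmu\ge\vsig$, and then extracting the first-order coefficient in $z-1$ to identify $\mathcal{M}_{\vmu}$. The only cosmetic difference is that you invoke a row-scaled Pfaffian identity with $N_{i,j}=d_iM_{i,j}-d_jM_{j,i}$, whereas the paper uses Lemma~\ref{lm5.5} (from \cite{PW17a}) with the symmetric weighting $\widetilde{A}_{i,j}=A_{i,j}x^{\mathbf{1}_{i\in X}+\mathbf{1}_{j\in Y}-\mathbf{1}_{i\in Y}-\mathbf{1}_{j\in X}}$; the two are related by conjugation with $\mathrm{diag}(d_i^{1/2})$, which produces exactly the prefactor $x^{|X|-|Y|}$ appearing there, and your identification $L_{\vsig}=(U^{\vmu}\bs U^{\vmu}_{r(1)})\sqcup D^{\vmu}_{\l(1)}$ correctly reproduces the paper's substitution $-R\cdot\vmu+R|\vmu_{\l(1)}=|R\cap D^{\vmu}_{\l(1)}|-|R\cap U^{\vmu}_{r(1)}|$.
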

\begin{proof}
Taking the limit $w\to 1$ of Eq.~\eqref{Wind_inv} leads to
\begin{equation}
Z[\vsig]=(-1)^{W(\vsig)}\det\Delta\sum_{\vmu\ge\vsig}\mathrm{ci}(\vsig/\vmu)\lim_{w\to 1}\frac{w^{n-W(\vsig)-\left|D^{\vsig}_{\l(1)}\right|}}{(1-w)^n}\sum_{\substack{R\subset\mathcal{U}\\|R|=n}}(-1)^{n+\Sigma R}w^{-R\cdot\vmu+R|\vsig_{\l(1)}}\det\Gr_R^S,
\label{thm5.1eq1}
\end{equation}
where we used the explicit expression of $\mathbf{B}_n$ given by Proposition~\ref{prop5.2}. The first sum is over cyclic Dyck paths $\vmu\ge\vsig$, so $R|\vsig_{\l(1)}=R|\vmu_{\l(1)}$, as the step indices of $\vmu,\vsig$ appear in the same order. Moreover
\begin{equation}
-R\cdot\vmu+R|\vmu_{\l(1)}=\left|R\cap D^{\vmu}_{\l(1)}\right|-\left|R\cap U^{\vmu}_{r(1)}\right|,
\end{equation}
since $R\cdot\vmu$ counts the up steps of $\vmu$ in $R$ and $R|\vmu_{\l(1)}$ the steps of $\vmu$ to the left of the step 1 that belong to $R$. Substituting this relation in Eq.~\eqref{thm5.1eq1} allows one to compute the sum  over $R$ via the following
\begin{lm}[\hspace{-0.025cm}\cite{PW17a}, Lemma 3.4\hspace{-0.025cm}]
For any matrix $A$ of order $2n$ and any two subsets $X,Y\subset\mathcal{U}$ with $X\cap Y=\emptyset$,
\begin{equation}
\sum_{\substack{R\subset\mathcal{U}\\|R|=n}}(-1)^{n+\Sigma R}x^{2\left(|R\cap X|-|R\cap Y|\right)}\det A_R^S=x^{|X|-|Y|}\pf\left(\widetilde{A}-\widetilde{A}^{\,\mathrm{t}}\right),
\end{equation}
where $S=\mathcal{U}\bs R$ and $R$ are both ordered, and the matrix $\widetilde{A}$ is defined as follows:
\begin{equation}
\widetilde{A}_{i,j}=A_{i,j}\,x^{\mathbf{1}_{i\in X}+\mathbf{1}_{j\in Y}-\mathbf{1}_{i\in Y}-\mathbf{1}_{j\in X}}.
\end{equation}
\label{lm5.5}
\end{lm}
\vspace{-0.5cm}
Picking $A=\Gr$, $w=x^{1/2}$, $X=D^{\vmu}_{\l(1)}$ and $Y=U^{\vmu}_{r(1)}$ in the lemma allows one to rewrite Eq.~\eqref{thm5.1eq1} as
\begin{equation}
\begin{split}
Z[\vsig]&=(-1)^{W(\vsig)}\det\Delta\sum_{\vmu\ge\vsig}\mathrm{ci}(\vsig/\vmu)\lim_{w\to 1}w^{n-W(\vsig)-\left|D^{\vsig}_{\l(1)}\right|+\frac{1}{2}\left|D^{\vmu}_{\l(1)}\right|-\frac{1}{2}\left|U^{\vmu}_{r(1)}\right|}\,\pf\left(\frac{\widetilde{\Gr}-\widetilde{\Gr}^{\,\mathrm{t}}}{1-w}\right)\\
&=(-1)^{W(\vsig)}\det\Delta\sum_{\vmu\ge\vsig}\mathrm{ci}(\vsig/\vmu)\lim_{w\to 1}\pf\left(\frac{\widetilde{\Gr}-\widetilde{\Gr}^{\,\mathrm{t}}}{1-w}\right),
\end{split}
\end{equation}
where the derivative of the matrix $\widetilde{\Gr}$ reads
\begin{equation}
\begin{split}
\partial_w\widetilde{\Gr}_{i,j}\big|_{w=1}&=\partial_w\Gr_{i,j}\big|_{w=1}-\frac{1}{2}\Gr_{i,j}\big|_{w=1}\times\left\{\mathbf{1}_{i\in U^{\vmu}_{r(1)}}-\mathbf{1}_{i\in D^{\vmu}_{\l(1)}}-\mathbf{1}_{j\in U^{\vmu}_{r(1)}}+\mathbf{1}_{j\in D^{\vmu}_{\l(1)}}\right\}\\
&=-\frac{1}{2}\left(\mathcal{M}_{\vmu}\right)_{i,j}.
\end{split}
\end{equation}
The result of the preceding equation relies on the definition of $G,G'$ as the zeroth- and first-order derivative of the line bundle Green function $\Gr$ evaluated at $z=1$:
\begin{equation}
G=\lim_{z\to 1}\Gr=\lim_{w\to 1}\Gr,\quad G'=\lim_{z\to 1}\partial_z\Gr=\lim_{w\to 1}2\,\partial_w\Gr
\end{equation}
for $w=z^2$. Similarly we find $\partial_w\widetilde{\Gr}^{\,\mathrm{t}}_{i,j}\big|_{w=1}=\frac{1}{2}\left(\mathcal{M}_{\vmu}\right)_{i,j}$, which concludes the proof of Theorem~\ref{thm5.1}.
\end{proof}

A particular case of Theorem~\ref{thm5.1} arises for cyclic Dyck paths $\vsig$ of length $2n$ that correspond to a single mountaintop of maximal height $n$ at its peak. Indeed, the sum over cyclic Dyck paths $\vmu$ that dominate $\vsig$ includes only a single term, namely $\vmu=\vsig$. Consequently the winding partition function $Z[\vsig]$ is given (up to a multiplicative constant) by a single Pfaffian, $\pf\mathcal{M}_{\vsig}$. Among such cyclic Dyck paths is the one defined by $\vsig_{\R}=\{(2n,1),(2n{-}1,2),\ldots,(n{+}1,n)\}$, which corresponds to $n$-tuples of paths that all leave the marked face $f$ to their right when canonically oriented (from lower to higher node index). It turns out that $Z[\vsig_{\R}]$ has an exceptionally simple form, as all of its up steps appear to the left of the step 1. Indeed, this implies that $U^{\vsig_{\R}}_{r(1)}=\varnothing=D^{\vsig_{\R}}_{\l(1)}$, so the matrix $\mathcal{M}_{\vsig_{\R}}$ is equal to $-G'|_{\mathcal{U}}$, and $W(\vsig_{\R})=n$ (all chords have an up step with a higher index than the corresponding down step). The theorem therefore yields
\begin{equation}
Z[\vsig_{\R}]=\det\Delta\times\pf\left(G'|_{\mathcal{U}}\right).
\end{equation}

As an immediate corollary of Theorem~\ref{thm5.1}, we may write an explicit expression for the \emph{total} partition function $Z_{\mathrm{t}}[\sigma]$ of all paths between $2n$ nodes paired according to the pairing $\sigma$ (which we also see as a standard Dyck path), corresponding to the sum of the $n{+}1$ winding partition functions for the pairing $\sigma$. The result is as follows:
\begin{cor}[\hspace{-0.025cm}\cite{KW11b}, Theorem 3.2\hspace{-0.025cm}]
Let $\sigma$ be a standard Dyck path of length $2n$. Then
\begin{equation}
Z_{\mathrm{t}}[\sigma]=\det\Delta\sum_{\mu\ge\sigma}(-1)^{\frac{1}{2}\left(\mathcal{A}(\mu)-n\right)}\mathrm{ci}(\sigma/\mu)\det G_{U^{\mu}}^{D^{\mu}},
\end{equation}
where the sets of up and down steps of $\mu$, $U^{\mu}$ and $D^{\mu}$, are sorted in ascending order.
\label{cor5.1}
\end{cor}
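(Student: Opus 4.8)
The plan is to deduce the corollary from Theorem~\ref{thm5.1} by summing over winding classes. Since $Z_{\mathrm{t}}[\sigma]$ is, by definition, the sum of the $n{+}1$ winding partition functions $Z[\vsig]$ for the pairing $\sigma$, the sum running over the $n{+}1$ cyclic Dyck paths $\vsig$ that represent the winding classes of $\sigma$ with respect to $f$, applying Theorem~\ref{thm5.1} to each term gives
\begin{equation}
Z_{\mathrm{t}}[\sigma]=\det\Delta\sum_{\vsig}(-1)^{W(\vsig)}\sum_{\vmu\ge\vsig}\mathrm{ci}(\vsig/\vmu)\,\pf\mathcal{M}_{\vmu},
\end{equation}
and the remaining work is to reorganize this double sum by grouping terms according to the underlying \emph{standard} Dyck path.

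For the reindexing: a cyclic Dyck path $\vmu$ dominating a winding class $\vsig$ of $\sigma$ shares the step positions of $\vsig$, so forgetting the step labels produces a standard Dyck path $\mu\ge\sigma$, and conversely each pair $(\mu\ge\sigma,\ \vsig)$ determines a unique such $\vmu$. Because the number of cover-inclusive Dyck tilings depends only on the shape of the tiled region, $\mathrm{ci}(\vsig/\vmu)=\mathrm{ci}(\sigma/\mu)$, and the expression collapses to
\begin{equation}
Z_{\mathrm{t}}[\sigma]=\det\Delta\sum_{\mu\ge\sigma}\mathrm{ci}(\sigma/\mu)\sum_{\vsig}(-1)^{W(\vsig)}\pf\mathcal{M}_{\vmu},
\end{equation}
where in the inner sum $\vmu$ is the cyclic version of the fixed $\mu$ carrying the labels of $\vsig$. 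It then suffices to prove, for each standard Dyck path $\mu$, the identity $\sum_{\vsig}(-1)^{W(\vsig)}\pf\mathcal{M}_{\vmu}=(-1)^{\frac{1}{2}(\mathcal{A}(\mu)-n)}\det G_{U^{\mu}}^{D^{\mu}}$ with $U^{\mu},D^{\mu}$ sorted in ascending order.

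To establish this identity I would write $\mathcal{M}_{\vmu}=-G'+\mathcal{K}_{\vmu}$, isolating the $G'$ part (which carries all dependence on the zipper, via \eqref{Gp_def}) from the part $\mathcal{K}_{\vmu}$ built entrywise from $G$ and the indicators of Definition~\ref{def5.1}. Since $Z_{\mathrm{t}}[\sigma]$ is an honest count of spanning forests, manifestly independent of the zipper, every term containing a factor $G'$ in $\sum_{\vsig}(-1)^{W(\vsig)}\pf\mathcal{M}_{\vmu}$ must cancel; expanding the Pfaffian and using the antisymmetry $G'_{j,i}=-G'_{i,j}$, one checks that the wrapping sign $(-1)^{W(\vsig)}$ is exactly what pairs off the surviving $G'$-terms as $\vsig$ ranges over the $n{+}1$ labelings, leaving only the purely-$G$ contribution, which assembles into $\det G_{U^{\mu}}^{D^{\mu}}$; the prefactor $(-1)^{\frac{1}{2}(\mathcal{A}(\mu)-n)}$ is the signature of the permutation relating the chord pairing of $\mu$ to these orderings. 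Alternatively one can obtain the identity from Lemma~\ref{lm5.5} with $A=G$, $x=1$, $X=D^{\vmu}_{\l(1)}$, $Y=U^{\vmu}_{r(1)}$, which turns the signed Pfaffian sum into a minor of $G$ without expanding anything.

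The main obstacle is precisely this collapse: the reindexing is bookkeeping once one notices that $\mathrm{ci}$ is a shape invariant, but showing that the $G'$-terms cancel in the signed sum of $n{+}1$ Pfaffians, together with pinning down the sign $(-1)^{\frac{1}{2}(\mathcal{A}(\mu)-n)}$, needs care. A cleaner route I would try first is to keep the line bundle Green function $\Gr$ throughout: sum Eq.~\eqref{Wind_inv} over the winding classes of $\sigma$, show that $\sum_{\vsig}(\mathbf{B}_n)_{\vsig,R}$ is divisible by $(1-w)^{n}$, and read off the limit $w\to1$ — effectively re-running the proof of Theorem~\ref{thm5.1} with one extra outer sum, after which the cancellations become transparent.
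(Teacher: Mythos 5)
Your strategy---apply Theorem~\ref{thm5.1} to each of the $n{+}1$ winding classes of $\sigma$ and resum---is not the paper's, and it breaks down at the reindexing step. You claim that forgetting the step labels of a $\vmu\ge\vsig$ yields a standard Dyck path $\mu\ge\sigma$ and that the pairs $(\vsig,\vmu)$ are in bijection with the pairs $(\mu\ge\sigma,\vsig)$. This is false, because distinct winding classes of the \emph{same} pairing $\sigma$ correspond to cyclic Dyck paths with \emph{different mountain-range shapes}, so the sets $\{\vmu:\vmu\ge\vsig\}$ genuinely depend on $\vsig$. Concretely, for $\sigma=14|23$ the class $\L\L$ is the maximal shape $\mathrm{UUDD}$ (dominated only by itself), the class $\R\R$ is again $\mathrm{UUDD}$ with rotated labels, but the class $\R\L$ is the shape $\mathrm{UDUD}$ (see the bottom panel of Fig.~\ref{cyc_Dyck_ex}), which is dominated by two paths; the double sum therefore has four terms while your reorganized sum $\sum_{\mu\ge\sigma}\sum_{\vsig}$ has three, and the term $\vmu=\mathrm{UDUD}$ does not satisfy $\mu\ge\sigma$ after forgetting labels. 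The subsequent cancellation of the $G'$-terms is asserted rather than proved, and Lemma~\ref{lm5.5} cannot do the job in the way you suggest: it resums over the subsets $R$, not over the winding classes $\vsig$, and it has already been spent inside the proof of Theorem~\ref{thm5.1} to produce the Pfaffians you are now trying to combine.

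The paper's proof sidesteps all of this with a single observation: since $Z_{\mathrm{t}}[\sigma]$ does not depend on the marked face, one may take $f$ to be the outer face of the graph. The zipper is then empty, so $G'=0$, there is only one winding class, and $\vsig=\sigma$ is a standard Dyck path with $W(\sigma)=0$, $U^{\mu}_{r(1)}=U^{\mu}$ and $D^{\mu}_{\l(1)}=\varnothing$. Theorem~\ref{thm5.1} immediately gives $Z_{\mathrm{t}}[\sigma]=\det\Delta\sum_{\mu\ge\sigma}\mathrm{ci}(\sigma/\mu)\,\pf\mathcal{M}_{\mu}$ with $(\mathcal{M}_{\mu})_{i,j}=G_{i,j}\big(\mathbf{1}_{i\in U^{\mu}}-\mathbf{1}_{j\in U^{\mu}}\big)$, which after permuting rows and columns so that $U^{\mu}$ precedes $D^{\mu}$ becomes block-antidiagonal; its Pfaffian is $\pm\det G_{U^{\mu}}^{D^{\mu}}$, and the sign $(-1)^{\frac{1}{2}(\mathcal{A}(\mu)-n)}$ is extracted by tracking the signature of that permutation under elementary push-down moves starting from the maximal Dyck path. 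Your instinct that the $G'$-dependence must disappear from the total partition function is correct, but the clean way to exploit it is to kill $G'$ at the outset by the choice of $f$, not to engineer cancellations across winding classes after the fact.
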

\begin{proof} To compute the total partition function indexed by $\sigma$, it suffices to take the reference face $f$ as the outer face of the graph. This amounts to setting $G'=0$, so that Theorem~\ref{thm5.1} yields
\begin{equation}
Z_{\mathrm{t}}[\sigma]=\det\Delta\sum_{\mu\ge\sigma}\mathrm{ci}(\lambda/\mu)\,\pf\mathcal{M}_{\mu},
\end{equation}
where $\left(\mathcal{M}_{\mu}\right)_{i,j}=G_{i,j}\left(\mathbf{1}_{i\in U^{\mu}}-\mathbf{1}_{j\in U^{\mu}}\right)$. In other words, $\left(\mathcal{M}_{\mu}\right)_{i,j}$ is equal to $+G_{i,j}$ (resp. $-G_{i,j}$) if $i$ is an up (resp. down) step and $j$ a down (resp. up) step of $\mu$ and 0 otherwise. Let us then define the matrix $\mathcal{M}_{\pi(\mu)}$ obtained by permuting the rows and columns of $\mathcal{M}_{\mu}$ such that the indices in $U^{\mu}$ appear first and those in $D^{\mu}$ come second:
\begin{equation}
\mathcal{M}_{\mu}=P_{\pi(\mu)}\mathcal{M}_{\pi(\mu)}P_{\pi(\mu)}^{\mathrm{t}},
\end{equation}
where $\pi(\mu)$ denotes the appropriate permutation of the indices of $\mathcal{U}=\{1,2,\ldots,2n\}$ and $P_{\pi(\mu)}$ the corresponding permutation matrix. As $\mathcal{M}_{\pi(\mu)}$ is an antidiagonal block matrix, its Pfaffian can be written as
\begin{equation}
\pf\mathcal{M}_{\pi(\mu)}=(-1)^{\frac{1}{2}n(n-1)}\det G_{U^{\mu}}^{D^{\mu}}.
\end{equation}
To determine the signature of the permutation $\pi(\mu)$, consider first the maximal standard Dyck path $\mu_{\text{max}}$ whose chords are given by $(1,2n),(2,2n{-}1),\ldots,(n,n{+}1)$. The permutation $\pi(\mu_{\text{max}})$ is simply the identity. Assume next that $\mu,\mu'$ are two standard Dyck paths such that $\mu'$ is obtained from $\mu$ by turning an up step $k$ into a down step of $\mu'$, and the adjacent down step $k{+}1$ into an up step of $\mu'$ (note that all Dyck paths contain at least one chord between two adjacent steps). It follows that $\pi(\mu)$ and $\pi(\mu')$ have opposite signatures, and $\mathcal{A}(\mu')=\mathcal{A}(\mu)-2$.

Let then $\mu$ be a standard Dyck path. It is clear that it can be obtained from $\mu_{\text{max}}$ by $N$ elementary reversal operations of the type described above. Using $\mathcal{A}(\mu)=\mathcal{A}(\mu_{\text{max}})-2N=n^2-2N$, one finds the relation
\begin{equation}
\det P_{\pi(\mu)}=(-1)^N\det P_{\pi(\mu_{\text{max}})}=(-1)^{\frac{1}{2}(n^2-\mathcal{A}(\mu))},
\end{equation}
from which the result of the corollary follows.
\end{proof}

Finally, it should be noted that Theorem~\ref{thm5.1} can be adapted to graphs with free boundary conditions (i.e. in which the root $s$ is isolated, or equivalently, absent), as discussed in Section~\ref{sec2.3}. Using Proposition~\ref{grove_thm2}, one may compute the winding partitions $Z[\vsig]$ on such graphs simply by substituting
\begin{equation*}
G_{i,j}\to G_{i,j}+t^{-1}\N^{-1},\quad G'_{i,j}\to G'_{i,j}+t^{-1}\N^{-1}\widetilde{G}'_{i,j},
\end{equation*}
in Theorem~\ref{thm5.1}, where the functions $G'$ and $\widetilde{G}'$ on the right-hand side are defined in Eq.~\eqref{Gp_def2}, and then by taking the limit $t\to 0$.


\section{Correspondence with loop-erased random walks}
\label{sec6}

In this section, we recall the definitions of the standard random walk (SRW) and the loop-erased random walk (LERW) on an unoriented connected graph $\G_s$. The connection between the latter and the random spanning tree has been known for many years (\hspace{-0.025cm}\cite{Pem91} for the uniform distribution, \cite{Wil96} for a generic weighted one): for any two vertices $u,v$ of the graph, the distribution of the LERW started at $u$ and stopped at $v$ is the same as the distribution of the unique chemical path between these two vertices in the random spanning tree. This correspondence is especially apparent through Wilson's algorithm:
\begin{thm}[\hspace{-0.025cm}\cite{Wil96}, Theorem 1\hspace{-0.025cm}]
Let $\G_s$ be an unoriented connected graph, and let $s,v_1,v_2,\ldots,v_{|\V|}$ be an enumeration of its vertices. Define $\mathcal{T}_0$ as the degenerate tree consisting in the vertex $s$, and $\mathcal{T}_j$ as the tree obtained as the union of $\mathcal{T}_{j-1}$ and the loop erasure of a SRW from $v_j$ to $\mathcal{T}_{j-1}$, for $1\le j\le|\V|$. Then $\mathcal{T}_{|\V|}$ is a spanning tree on $\G_s$, occurring with the probability induced by \eqref{ST_weight}.
\label{Wil_alg}
\end{thm}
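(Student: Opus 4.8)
The plan is to prove the theorem via Wilson's \emph{stack} (cycle-popping) representation of the random walk. First I would attach to every non-root vertex $v$ an infinite stack of independent arrows, the $k$-th arrow at $v$ pointing to a neighbour $w$ with probability $c_{v,w}/C_v$, where $C_v=\sum_{w}c_{v,w}$, all choices independent over $v$ and $k$. At any stage the topmost arrows of the stacks define a map $v\mapsto\mathcal{T}(v)$ from the non-root vertices into $\V\cup\{s\}$, hence a functional graph. The popping step is: if the visible functional graph contains an oriented cycle $\alpha$, discard the top arrow at each vertex of $\alpha$ (exposing the next one) and repeat; stop when no visible cycle remains. When the procedure stops, every non-root vertex has out-degree one and there is no oriented cycle, so the visible arrows form a spanning tree of $\G_s$ oriented toward $s$; read as an unoriented subgraph this is the candidate output.

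The crucial structural input is the \emph{order-independence} of popping. Since the visible arrows form a functional graph (out-degree one at each non-root vertex), any two distinct visible cycles are automatically vertex-disjoint, so popping one does not touch the other; hence pops of distinct cycles commute and a cycle that is poppable at some stage stays poppable until it is popped. By a standard confluence (diamond-lemma) argument the procedure is therefore confluent: the multiset of cycles popped, recorded with the stack level at which each is popped, the final state, and in particular whether the procedure terminates, are all independent of the order of pops. Almost-sure termination I would get from finiteness and connectedness of $\G_s$: the underlying walk is irreducible, so with probability one every stack eventually exposes an arrow which, followed forward through top arrows, leads to $s$, and then no unbounded popping sequence is possible.

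Next I would identify Wilson's LERW construction with one legal popping order. Running a simple random walk from $v_1$ amounts to reading off successive top arrows of the stacks; each time the walk closes a loop, erasing that loop is exactly popping the corresponding visible cycle, and stopping the walk when it first meets $\mathcal{T}_0=\{s\}$ freezes the arrows along the loop-erased path, which is $\mathcal{T}_1$. Iterating from $v_2,\dots,v_{|\V|}$, stopping each walk on hitting the current tree, carries out further legal pops, so $\mathcal{T}_{|\V|}$ coincides with the oriented tree produced by cycle popping; and since each such walk a.s.\ reaches the current tree containing $s$ in finite time and the loop-erased path meets that tree only at its endpoint, $\mathcal{T}_{|\V|}$ is indeed a spanning tree. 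By order-independence, the law of $\mathcal{T}_{|\V|}$ does not depend on the enumeration $v_1,\dots,v_{|\V|}$ and equals the law of the cycle-popping tree.

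Finally I would compute the law. Fix an oriented spanning tree $\mathcal{T}$ (edges toward $s$) and a finite admissible family $\mathcal{C}$ of popped cycles with levels. The event ``cycle popping pops exactly $\mathcal{C}$ and terminates at $\mathcal{T}$'' prescribes the arrow in each stack position used, so its probability is $\big(\prod_{v\neq s} c_{v,\mathcal{T}(v)}/C_v\big)$ times a factor depending only on $\mathcal{C}$; summing over $\mathcal{C}$ gives $\mathbb{P}[\mathcal{T}_{|\V|}=\mathcal{T}]=Q\prod_{v\neq s} c_{v,\mathcal{T}(v)}/C_v$ with $Q$ independent of $\mathcal{T}$. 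Since $\prod_{v\neq s}C_v$ is a constant and the oriented edges of $\mathcal{T}$ biject with its unoriented edges, $\mathbb{P}[\mathcal{T}_{|\V|}=\mathcal{T}]\propto\w(\mathcal{T})=\prod_{\{u,v\}\in\mathcal{T}}c_{u,v}$; normalising and invoking the matrix-tree theorem (Theorem~\ref{MTT}) pins the constant to $1/\det\Delta$, i.e.\ exactly the distribution induced by \eqref{ST_weight}. The main obstacle is the order-independence lemma together with the clean identification of loop erasure with a legal popping order; once those are in place, the probability count is routine bookkeeping.
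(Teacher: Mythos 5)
Your proposal is correct, but it is not the route taken in the paper: you have reconstructed Wilson's original stack/cycle-popping proof, whereas Section~\ref{sec6} deliberately presents an ``alternative proof based on explicit formulas.'' Concretely, the paper first derives the closed-form LERW weight $\w_{\text{LERW}}(\gamma_s)=\w_{\text{SRW}}(\gamma_s)\,\det\bigl(\mathbb{I}-A_s^{(n-1)}\bigr)/\det\bigl(\mathbb{I}-A_s\bigr)$ by decomposing a walk into loops attached to the successive vertices of its loop erasure and telescoping the resulting ratios of determinants (Eq.~\eqref{LERW_weight}), and then matches this with the two-component spanning-forest weight $C(\gamma)\det\Delta^{(\gamma)}$ via the factorization $\Delta=M(\mathbb{I}-A)$ (Proposition~\ref{prop6.1}). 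The two approaches buy different things. Your cycle-popping argument proves the full algorithmic statement directly --- almost-sure termination, independence of the vertex enumeration, and the weighted spanning-tree law all come out of the order-independence (confluence) lemma --- and is self-contained modulo that lemma; the one place where your bookkeeping is compressed is the claim that the normalizing factor $Q$ is independent of $\mathcal{T}$, which requires observing that the admissible families of popped colored cycles do not depend on the tree left underneath, but this follows from the factorization you describe. The paper's determinantal route, by contrast, yields an explicit formula for the marginal law of each loop-erased path, which is exactly what generalizes verbatim to $n$ mutually avoiding walks in Proposition~\ref{prop6.2} and what plugs into the Green-function machinery used for Schramm's formula elsewhere in the paper; extracting such marginal formulas from the cycle-popping picture would require additional work.
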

We provide here an alternative proof based on explicit formulas for the probability measures for the LERW \cite{Mar00,Bau12} and for paths in spanning forests, given by Eq.~\eqref{P_SF}. This derivation allows one to generalize in a straightforward manner the correspondence for multiple walks, whose joint probability measure was given in \cite{Hag09b}.


\subsection{Single walks}
\label{sec6.1}

Let us consider an unoriented connected graph $\G$ with vertices $\V$ and edges $\E$. Let $s$ be an additional vertex connected by a set of edges $\E_s$ to a subset of vertices $\mathcal{D}\subset\V$ in the extended graph $\G_s$, as in Section~\ref{sec2}. We assume for now that $\mathcal{D}$ is nonempty, and discuss the case $\mathcal{D}=\varnothing$ below (see Eq.~\eqref{P_LERW_free}). Let $A_s$ be a transition matrix for $\G_s$ with a root $s$, i.e. such that
\vspace{-0.3cm}
\begin{itemize}
\item $(A_s)_{u,v}\ge 0$ for any $u\in\V$, $v\in\V\cup\{s\}$, with $(A_s)_{u,v}>0$ if and only if $\{u,v\}\in\E\cup\E_s$;
\item $(A_s)_{s,v}=0$ for any $v\in\V\cup\{s\}$;
\item $\sum_{v}(A_s)_{u,v}=1$ for any $u\in\V$, where the sum is over all vertices of $\G_s$, including $s$.
\end{itemize}
\vspace{-0.3cm}
The probabilistic interpretation of the SRW is the following: starting at $v_0$, a walker moving on the edges of the graph goes toward one of the neighbors $v_1$ of $v_0$, with probability $(A_s)_{v_0,v_1}$. The walker continues moving from $v_1$ to one of its neighbors $v_2$ with probability $(A_s)_{v_1,v_2}$. The process is iterated until the walker reaches the root $s$ and remains there forever, since $(A_s)_{s,v}=0$ for any $v\in\V$.

A \emph{walk} or \emph{path} $\omega_s$ on $\G_s$ starting from a vertex $v_0\neq s$ and ending at $s$ is defined as a collection of vertices $(v_0,v_1,\ldots,v_n{=}s)$ such that $(v_i,v_{i+1})\in\E$ for $0\le i\le n{-}2$, and $(v_{n-1},v_n)\in\E_s$. Its SRW weight is defined by
\begin{equation}
\w_{\text{SRW}}(\omega_s)=\prod_{i=0}^{n-1}(A_s)_{v_i,v_{i+1}}.
\label{SRW_weight}
\end{equation}
In what follows, we shall consider a SRW starting at a given vertex $u_1$ and reaching the root $s$ through a fixed edge $(u_2,s)\in\E_s$, with $u_2\in\mathcal{D}$. The partition function can be computed by summing first over all walks of length $n$, and then summing over all positive integers $n$. Its result reads
\begin{equation}
\sum_{\substack{\omega_s:u_1\to s\\ \text{with }(u_2,s)\in\omega_s}}\w_{\text{SRW}}(\omega_s)=\sum_{n=0}^{\infty}(A_s^n)_{u_1,u_2}(A_s)_{u_2,s}=\left[(\mathbb{I}-A_s)^{-1}\right]_{u_1,u_2}\times (A_s)_{u_2,s}.
\label{Loop_weight}
\end{equation}
The convergence of the geometric series follows from the fact that the spectral radius $\rho(A)$ is strictly less than one, where $A$ is the submatrix of $A_s$ with the row and column indexed by $s$ removed. Indeed, $A$ is positive and irreducible (since $\G$ is connected and unoriented). Due to the Perron-Frobenius theorem, $A$ has an eigenvalue $r=\rho(A)$ associated with a left-eigenvector $\mathbf{p}$ whose entries are strictly positive. We can assume without loss of generality that $|\mathbf{p}|_1=\sum_{u}p_u=1$. It follows that
\begin{equation}
\begin{split}
r&=|r\mathbf{p}|_1=|\mathbf{p}A|_1=\sum_{u}\sum_{v}p_u A_{u,v}=\sum_{u}\sum_{v\neq s}p_u(A_s)_{u,v}\\
&=\sum_{u}p_u\sum_{v}(A_s)_{u,v}-\sum_{u}p_u(A_s)_{u,s}=1-\sum_{u}p_u(A_s)_{u,s}<1,
\end{split}
\end{equation}
since $p_u>0$ and there exists at least one $u$ such that $(A_s)_{u,s}>0$. It is then straightforward to see that $\rho(A)<1$ is equivalent to $\sum_{n\ge 0}A_s^n<\infty$ using the Jordan canonical form of $A_s$.

The \emph{loop-erased random walk} (LERW) was introduced in \cite{Law80} as an example of a random process that produces simple walks (i.e. with no self-intersection) and that is easier to treat analytically than the canonical self-avoiding walk; the difference between the two residing in their respective probability measures. The loop erasure of a walk $\omega_s$ is obtained by chronologically erasing loops along $\omega_s$, thus yielding a simple path. This procedure is equivalent to the following (shorter) algorithm. Let $\omega_s=(v_0,v_1,\ldots,v_n{=}s)$ be a walk on the graph $\G_s$. The loop erasure of $\omega_s$, denoted by $\gamma_s=\text{LE}(\omega_s)=(v_{i_0},v_{i_1},\ldots,v_{i_J})$, is defined inductively by
\begin{equation}
i_0=0,\quad i_{j+1}=\max_{0\le k\le n}\{k:v_k=v_{i_j}\}+1,
\label{LE_alg}
\end{equation}
which stops at $j=J$ such that $v_{i_J}=s$. An illustration of the loop erasure procedure is provided in Fig.~\ref{Loop_erasure}. The LERW weight of a simple path $\gamma_s$ is defined as the sum of the weights of all walks $\omega_s$ whose loop erasure yields $\gamma_s$:
\begin{equation}
\w_{\text{LERW}}(\gamma_s)=\sum_{\omega_s:\,\text{LE}(\omega_s)=\gamma_s}\w_{\text{SRW}}(\omega_s).
\label{LERW_weight_def}
\end{equation}

To compute an explicit form for the LERW weight in terms of the matrix $A_s$, one may use the algorithm defined in Eq.~\eqref{LE_alg} to decompose a walk $\omega_s$ such that $\text{LE}(\omega_s)=\gamma_s=(v_0,v_1,\ldots,v_n{=}s)$ as follows. If we define $\G_s^{(j)}\equiv\G_s\bs\{v_0,v_1,\ldots,v_{j}\}$, then $\omega_s$ can be seen as a (possibly trivial) loop attached to $v_0$ on $\G_s$, followed by the edge $(v_0,v_1)$; a loop attached to $v_1$ on $\G_s^{(0)}$ comes next, followed by the edge $(v_1,v_2)$; and so on. The process is iterated until the edge $(v_{n-1},v_n{=}s)$ is used, at which point the walk is stopped. If $A_s^{(j)}$ denotes the restriction of the weight matrix $A_s$ to the vertices of $\G_s^{(j)}$, then
\begin{equation}
\sum_{\substack{\omega_j:v_j\to v_j\\ \text{on }\G_s^{(j-1)}}}\w_{\text{SRW}}(\omega_j)=\left[\left(\mathbb{I}-A_s^{(j-1)}\right)^{-1}\right]_{v_j,v_j},
\end{equation}
where the sum is over all loops attached to $v_j$ conditioned to not intersect $\{v_0,v_1,\ldots,v_{j-1}\}$. The preceding equation allows one to express Eq.~\eqref{LERW_weight_def} as
\begin{equation}
\begin{split}
\w_{\text{LERW}}(\gamma_s)&=\left[\left(\mathbb{I}-A_s\right)^{-1}\right]_{v_0,v_0}(A_s)_{v_0,v_1}\left[\left(\mathbb{I}-A_s^{(0)}\right)^{-1}\right]_{v_1,v_1}(A_s)_{v_1,v_2}\left[\left(\mathbb{I}-A_s^{(1)}\right)^{-1}\right]_{v_2,v_2}\\
&\quad\times\ldots\times(A_s)_{v_{n-2},v_{n-1}}\left[\left(\mathbb{I}-A_s^{(n-2)}\right)^{-1}\right]_{v_{n-1},v_{n-1}}(A_s)_{v_{n-1},v_n}.
\end{split}
\end{equation}
Using Cramer's formula for the inverse of a matrix yields telescopic cancellations, and the result simplifies to \cite{Mar00,Bau12,Law18}
\begin{equation}
\begin{split}
\w_{\text{LERW}}(\gamma_s)&=\prod_{i=0}^{n-1}(A_s)_{v_i,v_{i+1}}\times\frac{\det\left(\mathbb{I}-A_s^{(0)}\right)}{\det\left(\mathbb{I}-A_s\right)}\times\frac{\det\left(\mathbb{I}-A_s^{(1)}\right)}{\det\left(\mathbb{I}-A_s^{(0)}\right)}\times\ldots\times\frac{\det\left(\mathbb{I}-A_s^{(n-1)}\right)}{\det\left(\mathbb{I}-A_s^{(n-2)}\right)}\\
&=\w_{\text{SRW}}(\gamma_s)\times\frac{\det\left(\mathbb{I}-A_s^{(n-1)}\right)}{\det\left(\mathbb{I}-A_s\right)}.
\label{LERW_weight}
\end{split}
\end{equation}


\begin{figure}[h]
\centering
\begin{tikzpicture}[scale=0.7]
\draw[densely dotted] (-0.5,-0.5) grid (10.5,5.5);
\draw (-0.5,-0.5) rectangle (10.5,5.5);
\filldraw (2,0) circle (0.125cm);
\filldraw (4,5.5) circle (0.125cm);
\filldraw (4,5) circle (0.125cm);
\node at (2.5,0.25) {\Large $u_1$};
\node at (3.625,4.625) {\Large $u_2$};
\node at (10.75,2.5) {\Large $s$};
\draw[postaction={on each segment={mid arrow}},thick,densely dashed,blue] (2,0)--(1,0)--(1,2)--(3,2)--(3,1)--(4,1)--(4,3)--(2,3)--(2,1)--(0,1)--(0,3)--(1,3)--(1,4)--(3,4)--(3,3)--(5,3)--(5,2)--(6,2)--(6,1)--(5,1)--(5,2)--(6,2)--(6,3)--(7,3)--(7,5)--(9,5)--(9,4)--(10,4)--(10,3)--(8,3)--(8,1)--(9,1)--(9,2)--(7,2)--(7,4)--(5,4)--(5,5)--(4,5)--(4,5.5);
\draw[ultra thick,blue] (2,0)--(1,0)--(1,1)--(0,1)--(0,3)--(1,3)--(1,4)--(3,4)--(3,3)--(5,3)--(5,2)--(6,2)--(6,3)--(7,3)--(7,4)--(5,4)--(5,5)--(4,5)--(4,5.5);
\end{tikzpicture}
\caption{Walk (dashed) from $u_1$ to the root $s$ (represented by the outer box) on a wired rectangular grid, that passes through the edge $(u_2,s)$. The simple path obtained by erasing all loops as they appear is drawn with a heavy line.}
\label{Loop_erasure}
\end{figure}
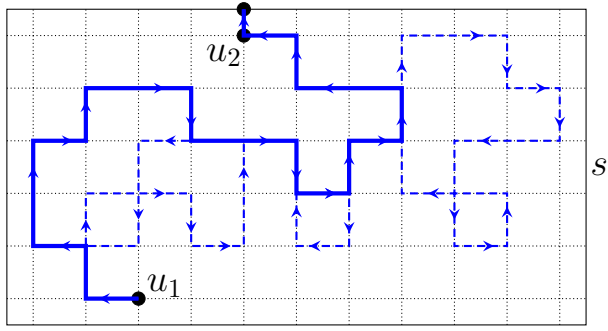

Let us now consider a loop-erased random walk on $\G_s$ starting at a fixed vertex $u_1\in\G$, and reaching $s$ through a specified edge $(u_2,s)$. We denote the corresponding simple paths by $u_1\to u_2\to s$. The partition function reads
\begin{equation}
\sum_{\gamma_s:u_1\to u_2\to s}\w_{\text{LERW}}(\gamma_s)=\sum_{\gamma_s:u_1\to u_2\to s}\sum_{\omega_s:\,\text{LE}(\omega_s)=\gamma_s}\w_{\text{SRW}}(\omega_s)=\sum_{\omega_s:u_1\to u_2\to s}\w_{\text{SRW}}(\omega_s),
\end{equation}
whose result is given by \eqref{Loop_weight}. The LERW probability measure on simple paths $\gamma_s:u_1\to u_2\to s$ is therefore given by
\begin{equation}
\P_{\text{LERW}}(\gamma_s)=\w_{\text{SRW}}(\gamma_s)\frac{\det\left(\mathbb{I}-A_s^{(\gamma)}\right)}{\det(\mathbb{I}-A_s)[\left(\mathbb{I}-A_s\right)^{-1}]_{u_1,u_2}(A_s)_{u_2,s}},
\label{P_LERW_s}
\end{equation}
where $\gamma\equiv\gamma_s\bs\{s\}$ and $A_s^{(\gamma)}$ denotes the restriction of $A_s$ to the rows and columns indexed by vertices not belonging to $\gamma$. Note that as the factor $(A_s)_{u_2,s}$ appears both in $\w_{\text{SRW}}(\gamma_s)$ and at the denominator, it can be simplified. Seeing this, it is natural to rewrite this measure for simple paths $\gamma:u_1\to u_2$ on the graph $\G$, which are in one-to-one correspondence with simple paths $\gamma_s:u_1\to s$ passing through $(u_2,s)$ on $\G_s$. If we let $A$ denote the submatrix of $A_s$ obtained by removing the row and column indexed by $s$, then
\begin{equation}
\mathbb{I}-A_s=\bordermatrix{& s & \V\cr
& 1 & \mathbf{0}\cr
& * & \mathbb{I}-A
},
\end{equation}
where $*$ and $\mathbf{0}$ represent $\N\times 1$ and $1\times \N$ vectors, respectively ($\N=|\V|$). From this equation, one may rewrite Eq.~\eqref{P_LERW_s} as follows for simple paths $\gamma_s=\gamma\cup\{s\}$ from $u_1\to u_2\to s$:
\begin{equation}
\begin{split}
\P_{\text{LERW}}(\gamma)&\equiv\P_{\text{LERW}}(\gamma_s)=\w_{\text{SRW}}(\gamma_s)\frac{\det\left(\mathbb{I}-A_s^{(\gamma)}\right)}{\det(\mathbb{I}-A_s)[\left(\mathbb{I}-A_s\right)^{-1}]_{u_1,u_2}(A_s)_{u_2,s}}\\
&=\w_{\text{SRW}}(\gamma)\frac{\det\left(\mathbb{I}-A^{(\gamma)}\right)}{\det(\mathbb{I}-A)[(\mathbb{I}-A)^{-1}]_{u_1,u_2}}.
\label{P_LERW}
\end{split}
\end{equation}
With Eq.~\eqref{P_LERW}, one has a probability measure on simple paths on the graph $\G$ from $u_1$ to $u_2$, in which any explicit reference to the root $s$ has been eliminated. While it is often more convenient to use this equation rather than Eq.~\eqref{P_LERW_s} for what follows, one should remember that the presence of $s$ translates into $A$ being substochastic, i.e its rows indexed by vertices $u\in\mathcal{D}$ (the neighbors of $s$ on $\G_s$) sum to strictly less than one (which ensures in particular the invertibility of $\mathbb{I}-A$).

Let us now show that Eq.~\eqref{P_LERW} coincides with the measure on paths in two-component spanning forests, defined in \eqref{P_SF}.
\begin{prop}
Let $\gamma=(v_0{=}u_1,v_1,\ldots,v_{n-1},v_n{=}u_2)$ be a simple path on $\G$, and $A$ be the transition matrix defined by
\begin{equation}
A_{u,v}=\frac{c_{u,v}}{\deg_s(u)}
\label{AC_rel}
\end{equation}
if $(u,v)\in\E$ and 0 otherwise. Here $\deg_s(u)=\sum_{w\sim u}c_{u,w}$ is the (weighted) degree of $u$ in $\G_s$, where the sum is over all neighbors of $u$ in $\G_s$. Then
\begin{equation}
\P_{\mathrm{LERW}}(\gamma)=\P_{\mathrm{SF}}(\gamma).
\end{equation}
As a consequence, any simple path $\gamma$ and the reverse path $\gamma^{-1}$ occur with the same probability with respect to the LERW measure.
\label{prop6.1}
\end{prop}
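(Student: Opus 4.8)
The plan is to identify the explicit loop-erased-walk measure \eqref{P_LERW} with the spanning-forest measure \eqref{P_SF} by rewriting both in terms of the Dirichlet Laplacian $\Delta$, keeping careful track of the diagonal ``degree'' factors that relate the transition matrix $A$ to $\Delta$. The key observation is that with the choice \eqref{AC_rel} one has $\Delta=D\,(\mathbb{I}-A)$, where $D$ is the diagonal matrix with entries $D_{u,u}=\deg_s(u)$: indeed $\Delta_{u,u}=\deg_s(u)$ and, for $\{u,v\}\in\E$, $\Delta_{u,v}=-c_{u,v}=-\deg_s(u)\,A_{u,v}$. Because $D$ is diagonal, this factorization survives the deletion of any set of rows and columns, so for the vertex set of a simple path $\gamma=(v_0,\dots,v_n)$ one also has $\Delta^{(\gamma)}=D^{(\gamma)}(\mathbb{I}-A^{(\gamma)})$, with $D^{(\gamma)}$ the corresponding diagonal minor of $D$.

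From $\Delta=D(\mathbb{I}-A)$ I would extract the handful of identities needed. On the level of determinants, $\det\Delta=\det D\,\det(\mathbb{I}-A)$ and $\det\Delta^{(\gamma)}=\det D^{(\gamma)}\det(\mathbb{I}-A^{(\gamma)})$, with $\det D=\prod_{u\in\V}\deg_s(u)$ and $\det D^{(\gamma)}=\det D\,\big/\prod_{i=0}^{n}\deg_s(v_i)$ since $\gamma$ carries $n{+}1$ vertices. On the level of Green functions, $\Delta^{-1}=(\mathbb{I}-A)^{-1}D^{-1}$, so $G_{u_1,u_2}=[(\mathbb{I}-A)^{-1}]_{u_1,u_2}\big/\deg_s(u_2)$; the degree of the endpoint enters precisely here. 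Finally, for the weights, $\w_{\text{SRW}}(\gamma)=\prod_{i=0}^{n-1}A_{v_i,v_{i+1}}=\prod_{i=0}^{n-1}c_{v_i,v_{i+1}}\big/\deg_s(v_i)=C(\gamma)\big/\prod_{i=0}^{n-1}\deg_s(v_i)$.

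Substituting all of these into \eqref{P_LERW} is then a matter of bookkeeping: the factors of $\det D$ cancel between numerator and denominator, and the remaining degree factors combine as $\big(\prod_{i=0}^{n}\deg_s(v_i)\big)\big/\big(\prod_{i=0}^{n-1}\deg_s(v_i)\cdot\deg_s(u_2)\big)=1$, leaving exactly $\P_{\text{LERW}}(\gamma)=C(\gamma)\det\Delta^{(\gamma)}\big/\big(G_{1,2}\det\Delta\big)$, which is \eqref{P_SF}. The reversibility statement is then immediate, and is really the point of the proposition: by the definition given in Section~\ref{sec3.1}, $\P_{\text{SF}}(\gamma:u_1\to u_2)=\P_{\text{SF}}(\gamma^*:u_1\leftrightarrow u_2)$ depends only on the underlying unoriented path, so the equality $\P_{\text{LERW}}=\P_{\text{SF}}$ forces $\P_{\text{LERW}}(\gamma)=\P_{\text{LERW}}(\gamma^{-1})$, which is not obvious directly from Wilson's algorithm.

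The only genuine care point is the combinatorial bookkeeping: the walk contributes $n$ transition probabilities while $\gamma$ carries $n{+}1$ vertices, the factor $(A_s)_{u_2,s}$ present in \eqref{P_LERW_s} must be seen to drop out when one passes to the root-free form \eqref{P_LERW}, and one must keep the sidedness of the diagonal multiplication straight — from $\Delta=D(\mathbb{I}-A)$ it is $\deg_s(u_2)$, the degree of the \emph{endpoint}, and not $\deg_s(u_1)$, that appears in $G_{u_1,u_2}$. None of this is deep, but it is precisely where a stray index or a leftover factor of $\deg_s$ could creep in.
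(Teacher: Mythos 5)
Your argument is correct and is essentially the paper's own proof: the paper likewise introduces the diagonal degree matrix (called $M$ there), uses the factorization $\Delta=M(\mathbb{I}-A)$ to convert the determinants and the Green-function entry, and verifies that the residual degree factors cancel, with the reversal symmetry following from the symmetry of $\P_{\mathrm{SF}}$ under $\gamma\to\gamma^{-1}$. No substantive differences.
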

\begin{proof}
Let us define the matrix $M_{u,v}=\deg_s(u)\,\delta_{u,v}$ for $u,v\in\V$. It is straightforward to check that the Dirichlet Laplacian on $\G_s$ reads $\Delta=M(\mathbb{I}-A)$, which implies that $\det(\mathbb{I}-A)=\det\Delta/\det M$ and
\begin{equation}
\det\left(\mathbb{I}-A^{(\gamma)}\right)=\det\Delta^{(\gamma)}/\det M^{(\gamma)},\quad\text{with }\det M=\det M^{(\gamma)}\times\prod_{i=0}^{n}\deg_s(v_i).
\end{equation}
The SRW weight of the path $\gamma$ and the $(u_1,u_2)$ entry of the matrix $(\mathbb{I}-A)^{-1}$ are given by
\begin{equation}
\w_{\text{SRW}}(\gamma)=C(\gamma)\deg_s(u_2)\prod_{i=0}^{n}\frac{1}{\deg_s(v_i)},\quad[(\mathbb{I}-A)^{-1}]_{u_1,u_2}=\deg_s(u_2)\,G_{u_1,u_2}.
\end{equation}
Substituting the preceding equations into Eq.~\eqref{P_LERW} yields the result of the proposition.
\end{proof}

Before we turn to multiple walks, note that we have assumed the endpoint $u_2$ to be connected to the root $s$ for random walks (i.e. $u_2\in\mathcal{D}$), so that the walker stops moving once he reaches the root $s$ through the edge $(u_2,s)$. The probability \eqref{P_LERW} may however be extended for simple paths $\gamma:u_1\to u_2$ such that $u_2\notin\mathcal{D}$. To see this, let us denote by $\G^{\varepsilon}_s$ the graph obtained by adding an extra edge $(u_2,s)$ to $\G_s$ such that the corresponding transition matrix $A^{\varepsilon}_s$ coincides with $A_s$ everywhere except on the row indexed by $u_2$, with
\begin{equation}
(A^{\varepsilon}_s)_{u_2,v}=(1-\varepsilon)\times(A_s)_{u_2,v}\text{ for $v\neq s$},\quad(A^{\varepsilon}_s)_{u_2,s}=\varepsilon.
\end{equation}
The weight of a simple path $\gamma_s^{\varepsilon}=\gamma\cup\{s\}:u_1\to u_2\to s$ on $\G_s^{\varepsilon}$ is then given by Eq.~\eqref{LERW_weight}:
\begin{equation}
\w_{\text{LERW}}(\gamma_s^{\varepsilon})=\w_{\text{SRW}}(\gamma_s^{\varepsilon})\frac{\det\left(\mathbb{I}-(A_s^{\varepsilon})^{(\gamma)}\right)}{\det(\mathbb{I}-A_s^{\varepsilon})}=\varepsilon\,\w_{\text{SRW}}(\gamma)\frac{\det\left(\mathbb{I}-A^{(\gamma)}\right)}{\det(\mathbb{I}-A)}+\ldots
\end{equation}
at lowest order in $\varepsilon$. It is then natural to define the probability associated with a path $\gamma:u_1\to u_2$ on $\G$ as
\begin{equation}
\P_{\text{LERW}}(\gamma)=\lim_{\varepsilon\to 0^+}\P_{\text{LERW}}(\gamma_s^{\varepsilon}).
\label{P_LERW_ext}
\end{equation}
The result of Eq.~\eqref{P_LERW_ext} is given by Eq.~\eqref{P_LERW}, which holds therefore whether $u_2$ is in $\mathcal{D}$ or not, as does Proposition~\ref{prop6.1}.

Finally, the case $\mathcal{D}=\varnothing$---i.e. graphs $\G$ without a root, or equivalently, extended graphs $\G_s$ in which $s$ is isolated---can be dealt with in a similar way, namely by adding an extra edge $(u_2,s)$ with transition probability $\varepsilon$, and then taking the limit $\varepsilon\to 0^+$ of the probability distribution defined on $\G_s^{\varepsilon}$. The result reads
\begin{equation}
\P_{\text{LERW}}(\gamma)=\w_{\text{SRW}}(\gamma)\frac{\det\left(\mathbb{I}-A^{(\gamma)}\right)}{(-1)^{u_1+u_2}\det(\mathbb{I}-A)_{\V\bs\{u_2\}}^{\V\bs\{u_1\}}}.
\label{P_LERW_free}
\end{equation}
It is straightforward to show that Proposition~\ref{prop6.1} is valid in this case as well.


\subsection{Multiple walks}
\label{sec6.2}

Up to this point we have considered a single LERW on the graph $\G_s$, starting at any fixed vertex $u_1\in\G$ and ending at $s$ through a fixed edge $(u_2,s)\in\E_s$. We have shown that such paths can be traded for shorter ones from $u_1$ to $u_2$, for which we have established the link with the spanning forest measure. We now generalize this process to multiple simple paths between marked vertices (nodes) that do not intersect each other.

Two definitions of multiple LERWs seem the most natural: either one considers SRWs conditioned to not intersect each other, or one asks only that their loop erasures do not intersect (except at $s$ in both cases). We shall favor here the second definition, with the additional requirement that the $j$th SRW do not intersect the loop erasures of the first $j{-}1$ walks on $\G$, following \cite{Fom01,Hag09b}. We consider a set of nodes $\mathcal{U}=\{u_1,u_2,\ldots,u_{2n}\}\subset\V$, which we divide into two subsets, $R=\{r_1,\ldots,r_n\}$ and $S=\mathcal{U}\bs R=\{s_1,\ldots,s_n\}$. We assume that $(s_i,s)\in\E_s$ and consider LERWs from $r_i$ to $s$ conditioned to exit $\G$ through $(s_i,s)$, for $1\le i\le n$. The weight of a given $n$-tuple of paths $\Gamma_s=(\gamma_1^s,\ldots,\gamma_n^s)$ is defined as follows \cite{Hag09b},
\begin{equation}
\w^{\G_s}_{\text{LERW}}(\Gamma_s)=\w^{\G_s}_{\text{LERW}}(\gamma_1^s)\times\w^{\G_s}_{\text{LERW}}(\gamma_2^s|\gamma_1^s)\times\cdots\times\w^{\G_s}_{\text{LERW}}(\gamma_n^s|\gamma_1^s,\ldots,\gamma_{n-1}^s),
\end{equation}
where the factor $\w^{\G_s}_{\text{LERW}}(\gamma_i^s|\gamma_1^s,\ldots,\gamma_{i-1}^s)$ on the right-hand side is the sum over all walks $\omega_i^s:r_i\to s_i\to s$ with $\text{LE}(\omega_i^s)=\gamma_i^s$ that only intersect $\bigcup_{j=1}^{i-1}\gamma_j^s$ at the root $s$. Conditioning $\omega_i=\omega_i^s\bs\{s\}$ to not intersect $\gamma_j=\gamma_j^s\bs\{s\}$ for $1\le j\le i{-}1$ amounts to growing $\omega_i^s$ on $\G_s\bs\bigcup_{j=1}^{i-1}\gamma_j$ (the graph from which the vertices of $\gamma_1,\ldots,\gamma_{i-1}$ have been removed, together with their incident edges). Using Eq.~\eqref{LERW_weight}, one can therefore rewrite the preceding equation as
\begin{equation}
\begin{split}
\w^{\G_s}_{\text{LERW}}(\Gamma_s)&=\w^{\G_s}_{\text{LERW}}(\gamma_1^s)\times\cdots\times\w^{\G_s\bs\bigcup_{j=1}^{n-1}\gamma_j}_{\text{LERW}}(\gamma_n^s)\\
&=\w^{\G_s}_{\text{SRW}}(\gamma_1^s)\frac{\det\left(\mathbb{I}-A_s^{(\gamma_1)}\right)}{\det(\mathbb{I}-A_s)}\times\cdots\times\w^{\G_s\bs\bigcup_{j=1}^{n-1}\gamma_j}_{\text{SRW}}(\gamma_n^s)\frac{\det\left(\mathbb{I}-A_s^{(\gamma_1,\ldots,\gamma_n)}\right)}{\det\left(\mathbb{I}-A_s^{(\gamma_1,\ldots,\gamma_{n-1})}\right)}\\
&=(A_s)_{s_1,s}\,\w^{\G}_{\text{SRW}}(\gamma_1)\frac{\det\left(\mathbb{I}-A^{(\gamma_1)}\right)}{\det(\mathbb{I}-A)}\times\cdots\times(A_s)_{s_n,s}\,\w^{\G}_{\text{SRW}}(\gamma_n)\frac{\det\left(\mathbb{I}-A^{(\gamma_1,\ldots,\gamma_n)}\right)}{\det\left(\mathbb{I}-A^{(\gamma_1,\ldots,\gamma_{n-1})}\right)}\\
&=\prod_{j=1}^{n}(A_s)_{s_j,s}\times\w^{\G}_{\text{SRW}}(\Gamma)\frac{\det\left(\mathbb{I}-A^{(\Gamma)}\right)}{\det(\mathbb{I}-A)},
\label{nLERW_weight_s}
\end{split}
\end{equation}
where $\w_{\text{SRW}}(\Gamma)=\prod_{i=1}^{n}\w_{\text{SRW}}(\gamma_i)$ and $A^{(\Gamma)}$ denotes the restriction of $A$ to vertices not belonging to $\Gamma=\Gamma_s\bs\{s\}$. In particular, it follows from Eq.~\eqref{nLERW_weight_s} that the joint measure on multiple simple paths does not depend on the order in which the walks are grown. As the product over $j$ on the right-hand side is constant over all $n$-tuples of paths on $\G_s$, we may consider simple paths on $\G$ instead. The weight of an $n$-tuple of nonintersecting simple paths $\Gamma=(\gamma_1,\ldots,\gamma_n)$, with $\gamma:r_i\to s_i$ on $\G$ for $1\le i\le n$, is then defined as \cite{Hag09b}
\begin{equation}
\w_{\text{LERW}}(\Gamma)=\w_{\text{SRW}}(\Gamma)\frac{\det\left(\mathbb{I}-A^{(\Gamma)}\right)}{\det(\mathbb{I}-A)}.
\label{nLERW_weight}
\end{equation}

Let us now show that the SF and LERW weights on $\Gamma$, given respectively by Eq.~\eqref{nSF_weight} and Eq.~\eqref{nLERW_weight}, are proportional to each other---thus implying that both normalized measures coincide.
\begin{prop}
Let $\G_s=\G\cup\{s\}$ be an unoriented connected graph with a root $s$ and a set of nodes $\mathcal{U}=\{u_1,\ldots,u_{2n}\}\subset\V$. Let $R=\{r_1,\ldots,r_n\}\subset\mathcal{U}$ and $S=\{s_1,\ldots,s_n\}=\mathcal{U}\bs R$ such that $(s_i,s)\in\E_s$ for $1\le i\le n$. Consider $n$ nonintersecting simple paths $\gamma_i:r_i\to s_i$ on $\G$, and write $\Gamma=(\gamma_1,\ldots,\gamma_n)$. If $A_{u,v}=c_{u,v}/\deg_s(u)$ for $(u,v)\in\E$ and 0 otherwise, then
\begin{equation}
\w_{\mathrm{LERW}}(\Gamma)=\mathrm{K}\times\w_{\mathrm{SF}}(\Gamma),
\end{equation}
where $\mathrm{K}=\mathrm{K}(\G_s,S)$ is independent of $\Gamma$. As a consequence, $\P_{\mathrm{LERW}}(\Gamma)=\P_{\mathrm{SF}}(\Gamma)$.
\label{prop6.2}
\end{prop}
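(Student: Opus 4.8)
The plan is to mimic the proof of Proposition~\ref{prop6.1} (the single-walk case), replacing the scalar matrix relation $\Delta=M(\mathbb{I}-A)$ by the same identity restricted to suitable subsets of vertices, and then reading off the result from Eqs.~\eqref{nSF_weight} and \eqref{nLERW_weight}. First I would introduce the diagonal matrix $M_{u,v}=\deg_s(u)\,\delta_{u,v}$ for $u,v\in\V$, so that the Dirichlet Laplacian factorizes as $\Delta=M(\mathbb{I}-A)$, hence $\det(\mathbb{I}-A)=\det\Delta/\det M$. The key observation is that removing the rows and columns indexed by the vertices of $\Gamma=\bigcup_i\gamma_i$ commutes with this factorization, because $M$ is diagonal: one gets $\det(\mathbb{I}-A^{(\Gamma)})=\det\Delta^{(\Gamma)}/\det M^{(\Gamma)}$, and moreover $\det M=\det M^{(\Gamma)}\times\prod_{v\in\Gamma}\deg_s(v)$, where the product is over all vertices lying on one of the $n$ paths.

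Next I would compute the standard random walk weight of the whole tuple. Since $\w_{\text{SRW}}(\gamma_i)=\prod_{\text{edges }(v,v')\in\gamma_i}A_{v,v'}=\prod_{(v,v')\in\gamma_i}c_{v,v'}/\deg_s(v)$, taking the product over $i=1,\dots,n$ and using $\prod_i C(\gamma_i)=\prod_i\prod_{(v,v')\in\gamma_i}c_{v,v'}$ (which is $C(\Gamma)$ up to the convention on $C$) gives
\begin{equation}
\w_{\text{SRW}}(\Gamma)=\prod_{i=1}^{n}C(\gamma_i)\times\prod_{i=1}^{n}\deg_s(s_i)\times\prod_{v\in\Gamma}\frac{1}{\deg_s(v)},
\end{equation}
where the factor $\prod_i\deg_s(s_i)$ arises because the endpoints $s_i$ contribute a vertex to $\Gamma$ but no outgoing edge within $\gamma_i$ (exactly as in the single-path computation, where $\w_{\text{SRW}}(\gamma)=C(\gamma)\deg_s(u_2)\prod_{i}1/\deg_s(v_i)$). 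Plugging these into Eq.~\eqref{nLERW_weight} and cancelling $\det M^{(\Gamma)}$ against part of $\det M$, the $\prod_{v\in\Gamma}\deg_s(v)$ factors telescope against those appearing in $\w_{\text{SRW}}(\Gamma)$, leaving
\begin{equation}
\w_{\text{LERW}}(\Gamma)=\prod_{i=1}^{n}\deg_s(s_i)\times\prod_{i=1}^{n}C(\gamma_i)\times\frac{\det\Delta^{(\Gamma)}}{\det\Delta}=\frac{\prod_{i=1}^{n}\deg_s(s_i)}{\det\Delta}\times\w_{\text{SF}}(\Gamma),
\end{equation}
using the definition \eqref{nSF_weight} of $\w_{\text{SF}}(\Gamma)$. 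Thus $\mathrm{K}(\G_s,S)=\bigl(\prod_{i=1}^{n}\deg_s(s_i)\bigr)/\det\Delta$ is manifestly independent of the particular tuple $\Gamma$: it depends only on the graph and on the endpoint set $S$ (indeed the degrees $\deg_s(s_i)$ depend only on $\G_s$, not on which paths are drawn). Since both $\w_{\text{LERW}}$ and $\w_{\text{SF}}$ are summed over the same set of admissible $n$-tuples $\Gamma$ to form their normalizations, proportionality of the weights forces equality of the normalized measures, $\P_{\text{LERW}}(\Gamma)=\P_{\text{SF}}(\Gamma)$.

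The only genuinely delicate point is bookkeeping: one must check that removing the full vertex set of $\Gamma$ (rather than a single path) still gives $\det(\mathbb{I}-A^{(\Gamma)})=\det\Delta^{(\Gamma)}/\det M^{(\Gamma)}$, which holds because deleting the same rows and columns from both sides of $\Delta=M(\mathbb{I}-A)$ preserves the factorization only when $M$ is diagonal — this is why the argument works. One should also be slightly careful that the $n$-tuples counted on the LERW side (SRWs whose loop erasures are nonintersecting, grown in a fixed order) and on the SF side ($(n{+}1)$-component forests) are indexed by the same combinatorial objects, namely $n$-tuples of pairwise disjoint simple paths $\gamma_i:r_i\to s_i$; this is guaranteed by Eq.~\eqref{nLERW_weight_s}, which already established order-independence of the LERW weight. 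No step beyond this requires more than the linear algebra already used in Proposition~\ref{prop6.1}, so I expect the write-up to be short.
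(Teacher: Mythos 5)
Your proposal is correct and follows essentially the same route as the paper's proof: the same factorization $\Delta=M(\mathbb{I}-A)$ with the diagonal degree matrix, the same identities $\w_{\mathrm{SRW}}(\Gamma)=C(\Gamma)\deg_s(S)/\deg_s(\Gamma)$ and $\det(\mathbb{I}-A^{(\Gamma)})/\det(\mathbb{I}-A)=\det\Delta^{(\Gamma)}\deg_s(\Gamma)/\det\Delta$, and the same constant $\mathrm{K}=\deg_s(S)/\det\Delta$. The additional bookkeeping remarks you include (diagonality of $M$ making the restriction commute with the factorization, and order-independence via Eq.~\eqref{nLERW_weight_s}) are sound and consistent with what the paper leaves implicit.
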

\begin{proof}
Let $M$ be the matrix defined by $M_{u,v}=\deg_s(u)\,\delta_{u,v}$ for $u,v\in\V$. Then $\Delta=M(\mathbb{I}-A)$ and
\begin{equation}
\w_{\mathrm{SRW}}(\Gamma)=\frac{C(\Gamma)\deg_s(S)}{\deg_s(\Gamma)},\quad\frac{\det\left(\mathbb{I}-A^{(\Gamma)}\right)}{\det(\mathbb{I}-A)}=\frac{\det\Delta^{(\Gamma)}\deg_s(\Gamma)}{\det\Delta},
\end{equation}
where $\deg_s(S)$ and $\deg_s(\Gamma)$ are the products of the degrees (on $\G_s$) of all vertices belonging to $S$ and $\Gamma$, respectively. Hence we find
\begin{equation}
\mathrm{K}=\frac{\deg_s(S)}{\det\Delta},
\end{equation}
which concludes the proof.
\end{proof}

A less explicit argument for the equality between the LERW and SF measures for multiple paths, based on Wilson's algorithm (Theorem~\ref{Wil_alg}), was given in \cite{KKP17}. The idea consists in picking an enumeration of the vertices of $\G_s$ such that $s$ and $R=\{r_1,\ldots,r_n\}$ appear first, and then growing a SRW from $r_1$ to $s$ exiting $\G$ through the edge $(s_1,s)$, followed by a second SRW from $r_2$ to $s$ through $(s_2,s)$ that does not intersect the loop erasure of the first SRW, and so on, such that the $j$th SRW from $r_j$ to $s$ through $(s_j,s)$ does not intersect the loop erasures of the first $j{-}1$ SRWs. The algorithm resumes as in Theorem~\ref{Wil_alg} when the $n$ paths from $R$ to $S$ have been constructed, and yields a spanning tree associated with a unique spanning forest with $n{+}1$ trees (each of which containing a pair $r_i,s_i$ or the root $s$), obtained by removing the prescribed edges $(s_1,s),\ldots,(s_n,s)$ from the spanning tree.

Finally, note that the requirement that the endpoints $s_1,\ldots,s_n$ of LERWs be neighbors of the root may be relaxed, allowing some or all of them to not be connected to $s$ on $\G_s$. If this is the case for $k$ elements of $S$---say $s_1,\ldots,s_k$---define a modified graph $\G_s^{\varepsilon}$ by adding $k$ edges $(s_i,s)$ to $\G_s$ such that
\begin{equation}
(A_s^{\varepsilon})_{s_i,v}=(1-\varepsilon)\times(A_s)_{s_i,v}\text{ for $v\neq s$},\quad(A_s^{\varepsilon})_{s_i,s}=\varepsilon,
\end{equation}
for $1\le i\le k$. The weight of any $n$-tuple $\Gamma_s^{\varepsilon}$ on $\G_s^{\varepsilon}$ then reads
\begin{equation}
\w_{\text{LERW}}(\Gamma_s^{\varepsilon})=\varepsilon^k\,\w_{\text{LERW}}(\Gamma)+\ldots
\end{equation}
at leading order in $\varepsilon\sim 0^+$, with $\w_{\text{LERW}}(\Gamma)$ given by Eq.~\eqref{nLERW_weight}. Consequently, the result of Proposition~\ref{prop6.2} is valid in this setting as well. A similar argument can be made if $s$ is entirely disconnected from the rest of the graph.

The equality between $\P_{\text{LERW}}$ and $\P_{\text{SF}}$ holds therefore for any positions of the extremities of the paths, on any unoriented connected graph, with or without a root. However, concrete computations of probabilities require knowledge of the explicit expressions of the partition functions for certain classes of walks from $R$ to $S$ (see Sections~\ref{sec3} and \ref{sec5}). The only case for which we have been able to derive such expressions is for paths between boundary vertices of graphs embedded on surfaces (with noncrossing edges), with unit conductances (see Section~\ref{sec4}).


\section{Comparison with SLE/CFT results}
\label{sec7}

Let us now discuss how our formulas compare, in the scaling limit, with known results obtained in the framework of Schramm-Loewner evolution (SLE) and conformal field theory (CFT). We concentrate here on the case of the upper half-plane $\mathbb{H}$ with Dirichlet boundary conditions, which is the canonical setup for both SLE and CFT on domains with a boundary.

The original formula for the left-passage probability for a single $\text{SLE}_2$ curve is due to Schramm \cite{Sch01}. Its generalization to multiple random curves was first discussed in \cite{GC05}, in which the authors provide explicit formulas for $n=2$ curves only and $\kappa=0,2,8/3,4,8$, using the correspondence between $\text{SLE}_{\kappa}$ and a CFT with central charge $c=(3\kappa-8)(6-\kappa)/(2\kappa)$ \cite{BB02,FW03} (a more rigorous SLE approach has recently been developed in \cite{LV17}, again for $n=2$ curves). Their computations make use of the fact that conditioning an SLE partition function on the existence of a curve connecting two boundary points $x_1,x_2$ is equivalent to inserting a boundary operator $\phi$ at $x_1,x_2$ in the corresponding CFT correlator \cite{Car84,BB02,FW03}. The field $\phi$ has a weight $h_{\phi}=(6-\kappa)/(2\kappa)$, and is degenerate at level two; thus yielding a second-order differential equation for any correlator containing a $\phi$ \cite{BPZ84}:
\begin{equation}
\left(\frac{3}{2(2h_{\phi}+1)}\partial_{x_1}^2-\sum_{i=2}^{k}\frac{1}{x_1-z_i}\partial_{z_i}-\sum_{i=2}^{k}\frac{h_i}{(x_1-z_i)^2}\right)\langle\phi(x_1)\mathcal{O}_2(z_2)\ldots\mathcal{O}_k(z_k)\rangle=0,
\label{BPZ_eq}
\end{equation}
where $x_1\in\partial\mathbb{H}$, $z_i\in\mathbb{H}$ and $\mathcal{O}_i$ are arbitrary fields of weight $h_i$, $2\le i\le k$. The condition that a random curve from $x_1$ to $x_2$ leaves a point $z$ to its left or right is implemented by the insertion of an indicator operator $\chi(z)$ of weight zero, treated as a local operator \cite{BB03}. Explicitly, the scaling limit of the probability that a simple path leaves $z$ to its left (or right) is then given by
\begin{equation}
\P(x_1,x_2;z)=\frac{\langle\chi(z)\phi(x_1)\phi(x_2)\rangle}{\langle\phi(x_1)\phi(x_2)\rangle}.
\end{equation}
The probabilities $\P_{\L}$ or $\P_{\R}$ are found by solving Eq.~\eqref{BPZ_eq} and imposing different boundary values for $\P(x_1,x_2;z)$ as a function of $z$. The generalization to multiple curves in straightforward in the CFT framework, and yields
\begin{equation}
\P(\mathbf{x};z)=\frac{\langle\chi(z)\Phi(\mathbf{x})\rangle}{\langle\Phi(\mathbf{x})\rangle},
\label{nP_CFT}
\end{equation}
where $\mathbf{x}=(x_1,x_2,\ldots,x_{2n})$ and $\Phi(\mathbf{x})=\phi(x_1)\phi(x_2)\ldots\phi(x_{2n})$. Here $\chi(z)$ is any of the indicator operators that each curve leaves $z$ to its left or right. As for a single curve, the boundary conditions allow us to determine which case is computed, as well as the way the $x_i$'s are paired together. Due to the level-two degeneracy of $\phi$, both the numerator and the denominator of Eq.~\eqref{nP_CFT} satisfy $2n$ BPZ equations of the form of Eq.~\eqref{BPZ_eq}. Particularizing to $\kappa=2$ (which implies that $c=-2$ and $h_{\phi}=1$), we find that the equation with respect to $x_1$ for the numerator reads:
\begin{equation}
\left(\frac{1}{2}\partial_{x_1}^2-\sum_{i=2}^{2n}\left(\frac{1}{x_1-x_i}\partial_{x_i}+\frac{1}{(x_1-x_i)^2}\right)-\frac{1}{x_1-z}\partial_z-\frac{1}{x_1-\bar{z}}\partial_{\bar{z}}\right)\langle\chi(z)\Phi(\mathbf{x})\rangle=0.
\label{BPZ_eq_part}
\end{equation}
The corresponding equation for the denominator $\langle\Phi(\mathbf{x})\rangle$ is the same except for the last two terms, which are absent. In \cite{GC05} the authors noted that solving these coupled equations is already too difficult for $n=2$, even when the endpoints of both curves are sent to infinity. They considered the limit of ``fused'' curves, namely, when the distance between the starting points of the curves goes to zero, for which they found analytical solutions of the fused BPZ equations. Their results for $\kappa=2$ read, with $z=x+\i y$ and $t=x/y$,
\begin{equation}
\begin{split}
\P_{\L\L}(t)&=\frac{1}{4}+\frac{1}{9\pi^2(1+t^2)^3}\big[(-16-9t^2+9t^4)-9\pi(t^3+t^5)\\
&\qquad+9(1+t^2)\arctan(t)\big(2t^3-\pi(1+t^2)^2+(1+t^2)^2\arctan(t)\big)\big],\\
\P_{\R\R}(t)&=\P_{\L\L}(-t),\quad\P_{\R\L}(t)=1-\P_{\L\L}(t)-\P_{\R\R}(t).
\end{split}
\label{GC_res}
\end{equation}

In comparison, note that our results for two curves on the upper half-plane are valid for any positions of the starting and ending points, provided their order is fixed: $x_1<x_2<x_3<x_4$. Their explicit expressions can be found using Eqs \eqref{2Schramm_1} and \eqref{2Schramm_2}, in which the Green function and its derivative are replaced in the scaling limit with the excursion Poisson kernel $\text{P}$ and its derivative $\text{P}'$, as argued in Section~\ref{sec4.1}. For instance, if $x_1,x_2$ are paired with $x_4,x_3$, respectively, the winding probabilities read in the scaling limit
\begin{equation}
\begin{split}
\P_{\L\L}(14|23)&=1-\P_{\R\L}(14|23)-\P_{\R\R}(14|23),\\
\P_{\R\L}(14|23)&=\frac{\text{P}_{1,3}\,\text{P}'_{2,4}+\text{P}'_{1,3}\,\text{P}_{2,4}-\text{P}_{1,4}\,\text{P}'_{2,3}-\text{P}'_{1,4}\,\text{P}_{2,3}}{\text{P}_{1,3}\,\text{P}_{2,4}-\text{P}_{1,4}\,\text{P}_{2,3}}-2\,\P_{\R\R}(14|23),\\
\P_{\R\R}(14|23)&=\frac{-\text{P}'_{1,2}\,\text{P}'_{3,4}+\text{P}'_{1,3}\,\text{P}'_{2,4}-\text{P}'_{1,4}\,\text{P}'_{2,3}}{\text{P}_{1,3}\,\text{P}_{2,4}-\text{P}_{1,4}\,\text{P}_{2,3}},
\label{2Schramm_expl}
\end{split}
\end{equation}
where $\text{P}_{i,j}\equiv\text{P}(x_i,x_j)$ and $\text{P}'_{i,j}\equiv\text{P}'(x_i,x_j;z)$ are given by
\begin{equation}
\begin{split}
\text{P}(x_i,x_j)&=\frac{1}{\pi(x_i-x_j)^2},\\
\text{P}'(x_i,x_j;z)&=-\frac{1}{\pi^2(x_i-x_j)^2}\big(\arg(z-x_i)-\arg(z-x_j)\big)\\
&\qquad+\frac{1}{\pi^2(x_i-x_j)^2}\frac{\Re[(z-x_i)(\bar{z}-x_j)]\Im[(z-x_i)(\bar{z}-x_j)]}{|z-x_i|^2\,|z-x_j|^2}.
\end{split}
\end{equation}
The probabilities \eqref{GC_res} are recovered by taking the limit $x_1,x_2\to-\infty$ and $x_3,x_4\to 0$ in Eq.\eqref{2Schramm_expl}.

A further consistency check of the CFT interpretation consists in showing that the (scaling limit of) the six distinct winding partition functions
\begin{equation*}
Z_{\L\L}[12|34],\,Z_{\L\R}[12|34],\,Z_{\R\L}[12|34],\,Z_{\L\L}[14|23],\,Z_{\R\L}[14|23],\,Z_{\R\R}[14|23]
\end{equation*}
all satisfy Eq.~\eqref{BPZ_eq_part}. It is indeed the case, and a similar check holds for the total partition functions $Z_{\mathrm{t}}[12|34]$ and $Z_{\mathrm{t}}[14|23]$. More generally, it would be interesting to provide the same validation of the CFT version of Schramm's formulas for any number of curves. While the total partition functions $Z_{\mathrm{t}}[\sigma]$ have been proved to satisfy the BPZ equation in \cite{KKP17}, the analogue check for the winding partitions functions $Z[\vsig]$ is more difficult and remains to be done.


\section{Summary}
\label{sec8}

This paper has described how a connection on a planar graph may be used to compute Schramm's formula for the loop-erased random walk, namely the left-passage probability of a path between fixed boundary vertices with respect to a marked face. Expanding on some preliminary results \cite{Ken11,KW15}, we have obtained exact probabilities for various domains and boundary conditions, and compared them with known SLE results to find a full agreement. Schramm's formula has then been generalized for multiple nonintersecting paths, for which we have given an explicit expression in terms of the Green function of the graph.

Another interesting direction to investigate would consist in the left-passage probability of a single curve with respect to \emph{two} marked faces (or points in the scaling limit). The unique case for which this probability has been computed in the SLE/CFT framework is for $\kappa=8/3$ \cite{SC09} (with more rigorous proofs given later in \cite{BV13}). For a LERW, using a complex connection supported on two zippers with respective parameters $z$ and $w$, one attached to each marked face, does not suffice to find combinatorial expressions for the partition functions of interest (indeed, the same product of parallel transports would be assigned to topologically distinct classes of paths). A more appropriate approach has been sketched in \cite{KW15}, and requires one to use a matrix-valued $\text{SL}(2,\C)$-connection (for which an equivalent of Theorem~\ref{CRSF_thm} exists \cite{Ken11}). Knowledge of the associated vector bundle Green function in terms of the standard Green function would however be required for explicit computations, and is still lacking at the moment.


\subsection*{Acknowledgments}
This work is supported by the Belgian Interuniversity Attraction Poles Program P7/18 through the network DYGEST (Dynamics, Geometry and Statistical Physics). The author wishes to thank Philippe Ruelle for his careful reading of the manuscript and many useful suggestions and comments. Moreover, the author is grateful to Christian Hagendorf for bringing the problem to his attention and commenting on an early version of this paper, and to Alexi Morin-Duchesne and Jean Li{\'e}nardy for useful remarks.


\clearpage
\appendix
\section{Jacobi's theta and elliptic functions}
\label{a1}

In this appendix, we recall the definitions of Jacobi's theta and elliptic functions, give some well-known representations of these functions and list the properties we used in this paper. In what follows, $z,\tau\in\C$ with $\Im\tau>0$, where $\tau$ is called the \emph{lattice parameter} or the \emph{half-period ratio}. The \emph{nome} $q$ defined by $q=e^{\i\pi\tau}$ is such that $0<|q|<1$.

\subsection{Jacobi's theta functions}
The four theta functions $\vartheta_a(z,q)$, with $1\le a\le 4$, are $2\pi$-periodic functions of $z$ for fixed $q$, defined by their Fourier series
\begin{align}
&\vartheta_1(z,q)=2\sum_{n=0}^{\infty}(-1)^{n}q^{(n+\frac{1}{2})^{2}}\sin[(2n+1)z],\quad&\vartheta_2(z,q)=2\sum_{n=0}^{\infty}q^{(n+\frac{1}{2})^{2}}\cos[(2n+1)z],\\
&\vartheta_3(z,q)=1+2\sum_{n=1}^{\infty}q^{n^{2}}\cos(2nz),\quad&\vartheta_4(z,q)=1+2\sum_{n=1}^{\infty}(-1)^{n}q^{n^{2}}\cos(2nz).
\end{align}
As a matter of notation, it is common to simply write $\vartheta_a$ for $\vartheta_a(0,q)$ and $\vartheta_a(z)$ for $\vartheta_a(z,q)$ in equations where a single nome $q$ appears. Using Jacobi's triple product, one can rewrite these four functions as infinite products:
\begin{align}
\vartheta_1(z,q)&=2q^{1/4}\sin z\prod_{m=1}^{\infty}(1-q^{2m})(1-q^{2m}e^{2\i z})(1-q^{2m}e^{-2\i z}),\\
\vartheta_2(z,q)&=2q^{1/4}\cos z\prod_{m=1}^{\infty}(1-q^{2m})(1+q^{2m}e^{2\i z})(1+q^{2m}e^{-2\i z}),\\
\vartheta_3(z,q)&=\prod_{m=1}^{\infty}(1-q^{2m})(1+q^{2m-1}e^{2\i z})(1+q^{2m-1}e^{-2\i z}),\\
\vartheta_4(z,q)&=\prod_{m=1}^{\infty}(1-q^{2m})(1-q^{2m-1}e^{2\i z})(1-q^{2m-1}e^{-2\i z}).
\end{align}
Let us recall as well the series representation of their logarithm, which directly follows from the infinite-product formulas:
\begin{align}
&\log\frac{\vartheta_1(z,q)}{2q^{1/6}\eta(q)\sin z}=-\sum_{k=1}^{\infty}\frac{2\cos(2kz)}{k(q^{-2k}-1)},\hspace{1cm}\log\frac{\vartheta_2(z,q)}{2q^{1/6}\eta(q)\cos z}=-\sum_{k=1}^{\infty}\frac{2(-1)^k\cos(2kz)}{k(q^{-2k}-1)},\\
&\log\frac{\vartheta_3(z,q)}{2q^{-1/12}\eta(q)}=-\sum_{k=1}^{\infty}\frac{2(-1)^k q^{-k}\cos(2kz)}{k(q^{-2k}-1)},\hspace{1cm}\log\frac{\vartheta_4(z,q)}{2q^{-1/12}\eta(q)}=-\sum_{k=1}^{\infty}\frac{2q^{-k}\cos(2kz)}{k(q^{-2k}-1)},
\end{align}
where Dedekind's eta function is defined by $\eta(q)=q^{1/12}\prod_{m=1}^{\infty}(1-q^{2m})$. Jacobi's imaginary transformation $\tau\to -1/\tau$ yields the following modular properties,
\begin{align}
\vartheta_1(z,e^{\i\pi\tau})&=-\i(-\i\tau)^{-1/2}e^{-\i z^2/(\pi\tau)}\vartheta_1(-z/\tau,e^{-\i\pi/\tau}),\\
\vartheta_2(z,e^{\i\pi\tau})&=(-\i\tau)^{-1/2}e^{-\i z^2/(\pi\tau)}\vartheta_4(-z/\tau,e^{-\i\pi/\tau}),\\
\vartheta_3(z,e^{\i\pi\tau})&=(-\i\tau)^{-1/2}e^{-\i z^2/(\pi\tau)}\vartheta_3(-z/\tau,e^{-\i\pi/\tau}),\\
\vartheta_4(z,e^{\i\pi\tau})&=(-\i\tau)^{-1/2}e^{-\i z^2/(\pi\tau)}\vartheta_2(-z/\tau,e^{-\i\pi/\tau}),
\end{align}
where $(-\i\tau)^{-1/2}$ is interpreted as satisfying $|\arg(-\i\tau)|<\pi/2$.

\subsection{Jacobi's elliptic functions}
To define Jacobi's twelve elliptic functions, one introduces the \emph{elliptic modulus} $k$ and $K=K(k)$ the complete elliptic integral of the first kind, related to the nome $q$ through
\begin{equation}
k=\vartheta_2^2/\vartheta_3^2,\quad K=\pi\vartheta_3^2/2.
\end{equation}
In addition, $k'=\sqrt{1-k^2}$ is called the \emph{complementary elliptic modulus}. There are three principal elliptic functions, defined in terms of theta functions as
\begin{equation}
\begin{split}
\text{sn}(z,k)&=\frac{\vartheta_3}{\vartheta_2}\frac{\vartheta_1(\pi z/(2K),q)}{\vartheta_4(\pi z/(2K),q)},\quad\text{cn}(z,k)=\frac{\vartheta_4}{\vartheta_2}\frac{\vartheta_2(\pi z/(2K),q)}{\vartheta_4(\pi z/(2K),q)},\\
&\hspace{1cm}\text{dn}(z,k)=\frac{\vartheta_4}{\vartheta_3}\frac{\vartheta_3(\pi z/(2K),q)}{\vartheta_4(\pi z/(2K),q)}.
\end{split}
\end{equation}
The modular relations due to Jacobi's imaginary transformation follow from those of the theta functions:
\begin{equation}
\text{sn}(\i z,k)=\i\,\text{sc}(z,k'),\quad\text{cn}(\i z,k)=\text{nc}(z,k'),\quad\text{dn}(\i z,k)=\text{dc}(z,k').
\end{equation}
The nine auxiliary elliptic functions are obtained from the principal ones using the relations
\begin{equation}
\text{pq}(z,k)=\frac{\text{pr}(z,k)}{\text{qr}(z,k)}=\frac{1}{\text{qp}(z,k)},
\end{equation}
where $\text{p},\text{q},\text{r}=\text{s},\text{c},\text{d},\text{n}$ and with the convention that $\text{pq}(z,k)\equiv 1$ if $\text{q}=\text{p}$.

Jacobi's elliptic functions possess a Fourier representation only if $z,q$ satisfy certain inequalities. We only indicate here the Fourier series of the eight functions used in Section~\ref{sec4}. When $q\,\exp(2|\Im\pi z/(2K)|)<1$,
\begin{align}
\text{cn}(z,k)&=\frac{2\pi}{Kk}\sum_{n=0}^{\infty}\frac{q^{-(n+1/2)}\cos[\pi(2n+1)z/(2K)]}{q^{-(2n+1)}+1},\\
\text{dn}(z,k)&=\frac{\pi}{2K}+\frac{2\pi}{K}\sum_{n=1}^{\infty}\frac{q^{-n}\cos(\pi nz/K)}{q^{-2n}+1},\\
\text{cd}(z,k)&=\frac{2\pi}{Kk}\sum_{n=0}^{\infty}\frac{(-1)^{n}q^{-(n+1/2)}\cos\left[\pi(2n+1)z/(2K)\right]}{q^{-(2n+1)}-1},\\
\text{nd}(z,k)&=\frac{\pi}{2Kk'}+\frac{2\pi}{Kk'}\sum_{n=1}^{\infty}\frac{(-1)^{n}q^{-n}\cos\left(\pi nz/K\right)}{q^{-2n}+1}.
\end{align}
In case the weaker condition $q\,\exp(|\Im\pi z/(2K)|)<1$ is satisfied,
\begin{align}
\text{ns}(z,k)&=\frac{\pi}{2K}\frac{1}{\sin[\pi z/(2K)]}+\frac{2\pi}{K}\sum_{n=0}^{\infty}\frac{\sin[\pi(2n+1) z/(2K)]}{q^{-(2n+1)}-1},\\
\text{cs}(z,k)&=\frac{\pi}{2K}\cot[\pi z/(2K)]-\frac{2\pi}{K}\sum_{n=1}^{\infty}\frac{\sin(\pi nz/K)}{q^{-2n}+1},\\
\text{dc}(z,k)&=\frac{\pi}{2K}\frac{1}{\cos[\pi z/(2K)]}+\frac{2\pi}{K}\sum_{n=0}^{\infty}\frac{(-1)^n\cos[\pi(2n+1)z/(2K)]}{q^{-(2n+1)}-1},\\
\text{nc}(z,k)&=\frac{\pi}{2Kk'}\frac{1}{\cos[\pi z/(2K)]}-\frac{2\pi}{Kk'}\sum_{n=0}^{\infty}\frac{(-1)^n\cos[\pi(2n+1)z/(2K)]}{q^{-(2n+1)}+1}.
\end{align}


\section{Technical proofs}
\label{a2}

In this second appendix, we give the proofs of Propositions \ref{prop5.1} and \ref{prop5.2}. The latter relies on several technical lemmas, some of which were given in \cite{KW11b,KW15}.

\textbf{Proof of Proposition~\ref{prop5.1}:}\\
Due to Theorem \ref{grove_thm}, $\left(\mathbf{A}_n\right)_{R,\vsig}$ is nonzero only if each chord of $\vsig$ links an element of $R$ to one of $S=\mathcal{U}\bs R$, i.e. if every $r_i$ is connected to $s_{\rho(i)}$ in $\vsig$ for some permutation $\rho\in\mathrm{S}_n$. Recall that for the cyclic orientation, all paths are oriented from the up to the down step of the chords of $\vsig$. All $r_i$'s that index up steps therefore contribute a factor 1 to the matrix entry. On the other hand, we have to reverse the orientation of each path $r_i{\to}s_{\rho(i)}$ when $r_i$ indexes a down step of $\vsig$. As $\phi_{r_i\to s_{\rho(i)}}=z$ (resp. 1) if $r_i<s_{\rho(i)}$ (resp. $r_i>s_{\rho(i)}$) when $r_i$ is a down step, we find for $w=z^2$ that
\begin{equation}
\left(\mathbf{A}_n\right)_{R,\vsig}=\pm w^{W(\vsig)-R:\vsig}.
\end{equation}

We now turn to the signs of the entries of $\mathbf{A}_n$, which originate from the signature of the permutations $\rho$ mapping indices of $S$ to indices of $R$  in Theorem \ref{grove_thm}. Let $\lambda$ be a standard (noncyclic) Dyck path with up steps $U=\{u_1,u_2,\ldots,u_n\}$ such that $u_i<u_j$ if $i<j$. We denote by $D=\{d_1,d_2,\ldots,d_n\}$ the collection of down steps of $\lambda$, such that $(u_i,d_i)$ is a chord of $\lambda$ for $1\le i\le n$. Since $D$ is not ordered in general, there exists a permutation $\pi$ of the indices $1,2,\ldots,n$ that sorts $D$ in ascending order; we write the result $D_{\pi}$. The signature of $\pi$ is given by $\epsilon(\pi)=(-1)^{I(D)}$, where $I(D)=\left|\big\{(d_i,d_j)|d_i>d_j\textrm{ and }i<j\big\}\right|$ is the inversion number of $D$. Observe that the condition $u_i<u_j<d_j<d_i$ means that the chord $(u_j,d_j)$ lies above $(u_i,d_i)$ in $\lambda$ (represented as a mountain range). It follows that the inversion number can be expressed as
\begin{equation}
I(D)=\sum_{\substack{\textrm{chords $(u_i,d_i)$}\\\textrm{of $\lambda$}}}\left|\big\{\textrm{chords $(u_j,d_j)$ of $\lambda$ above $(u_i,d_i)$}\big\}\right|.
\end{equation}
By tiling the area between $\lambda$ and the $x$ axis with $\sqrt{2}{\times}\sqrt{2}$ squares as depicted in Fig. \ref{boxes}, we see that the number of chords of $\lambda$ above a given chord $(u_i,d_i)$ is equal to the number of squares intersected by $(u_i,d_i)$ in their lower half. Each of these squares is intersected in its lower half by exactly one chord of $\lambda$, so $I(D)$ is the number of squares under $\lambda$ or, equivalently, $I(D)=\frac{1}{2}(\mathcal{A}(\lambda)-n)$. The sign associated with $\left(\mathbf{A}_n\right)_{U,\lambda}$ is therefore given by $\epsilon(\pi^{-1})=(-1)^{\frac{1}{2}(\mathcal{A}(\lambda)-n)}$.

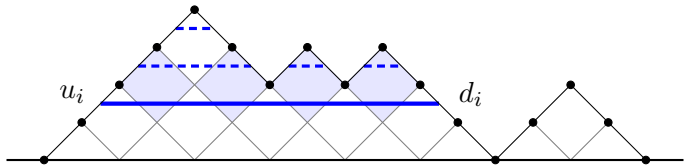
\begin{figure}[h]
\centering
\begin{tikzpicture}
\filldraw[blue!10!white] (1.5,0.5)--(1,1)--(1.5,1.5)--(2,1);
\filldraw[blue!10!white] (2.5,0.5)--(2,1)--(2.5,1.5)--(3,1);
\filldraw[blue!10!white] (3.5,0.5)--(3,1)--(3.5,1.5)--(4,1);
\filldraw[blue!10!white] (4.5,0.5)--(4,1)--(4.5,1.5)--(5,1);
\draw[gray] (0.5,0.5)--(1,0);
\draw[gray] (1,1)--(2,0);
\draw[gray] (1.5,1.5)--(3,0);
\draw[gray] (3,1)--(4,0);
\draw[gray] (4,1)--(5,0);
\draw[gray] (6.5,0.5)--(7,0);
\draw[gray] (1,0)--(2.5,1.5);
\draw[gray] (2,0)--(3,1);
\draw[gray] (3,0)--(4,1);
\draw[gray] (4,0)--(5,1);
\draw[gray] (5,0)--(5.5,0.5);
\draw[gray] (7,0)--(7.5,0.5);
\draw[ultra thick,blue] (0.75,0.75)--(5.25,0.75);
\node at (0.375,0.875) {$u_i$};
\node at (5.675,0.875) {$d_i$};
\draw[very thick,densely dashed,blue] (1.25,1.25)--(2.75,1.25);
\draw[very thick,densely dashed,blue] (1.75,1.75)--(2.25,1.75);
\draw[very thick,densely dashed,blue] (3.25,1.25)--(3.75,1.25);
\draw[very thick,densely dashed,blue] (4.25,1.25)--(4.75,1.25);
\draw[thick] (-0.5,0)--(8.5,0);
\draw (0,0)--(1,1)--(2,2)--(3,1)--(3.5,1.5)--(4,1)--(4.5,1.5)--(6,0)--(7,1)--(8,0);
\filldraw (0,0) circle (0.05cm);
\filldraw (0.5,0.5) circle (0.05cm);
\filldraw (1,1) circle (0.05cm);
\filldraw (1.5,1.5) circle (0.05cm);
\filldraw (2,2) circle (0.05cm);
\filldraw (2.5,1.5) circle (0.05cm);
\filldraw (3,1) circle (0.05cm);
\filldraw (3.5,1.5) circle (0.05cm);
\filldraw (4,1) circle (0.05cm);
\filldraw (4.5,1.5) circle (0.05cm);
\filldraw (5,1) circle (0.05cm);
\filldraw (5.5,0.5) circle (0.05cm);
\filldraw (6,0) circle (0.05cm);
\filldraw (6.5,0.5) circle (0.05cm);
\filldraw (7,1) circle (0.05cm);
\filldraw (7.5,0.5) circle (0.05cm);
\filldraw (8,0) circle (0.05cm);
\end{tikzpicture}
\caption{Standard Dyck path with a marked chord $(u_i,d_i)$, which intersects the lower half of four (colored) squares. The four chords that lie above $(u_i,d_i)$ are drawn as dashed lines.}
\label{boxes}
\end{figure}

The next step of the proof consists in showing that the sign of $\left(\mathbf{A}_n\right)_{R,\lambda}$ does not depend on $R$. To do so, let us define $U,D$ as above as the sets of up and down steps of $\lambda$, and let $R$ be $U\cup\{d_\l\}\backslash\{u_\l\}$ for some index $\l$, i.e. $R=\{u_1,u_2,\ldots,u_{\l-1},d_{\l},u_{\l+1},\ldots,u_n\}$. The subset $S$ associated with $R$ is therefore $S=\{d_1,d_2,\ldots,d_{\l-1},u_{\l},d_{\l+1},\ldots,d_n\}$. To compute $I(S)-I(R)$ we need to distinguish between two cases:
\vspace{-0.4cm}
\begin{enumerate}[(a)]
\item Consider a chord $(u_j,d_j)$ that lies to the left of $(u_{\l},d_{\l})$ in $\lambda$, that is, such that $u_j<d_j<u_{\l}<d_{\l}$. Since $u_j<u_{\l},d_{\l}$ and $d_j<u_{\l},d_{\l}$, these two chords do not contribute to $I(R)$ or $I(S)$. Similarly there is no contribution from a chord to the right of $(u_{\l},d_{\l})$ or below it.
\item A chord $(u_j,d_j)$ above $(u_{\l},d_{\l})$ is such that $u_{\l}<u_j<d_j<d_{\l}$, so it contributes equally to both $I(R)$ and $I(S)$.
\end{enumerate}
\vspace{-0.4cm}
Since $I(D)-I(U)=I(D)=I(S)-I(R)$, we see that the signs of $\left(\mathbf{A}_n\right)_{U,\lambda}$ and $\left(\mathbf{A}_n\right)_{R,\lambda}$ are the same. The argument holds if $R$ is obtained from $U$ by replacing two or more up steps of $\lambda$ with the corresponding down steps. The sign is therefore the same for any $R$ such that $R\cap\lambda$.

Finally, we take into account cyclic Dyck paths. Let $\vsig$ be a cyclic Dyck path with up steps $U$ and down steps $D$. We define $\vsig'$ by shifting all step indices of $\vsig$ to the left by one unit, that is $U'=U+1\mod 2n$ and $D'=D+1\mod 2n$. Assume first that $2n\in D$, then $I(U')=I(U)$ and $I(D')-I(D)=n-1\mod 2$. If rather $2n\in U$ then $I(U')-I(U)=n-1\mod 2$ and $I(D')=I(D)$. In both cases $(-1)^{I(D')-I(U')}=(-1)^{n+1}(-1)^{I(D)-I(U)}$. Therefore if a cyclic Dyck path $\vsig$ is obtained by shifting $k$ times to the left the indices of a standard Dyck path $\lambda$, then the sign of $\left(\mathbf{A}_n\right)_{R,\vsig}$ is $(-1)^{\frac{1}{2}(\mathcal{A}(\lambda)-n)+k(n+1)}$, with $\mathcal{A}(\lambda)=\mathcal{A}(\vsig)$.

Lastly we make the following observation: a unit shift of the indices of $\vsig$ to the left to define $\vsig'$ implies that $|W(\vsig')-W(\vsig)|=1$, since $W(\vsig)=h_0(\vsig)$ is the height of the left endpoint of the step indexed by $1$ in $\vsig$. Since the parity of $W(\vsig)$ alternates between even and odd for each unit shift of the indices of $\vsig$ to the left, the sign of $\left(\mathbf{A}_n\right)_{R,\vsig}$ can be recast as $(-1)^{\frac{1}{2}(\mathcal{A}(\vsig)-n)+(n+1)W(\vsig)}$.$\hfill\square$

\begin{lm} Let $\vsig$ be a cyclic Dyck path and $R\subset\mathcal{U}$ a subset of order $n$ such that $R\cap\vsig$. If we write $\Sigma R\equiv\sum_{i=1}^{n}r_i$ then
\begin{equation}
(-1)^{\Sigma R}=(-1)^{R\cdot\vsig+\frac{1}{2}(\mathcal{A}(\vsig)-n)+n\,W(\vsig)}.
\end{equation}
\label{lmB1}
\end{lm}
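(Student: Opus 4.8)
\textbf{Proof proposal for Lemma~\ref{lmB1}.}

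The plan is to compare the parity of $\Sigma R$ with the parity of a reference sum, and to track how both sides behave under the elementary moves used in the proof of Proposition~\ref{prop5.1}: swapping an up step of $\vsig$ for the matching down step inside $R$, and cyclically shifting the step labels of $\vsig$. First I would fix the standard (noncyclic) Dyck path $\lambda$ underlying $\vsig$ and the canonical choice $R_0=U^{\lambda}$, the set of all up steps. For this choice $R_0\cdot\lambda=n$, $W(\lambda)=0$, and $\Sigma R_0=\sum_{i}u_i$; since the up steps occupy exactly the positions $i$ with $h_i-h_{i-1}=+1$, a direct count gives $\sum_i u_i\equiv \tfrac12(\mathcal{A}(\lambda)-n)+\binom{n+1}{2}$ modulo $2$, or more cleanly one checks $(-1)^{\Sigma R_0}=(-1)^{\frac12(\mathcal{A}(\lambda)-n)}$ after absorbing the fixed combinatorial constant; I would verify the base case on small paths ($n=1,2$) to pin down the constant, exactly as the signature computation $I(D)=\tfrac12(\mathcal{A}(\lambda)-n)$ in the proof of Proposition~\ref{prop5.1} already does the analogous bookkeeping.

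Next I would show the formula is stable under replacing an up step $u_\l$ in $R$ by its partner down step $d_\l$ (with $\lambda$ and hence $\mathcal{A}(\vsig)$, $W(\vsig)$ unchanged). Such a move changes $\Sigma R$ by $d_\l-u_\l$, and changes $R\cdot\vsig$ by $-1$; since $d_\l-u_\l$ is the length of the chord, which is odd (a chord spans an odd number of steps), both $(-1)^{\Sigma R}$ and $(-1)^{R\cdot\vsig}$ flip, so the identity is preserved. Iterating over all chords whose down step we wish to include shows the identity holds for every $R$ with $R\cap\vsig$ once it holds for $R_0$. Then I would handle the cyclic shift: shifting all labels left by one replaces a chord's endpoint parities and, as recorded in the proof of Proposition~\ref{prop5.1}, changes $W(\vsig)$ by $\pm1$ and changes $\frac12(\mathcal{A}(\vsig)-n)$'s companion sign by $(n+1)$; on the left-hand side the shift sends each $r_i$ to $r_i\pm1 \bmod 2n$, so $\Sigma R$ changes by $n$ minus twice the number of wrapped indices, i.e.\ by $n$ modulo $2$. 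Matching $(-1)^{\Sigma R}$'s change, namely $(-1)^n$, against the change of $(-1)^{R\cdot\vsig+\frac12(\mathcal{A}-n)+nW}$, namely $(-1)^{n(n+1)}=(-1)^n$ (using that $R\cdot\vsig$ is invariant under relabelling, since ``up step'' is a property of the shape not the label), closes the induction.

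The main obstacle I expect is the careful parity accounting in the cyclic-shift step: one must be precise about the wrap-around, i.e.\ the index $2n$ jumping to $1$, and about whether that position is an up or a down step, since the term $n\,W(\vsig)$ and the sum $\Sigma R$ both respond to it. This is exactly the subtlety the proof of Proposition~\ref{prop5.1} resolves for the sign of $(\mathbf{A}_n)_{R,\vsig}$, so I would borrow that argument essentially verbatim, only additionally noting that $\Sigma R \bmod 2$ is insensitive to $R$ (given $R\cap\vsig$) precisely because each allowed $R$ differs from $U^{\lambda}$ by chord-length-odd swaps. Everything else is a routine consequence of the combinatorial identity $I(D)=\tfrac12(\mathcal{A}(\lambda)-n)$ already established, together with the observation that Dyck-path chords have odd length.
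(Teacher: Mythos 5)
Your overall architecture is sound, and its first half coincides with the paper's own proof: swapping an element of $R$ for its chord partner changes $\Sigma R$ by the (odd) chord length and $R\cdot\vsig$ by $\pm 1$, so $(-1)^{\Sigma R-R\cdot\vsig}$ is constant over all $R$ with $R\cap\vsig$ and everything reduces to one reference set. The genuine gap is in your base case, which is where the actual content of the lemma lives. The congruence you write, $\sum_i u_i\equiv\tfrac{1}{2}(\mathcal{A}(\lambda)-n)+\binom{n+1}{2}\pmod 2$, is false: for $\lambda=\mathrm{UUDD}$ one has $\Sigma U^{\lambda}=3$ while $\tfrac{1}{2}(\mathcal{A}-n)+\binom{3}{2}=1+3=4$. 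The correct identity is $\Sigma U^{\lambda}=n^2+\tfrac{1}{2}(n-\mathcal{A}(\lambda))$, so the additive term is $n^2\equiv n\equiv R_0\cdot\lambda\pmod 2$, and $n$ disagrees with $\binom{n+1}{2}$ mod $2$ whenever $n\equiv 2,3\pmod 4$. This also defeats your fallback of ``absorbing a fixed combinatorial constant'' checked on small paths: the quantity you would need to absorb is $(-1)^{n}$, which is not a fixed constant, and no finite verification substitutes for proving the identity. The paper establishes it directly for the \emph{cyclic} path, so that $W(\vsig)$ enters in the same stroke: writing $s_j=\pm 1$ for the step increments, one telescopes $\mathcal{A}(\vsig)=\sum_j h_j=2n\,h_0-\sum_j j\,s_j$ and uses $\sum_j j\,s_j=2\,\Sigma U^{\vsig}-n(2n+1)$ to get $\Sigma U^{\vsig}=n^2+\tfrac{1}{2}(n-\mathcal{A}(\vsig))+n\,W(\vsig)$. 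Since this single computation already covers every cyclic labeling, the separate induction over cyclic shifts becomes unnecessary.

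If you nonetheless keep the shift induction, it does close, but your sign accounting for the right-hand side is garbled: $(-1)^{n(n+1)}$ equals $+1$ for every $n$, not $(-1)^{n}$, and the factor $(n+1)W$ you invoke belongs to the sign of Proposition~\ref{prop5.1}, not to Lemma~\ref{lmB1}. The correct bookkeeping is simply that $\mathcal{A}(\vsig)$ and $R\cdot\vsig$ are invariant under relabeling while $W(\vsig)$ changes by $\pm 1$, so the exponent on the right changes by $n$; this matches the change $\Sigma R\mapsto\Sigma R+n-2n\cdot(\text{number of wrapped labels})$ on the left. In short: replace the guessed parity of $\Sigma U^{\lambda}$ by the exact formula above, and the rest of your argument goes through.
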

\begin{proof}
Let $R\subset\mathcal{U}$ such that $R$ intersects each chord of $\vsig$ once. Let $i,j$ be the indices of a chord of $\vsig$ with $i\in R$, and define $R'\equiv\left(R\backslash\{i\}\right)\cup\{j\}$. Since $i$ and $j$ have opposite parity, $(-1)^{\Sigma R'}=-(-1)^{\Sigma R}$. Moreover, $R'\cdot\vsig=R\cdot\vsig-1$ if $i$ is an up step, and $R'\cdot\vsig=R\cdot\vsig+1$ if $i$ is a down step. It follows that $(-1)^{\Sigma R'-\Sigma R}=(-1)^{R'\cdot\vsig-R\cdot\vsig}$. This relation holds more generally for any $R'$ that intersects once each chord of $\vsig$.

Consider in particular the set $R$ of up steps of $\vsig$, which implies that $R\cdot\vsig=n$, and assume that the first step of $\vsig$ has the label $\l\ge 1$. Denote by $s_j$ the value of the step at position $j$ for $1\le j\le 2n$, with $s_j=+1$ (resp. $-1$) if $j$ is an up step (resp. down step) of $\vsig$. The height $h_j$ for $0\le j\le 2n$ is given by
\begin{equation}
h_j=\begin{cases}
h_{j-1}+s_j=h_0+\sum_{k=1}^{j}s_k&\quad\textrm{if $1\le j\le\l-1$,}\\
h_{j+1}-s_j=h_0+\sum_{k=1}^{j-1}s_k&\quad\textrm{if $\l+2\le j\le 2n$,}\\
0&\quad\textrm{if $j=\l,\l+1$,}
\end{cases}
\end{equation}
where $h_{\l}=h_{\l+1}=0$ since they correspond to the first and last vertices of the cyclic Dyck path $\vsig$. After some algebra, we can express the area $\mathcal{A}(\vsig)$ between the Dyck path and the horizontal axis in terms of the $s_j$'s as
\begin{equation}
\mathcal{A}(\vsig)=\sum_{j=0}^{2n}h_j=2n\,h_0-\sum_{j=1}^{2n}j\,s_j.
\end{equation}
By separating the last sum in the equation above for positive and negative values of the $s_j$'s, one finds
\begin{equation}
\sum_{j=1}^{2n}j\,s_j=\sum_{j:\,s_j=+1}j\,s_j+\sum_{j:\,s_j=-1}j\,s_j=\sum_{j:\,s_j=+1}2j-n(2n+1),
\end{equation}
where we used the fact that $s_1+s_2+\ldots+s_{2n}=0$ for a (cyclic) Dyck path. Hence, we obtain the following formula (recall that $h_0(\vsig)=W(\vsig)$):
\begin{equation}
\Sigma R=\sum_{j:\,s_j=+1}j=n^2+\frac{1}{2}(n-\mathcal{A}(\vsig))+n\,W(\vsig),
\end{equation}
which completes the proof.
\end{proof}

\begin{lm} Let $\lambda$ be a standard, noncyclic, Dyck path with a chord $(\l,m)$. Define $\lambda'$ as the Dyck path obtained by ``pushing down'' the chord $(\l,m)$ of $\lambda$, that is, by replacing the up step $\l$ with a down step and the down step $m$ with an up step (note that not all chords can be pushed down). The areas between the two Dyck paths are related by
\begin{equation}
\mathcal{A}(\lambda')=\mathcal{A}(\lambda)-2(m-\l).
\end{equation}
Moreover, for any subset $R\subset\mathcal{U}$ such that $R\cap\lambda$, $(-1)^{R\cdot\lambda'-R\cdot\lambda}=(-1)^{\frac{1}{2}(\mathcal{A}(\lambda')-\mathcal{A}(\lambda))}$.
\label{lmB2}
\end{lm}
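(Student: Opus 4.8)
The plan is to verify the two assertions of Lemma~\ref{lmB2} by direct inspection of the effect of a ``push down'' move on the heights of the Dyck path, and then on the parity of $R\cdot\lambda$. Throughout I write $h_i=h_i(\lambda)$ and $h_i'=h_i(\lambda')$ for the heights, and recall that $\lambda'$ is obtained from $\lambda$ by turning the up step at position $\l$ into a down step and the down step at position $m$ into an up step, where $(\l,m)$ is a chord of $\lambda$ (so $\l<m$, and pushing down is legal precisely when the resulting sequence is still a Dyck path, i.e. stays nonnegative).

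First I would compute $\mathcal{A}(\lambda')-\mathcal{A}(\lambda)$. The step values $s_i$ change only at positions $i=\l$ (from $+1$ to $-1$) and $i=m$ (from $-1$ to $+1$); all other $s_i$ are unchanged. Since $h_i=\sum_{k\le i}s_k$ (for a noncyclic Dyck path $h_0=0$), flipping $s_\l$ from $+1$ to $-1$ decreases $h_i$ by $2$ for every $i$ with $\l\le i\le m-1$, and flipping $s_m$ from $-1$ to $+1$ then restores nothing in that range but leaves $h_i$ for $i\ge m$ unchanged; more precisely, $h_i'=h_i-2$ exactly for $\l\le i\le m-1$ and $h_i'=h_i$ otherwise. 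That range contains $m-\l$ indices, so
\begin{equation}
\mathcal{A}(\lambda')=\sum_{i=0}^{2n}h_i'=\sum_{i=0}^{2n}h_i-2(m-\l)=\mathcal{A}(\lambda)-2(m-\l),
\end{equation}
which is the first claim. (One should also check the legality condition is compatible with $\lambda'$ being a Dyck path, but that is exactly the hypothesis that the chord \emph{can} be pushed down.)

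Next I would handle the parity statement. The quantity $R\cdot\lambda$ counts the up steps of $\lambda$ whose indices lie in $R$. Going from $\lambda$ to $\lambda'$, the set of up steps changes only by removing position $\l$ and adding position $m$. Hence $R\cdot\lambda'-R\cdot\lambda=\mathbf{1}_{m\in R}-\mathbf{1}_{\l\in R}$. Now the hypothesis $R\cap\lambda$ means $R$ contains exactly one of the two endpoints of each chord of $\lambda$, and in particular exactly one of $\l,m$; therefore $\mathbf{1}_{m\in R}-\mathbf{1}_{\l\in R}=\pm 1$ and in all cases $(-1)^{R\cdot\lambda'-R\cdot\lambda}=-1$. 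On the other hand, from the area computation $\tfrac12(\mathcal{A}(\lambda')-\mathcal{A}(\lambda))=-(m-\l)$, so $(-1)^{\frac12(\mathcal{A}(\lambda')-\mathcal{A}(\lambda))}=(-1)^{m-\l}$. Since $(\l,m)$ is a chord, the steps strictly between positions $\l$ and $m$ form a balanced Dyck subword, so $m-\l$ is odd; thus $(-1)^{m-\l}=-1=(-1)^{R\cdot\lambda'-R\cdot\lambda}$, proving the second claim.

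The only genuinely delicate point I anticipate is the bookkeeping of \emph{which} heights drop by $2$ under the push-down move — getting the index range $\l\le i\le m-1$ exactly right, and being careful that the move is only applied to chords for which $\lambda'$ remains a valid (nonnegative) Dyck path, as the lemma's phrasing (``note that not all chords can be pushed down'') signals. Everything else is a short parity argument resting on two elementary facts: $R\cap\lambda$ forces $R$ to split each chord, and the span $m-\l$ of a chord is odd because it encloses a balanced subword. I would present the two facts as one-line observations and then assemble the displayed identities as above.
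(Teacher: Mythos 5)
Your proof is correct and follows essentially the same route as the paper's: track the heights that drop by $2$ on the range $\l\le i\le m-1$ to get the area identity, then use $R\cap\lambda$ to see that exactly one of $\l,m$ lies in $R$ so $R\cdot\lambda'-R\cdot\lambda=\pm1$, and match parities via the oddness of $m-\l$. Your explicit justification that $m-\l$ is odd (the chord encloses a balanced subword) is a small but welcome addition that the paper leaves implicit.
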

\begin{proof}
Observe that the heights of the vertices $h'_i$ in $\lambda'$ are decreased by $2$ with respect to their counterparts $h_i$ in $\lambda$ for $\l\le i\le m-1$. If the up step $\l$ of $\lambda$ belongs to $R$, then $R\cdot\lambda'=R\cdot\lambda-1$ since $\l$ is a down step of $\lambda'$. Conversely if $m\in R$, $R\cdot\lambda'=R\cdot\lambda+1$ since $m$ is a down step of $\lambda$ but an up step of $\lambda'$. In both cases $R\cdot\lambda$ and $R\cdot\lambda'$ differ by one, and $m-\l=\frac{1}{2}(\mathcal{A}(\lambda)-\mathcal{A}(\lambda'))$ is odd.
\end{proof}

\begin{lm}[Adapted from \cite{KW15}, Lemma A.2\hspace{-0.025cm}] Let $\lambda$ and $\vsig$ be a standard and a cyclic Dyck paths, respectively. Then
\begin{equation} 
\sum_{R:\,R\cap\vsig}(-1)^{R\cdot\vsig}w^{-R\cdot\lambda-R:\vsig}=(-1)^{\frac{1}{2}\left(\mathcal{A}(\vsig)-\mathcal{A}(\lambda)\right)}(1-w^{-1})^n
\label{lmB3eq}
\end{equation}
if it possible to push down some chords of $\vsig$ to obtain $\lambda$, and $0$ otherwise. In particular the step labels of $\vsig$ and $\lambda$ must appear in the same order for the sum to be nonzero, i.e. $\vsig$ must be a standard Dyck path.
\label{lmB3}
\end{lm}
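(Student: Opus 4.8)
The plan is to compute the sum $\sum_{R:\,R\cap\vsig}(-1)^{R\cdot\vsig}w^{-R\cdot\lambda-R:\vsig}$ by factoring it over the chords of $\vsig$. Since the condition $R\cap\vsig$ means exactly that $R$ selects one of the two endpoints of each chord, the set of admissible $R$ is in bijection with $\{0,1\}^n$: for each chord $c=(u_c,d_c)$ of $\vsig$ we decide whether $R$ contains the up step $u_c$ or the down step $d_c$. First I would check that all three exponents appearing in the summand are additive over chords. This is clear for $R\cdot\vsig$ (the number of up steps in $R$) and for $R:\vsig$ (the number of up steps in $R$ belonging to wrapped chords). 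For $R\cdot\lambda$, the subtlety is that a step that is an up step of $\vsig$ may be a down step of $\lambda$ and vice versa; but $R\cdot\lambda=\sum_{c}\big(\mathbf{1}_{R\ni u_c}\mathbf{1}_{u_c\in U^{\lambda}}+\mathbf{1}_{R\ni d_c}\mathbf{1}_{d_c\in U^{\lambda}}\big)$ is still a sum of per-chord contributions, so the whole summand factors as a product over chords. Hence the sum becomes $\prod_{c}\big(w^{-a_c}\pm w^{-b_c}\big)$ for suitable exponents, where the sign in front of the ``down-step'' term of chord $c$ is $-1$ precisely when choosing $d_c$ instead of $u_c$ flips $(-1)^{R\cdot\vsig}$, i.e.\ always (an up step contributes to $R\cdot\vsig$, a down step does not). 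So each factor is $w^{-a_c}-w^{-b_c}$, and I must identify $a_c,b_c$.

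Next I would evaluate each per-chord factor. Fix a chord $c$ of $\vsig$ with up step $u_c$ and down step $d_c$. There are four cases according to whether $c$ is wrapped in $\vsig$ ($u_c>d_c$, contributing to $W$) or not, and whether the steps of $c$ are up/down in $\lambda$. If $\vsig$ cannot be turned into $\lambda$ by pushing down chords, I expect one of these factors to vanish or the product to telescope to $0$; concretely, if some chord of $\vsig$ has its endpoints appearing in $\lambda$ in an order incompatible with any push-down sequence, then both $u_c$ and $d_c$ play the same role (both up or both down) in $\lambda$, forcing $a_c=b_c$ and hence the factor $w^{-a_c}-w^{-b_c}=0$. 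This is the mechanism that produces the ``$0$ otherwise'' clause, and in particular forces $\vsig$ to be a \emph{standard} Dyck path with the same order of step labels as $\lambda$. When $\vsig$ \emph{can} be reduced to $\lambda$ by pushing down chords, each factor should simplify to $w^{-\alpha_c}(1-w^{-1})$ where $\alpha_c\in\{0,1\}$; collecting the $n$ factors gives $w^{-\sum_c\alpha_c}(1-w^{-1})^n$, and it remains to check that $-\sum_c\alpha_c$ contributes only a sign once combined with the $(-1)^{R\cdot\vsig}$ factors already extracted. For the power of $w$ I would compare with Lemma~\ref{lmB2}: pushing down a chord $(\l,m)$ changes the area by $-2(m-\l)$ and changes $R\cdot\lambda$-type counts by an odd amount, which lets me rewrite the accumulated power of $w$ and the accumulated sign in terms of $\tfrac12(\mathcal{A}(\vsig)-\mathcal{A}(\lambda))$ via an induction on the number of push-down moves needed to go from $\vsig$ to $\lambda$.

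The cleanest way to organize the bookkeeping is therefore an induction on $\mathcal{A}(\vsig)-\mathcal{A}(\lambda)$ (equivalently, on the number of push-down operations). The base case $\vsig=\lambda$ is the statement that $\sum_{R:\,R\cap\lambda}(-1)^{R\cdot\lambda}w^{-R\cdot\lambda-R:\lambda}=(1-w^{-1})^n$ directly, which follows from the per-chord factorization above since for $\vsig=\lambda$ every up step of $\vsig$ is an up step of $\lambda$ and each factor is exactly $1-w^{-1}$ (the $R:\vsig$ exponent is absorbed into the remaining power of $w$, which I would track carefully). For the inductive step, writing $\vsig$ as obtained from $\vsig'$ by pushing down one chord with $\mathcal{A}(\vsig)=\mathcal{A}(\vsig')+2(m-\l)$, I would show the sum over $R$ transforms by exactly the factor $(-1)^{m-\l}w^{-(m-\l)}\cdot$(something), matching $(-1)^{\frac12(\mathcal{A}(\vsig)-\mathcal{A}(\vsig'))}$ after using that $m-\l$ is odd (Lemma~\ref{lmB2}) and that $w\to1$ is not yet taken. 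The main obstacle will be precisely this sign/power-of-$w$ accounting: one must reconcile the exponent $-R\cdot\lambda-R:\vsig$ attached to a \emph{cyclic} $\vsig$ but a \emph{standard} $\lambda$, keeping straight which endpoint of each chord is an up step in which path, and verifying that the wrapped-chord correction $R:\vsig$ is exactly what is needed for the per-chord factors to come out uniformly as $(1-w^{-1})$ up to an overall monomial. Once that is pinned down, the identity \eqref{lmB3eq} and the vanishing clause both drop out of the factorized product.
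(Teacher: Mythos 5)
Your per-chord factorization is valid, and for the main case it is essentially equivalent to (arguably cleaner than) the paper's argument: since $R\cdot\vsig$, $R\cdot\lambda$ and $R:\vsig$ are all additive over the choice of one endpoint per chord, the sum factors as
$\prod_{c}\bigl(-w^{-\mathbf{1}_{u_c\in U^{\lambda}}-\mathbf{1}_{c\ \mathrm{wrapped}}}+w^{-\mathbf{1}_{d_c\in U^{\lambda}}}\bigr)$,
and when $\vsig$ is standard each factor equals $\pm(1-w^{-1})$, with sign $-1$ exactly for the pushed-down chords, so Lemma~\ref{lmB2} recovers the sign $(-1)^{\frac{1}{2}(\mathcal{A}(\vsig)-\mathcal{A}(\lambda))}$. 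The paper instead writes $(-1)^{R\cdot\sigma}=(-1)^{R_0\cdot\sigma+R\cdot\lambda}$ with $R_0$ the set of down steps of $\lambda$ and applies the binomial expansion; the two derivations amount to the same thing.

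The genuine gap is in the ``$0$ otherwise'' clause. The vanishing mechanism you state --- a chord whose two endpoints play the same role in $\lambda$ forces $a_c=b_c$ --- is correct only for \emph{non-wrapped} chords. For a wrapped chord $c$ the factor is $-w^{-\mathbf{1}_{u_c\in U^{\lambda}}-1}+w^{-\mathbf{1}_{d_c\in U^{\lambda}}}$, which vanishes only in the single configuration $u_c\in D^{\lambda}$, $d_c\in U^{\lambda}$; in the other three configurations it equals $1-w^{-2}$, $w^{-1}(1-w^{-1})$ or $1-w^{-1}$, all nonzero. So the mere presence of wrapped chords (i.e.\ $\vsig$ being a non-standard cyclic path) does not kill the product, and you still owe a proof that \emph{every} non-standard cyclic $\vsig$ contains at least one chord whose factor vanishes. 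That global structural statement is precisely what the paper's subcases $(ii)$--$(v)$ establish, via height-counting arguments (for instance, if $(2n,j)$ is a chord with $j$ a down step of $\lambda$, comparing the numbers of up and down steps of $\lambda$ on $\{1,\ldots,j{-}1\}$ produces a non-wrapped chord with both endpoints in $U^{\lambda}$) together with a recursive descent in subcase $(v)$. Without this argument your proof only establishes the lemma when $\vsig$ is standard. (The induction on push-down moves you sketch for the nonzero case is redundant once the factorization is in place, but harmless.)
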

\begin{proof}
Let us assume first that $\vsig$ is a standard Dyck path, which we shall simply write as $\sigma$. If we can push down some chords of $\sigma$ to obtain $\lambda$, then each chord of $\sigma$ connects an up step to a down step of $\lambda$ (not necessarily belonging to the same chord). For any two subsets $R,R'\subset\mathcal{U}$ of order $n$ that intersect each chord of $\sigma$ exactly once,
\begin{equation}
(-1)^{R\cdot\sigma+R'\cdot\sigma}=(-1)^{R\cdot\lambda+R'\cdot\lambda}.
\end{equation}
Indeed, consider a chord $(i,j)$ of $\sigma$ on which $R$ and $R'$ differ. Without loss of generality, we can assume that the up step $i$ belongs to $R$ and the down step $j$ to $R'$. If $(i,j)$ was pushed down to obtain $\lambda$ then $i$ is a down step of $\lambda$ in $R$, and $j$ an up step of $\lambda$ in $R'$. Otherwise $i$ is an up step of $\lambda$ and $j$ a down step of $\lambda$. In both cases, the chord $(i,j)$ brings a factor $-1$ to both $(-1)^{R\cdot\sigma+R'\cdot\sigma}$ and $(-1)^{R\cdot\lambda+R'\cdot\lambda}$. Now denote by $R_0$ the set of down steps of $\lambda$, so $R_0\cdot\lambda=0$. Then $R_0\cdot\sigma$ is the number of chords of $\sigma$ that are pushed down to obtain $\lambda$. Lemma \ref{lmB2} then yields the equality $(-1)^{R_0\cdot\sigma}=(-1)^{\frac{1}{2}(\mathcal{A}(\sigma)-\mathcal{A}(\lambda))}$. As $(-1)^{R\cdot\sigma}=(-1)^{R_0\cdot\sigma+R\cdot\lambda}$, it follows that
\begin{equation}
\sum_{R:\,R\cap\sigma}(-1)^{R\cdot\sigma}w^{-R\cdot\lambda}=(-1)^{R_0\cdot\sigma}\sum_{R:\,R\cap\sigma}(-w^{-1})^{R\cdot\lambda}=(-1)^{\frac{1}{2}(\mathcal{A}(\sigma)-\mathcal{A}(\lambda))}(1-w^{-1})^n,
\end{equation}
where the last equality is given by the binomial expansion. On the other hand, if $\lambda$ cannot be obtained by pushing down some chords of $\sigma$, there is a chord $(i,j)$ of $\sigma$ such that both $i$ and $j$ are up steps of $\lambda$. Then for each $R$ such that $R\cap\sigma$ in the sum, define $R'$ as the symmetric difference of $R$ with $\{i,j\}$. Since $(-1)^{R\cdot\sigma}=-(-1)^{R'\cdot\sigma}$ and $R\cdot\lambda=R'\cdot\lambda$, the sum over all $R$'s such that $R\cap\sigma$ is zero.

Let us now discuss the case of a cyclic, nonstandard, Dyck path $\vsig$, for which there are several subcases:
\vspace{-0.5cm}
\begin{enumerate}[$(i)$]
\item If $(i,j)$ is a chord of $\vsig$ such that $i<j$ are both up or down steps of $\lambda$, the sum is zero. To see this, define for each $R$ such that $R\cap\vsig$ its symmetric difference with $\{i,j\}$, denoted by $R'$. Then $(-1)^{R\cdot\vsig}=-(-1)^{R'\cdot\vsig}$, $w^{-R\cdot\lambda}=w^{-R'\cdot\lambda}$ and $w^{-R:\vsig}=w^{-R':\vsig}$, so the contributions of $R$ and $R'$ cancel each other out.\vspace{-0.2cm}
\item Suppose $(2n,j)$ is a chord of $\vsig$ such that $j$ is an up step of $\lambda$. For each $R$ define $R'$ as the symmetric difference of $R$ with $\{2n,j\}$. If $j\in R$ then $w^{-R\cdot\lambda}=w^{-R'\cdot\lambda-1}$ and $w^{-R:\vsig}=w^{-R':\vsig+1}$. Conversely if $j\in R'$ then $w^{-R\cdot\lambda}=w^{-R'\cdot\lambda+1}$ and $w^{-R:\vsig}=w^{-R':\vsig-1}$. Therefore $(-1)^{R\cdot\vsig}=-(-1)^{R'\cdot\vsig}$ and $w^{-R\cdot\lambda-R:\vsig}=w^{-R'\cdot\lambda-R':\vsig}$ in both cases, so the sum is zero.\vspace{-0.2cm}
\item Assume $(i,2n)$ is a chord of $\vsig$ such that $i$ is a down step of $\lambda$. Then for $R'$ the symmetric difference of $R$ with $\{i,2n\}$, we have $(-1)^{R\cdot\vsig}=-(-1)^{R'\cdot\vsig}$, $w^{-R\cdot\lambda}=w^{-R'\cdot\lambda}$ and $w^{-R:\vsig}=w^{-R':\vsig}$. Therefore, $R,R'$ give opposite contributions to the sum, which vanishes.\vspace{-0.2cm}
\item Suppose $(2n,j)$ is a chord of $\vsig$ such that $j$ is a down step of $\lambda$. The subpath of $\vsig$ consisting in steps $\{1,\ldots,j{-}1\}$ is a standard Dyck path, so it contains the same number of up steps and down steps of $\vsig$. Consider now the subpath of $\lambda$ with the same indices. Since both $j$ and $2n$ are down steps of $\lambda$ with opposite parity, $h_j(\lambda)-h_{2n}(\lambda)$ must be odd, so the step $j$ lies higher than the step $2n$ on $\lambda$. This implies that the subset $\{1,\ldots,j{-}1\}$ contains more up steps than down steps of $\lambda$. Therefore there exists a chord $(\l,m)$ of $\vsig$ with $1\le\l<m\le j{-}1$ such that both $\l$ and $m$ are up steps of $\lambda$, so we can refer to subcase $(i)$.\vspace{-0.2cm}
\item If $(i,2n)$ is a chord of $\vsig$ with $i$ an up step of $\lambda$, then the subpath of $\vsig$ indexed by the indices $\{i{+}1,\ldots,2n{-}1\}$ is a standard Dyck path.\vspace{-0.2cm}
\begin{enumerate}[a.] 
\item If $h_{i-1}(\lambda)>h_{2n}(\lambda)=0$, that is if the up step $i$ is higher than the down step $2n$ in $\lambda$, then the subset $\{i{+}1,\ldots 2n{-}1\}$ contains more down steps than up steps of $\lambda$. Therefore there exists a chord $(\l,m)$ of $\vsig$ with $i<\l<m<2n$ such that both $\l$ and $m$ are down steps of $\lambda$, which corresponds to subcase $(i)$.
\item If on the other hand $h_{i-1}(\lambda)=h_{2n}(\lambda)=0$, then the subpath of $\lambda$ indexed by $\{1,\ldots,i{-}1\}$ is a Dyck path. In that case, if $i{-}1$ is an up step of $\vsig$, it belongs to a chord $(i{-}1,j)$ of $\vsig$ with $i{-}1>j$.
\begin{itemize}
\item If $j$ is an up step of $\lambda$, subcase $(ii)$ occurs;
\item If $j$ is an down step of $\lambda$, subcase $(iv)$ occurs.
\end{itemize}
If on the contrary $i{-}1$ is a down step of $\vsig$, it belongs to a chord $(j,i{-}1)$ of $\vsig$ with $j<i{-}1$.
\begin{itemize}
\item If $j$ is a down step of $\lambda$, subcase $(iii)$ occurs;
\item If $j$ is an up step of $\lambda$ such that $h_{j-1}(\lambda)>h_{i-1}(\lambda)=0$, subcase $(v)a.$ occurs;
\item If $j$ is an up step of $\lambda$ such that $h_{j-1}(\lambda)=h_{i-1}(\lambda)=0$, then the subpath of $\lambda$ indexed by $\{1,\ldots,j{-}1\}$ is a standard Dyck path, so we can consider subcase $(v)b.$ again with $j{-}1$ instead of $i{-}1$.
\end{itemize}
\end{enumerate}
\end{enumerate}
\vspace{-0.3cm}
For all cases, we see that the sum in Eq.~\eqref{lmB3eq} vanishes if $\vsig$ is not a standard Dyck path.
\end{proof}

\begin{lm}[\hspace{-0.025cm}\cite{KW11b}, Theorem 1.5\hspace{-0.025cm}] Let $\vmu,\vsig,\vt$ be cyclic Dyck paths, and let $M$ be the matrix defined by $M_{\vmu,\vt}=1$ if $\vmu$ can be obtained by pushing down some of the chords of $\vt$, and zero otherwise. Then
\begin{equation}
\sum_{\vmu\ge\vsig}(-1)^{\frac{1}{2}(\mathcal{A}(\vmu)-\mathcal{A}(\vsig))}\mathrm{ci}(\vsig/\vmu)M_{\vmu,\vt}=\delta_{\vsig,\vt}\,.
\end{equation}
\label{lmB4}
\end{lm}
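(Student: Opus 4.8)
The plan is to read the claimed identity as the statement that two square matrices, both indexed by the finite set of cyclic Dyck paths of length $2n$, are inverse to one another. Set
\begin{equation*}
T_{\vsig,\vmu}=(-1)^{\frac{1}{2}(\mathcal{A}(\vmu)-\mathcal{A}(\vsig))}\,\mathrm{ci}(\vsig/\vmu),
\end{equation*}
and let $M_{\vmu,\vt}$ be the incidence matrix of the statement, so that Lemma~\ref{lmB4} reads $\big(TM\big)_{\vsig,\vt}=\delta_{\vsig,\vt}$ (the constraint $\vmu\ge\vsig$ in the sum being automatic, since $\mathrm{ci}(\vsig/\vmu)=0$ otherwise). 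A cover-inclusive Dyck tiling of shape $\vsig/\vmu$ exists only when $\vmu$ dominates $\vsig$, and $T_{\vsig,\vsig}=1$ by the empty tiling; on the other side, Lemma~\ref{lmB2} shows that pushing down a chord strictly lowers every height, so $M_{\vmu,\vt}=1$ forces $\vmu\le\vt$, with $M_{\vt,\vt}=1$. Hence $T$ and $M$ become unitriangular after choosing any linear extension of the domination order, so both are invertible; and by transitivity of $\ge$ the sum $\big(TM\big)_{\vsig,\vt}$ is empty, hence $0$, whenever $\vsig\not\le\vt$, and it reduces to the single term $T_{\vsig,\vsig}M_{\vsig,\vsig}=1$ when $\vsig=\vt$. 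The whole content of the lemma is therefore the vanishing $\big(TM\big)_{\vsig,\vt}=0$ for $\vsig<\vt$, on which I would concentrate.

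For that case I would first rewrite $\big(TM\big)_{\vsig,\vt}$ as a signed enumeration of pairs $(\vmu,\mathcal{D})$, where $\vmu$ is a cyclic Dyck path obtainable from $\vt$ by pushing down chords (so $\vsig\le\vmu\le\vt$) and $\mathcal{D}$ is a cover-inclusive Dyck tiling of the skew strip $\vsig/\vmu$, the sign of $(\vmu,\mathcal{D})$ being $(-1)^{b(\mathcal{D})}$ with $b(\mathcal{D})$ the number of unit squares of $\mathcal{D}$; since this count depends only on the shape $\vsig/\vmu$ it equals $\frac{1}{2}(\mathcal{A}(\vmu)-\mathcal{A}(\vsig))$, by the box count already used in the proof of Proposition~\ref{prop5.1} (see Fig.~\ref{Dyck_tiling_ex}). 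The path $\vmu$ splits the region $\vsig/\vt$ into a lower part tiled by $\mathcal{D}$ and an upper part $\vmu/\vt$ recorded by the push-down moves. The heart of the argument is then to construct a sign-reversing, fixed-point-free involution on these pairs by identifying a canonical square $\beta$ on the interface between the two parts and toggling whether $\beta$ lies in the $\mathcal{D}$-region (raising $\vmu$ by one diamond and deleting the corresponding box of $\mathcal{D}$) or in the upper push-down region (lowering $\vmu$ by one diamond and adjoining a one-square tile to $\mathcal{D}$); such a move changes $b(\mathcal{D})$ by one, hence flips the sign. Provided $\vsig<\vt$ there is always at least one pushed-down chord, hence always such a $\beta$, so the involution has no fixed point and $\big(TM\big)_{\vsig,\vt}=0$; combined with the two easy cases this gives $TM=\mathbb{I}$, which is the stated identity.

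The main obstacle is this second step: pinning down the interface square $\beta$ and the add/remove operations unambiguously requires fixing a scanning order on the strip $\vsig/\vt$ and checking that the two operations are mutually inverse and that exactly one of them is admissible at $\beta$ — this is precisely where the push-down operation on cyclic Dyck paths and the no-overhang condition defining cover-inclusive tilings must be reconciled, and it is the delicate bookkeeping carried out in \cite{KW11b,SZJ12}. Since Lemma~\ref{lmB4} is in fact \cite{KW11b}, Theorem~1.5, an acceptable alternative is simply to invoke that result.
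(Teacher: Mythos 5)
The paper offers no proof of this lemma: it is imported directly as \cite{KW11b}, Theorem~1.5, and is only \emph{used} in the appendix (in the proof of Proposition~\ref{prop5.2}). Your closing fallback---simply invoking that theorem---is therefore exactly what the paper does, and the reductions you carry out beforehand are correct and add genuine value: reading the identity as a matrix product $TM=\mathbb{I}$, observing that both factors are unitriangular for the domination order (so the sum is empty unless $\vsig\le\vt$ and collapses to $T_{\vsig,\vsig}M_{\vsig,\vsig}=1$ on the diagonal), isolating the off-diagonal vanishing as the only real content, and identifying the sign with $(-1)^{b(\mathcal{D})}$ where $b(\mathcal{D})=\tfrac12(\mathcal{A}(\vmu)-\mathcal{A}(\vsig))$ is the box count of the tiled region.

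That said, the involution you sketch for $\vsig<\vt$ would not work as literally described. Pushing down a chord $(\ell,m)$ of $\vt$ changes the area by $2(m-\ell)$ with $m-\ell$ odd but in general larger than $1$: it transfers a whole ribbon of $m-\ell$ diamonds across the interface at once, not a single square. Hence ``raising or lowering $\vmu$ by one diamond'' at a canonical square $\beta$ need not produce a path $\vmu'$ that is still obtainable from $\vt$ by pushing down chords, so the proposed toggle is not even a map on the set of pairs $(\vmu,\mathcal{D})$ being summed over, let alone a sign-reversing involution. The cancellation has to be organized at the level of chords and Dyck ribbons rather than unit squares, and reconciling that with the cover-inclusive condition is not residual bookkeeping---it is the entire content of the theorem. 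You flag this obstacle yourself, so the outline is honest; but as a standalone argument it has a genuine gap, and the only complete justification on offer is the citation, which is also all the paper provides.
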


\textbf{Proof of Proposition~\ref{prop5.2}:}\\
Let us first consider a standard Dyck path $\lambda$, for which $W(\lambda)=R|\lambda_{\l(1)}=\big|D^{\lambda}_{\l(1)}\big|=0$. The product of $\mathbf{B}_n$ and $\mathbf{A}_n$ yields
\begin{equation}
\begin{split}
\sum_{\substack{R\subset\mathcal{U}\\|R|=n}}\left(\mathbf{B}_n\right)_{\lambda,R}\left(\mathbf{A}_n\right)_{R,\vt}&=\sum_{R:\,R\cap\vt}(-1)^{\Sigma R+n}\sum_{\mu\ge\lambda}\mathrm{ci}(\lambda/\mu)\,w^{n-R\cdot\mu}\times(-1)^{\frac{1}{2}(\mathcal{A}(\vt)-n)+(n+1)W(\vt)}w^{W(\vt)-R:\vt}\\
&=(-1)^{\frac{1}{2}(\mathcal{A}(\vt)+n)+(n+1)W(\vt)}w^{n+W(\vt)}\sum_{\mu\ge\lambda}\mathrm{ci}(\lambda/\mu)\sum_{R:\,R\cap\vt}(-1)^{\Sigma R}w^{-R\cdot\mu-R:\vt}.
\end{split}
\end{equation}
We use Lemma~\ref{lmB1} to write $(-1)^{\Sigma R}$ in terms of $(-1)^{R\cdot\vt}$ on the right-hand side, and Lemma~\ref{lmB3} to compute the sum over all $R$'s such that $R\cap\vt$. The result reads
\begin{equation*}
(-w)^{n+W(\vt)}(1-w^{-1})^n\sum_{\mu\ge\lambda}(-1)^{\frac{1}{2}(\mathcal{A}(\vt)-\mathcal{A}(\mu))}\mathrm{ci}(\lambda/\mu)M_{\mu,\vt}.
\end{equation*}
Applying Lemma~\ref{lmB4} to the above expression yields
\begin{equation}
\sum_{\substack{R\subset\mathcal{U}\\|R|=n}}\left(\mathbf{B}_n\right)_{\lambda,R}\left(\mathbf{A}_n\right)_{R,\vt}=(-1)^{\frac{1}{2}(\mathcal{A}(\vt)-\mathcal{A}(\lambda))}(-w)^{W(\vt)}(1-w)^n\delta_{\lambda,\vt}=(1-w)^n\delta_{\lambda,\vt}.
\label{prop5.2eq2}
\end{equation}
Consider now two cyclic Dyck paths $\vsig,\vt$ such that the first step of $\vsig$ (resp. $\vt$) is labeled by the index $k{+}1$ (resp. $\l{+}1$). Let us define $\sigma_0$ and $\vtp$ by subtracting $k\mod 2n$ to the label of each step of $\vsig$ and $\vt$, so that $\sigma_0$ is a standard Dyck path. We define for each $R$ the subset $R'$ by subtracting $k\mod 2n$ to each index in $R$. We further assume that $R\cap\vt$, as $\left(\mathbf{A}_n\right)_{R,\vt}$ vanishes otherwise. Our goal is to rewrite the product $\left(\mathbf{B}_n\right)_{\vsig,R}\left(\mathbf{A}_n\right)_{R,\vt}$ in terms of $\left(\mathbf{B}_n\right)_{\sigma_0,R'}\left(\mathbf{A}_n\right)_{R',\vtp}$, and then use Eq.~\eqref{prop5.2eq2} to get the desired result.

Let us first discuss the signs of both products. Explicitly, their quotient reads
\begin{equation}
\frac{(-1)^{\Sigma R+W(\vsig)+n+\frac{1}{2}(\mathcal{A}(\vt)-n)+(n+1)W(\vt)}}{(-1)^{\Sigma R'+W(\sigma_0)+n+\frac{1}{2}(\mathcal{A}(\vtp)-n)+(n+1)W(\vtp)}}=(-1)^{kn+W(\vsig)+(n+1)(W(\vt)-W(\vtp))},
\label{prop5.2eq3}
\end{equation}
since $\sigma_0$ is a standard Dyck path ($W(\sigma_0)=0$) and $\Sigma R=\Sigma R'+kn\mod 2n$. Moreover, the permutation of the indices of $\vt$ to form $\vtp$ leaves the area under the path invariant: $\mathcal{A}(\vt)=\mathcal{A}(\vtp)$. In addition, observe that rotating the step indices of $\vt$ by one unit to the left changes $W(\vt)=h_0(\vt)$ by $\pm 1$. One may therefore write the equality $(-1)^{W(\vt)-W(\vtp)}=(-1)^k=(-1)^{W(\vsig)}$, which implies that Eq.~\eqref{prop5.2eq3} is equal to 1.

Applying this result together with the property $R\cdot\vmu=R'\cdot\mu_0$ for all cyclic Dyck paths $\vmu\ge\vsig$ yields the relation
\begin{equation}
\frac{\left(\mathbf{B}_n\right)_{\vsig,R}\left(\mathbf{A}_n\right)_{R,\vt}}{\left(\mathbf{B}_n\right)_{\sigma_0,R'}\left(\mathbf{A}_n\right)_{R',\vtp}}=w^{W(\vt)-W(\vtp)-W(\vsig)+R|\vsig_{\l(1)}-\left|D^{\vsig}_{\l(1)}\right|-R:\vt+R':\vtp}.
\end{equation}
To simplify this expression, let us note that for any $R$ such that $R\cap\vt$, $R:\vt=R|\vt_{\l(1)}-\left|D^{\vt}_{\l(1)}\right|$. Indeed, $R|\vt_{\l(1)}$ counts the number of indices $i\in R$ such that $(a)$ the chord $(i,j)$ belongs to $\vt$ with $i$ appearing before the step $1$ and $j$ afterwards (i.e. $i>j$), or $(b)$ the chord $(i,j)$ or the chord $(j,i)$ belongs to $\vt$ with both $i,j$ appearing before the step $1$ in $\vt$. Since $R$ intersects each chord of $\vt$ exactly once, there are as many indices of the second type in $R$ as there are down steps before the step $1$ in $\vt$.

Let us now use this decomposition to compute the difference $R':\vtp-R:\vt$. Assume first that the path $\vt$ touches the horizontal axis only at its endpoints. There are two possible cases:
\vspace{-0.3cm}
\begin{enumerate}[(a)]
\item If $\l\le k$, the step $1$ of $\vt$ is rotated to the left to obtain $\vtp$. Therefore $\left|D^{\vt}_{\l(1)}\right|\ge\left|D^{\vtp}_{\l(1)}\right|$ and $R|\vt_{\l(1)}\ge R'|\vtp_{\l(1)}$, since there are more indices before the step $1$ of $\vt$ than before that of $\vtp$. These extra steps before the step $1$ of $\vt$ are labeled by $2n,2n{-}1,\ldots,2n{-}k{+}1$: they are precisely the indices of $\vsig$ appearing before its step $1$. It follows that $\left|D^{\vt}_{\l(1)}\right|=\left|D^{\vtp}_{\l(1)}\right|+\left|D^{\vt}_{\l(1,\vsig)}\right|$ and $R|\vt_{\l(1)}=R'|\vtp_{\l(1)}+R|\vsig_{\l(1)}$, where $D^{\vt}_{\l(1,\vsig)}$ is given by Definition~\ref{def5.1}. Hence, we find the relation
\begin{equation}
R|\vsig_{\l(1)}-R:\vt+R':\vtp=\left|D^{\vt}_{\l(1,\vsig)}\right|,
\label{prop5.2eq4}
\end{equation}
which, crucially, is independent from $R$.
\item If $\l>k$, the step $1$ of $\vt$ is rotated to the right to define $\vtp$. Therefore $\left|D^{\vtp}_{\l(1)}\right|\ge\left|D^{\vt}_{\l(1)}\right|$ and $R'|\vtp_{\l(1)}\ge R|\vt_{\l(1)}$, since there are more indices before the step $1$ of $\vtp$ than before that of $\vt$. The extra steps before the step $1$ of $\vtp$ are labeled by $2n,2n{-}1,\ldots,2n{-}k{+}1$. They also correspond to the steps $1,\ldots,k$ in $\vt$, which are the indices of $\vsig$ \emph{after} its step $1$ (included). Hence, we find that $\left|D^{\vtp}_{\l(1)}\right|=\left|D^{\vt}_{\l(1)}\right|+n-\left|D^{\vt}_{\l(1,\vsig)}\right|$ and $R'|\vtp_{\l(1)}=R|\vt_{\l(1)}+n-R|\vsig_{\l(1)}$, since any cyclic Dyck path of length $2n$ has $n$ down steps and $|R|=n$. Equation~\eqref{prop5.2eq4} holds as well in this case.
\end{enumerate}
\vspace{-0.3cm}
Up to this point, we have not considered cyclic Dyck paths $\vt$ that intersect the horizontal axis at intermediary vertices distinct from its extremities (see for instance Fig.~\ref{cyc_Dyck_ex}). As explained at the beginning of Section~\ref{sec5.2}, certain permutations of the step indices yield ``forbidden'' paths, in which the step 1 is located to the right of an intersection with the horizontal axis. They can however be transformed into ``admissible'' cyclic Dyck paths by cyclically rotating connected components such that the step 1 appears in the first one. Doing so changes the number of steps located to the left of the step 1, so the above decomposition of $R|\vt_{\l(1)}$ and $\left|D^{\vt}_{\l(1)}\right|$ does not hold. Since $R\cap\vt$, it should be clear however that the difference of both quantities, which is equal to $R:\vt$, is left invariant under permutations of connected components of paths. Equation~\eqref{prop5.2eq4} is therefore valid for any cyclic Dyck paths $\vsig,\vt$.

Putting all the pieces together yields the following equation,
\begin{equation}
\begin{split}
\sum_{\substack{R\subset\mathcal{U}\\|R|=n}}\left(\mathbf{B}_n\right)_{\vsig,R}\left(\mathbf{A}_n\right)_{R,\vt}&=\sum_{\substack{R'\subset\mathcal{U}\\|R'|=n}}\left(\mathbf{B}_n\right)_{\sigma_0,R'}\left(\mathbf{A}_n\right)_{R',\vtp}w^{W(\vt)-W(\vtp)-W(\vsig)-\left|D^{\vsig}_{\l(1)}\right|+\left|D^{\vt}_{\l(1,\vsig)}\right|}\\
&=(1-w)^{n}w^{W(\vt)-W(\vtp)-W(\vsig)-\left|D^{\vsig}_{\l(1)}\right|+\left|D^{\vt}_{\l(1,\vsig)}\right|}\,\delta_{\sigma_0,\vtp}=(1-w)^{n}\delta_{\vsig,\vt}\,,
\end{split}
\end{equation}
since $D^{\vsig}_{\l(1,\vsig)}=D^{\vsig}_{\l(1)}$ and $\vsig=\vt\Leftrightarrow\sigma_0=\vtp$, which concludes the proof.$\hfill\square$



\clearpage

\end{document}